\newtheorem{theorem}{Theorem}
\newtheorem{lemma}{Lemma}
\newcommand{\cA}{\mathcal{A}}
\newcommand{\cB}{\mathcal{B}}
\newcommand{\cK}{\mathcal{K}}
\newcommand{\cM}{\mathcal{M}}
\newcommand{\cN}{\mathcal{N}}
\newcommand{\cT}{\mathcal{T}}
\newcommand{\ba}{\bm{a}}
\newcommand{\bb}{\bm{b}}
\newcommand{\be}{\bm{e}}
\newcommand{\bh}{\bm{h}}
\newcommand{\bt}{\bm{t}}
\newcommand{\bv}{\bm{v}}
\newcommand{\bx}{\bm{x}}
\newcommand{\bC}{\bm{C}}
\newcommand{\bK}{\bm{K}}
\newcommand{\bU}{\bm{U}}
\newcommand{\bV}{\bm{V}}
\newcommand{\bX}{\bm{X}}
\newcommand{\bY}{\bm{Y}}
\newcommand{\bZ}{\bm{Z}}
\newcommand{\bbt}{\bm{\beta}}
\newcommand{\cky}{\check{y}}
\title{A Non-Parametric Box-Cox Approach to Robustifying High-Dimensional Linear Hypothesis Testing}
\author{He Zhou and Hui Zou\footnote{ Corresponding author at: 313 Ford Hall, School of Statistics, 224 Church Street SE, Minneapolis, MN 55455, USA.
E-mail address: \href{zouxx019@umn.edu}{zouxx019@umn.edu}. 
}
\\
\large\textit{School of Statistics, University of Minnesota, United States}}
\date{}
\begin{document}

\maketitle
\begin{abstract}
The mainstream theory of hypothesis testing in high-dimensional regression typically assumes the underlying true model is a low-dimensional linear regression model, yet the Box-Cox transformation is a regression technique commonly used to mitigate anomalies like non-additivity and heteroscedasticity. This paper introduces a more flexible framework, the non-parametric Box-Cox model with unspecified transformation, to address model mis-specification in high-dimensional linear hypothesis testing while preserving the interpretation of regression coefficients. Model estimation and computation in high dimensions poses challenges beyond traditional sparse penalization methods. We propose the constrained partial penalized composite probit regression method for sparse estimation and investigate its statistical properties. Additionally, we present a computationally efficient algorithm using augmented Lagrangian and coordinate majorization descent for solving regularization problems with folded concave penalization and linear constraints. For testing linear hypotheses, we propose the partial penalized composite likelihood ratio test, score test and Wald test, and show that their limiting distributions under null and local alternatives follow generalized chi-squared distributions with the same degrees of freedom and noncentral parameter. Extensive simulation studies are conducted to examine the finite sample performance of the proposed tests. Our analysis of supermarket data illustrates potential discrepancies between our testing procedures and standard high-dimensional methods, highlighting the importance of our robustified approach.
\end{abstract}

\vspace{0.1cm}
\noindent
{\textbf{Keywords:}
High-dimensional testing, Linear hypothesis, Box-Cox model,  Non-parametric transformation, Composite estimation, Composite likelihood ratio test, Score test, Wald test} 
\vspace{0.1cm}

\section{Introduction}

In a high-dimensional regression problem, the number of covariates might diverge with the sample size or even exceed the sample size.  
In the past few decades, a large number of papers have been devoted to developing sparse estimation methods and variable selection techniques for high-dimensional regression problems. For example, the LASSO \citep{tibshirani1996regression} and the concave penalization \citep{fan2001variable} provide two mainstream sparse penalization methods. For a comprehensive understanding of this field, readers can be referred to \cite{FanLiZhangZou2020} for a comprehensive treatment of this topic. Recent advancements in statistical inference have focused on hypothesis testing concerning sparse estimates on high-dimensional linear regression models. Works by \cite{lockhart2014significance},  \cite{taylor2014post}, \cite{lee2016exact}, etc., start from proposing inference tools for the LASSO \citep{tibshirani1996regression} estimate. However, owing to the non-negligible estimation bias inherent in the LASSO estimates, \cite{zhang2014confidence}, \cite{javanmard2014confidence}, and \cite{van2014asymptotically} have proposed to debias the LASSO-type estimators before making further inferences. The development of those debiased inference methods motivated an increase interest in testing under generalized linear models (GLMs, \citealt{mccullagh1989generalized}). For instance, \cite{guo2021inference} and \cite{cai2021statistical} proposed the bias-corrected inferences for high-dimensional binary GLMs; \cite{ren2015asymptotic} studied the Gaussian graphical models; \cite{fang2020test} considered tests specific to longitudinal data analysis; etc. A comprehensive overview of recent advancements in statistical inference for high-dimensional regression models is provided in the survey by \cite{cai2023statistical}. 

Despite the extensive efforts on hypothesis testing in high-dimensional (generalized) linear regression models, it remains unknown whether those methods and theoretical results can be extended to more general regression models, such as the Box-Cox transformed regression model \citep{box1964analysis}. On the other hand, specifying a certain (generalized) linear regression model as the true underlying model might lead to the issues of model mis-specification in many practical applications. Let $Y$ be the response variable and $(x_1,\ldots,x_p)^{T}$ be the vector of covariates. Recent research efforts in high-dimensional statistical inferences often start with a normal linear assumption, ${Y} = \sum_{j \in \cA} x_j \beta_j +\varepsilon $ (where $\cA$ denotes a small subset of important variables), this assumption, while convenient for theoretical analysis, may not accurately capture the complexities of the true underlying model. Consequently, conducting further inferences and tests based on the estimations obtained from mis-specified models will lead to unstable and untrustworthy conclusions. 

In light of the challenges in conducting hypothesis testings under more complicated models in high-dimensional statistical inferences, it requires extra efforts to enhance the robustness of statistical inference procedures without compromising theoretical integrity. Our paper aims to partially address this issue by focusing on hypothesis testing within the framework of a non-parametric transformed linear model:
\begin{equation}
	g(Y) = \bx^{T}\bbt + \varepsilon,\quad\varepsilon\sim N(0,1),
	\label{eq:non-parametric_Box-Cox_model}
\end{equation}
where $g(\cdot)$ is an \emph{unspecified} monotone increasing function, and the error variance is assumed to be one as a scaling parameter can be absorbed into the monotone function $g(\cdot)$. This model \eqref{eq:non-parametric_Box-Cox_model}, named as the \emph{non-parametric Box-Cox model} \citep{zhou2023nonparametric}, is a non-parametric generalization of the well-known Box-Cox regression model \citep{box1964analysis} with the similar goal of achieving additivity, normality and homoscedasticity via data transformation. \cite{zhou2023nonparametric} proposed a two-step methodology for the sparse estimation and prediction of this model under ultra-high dimensional settings (i.e., $p$ grows exponentially with $n$). This model demonstrates \emph{robustness against model mis-specification} from two perspectives: Firstly, a transformation applied to the response will fix the anomalies such as non-additivity and heteroscedasticity that violate the normal linear assumption; Secondly, unlike the prevalent Box-Cox power transformation, our approach employs a non-parametric transformation as any pre-chosen parametric form may suffer from mis-specification for a given application. To our knowledge, in ultra-high dimensional settings, obtaining optimal estimates of regression coefficients poses a significant challenge, even for the classic Box-Cox power transformation model, let alone for the non-parametric transformed model. However, by leveraging the \emph{Composite Probit Regression} method introduced by \cite{zhou2023nonparametric}, we are able to obtain the sparse estimators with strong oracle properties \citep{fan2014strong}. By developing testing procedures based on such optimal estimator, we anticipate achieving robust power performance even in the presence of local alternatives. It's noteworthy that this model only performs transformation on the response $Y$, leaving the covariates $\bx$ untouched. We focus on this generalized model for one critical reason: extending the model assumptions should not change the interpretation of the regression coefficients $\bbt$. Preserving this interpretability is essential when conducting hypotheses for $\bbt$ or linear contrasts of $\bbt$. Motivated by these considerations, we propose to develop testing procedures that are robust against any unknown transformation on the response. These procedures are designed to maintain the interpretation of the coefficients while theoretically ensuring robust power under local alternatives.

To complete the story, we consider testing linear hypothesis 
\begin{equation}
	H_0:~\bC\bbt_{0,\cM}=\bt
        \label{eq:linear_hypothesis}
\end{equation}
for the non-parametric Box-Cox model \eqref{eq:non-parametric_Box-Cox_model}, where $\bbt_{0,\cM}$ is a subvector of the true regression coefficients $\bbt_{0}$. In high-dimensional settings, the number of covariates $p$ can diverge with the sample size or exceed the sample size, while the cardinality of subset $\cM$ is assumed to be much smaller to the sample size. By setting matrix $\bC$ to have only one row, it considers the special class of hypotheses for individual regression coefficients or related one-dimensional functionals. Large amount of work has been devoted to this class of hypotheses under high-dimensional (generalized) linear models \citep{van2014asymptotically, javanmard2014confidence, lockhart2014significance, ren2015asymptotic, guo2021inference, cai2021optimal}. With a general $\bC$, the focus is on dealing with simultaneous inference for multiple regression coefficients. When $\bC$ is set to be the identity matrix and $\bt=\bm{0}$, it reduces to a special class of hypotheses $\bbt_{0,\cM}=\bm{0}$. \cite{wang2013partial} proposed a penalized likelihood ratio statistic for this class requiring $p=o(n^{1/5})$; \cite{ning2017general} considered a decorrelated score test for generic penalized M-estimators; \cite{fang2017testing} developed decorrelated Wald, score and partial likelihood ratio tests for proportional hazards models; \cite{zhang2017simultaneous} proposed a bootstrap-assisted procedure for high-dimensional linear models. With general $\bC$ and $\bt$, \cite{shi2019linear} proposed Wald, score and likelihood ratio tests for GLMs under ultra-high dimensional settings (i.e., $p$ grows exponentially with $n$). Building upon this work, we generalize the constrained partial penalized testing framework of \cite{shi2019linear} and develop the Wald, score and likelihood ratio type of tests for the non-parametric Box-Cox model under ultra-high dimensional settings. 

It is challenging to generalize the Wald, score and likelihood ratio tests for GLMs \citep{shi2019linear} to the non-parametric Box-Cox model under ultra-high dimensionality. 
Firstly, the choice of sparse estimation for the regression coefficients $\bbt$ is critical. The LASSO-type estimation has the bias issue, while the nonconcave penalized estimation \citep{fan2001variable} does not. 
Secondly, even under lower-dimensional settings, estimating the regression coefficients on top of the estimated transformation will introduces considerate  variability and will weaken the performance of the estimation. To solve the first two challenges, we follow the folded-concave penalized composite probit regression technique invented by \cite{zhou2023nonparametric} for sparse estimation with optimal theoretical guarantee. 
Thirdly, when conducting tests associated with the penalized estimator $\hat{\bbt}$ (e.g., the Wald test), hypotheses with local alternatives might not have the desired power because of the minimal signal conditions for regression coefficients under nonconcave penalization \citep{fan2004nonconcave, fan2011nonconcave}. For example, consider the null hypothesis $H_0:~\bbt_{0,\cM}=\bm{0}$ and its local alternative $H_a:~\bbt_{0,\cM}=\bm{h}_n$ for some sequence of $\bm{h}_n$ such that $\Vert\bm{h}_n\Vert_2 \ll \lambda_n$ (where $\lambda_n$ is the tuning parameter for the nonconcave penalization). The estimation and variable selection on ${\cM}$ will not have theoretical guarantee due to the violation of the minimal signal condition. 
Fourthly, the estimation method using composite probit regression falls into the context of composite likelihood inference. Consequently, there exists a discrepancy between the sensitivity matrix and the variability matrix, and therefore the Fisher information matrix needs to be substituted by the Godamebe information matrix \citep{godambe1960optimum}, also referred to as the \emph{sandwich information matrix}. Extensive efforts have been dedicated to the composite likelihood inferences, and interested readers can find a comprehensive overview in the survey paper by \cite{varin2011overview}. Following the typical context of composite likelihood inferences, one can construct composite likelihood versions of the Wald and score statistics based on the Godambe information \citep{molenberghs2005models}, which have the usual asymptotic chi-squared distribution. However, they are not asymptotically equivalent to the composite likelihood likelihood ratio statistic, which follows a non-standard asymptotic chi-squared distribution \citep{kent1982robust}. Numerous efforts have been made to adjust the composite likelihood ratio statistic to make it has the standard asymptotic chi-squared distribution \citep{geys1999pseudolikelihood, rotnitzky1990hypothesis, zou2008one, pace2011adjusting}. In this paper, we choose to construct a novel version of the Wald and score test statistics that are asymptotically equivalent to the composite likelihood ratio statistic.
Last but not least, when developing the score and likelihood ratio type of tests, we encounter the challenge of estimating regression parameters under the null hypothesis, which invokes non-convex optimization with linear constraints. Based on our knowledge, there is a lack of computationally efficient algorithms that can produce sparse solutions for nonconcave penalized optimization problems with constraints. While the ADMM algorithm proposed by \cite{shi2019linear} offers a potential solution, it suffers from computational inefficiency and accuracy issues, as it requires to use a Newton-type algorithm for solving a convex sub-problem with high-dimensional variables.

We briefly summarize our {contributions} as follows. 
Firstly, we extend the scope of high-dimensional hypothesis testing by considering a more general form of the underlying true model. In contrast to the typical linear regression model, we apply an unspecified transformation on the response to remedy the common anomalies such as non-additivity and heteroscedasticity that frequently occur in practical scenarios. 
Secondly, we combine the composite probit regression (CPR) method with the folded-concave partial penalization to develop our partial penalized composite probit regression estimators. Extending the work by \cite{zhou2023nonparametric}, we provide a comprehensive theoretical analysis, deriving convergence rates and limiting distributions for both constrained and unconstrained estimators in ultra-high dimensional settings.
Thirdly, we use these partial penalized composite probit regression estimators to formulate our partial penalized Wald, score and composite-likelihood-ratio test statistics and our Wald and score test statistics are novel contributions to the literature. . We establish the asymptotic equivalence of our partial penalized CPR tests and derive the asymptotic distributions of our test statistics under both the null hypothesis and local alternatives. One interesting observation is that the limiting distributions of our test statistics are generalized $\chi^2$ distributions, which can be degenerated to the usual $\chi^2$ distributions \citep{shi2019linear} when degenerating the composite probit regression to probit regression. 
Last but not least, we develop an algorithm for computing the folded-concave partial penalized CPR estimator with equality constraints. To deal with the equality constraints, we use the augmented lagrangian method \citep{boyd2011distributed}. Inspired by the efficacy of coordinate descent algorithm for generalized linear models \citep{friedman2010regularization}, we design a coordinate-majorization-descent algorithm to efficiently solve the lagrangian-augmented unconstrained optimization problem.

The rest of the paper is organized as follows. In section \ref{sec:NBC}, we provides more details about the non-parametric Box-Cox model and the composite probit regression method for estimation of regression coefficients. In section \ref{sec:partial_penalized_CPR}, we study the statistical properties of the partial penalized composite probit regression estimator with folded concave penalization. In section \ref{sec:partial_penalized_tests}, we give formal definition for our partial penalized Wald, score and composite-likelihood-ratio test statistics, and establish their limiting distributions, and show their equivalence. In section \ref{sec:algorithm}, we provide the detailed implementations of our proposed algorithms. Simulation studies are presented in section \ref{sec:simulations}. The proofs and addition numerical results are presented in the Appendix.

We introduce the following notations for the rest of the paper. For a vector $\bm{v}=(v_1,\dots,v_m)\in\mathbb{R}^{m}$ and a subset $\mathcal{J}\subseteq [1,\dots,m]$, denote $\bm{v}_{\mathcal{J}}$ as the sub-vector of $\bm{v}$ with indices in $\mathcal{J}$. Denote $\vert\bm{v}\vert=(\vert v_1\vert,\dots,\vert v_m\vert)$ and $\text{diag}(\bm{v})$ be the diagonal matrix with the $j$-th diagonal element being $v_j$. For a matrix $\bm{U}=(u_{ij})_{m\times n}$ and a subset of indices $\mathcal{T}\subseteq [1,\dots,n]$, denote $\bm{U}_{i,:}$ as the $i$-th row of $\bm{U}$ and $\bm{U}_{\mathcal{T}}$ as the sub-matrix of $\bm{U}$ consisting the columns with indices in $\mathcal{T}$. Let $\lambda_{\min}(\bm{U})$ and $\lambda_{\max}(\bm{U})$ be the smallest and largest eigenvalues of $\bm{U}$. We also introduce matrix norms induced by vector norms: Given $1\leq p\leq\infty$, $\Vert\bm{U}\Vert_{p}=\sup\{\Vert \bm{U}\bm{x}\Vert_{p}: \Vert\bm{x}\Vert_{p}=1\}$. In special cases, $\Vert\bm{U}\Vert_{1}=\max_{j}\sum_{i}\vert u_{ij}\vert$, $\Vert\bm{U}\Vert_{2}=\lambda_{\max}^{1/2}(\bm{U}^{T}\bm{U})$, and $\Vert\bm{U}\Vert_{\infty}=\max_{i}\sum_{j}\vert u_{ij}\vert$. Given $1\leq\alpha,\gamma\leq\infty$, $\Vert\bm{U}\Vert_{\alpha,\gamma}=\sup\{\Vert \bm{U}\bm{x}\Vert_{\gamma}: \Vert\bm{x}\Vert_{\alpha}=1\}$. Thus, $\Vert \bm{U}\bm{x}\Vert_{\gamma} \leq \Vert\bm{U}\Vert_{\alpha,\gamma}\Vert\bm{x}\Vert_{\alpha}$. In special cases, $\Vert\bm{U}\Vert_{2,\infty}=\max_{i}\Vert\bm{U}_{i,:}\Vert_2$; $\Vert\bm{U}\Vert_{1,2}=\max_{j}\Vert\bm{U}_{:,j}\Vert_2$

\section{Non-Parametric Box-Cox Model}\label{sec:NBC}

In this paper, we consider the following non-parametric Box-Cox model as the underlying true model:
\begin{equation*}
	g(Y) = \bx^{T}\bbt + \varepsilon,\quad\varepsilon\sim N(0,1),
\end{equation*}
where $g(\cdot)$ is an \emph{unspecified} monotone increasing function. Compared with the classical linear model assumption, our generalization is robust against model mis-specification while preserving the interpretation of the regression coefficients $\bbt$. Estimating the regression coefficients on top of an estimated transformation can significantly increase the variance of the estimator, thereby reducing the power performance when further conducting hypothesis testings. Therefore, we borrow the \emph{composite probit regression (CPR)} method proposed by \cite{zhou2023nonparametric} which forms the first step of their two-stage methodology and allows us to obtain the optimal estimation without knowing the transformation.

\subsection{Estimation: Composite Probit Regression}\label{subsec:CPR}
Consider estimating the parameter $\bbt$ in our non-parametric Box-Cox regression model based on $n$ i.i.d. observations $\{(\bx_i,y_i)\}_{i=1}^{n}$. 
Given a user-chosen threshold $y_0$, we have
\begin{equation}
	\mathbb{P}(Y\geq y_0\vert\bx) = \mathbb{P}(g(Y)\geq g(y_0)\vert\bx) = \mathbb{P}(\varepsilon\geq g(y_0) - \bx^{T}\bbt\vert\bx) = \Phi(-g(y_0) + \bx^{T}\bbt),
	\label{eq:model_derivation}
\end{equation}
where $\Phi(\cdot)$ is the cumulative density function (CDF) of standard normal distribution. 
Thus, by creating new response variables $\tilde{y}_i = I_{\{y_i\geq y_0\}}$, we have $\{(\tilde{y}_i,\bx_i)\}_{i=1}^{n}$
follow a probit regression model with intercept $-g(y_0)$ and regression coefficient $\bbt$. The choice of threshold values only affects the intercept term in the probit model but not the regression coefficient.
To borrow strength from multiple probit regression models induced from different thresholds, \cite{zhou2023nonparametric} proposed the composite probit regression method. Let $\{y_0^{(k)}, k=1, \cdots, K\}$ be the sequence of threshold values, $\tilde{y}_{ki}=I_{\{y_i\geq y_0^{(k)}\}}$, and $\bm{b}={(b_{1},\cdots,b_{K})}^{T}$. The \emph{composite probit likelihood} function for the estimation of parameter $\bbt$ is defined as:
\begin{equation}
    M_n(\bbt, \bb)
    :=
    \sum_{k=1}^{K}w_k\left\{\frac{1}{n}\sum_{i=1}^{n}
    \left[
        \tilde{y}_{ki}h(\bx_i^{T}\bbt-b_{k})+\log(1-\Phi(\bx_i^{T}\bbt-b_{k}))
    \right]\right\},\label{eq:composite_probit_likelihood}
\end{equation}
which is the weighted summation of log-likelihood functions of $K$ probit models with composite weights $w_k$'s and $h(\eta) = \log\left(\frac{\Phi(\eta)}{1-\Phi(\eta)}\right)$. The sparse estimation of $\bbt$ under high-dimensional cases can be obtained by applying sparse penalization to the composite likelihood function. It's worth noting that the estimation of $\bbt$ based on the composite probit regression method remains invariant to any transformation $g(\cdot)$ on the response and any choice of threshold values. Different transformation and choices of threshold values will only effect the interpretation of the intercepts $\bb$ but not the regression coefficients, which is of primary goal.

When implementing the composite probit regression estimation in practice, the sequence of thresholds and the composite weights need to be specified beforehand. \cite{zhou2023nonparametric} suggests to use the $5\%$, $10\%$, $15\%$, $\dots$, $95\%$ empirical percentiles of the response $\{y_i\}$'s and the equal composite weights, which amounts to $K=19$ and $w_1=\cdots=w_K=1/K$. 

\section{Partial Penalized Composite Probit Regression}\label{sec:partial_penalized_CPR}

In this section, we propose to apply a partial penalization on our composite probit regression model to accommodate the conflict between the minimal signal assumption and nonconcave penalization, as well as the power issue under local alternatives.  We then prove the rate of convergence and the asymptotic distribution of both the constrained and unconstrained partial penalized composite probit regression estimators. These estimators will serve as the basis for hypothesis testing, as detailed in Section \ref{sec:partial_penalized_tests}.

\subsection{Partial Penalized Composite Probit Regression Method}\label{subsec:partial_penalized_CPR_method}

Consider testing the following linear hypothesis:
\begin{equation}
	H_0:~\bC\bbt^*_{\cM}=\bt
    \label{eq:linear_hypothesis}
\end{equation}
where $\bbt^*_{\cM}$ is a sub-vector of the true regression coefficients $\bbt^*$, for a given $\cM\subseteq [1,\dots,p]$ with cardinality $\vert\cM\vert=m$, a matrix $\bC\in\mathbb{R}^{r\times m}$ with full row rank $rank(\bC)=r$, and a vector $\bt\in\mathbb{R}^{r}$. In this paper, we consider the ultra-high dimensional setting: $\log(p)=n^a$ for some $0<a<1$. 
\\
For sparse estimation of regression coefficients $\bbt$, we define the following \emph{partial penalized composite probit likelihood} function:
\begin{equation}
	Q_n(\bbt, \bb;\lambda)=
			\sum_{k=1}^{K} w_k 
					\left\{
					\frac{1}{n} \sum_{i=1}^{n} 
							\left[
							\tilde{y}_{ki}h({\bx_i}^{T}\bbt - b_{k})+\log(1-\Phi({{\bx_i}^{T}\bbt - b_{k}})
							\right]
					\right\}
			- \sum_{j \notin \cM}p_\lambda(\vert\beta_j\vert)
    \label{eq:partial_penalized_composite_probit_likelihood}
\end{equation}
where $p_\lambda(\cdot)$ is a penalty function with a tuning parameter $\lambda$. 
Further define the \emph{partial penalized composite probit regression estimator} $(\hat{\bbt}_a, \hat{\bb}_a)$ as the solution to the following unconstrained optimization problem:
\begin{equation}
	\max_{\bbt, \bb} Q_n(\bbt, \bb;\lambda),
        \label{eq:beta_a}
\end{equation}
and $(\hat{\bbt}_0, \hat{\bb}_0)$ be the solution to the following constrained optimization problem
\begin{equation}
	\max_{\bbt, \bb} Q_n(\bbt, \bb;\lambda)~~\text{s.t.}~ \bC\bbt_{\cM}=\bt.
        \label{eq:beta_0}
\end{equation}

Note that the partial penalized composite probit likelihood function \eqref{eq:partial_penalized_composite_probit_likelihood} does not apply penalties on parameters $\bbt_\cM$ involved in the testing constraints. This partial penalization avoids imposing minimal signal assumption on $\bbt^*_{\cM}$. As a result, we will show in section \ref{sec:partial_penalized_tests} that the corresponding tests will have non-trivial power at local alternatives. However, on the other hand, we sacrifice the variable selection capability on the subset $\cM$, which might be undesirable when $\bbt^*_{\cM}$ is sparse.

\subsection{Partial Penalized Composite Probit Regression Estimators}\label{subsec:partial_penalized_CPR_estimators}


There are two mainstream penalization methods, the LASSO \citep{tibshirani1996regression} and the concave penalization \citep{fan2001variable}, for sparse estimation in high-dimensional regression problems. We follow the definition given by \cite{fan2014strong} and assume that the penalty $p_\lambda(\vert t\vert)$ in our partial penalized composite probit regression is a general folded concave penalty function. The well-known penalty functions such as SCAD \citep{fan2001variable} and MCP \citep{zhang2010nearly} belong to this general family.

To simplify the notation, define 
\begin{equation*}
    \cB=\begin{pmatrix}
		\bbt
            \\
            \bb
	\end{pmatrix}\in\mathbb{R}^{p+K}, \quad
    \bx_i^{k}=\begin{pmatrix}
        \bx_i
        \\
	-\bm{e}_k
\end{pmatrix}, \quad
\bX^k=\begin{pmatrix}
	(\bx_1^k)^{T}
	\\
	\vdots
	\\
	(\bx_n^k)^{T}
\end{pmatrix},
\quad 
\bm{Y}^k=\begin{pmatrix}
\tilde{y}_{1k}\\\vdots\\\tilde{y}_{nk}
\end{pmatrix},
\end{equation*}
where $e_k\in\mathbb{R}^{K}$ with the $k$-th element 1 and all others 0.
Then the composite probit likelihood function \eqref{eq:composite_probit_likelihood} can be re-written as
\begin{equation}
	{M}_n(\cB)=\sum_{k=1}^{K}w_k\left\{
	\frac{1}{n}\sum_{i=1}^{n}\left[
        \tilde{y}_{ik}h((\bx_i^k)^{T}\cB)+\log(1-\Phi((\bx_i^k)^{T}\cB))\right]
	\right\}.
	\label{eq:CPR_likelihood_B}
\end{equation}

Let $\cK\subseteq [1,\dots,p+K]$ be the subset corresponding to the indices of the $K$ intercepts, i.e., $\cB_{\cK} = \bb$. Then $\cB^*_{\cK} = \bb^*$ denotes the true intercepts. And $\bX^k_{\cK}=-\bm{1}_n {\be_k}^T =:-\bm{E}_k$.

\subsubsection{Assumptions}\label{subsubsec:assumptions}

\paragraph{Assumptions on true $\cB^* = (\bbt^*, \bb^*)$:}

We assume that the true regression coefficient $\bbt^*$ is sparse and satisfies the linear equation $\bC\bbt^*_{\cM}-\bt = \bh_n$ for some $\bh_n \rightarrow 0$. Notice that when $\bh_n=\bm{0}$, the null holds; otherwise, the local alternative holds.  Define the partial support set of $\bbt^*$ as $S=\{j\in[1,\dots,p]\backslash \cM:\beta^*_{j}\neq 0\}$ with cardinality $s=\vert S\vert$. Let $d_n=\min_{j\in S}\vert \beta^*_{j}\vert/2$ be the half minimal signal of $\bbt^*_{S}$. Notice that the partial support set does not include nonzero coefficients in $\cM$ and thus the minimal signal $d_n$ defined above does not consider the signal strength on $\bbt^*_{\cM}$. Besides, we have $\cB^*_{(\cM\cup S\cup\cK)^c}=\bm{0}$ by the definition and assumptions.


\paragraph{Assumptions on the model:} We also introduce a neighborhood of $\cB^*$: $\cN^*=\{\cB\in\mathbb{R}^{p+K}:\Vert \cB - \cB^*\Vert_2\leq\sqrt{(m+s+K)\log{(n)}/n},~\cB_{(\cM\cup S\cup\cK)^c}=\bm{0}\}$.

(A1) Assume that
\begin{align*}
        &
        \Vert \bX_{\cM\cup S} \Vert_{2,\infty}
        = O\left(\sqrt{\frac{n}{(m+s+K)\log n}}\right),
        &
        \\
        &
        \Vert \bX_{(\cM\cup S\cup\cK)^c} \Vert_{1,2}
        =O(\sqrt{n}),
        &
        \\
        &
        \max_{1\leq i\leq n}\max_{1\leq k\leq K}\vert (\bx_{i}^{k})^T\cB^*\vert = O(1),
        &
        \\
        &
        \lambda_{\max}\left(
            \frac{1}{n}
            \sum_{k=1}^{K}w_k
            \begin{pmatrix}
                            \bm{X}_{\cM}^T
                            \\
                            (\bm{X}^k_{S\cup \cK})^T
                        \end{pmatrix}
            \begin{pmatrix}
                            \bm{X}_{\cM}^T
                            \\
                            (\bm{X}^k_{S\cup \cK})^T
                        \end{pmatrix}^T
        \right)
        = O(1),
        &
        \\
        &
        \max_{j\in{\cM\cup S}}
        \lambda_{\max}
                \left(
                    \frac{1}{n}
                    \sum_{k=1}^{K}w_k
                    \begin{pmatrix}
                            \bm{X}_{\cM}^T
                            \\
                            (\bm{X}^k_{S\cup \cK})^T
                        \end{pmatrix}
                    \text{diag}\left\{
                        \vert\bx_{(j)}\vert
                    \right\}
                    \begin{pmatrix}
                            \bm{X}_{\cM}^T
                            \\
                            (\bm{X}^k_{S\cup \cK})^T
                        \end{pmatrix}^T
                \right)
        = O(1),
        &
        \\
        &
        \inf_{\cB\in\cN^*}
        \lambda_{\min}
                \left(
                    \frac{1}{n}
                    \sum_{k=1}^{K}w_k
                    \begin{pmatrix}
                            \bm{X}_{\cM}^T
                            \\
                            (\bm{X}^k_{S\cup \cK})^T
                        \end{pmatrix}
                    \Sigma(\bX^k\cB)
                    \begin{pmatrix}
                            \bm{X}_{\cM}^T
                            \\
                            (\bm{X}^k_{S\cup \cK})^T
                        \end{pmatrix}^T
                \right)
        \geq c,
        \\
        &
        \lambda_{\min}
        \left(
            \frac{1}{n}
            \sum_{i=1}^{n}
            \sum_{k,k^{\prime}=1}^{K}
            w_{k} w_{k^{\prime}}
            \left(
                \bm{x}_{i,\cM\cup{S}\cup\cK}^{k}
            \right)
            \left(
                \bm{x}_{i,\cM\cup{S}\cup\cK}^{k^{\prime}}
            \right)^T
            \frac{
                \varphi\left(
                    ({\bm{x}_{i}^{k}})^T
                    {\cB^*}
                \right)
                \cdot\varphi\left(
                    ({\bm{x}_{i}^{k^{\prime}}})^T
                    {\cB^*}
                \right)
            }{
                \Phi\left(
                    (\bm{x}_{i}^{\min\{k,k^\prime\}})^T
                    {\cB^*}
                \right)
                \cdot
                \Phi\left(
                    -
                    (\bm{x}_{i}^{\max\{k,k^\prime\}})^T
                    {\cB^*}
                \right)
            }
        \right)
        \geq c,
    \end{align*}
where ${\Sigma}(\bm{X}^k\cB) = \text{diag}
                        \left\{
                            \frac{\varphi^2}{\Phi(1-\Phi)}((\bx_{i}^k)^T\cB)
                            ,~i=1,\dots,n
                        \right\}$, and $\varphi(\cdot)$ is the probability density function (pdf) of standard normal distribution.

Notice that since $(\bx_{i}^k)^T\cB^*=\bx_{i}^T\bbt^*-b^*_{k}$, the third condition of (A1) can be derived from $\max_{1\leq i\leq n}
    \vert
    \bx_{i}^T\bbt^*
    \vert
    = O(1)$
and 
$
    \Vert
    \bb^*
    \Vert_{\infty}
    = O(1)$. 
\cite{van2014asymptotically} and \cite{ning2017general} also assumed $\max_{1\leq i\leq n}
    \vert
    \bx_{i}^T\bbt^*
    \vert
    = O(1)$ 
to establish the asymptotic properties of their sparse estimators and test statistics under the high dimensional generalized linear models.

\paragraph{Assumptions on penalization $p_\lambda(\cdot)$:}
Let $\rho(t,\lambda) = p_\lambda(t)/\lambda$ for $\lambda>0$. We assume that $\rho(t,\lambda)$ is increasing and concave in $t\in[0,\infty)$ with $\rho(0,\lambda)=0$, and is continuously differentiable in $t\in(0,\infty)$. At $t=0$, we assume that $\rho^\prime(0,\lambda):=\rho^\prime(0+,\lambda)>0$ is a positive constant independent of $\lambda$. Given $t\in(0,\infty)$, we assume that $\rho^\prime (t,\lambda)$ is increasing in $\lambda\in(0,\infty)$. We define local concavity of $\rho$ at $\bv$ with $\Vert\bv\Vert_0=q$:
	\begin{equation*}
		\kappa(\rho, \bv, \lambda) =
			\lim_{\varepsilon\rightarrow 0}
			\max_{j:v_j\neq 0}
			\sup_{t_1<t_2\in(\vert v_j \vert - \varepsilon, \vert v_j \vert + \varepsilon)}
			-\frac{
					\rho^\prime(t_2,\lambda) - \rho^\prime(t_1,\lambda)
					}{
				  	 t_2 - t_1
			  	 	 }.
	\end{equation*}
Further define the maximal local concavity of $\rho$ at neighborhood $\cN^*$ of $\cB^*$:
\begin{equation*}
    \kappa^*_{j}=\max_{\cB\in\cN^*}\kappa(\rho,\cB,\lambda_{n,j})
\end{equation*}
for $j=0,a$, where $\lambda_{n,0}$ is tuning parameter for the constrained partial penalized CPR estimator $\hat{\cB}_0$ and $\lambda_{n,a}$ is the tuning parameter for the unconstrained partial penalized CPR estimator $\hat{\cB}_a$.

(A2) Assume that for $j=0,a$,
\begin{align*}
   &
   \lambda_{n,j}\kappa^*_{j}=o(1),
   &
   \\
   &
   p_{\lambda_{n,j}}^{\prime}(d_n)=o(1/\sqrt{(m+s)n}),
   &
   \\
   &
   d_n\gg\lambda_{n,j} \gg \max\{
   \sqrt{(s+m+K)/n},\sqrt{(\log p)/n}
   \}.
   &
\end{align*}

Notice that the third condition in (A2) imposes a minimum signal assumption on nonzero elements of $\bbt^*$ in $[1,\dots,p]\backslash\cM$ only, which is due to the partial penalization.


\paragraph{Assumption on testing linear equations:}
\quad

(A3) Assume that $\Vert \bh_n \Vert_2=O(\sqrt{\min(m+s+K-r, r)/n})$, and $\lambda_{\max}((\bC\bC^T)^{-1})=O(1)$.

\subsubsection{Asymptotic Results for Partial Penalized CPR Estimators}

\begin{theorem}\label{theorem:estimator_statistical_properties}
    Suppose that Conditions (A1) - (A3) hold, and $m+s+K = o(\sqrt{n})$, then the following holds:
    \begin{itemize}
        \item[(i)] With probability tending to 1, $\hat{\bbt}_0$ and $\hat{\bbt}_a$ defined in \eqref{eq:beta_0} and \eqref{eq:beta_a} must satisfy
        \begin{equation*}
            \hat{\bbt}_{0,(S\cup\cM)^c} = \hat{\bbt}_{a,(S\cup\cM)^c} = \bm{0}.
        \end{equation*}
        \item[(ii)] Their corresponding $\ell_2$ errors have rate
        \begin{equation*}
            \Vert
                \hat{\cB}_{a,S\cup\cM\cup\cK} - {\cB}^*_{S\cup\cM\cup\cK}
            \Vert_2 = 
            O_p \left(
                    \sqrt{\frac{m+s+K}{n}}
                \right),
        \end{equation*}
        and
        \begin{equation*}
            \Vert
                \hat{\cB}_{0,S\cup\cM\cup\cK} - {\cB}^*_{S\cup\cM\cup\cK}
            \Vert_2 = 
            O_p \left(
                    \sqrt{\frac{m+s+K-r}{n}}
                \right).
        \end{equation*}
    \end{itemize}
    If further $m+s+K=o(n^{1/3})$, then we have
    \begin{equation}
        \sqrt{n}
                \begin{pmatrix}
                            \hat{\cB}_{a,\cM} - \cB^*_{\cM} \\
                            \hat{\cB}_{a,S\cup\cK} - \cB^*_{S\cup\cK}
                        \end{pmatrix}
        =
        \frac{1}{\sqrt{n}}
                \bm{K}_n^{-1}
                \sum_{k=1}^{K}w_k
                    \begin{pmatrix}
                            \bm{X}_{\cM}^T
                            \\
                            (\bm{X}^k_{S\cup \cK})^T
                        \end{pmatrix}
                        \bm{H}(\bm{X}^k{{\cB^*}})
                        \left\{
                            \bm{Y}^k - \bm{\mu}(\bm{X}^k{{\cB^*}})
                        \right\}
         +
                o_p(1).
    \end{equation}
    and 
    \begin{align}
                \sqrt{n}
                \begin{pmatrix}
                            \hat{\cB}_{0,\cM} - \cB^*_{\cM} \\
                            \hat{\cB}_{0,S\cup\cK} - \cB^*_{S\cup\cK}
                        \end{pmatrix}
                & = 
                \frac{1}{\sqrt{n}}
                \bm{K}_n^{-1/2}
                (\bm{I}-\bm{P}_n)
                \bm{K}_n^{-1/2}
                \sum_{k=1}^{K}w_k
                    \begin{pmatrix}
                            \bm{X}_{\cM}^T
                            \\
                            (\bm{X}^k_{S\cup \cK})^T
                        \end{pmatrix}
                        \bm{H}(\bm{X}^k{{\cB^*}})
                        \left\{
                            \bm{Y}^k - \bm{\mu}(\bm{X}^k{{\cB^*}})
                        \right\}
                \nonumber
                \\
                & -
                \sqrt{n}
                \bm{K}_n^{-1/2}
                \bm{P}_n
                \bm{K}_n^{-1/2}
                \begin{pmatrix}
                    \bm{C}^T({\bm{C}\Omega_{mm}\bm{C}^T})^{-1}\bm{h}_n
                    \\
                    \bm{0}_{s+K}
                \end{pmatrix}
                +
                o_p(1).
            \end{align}
    where $\bm{H}(\bm{X}^k\cB)
                    = \text{diag}
                        \left\{
                            h^{\prime}((\bx_i^k)^T\cB)
                            = \frac{\varphi}{\Phi(1-\Phi)}((\bx_i^k)^T\cB),~i=1,\dots,n
                        \right\}$, 
    $\bm{I}$ is the identity matrix, $\bm{K}_n$ is the $(m+s+K)\times(m+s+K)$ matrix
    \begin{equation*}
                \bm{K}_n = 
                \frac{1}{n}
                \sum_{k=1}^{K}w_k 
                        \begin{pmatrix}
                            \bm{X}_{\cM}^T
                            \\
                            (\bm{X}^k_{S\cup \cK})^T
                        \end{pmatrix}
                        {\Sigma}(\bm{X}^k{\cB^*})
                        \begin{pmatrix}
                            \bm{X}_{\cM}^T
                            \\
                            (\bm{X}^k_{S\cup \cK})^T
                        \end{pmatrix}^T,
    \end{equation*}
    in which ${\Sigma}(\bm{X}^k\cB) = \text{diag}
                        \left\{
                            \frac{\varphi^2}{\Phi(1-\Phi)}((\bx_{i}^k)^T\cB)
                            ,~i=1,\dots,n
                        \right\}$, and $\bm{P}_n$ is the $(m+s+K)\times(m+s+K)$ projection matrix of rank $r$
    \begin{equation*}
                \bm{P}_n 
                = 
                \bm{K}_n^{-1/2}
                \begin{pmatrix}
                    \bC^T 
                            \\
                            \bm{0}_{r\times (s+K)}^T
                \end{pmatrix}
                \left(
                \begin{pmatrix}
                    \bC^T 
                            \\
                            \bm{0}_{r\times (s+K)}^T
                \end{pmatrix}^T
                \bm{K}_n^{-1}
                \begin{pmatrix}
                    \bC^T 
                            \\
                            \bm{0}_{r\times (s+K)}^T
                \end{pmatrix}
                \right)^{-1}
                \begin{pmatrix}
                    \bC^T 
                            \\
                            \bm{0}_{r\times (s+K)}^T
                \end{pmatrix}^T
                \bm{K}_n^{-1/2}.
    \end{equation*}
    
\end{theorem}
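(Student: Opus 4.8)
The plan is to follow the classical three-part strategy for analyzing folded-concave penalized estimators in the style of \cite{fan2014strong} and \cite{shi2019linear}, adapted to the composite-likelihood setting where the score is a weighted sum over the $K$ induced probit models. First I would establish part (i) via the standard argument that, with probability tending to one, there exists a local maximizer whose support is contained in $\cM\cup S\cup\cK$. Concretely, I would restrict the objective $Q_n$ to the coordinate subspace $\{\cB_{(\cM\cup S\cup\cK)^c}=\bm{0}\}$, show that on the neighborhood $\cN^*$ the (penalized) objective has a strict local maximizer $\hat{\cB}$ at rate $\sqrt{(m+s+K)\log n/n}$ inside $\cN^*$ — using the curvature lower bound from the $\lambda_{\min}$ condition in (A1) and the gradient concentration on the active coordinates — and then verify the KKT/first-order conditions for the coordinates outside $\cM\cup S\cup\cK$ hold strictly, so that this local solution is in fact a local maximizer of the full-dimensional problem. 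The key tools here are: a Taylor expansion of $M_n$ to second order, concentration of $\frac{1}{n}\sum_i \bm{x}_{i,\mathcal{J}}^k (\tilde y_{ik}-\mu((\bm x_i^k)^T\cB^*))$ and of the relevant Hessian blocks using the bounded-linear-predictor condition and the operator-norm bounds in (A1), together with $\rho'(0+,\lambda)>0$ and the minimal-signal condition $d_n\gg\lambda_n$ from (A2) to control the penalty derivative on $S$ and dominate the unpenalized-gradient magnitude off the support. For the constrained estimator $\hat{\cB}_0$ the same argument runs on the affine subspace $\{\bC\bbt_\cM=\bt\}\cap\{\cB_{(\cM\cup S\cup\cK)^c}=\bm 0\}$, which has dimension $m+s+K-r$, giving the corresponding rate.

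Second, for part (ii) beyond the rates — i.e.\ the Bahadur-type linear representations — I would work on the reduced parameter $\cB_{\cM\cup S\cup\cK}$, since part (i) lets us ignore the zero block. Writing the composite score restricted to these coordinates as $U_n(\cB)=\sum_k w_k \frac1n (\cdots)^T \bm H(\bm X^k\cB)\{\bm Y^k-\bm\mu(\bm X^k\cB)\}$ minus the penalty gradient (which lives only on $S$), I would Taylor-expand $U_n(\hat{\cB})=0$ (unconstrained case) around $\cB^*$: the Hessian converges to $-\bm K_n$ uniformly over $\cN^*$ by the $\lambda_{\min}$/$\lambda_{\max}$ conditions in (A1), the third-order remainder is $O_p((m+s+K)^{3/2}/n\cdot\sqrt{n})=o_p(1)$ once $m+s+K=o(n^{1/3})$, and the penalty-gradient term on $S$ is $\sqrt n\cdot p'_{\lambda_n}(d_n)$-bounded, hence $o_p(1)$ by the second condition of (A2). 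Inverting $\bm K_n$ yields the stated representation for $\hat{\cB}_a$. For $\hat{\cB}_0$ I would instead use the Lagrangian first-order conditions: the stationarity equation reads $U_n(\hat{\cB}_0) = \tilde{\bm C}^T\hat{\bm\nu}$ where $\tilde{\bm C}=(\bm C,\ \bm 0_{r\times(s+K)})$ and $\hat{\bm\nu}$ is the multiplier, together with the constraint $\bm C\hat{\bbt}_{0,\cM}=\bt$, i.e.\ $\tilde{\bm C}(\hat{\cB}_0-\cB^*)_{\cM\cup S\cup\cK} = -\bm h_n$. Linearizing the score as before and solving this $2\times2$ block system (unknowns: the parameter deviation and $\hat{\bm\nu}$) is exactly a constrained-least-squares projection; the solution is the $(\bm I-\bm P_n)$-projected score term plus the $\bm P_n$-correction carrying $\bm h_n$, after the change of variables $\bm K_n^{1/2}(\hat{\cB}_0-\cB^*)$. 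Matching the block $(\bm C\Omega_{mm}\bm C^T)^{-1}$ requires identifying $\Omega_{mm}$ as the $\cM\cM$-block of $\bm K_n^{-1}$, which drops out of the Schur complement when solving the saddle system; the condition $\lambda_{\max}((\bm C\bm C^T)^{-1})=O(1)$ in (A3) plus the $\bm K_n$ eigenvalue bounds ensure this inverse is well-behaved, and $\|\bm h_n\|_2=O(\sqrt{(m+s+K-r)/n})$ from (A3) keeps the correction term $O_p(\sqrt{m+s+K})$ so it does not blow up the remainder.

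The main obstacle I anticipate is \textbf{controlling the composite-likelihood Hessian and third-order terms uniformly over the growing-dimensional neighborhood $\cN^*$}, rather than any single algebraic step. Unlike the single-model GLM case, here the gradient and Hessian are sums over $k$ (and, in the variability matrix, double sums over $k,k'$) of probit score/information pieces, each involving ratios $\varphi/(\Phi(1-\Phi))$ and $\varphi^2/(\Phi(1-\Phi))$ of the linear predictor $(\bm x_i^k)^T\cB$; I need these, and their derivatives, to be bounded and Lipschitz on $\cN^*$, which is where the $\max_i\max_k |(\bm x_i^k)^T\cB^*|=O(1)$ assumption and the neighborhood radius are used to keep $(\bm x_i^k)^T\cB$ bounded for all $\cB\in\cN^*$ — but making the uniform-over-$\cN^*$ bounds rigorous while tracking the $(m+s+K)$ dependence (so that the remainder is genuinely $o_p(1)$ under $m+s+K=o(n^{1/3})$) is the delicate bookkeeping. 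A secondary subtlety is verifying that the \emph{strict} inequality in the off-support KKT conditions survives uniformly over the $p-|\cM\cup S\cup\cK|$ inactive coordinates, which needs a union bound and hence the $\lambda_n\gg\sqrt{(\log p)/n}$ part of (A2); and, for the constrained case, checking that the constrained local maximizer identified on the affine subspace is still a genuine local maximizer of the partially penalized problem once the inactive coordinates are reintroduced.
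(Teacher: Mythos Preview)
Your proposal is correct and follows essentially the same three-step architecture as the paper: (Step 1) construct an oracle local maximizer on the restricted/affine subspace via a second-order Taylor argument and a boundary comparison in a shrinking ball of radius $\tau\sqrt{(m+s+K-r)/n}$; (Step 2) verify the strict KKT inequalities on the inactive coordinates via Hoeffding plus a union bound over $p$; (Step 3) Taylor-expand the score at $\hat{\cB}_0$, solve the Lagrangian saddle system, and reduce to the $(\bm I-\bm P_n)$/$\bm P_n$ decomposition. The one point where the paper is more explicit than your sketch is the Hessian concentration you flagged as the ``main obstacle'': the paper controls the random gap $\hat\Sigma(\bm X^k\cB^*)-\Sigma(\bm X^k\cB^*)$ (the observed vs.\ expected probit information) via a matrix Chernoff bound and an event $E_n$, then handles the drift from $\cB^*$ to $\cB^{**}\in\cN^*$ by a separate Lipschitz argument using the bounded-linear-predictor condition---exactly the bookkeeping you anticipated.
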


\paragraph{Remark 2.1.} If we set $\cM=\emptyset$, then $\mathcal{S}=\{j\in [1,\dots,p] :\beta^*_j \neq 0\}$ represents the conventional support set of $\bbt^*$. Consequently, Theorem \ref{theorem:estimator_statistical_properties} implies the convergence rate and asymptotic distribution of the standard folded concave penalized composite probit regression estimator with the penalty function $\sum_{j=1}^{p}p_\lambda(\vert\beta_j\vert)$, thus concluding the theoretical analysis of the statistical properties of the composite probit regression method in \cite{zhou2023nonparametric}.

\paragraph{Remark 2.2.} Since (A2) assumes that $d_n\gg\sqrt{(s+m+K)/n}$, Theorem \ref{theorem:estimator_statistical_properties} (ii) implies that each element in $\hat{\bbt}_{a,S}$ and $\hat{\bbt}_{0,S}$ is nonzero and has the same sign as the true coefficient ${\bbt^*}_{S}$ with probability tending to $1$. Along with results in (i), we proves the sign consistency of $\hat{\bbt}_{a,\cM^c}$ and $\hat{\bbt}_{0,\cM^c}$.

\paragraph{Remark 2.3.} Theorem \ref{theorem:estimator_statistical_properties} implies that when $\bh_n$ converges to $0$ at an appropriate rate (given in (A3)), constrained estimator $\hat{\cB}_{0}$ converges faster than the unconstrained estimator $\hat{\cB}_{a}$ when ${m+s+K-r} \ll {m+s+K}$, where $r$ is the number of independent linear constraints in the hypothesis.

\section{Partial Penalized Wald, Score and Likelihood-Ratio Tests}\label{sec:partial_penalized_tests}

Our approach to constructing the testing procedures aligns with the principles underlying Wald, score, and likelihood ratio tests for (generalized) linear regression models in the fixed $p$ case. In Section \ref{subsec:partial_penalized_CPR_method}, we introduced the partial penalized composite probit regression method for the sparse estimation of the non-parametric Box-Cox regression model \eqref{eq:non-parametric_Box-Cox_model} in high-dimensional settings. In this section, we will use the proposed constrained and unconstrained partial penalized composite probit regression estimators, $\hat{\cB}_{0}$ defined in \eqref{eq:beta_0} and $\hat{\cB}_{a}$ defined in \eqref{eq:beta_a},
to develop our testing procedures for high-dimensional linear hypotheses $H_0:\bC\bbt^*_{\cM}=\bt$.

\subsection{Test Statistics}

\paragraph{Partial Penalized Composite Likelihood Ratio Test Statistic}
Since the composite probit regression method in Section \ref{subsec:CPR} is based on the composite probit likelihood function, we introduce the \emph{partial penalized composite likelihood ratio test statistic},
\begin{equation}
    T_L
    =
    2n
    \left\{
    M_n(\hat{\cB}_a)
    -
    M_n(\hat{\cB}_0)
    \right\},
\end{equation}
where $M_n(\cB)$ is the composite probit regression likelihood function given in \eqref{eq:CPR_likelihood_B}.

\paragraph{Partial Penalized Wald Test Statistic}

The partial penalized Wald statistic is based on the quantity $\sqrt{n}(\bm{C}\hat{\bbt}_{a,\cM}-\bm{t})$. Let $\widehat{S}_a=\{j\in [1,\dots,p]\backslash \cM:\hat{\beta}_{a,j}\neq 0\}$, be the partial active set of unconstrained estimator $\hat{\bbt}_a$. By Theorem \ref{theorem:estimator_statistical_properties}, we have $\widehat{S}_a=S$ with probability tenting to $1$. Define $\widehat{\Omega}_a = ({\widehat{\bm{K}}_{n,a}})^{-1}$, where
\begin{equation}
    \widehat{\bm{K}}_{n,a}
    =
    \frac{1}{n}
    \sum_{k=1}^{K}w_k 
    \begin{pmatrix}
        \bm{X}_{\cM}^T
            \\
        \left( {\bm{X}^k_{{\widehat{S}_a}\cup \cK}} \right)^T
    \end{pmatrix}
    {\Sigma}(\bm{X}^k{\hat{\cB}_a})
    \begin{pmatrix}
        \bm{X}_{\cM}^T
        \\
        \left( {\bm{X}^k_{{\widehat{S}_a}\cup \cK}} \right)^T
    \end{pmatrix}^T
    ,
    \label{eq:test_statistic_Ka}
\end{equation}
is the plug-in estimator of the matix ${\bm{K}}_n$ under the alternative, and define $\widehat{\Omega}_{a,mm}$ as its submatrix formed by its first $m$ rows and $m$ columns. The \emph{partial penalized Wald test statistic} is defined as
\begin{equation}
    T_W 
    = 
    n
    (\bm{C}\hat{\bbt}_{a,\cM}-\bm{t})^T
    (
        \bm{C}
        \widehat{\Omega}_{a,mm}
        \bm{C}^T
    )^{-1}
    (\bm{C}\hat{\bbt}_{a,\cM}-\bm{t}).
\end{equation}

\paragraph{Partial Penalized Score Test Statistic}

The partial penalized score statistic is based on the score function evaluated at $\hat{\cB}_0$
\begin{equation}
    \mathcal{S}_n(\hat{\cB}_0)
    :=
    \sum_{k=1}^{K}w_k
                    \begin{pmatrix}
                            \bm{X}_{\cM}^T
                            \\
                            \left( {\bm{X}^k_{{\widehat{S}_0}\cup \cK}} \right)^T
                        \end{pmatrix}
                        \bm{H}(\bm{X}^k{\hat{\cB}_0})
                        \left\{
                            \bm{Y}^k - \bm{\mu}(\bm{X}^k{\hat{\cB}_0})
                        \right\},
\end{equation}
where $\widehat{S}_0=\{j\in [1,\dots,p]\backslash\cM :\hat{\beta}_{0,j}\neq 0\}$ is the partial active set of constrained estimator $\hat{\bbt}_0$. Define $\widehat{\Omega}_0 = (\widehat{\bm{K}}_{n,0})^{-1}$, where
\begin{equation}
    \widehat{\bm{K}}_{n,0}
    =
    \frac{1}{n}
    \sum_{k=1}^{K}w_k 
    \begin{pmatrix}
        \bm{X}_{\cM}^T
            \\
        \left( {\bm{X}^k_{{\widehat{S}_0}\cup \cK}} \right)^T
    \end{pmatrix}
    {\Sigma}(\bm{X}^k{\hat{\cB}_0})
    \begin{pmatrix}
        \bm{X}_{\cM}^T
        \\
        \left( {\bm{X}^k_{{\widehat{S}_0}\cup \cK}} \right)^T
    \end{pmatrix}^T,
    \label{eq:test_statistic_K0}
\end{equation}
is the plug-in estimator of matrix ${\bm{K}}_n$ under the null.
The \emph{partial penalized score test statistic} is defined as
\begin{equation}
    T_S
    =
    \frac{1}{n}
    \mathcal{S}_n(\hat{\cB}_0)^T
    \widehat{\Omega}_0
    \mathcal{S}_n(\hat{\cB}_0).
\end{equation}

\subsection{Limiting Distributions of the Test Statistics}
Define several useful quantities:
\begin{align*}
            \Psi_n 
            &
            = 
            \bm{C}
            {\Omega}_{mm}
            \bm{C}^T
            =
            \begin{pmatrix}
                    \bC^T 
                            \\
                            \bm{0}_{r\times (s+K)}^T
                \end{pmatrix}^T
            \bm{K}_n^{-1}
            \begin{pmatrix}
                    \bC^T 
                            \\
                            \bm{0}_{r\times (s+K)}^T
                \end{pmatrix},
            \\
            \mathcal{T}_n 
            &
            =
            \begin{pmatrix}
                    \bC^T 
                            \\
                            \bm{0}_{r\times (s+K)}^T
                \end{pmatrix}^T
            \bm{K}_n^{-1}
            \bm{V}_n
            \bm{K}_n^{-1}
            \begin{pmatrix}
                    \bC^T 
                            \\
                            \bm{0}_{r\times (s+K)}^T
                \end{pmatrix},
            \\
            \bm{V}_n
            &
            =
            \frac{1}{n}\sum_{i=1}^{n}
            \sum_{k,k^{\prime}=1}^{K}
            w_{k} w_{k^{\prime}}
            \left(
                \bm{x}_{i,\cM\cup{S}\cup\cK}^{k}
            \right)
            \left(
                \bm{x}_{i,\cM\cup{S}\cup\cK}^{k^{\prime}}
            \right)^T
            \frac{
                \varphi\left(
                    ({\bm{x}_{i}^{k}})^T
                    \cB^*
                \right)
                \cdot\varphi\left(
                    ({\bm{x}_{i}^{k^{\prime}}})^T
                    \cB^*
                \right)
            }{
                \Phi\left(
                    (\bm{x}_{i}^{\min\{k,k^\prime\}})^T
                    \cB^*
                \right)
                \cdot
                \Phi\left(
                    -
                    (\bm{x}_{i}^{\max\{k,k^\prime\}})^T
                    \cB^*
                \right)
            }.
\end{align*}
Notice that $\bm{V}_n=\frac{1}{{n}}\text{Cov}(\mathcal{S}_n(\cB^*))$, where $\mathcal{S}_n(\cB^*)$ is the score function of the composite probit likelihood evaluated at $\cB^*$:
\begin{equation*}
    \mathcal{S}_n({\cB}^*)
    =
    \sum_{k=1}^{K}w_k
                    \begin{pmatrix}
                            \bm{X}_{\cM}^T
                            \\
                            \left( {\bm{X}^k_{{{S}}\cup \cK}} \right)^T
                        \end{pmatrix}
                        \bm{H}(\bm{X}^k{{\cB^*}})
                        \left\{
                            \bm{Y}^k - \bm{\mu}(\bm{X}^k{{\cB^*}})
                        \right\}.   
\end{equation*}

\begin{theorem}\label{theorem:testing_statistic_distribution}
    Assume Conditions (A1) - (A3) hold, $m+s+K=o(n^{1/3})$. Further assume the following holds:
    \begin{equation}
        \frac{r^{1/4}}{n^{3/2}}\sum_{i=1}^{n}\sum_{k=1}^{K}
        w_k
        \left\{
            (\bm{x}_{i,\cM\cup{S}\cup\cK}^k)^T
            \bm{V}_n^{-1}
            (\bm{x}_{i,\cM\cup{S}\cup\cK}^k)
        \right\}^{3/2}
        \rightarrow 0.
        \label{eq:theorem(limiting_distribution)condition}
    \end{equation}
    Then we have
    \begin{equation}
        \sup_{x}
        \left\vert
            \mathbb{P}(T\leq x)
            -
            \mathbb{P}\left(
                \left\Vert
                \Psi_n^{-1/2}
                \left(
                {\mathcal{T}_n}^{1/2} \bm{Z}
                +
                \sqrt{n}\bm{h}_n
                \right)
            \right\Vert_2^2
            \leq x
            \right)
        \right\vert
        \rightarrow 0,
        \label{eq:limiting_distribution_result}
    \end{equation}
    for $T=T_{W}$, $T_{S}$ or $T_{L}$, where $\bm{Z}\sim{N}(\bm{0}_r,\bm{I}_{r})$.
\end{theorem}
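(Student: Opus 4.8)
The plan is to establish \eqref{eq:limiting_distribution_result} by first proving the three test statistics are asymptotically equivalent to a common quadratic form, and then identifying the limiting distribution of that quadratic form via a central limit theorem for the composite score. I would start from the linear expansions of $\hat{\cB}_a$ and $\hat{\cB}_0$ supplied by Theorem \ref{theorem:estimator_statistical_properties}. Writing $\bm{W}_n := n^{-1/2}\sum_{k}w_k (\bm{X}_{\cM}^T,(\bm{X}^k_{S\cup\cK})^T)^T \bm{H}(\bm{X}^k\cB^*)\{\bm{Y}^k-\bm{\mu}(\bm{X}^k\cB^*)\}$ for the (recentred) composite score, those expansions give $\sqrt{n}(\hat{\cB}_{a}-\cB^*)_{\cM,S\cup\cK} = \bm{K}_n^{-1}\bm{W}_n + o_p(1)$ and a corresponding expression for the constrained estimator involving $\bm{I}-\bm{P}_n$ and the $\bm{h}_n$ drift term. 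Substituting the $\cM$-block of the first expansion into $T_W$, and using that $\widehat{\Omega}_{a,mm}\to\Omega_{mm}$ in probability (consistency of the plug-in $\widehat{\bm{K}}_{n,a}$, which follows from part (i) of Theorem~\ref{theorem:estimator_statistical_properties} giving $\widehat S_a=S$ w.p.\ $1$, plus the continuity/eigenvalue bounds in (A1) on the neighbourhood $\cN^*$), one gets
\begin{equation*}
	T_W = \left\Vert \Psi_n^{-1/2}\Big( \tilde{\bm C}^T \bm{K}_n^{-1}\bm{W}_n + \sqrt n\,\bm h_n\Big)\right\Vert_2^2 + o_p(1),
\end{equation*}
where $\tilde{\bm C}^T=(\bC^T,\bm 0)^T$. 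I would then show $T_S$ and $T_L$ admit the identical expansion: for $T_S$, Taylor-expand the score $\mathcal{S}_n(\hat{\cB}_0)$ about $\cB^*$, use the constrained expansion to replace $\sqrt n(\hat{\cB}_0-\cB^*)$, and check that the cross terms telescope so that $n^{-1/2}\mathcal{S}_n(\hat{\cB}_0)$ reduces to $\bm{K}_n^{1/2}\bm{P}_n\bm{K}_n^{-1/2}(\bm{W}_n + \text{drift})$ up to $o_p(1)$, after which $T_S = n^{-1}\mathcal{S}_n(\hat{\cB}_0)^T\widehat\Omega_0\mathcal{S}_n(\hat{\cB}_0)$ collapses to the same norm; for $T_L$, a second-order Taylor expansion of $M_n$ around $\hat{\cB}_a$ (whose gradient on the free coordinates vanishes) combined with the constrained expansion gives $T_L = n(\hat{\cB}_a-\hat{\cB}_0)^T\bm{K}_n(\hat{\cB}_a-\hat{\cB}_0)+o_p(1)$, and plugging in both linear expansions and simplifying the projection algebra yields precisely the quadratic form above. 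This is the bookkeeping-heavy part but is conceptually parallel to \cite{shi2019linear}, adapted to the composite (sandwich) setting where $\bm{V}_n\neq\bm{K}_n$.

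Given the common expansion, the remaining task is distributional: show that $\Psi_n^{-1/2}\tilde{\bm C}^T\bm{K}_n^{-1}\bm{W}_n \Rightarrow \Psi_n^{-1/2}\mathcal{T}_n^{1/2}\bm Z$ with $\bm Z\sim N(\bm 0_r,\bm I_r)$. Note $\bm{W}_n = n^{-1/2}\mathcal{S}_n(\cB^*)$ is a normalized sum of i.i.d.\ mean-zero vectors (mean zero because $\mathbb{E}[\bm Y^k-\bm\mu(\bm X^k\cB^*)\mid\bm X]=\bm 0$ under the model), with covariance $\bm V_n$ by the definition given just before the theorem; hence $\tilde{\bm C}^T\bm{K}_n^{-1}\bm{W}_n$ has exact covariance $\tilde{\bm C}^T\bm{K}_n^{-1}\bm{V}_n\bm{K}_n^{-1}\tilde{\bm C} = \mathcal{T}_n$. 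Because the dimension $r$ is allowed to diverge, I would invoke a CLT for quadratic forms / a Berry–Esseen bound in the convex-set or Euclidean-norm metric (e.g.\ a multivariate Lyapunov/Bentkus-type bound): after whitening by $\mathcal{T}_n^{-1/2}$, the Lyapunov ratio controlling the Gaussian approximation of $\Vert\mathcal{T}_n^{-1/2}\tilde{\bm C}^T\bm{K}_n^{-1}\bm{W}_n\Vert_2^2$ is exactly $n^{-3/2}r^{1/4}\sum_i\sum_k w_k\{(\bm x^k_{i,\cM\cup S\cup\cK})^T \bm V_n^{-1}(\bm x^k_{i,\cM\cup S\cup\cK})\}^{3/2}$ up to constants — this is precisely hypothesis \eqref{eq:theorem(limiting_distribution)condition}, so it tends to $0$. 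The lower eigenvalue bounds in (A1) on $\bm{V}_n$ (the last displayed condition in (A1)) and on the relevant Hessian block, together with (A3)'s bound $\lambda_{\max}((\bC\bC^T)^{-1})=O(1)$ and the rate constraint on $\Vert\bm h_n\Vert_2$, keep $\Psi_n^{-1/2}$, $\mathcal{T}_n^{1/2}$ and the noncentrality $\sqrt n\,\bm h_n$ all well-controlled so the shift term passes through cleanly.

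Finally I would assemble the pieces: the Gaussian approximation gives $\Psi_n^{-1/2}(\tilde{\bm C}^T\bm{K}_n^{-1}\bm{W}_n+\sqrt n\,\bm h_n) \approx \Psi_n^{-1/2}(\mathcal{T}_n^{1/2}\bm Z+\sqrt n\,\bm h_n)$ in the Kolmogorov metric on the squared norm, uniformly in $x$, and combining with the $o_p(1)$ equivalences for $T_W,T_S,T_L$ established in the first step (using an anti-concentration bound for the generalized chi-squared limit so that an $o_p(1)$ perturbation is negligible in the sup-over-$x$ sense — available since $\Psi_n^{-1}\mathcal{T}_n$ has eigenvalues bounded away from $0$) yields \eqref{eq:limiting_distribution_result} for all three statistics simultaneously. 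I expect the main obstacle to be the distributional step under diverging $r$: obtaining a quantitative CLT for the (possibly high-dimensional, noncentral) quadratic form in the Euclidean-norm metric with error governed exactly by \eqref{eq:theorem(limiting_distribution)condition}, while simultaneously tracking that the plug-in and Taylor-remainder errors from step one are $o_p(1)$ at a rate fast enough (this is where $m+s+K=o(n^{1/3})$ is consumed) not to overwhelm the anti-concentration margin. The algebraic reduction of $T_S$ and $T_L$ to the Wald-type quadratic form is routine but lengthy, and the projection identities $\bm P_n\bm K_n^{-1/2}\tilde{\bm C}^T = \bm K_n^{-1/2}\tilde{\bm C}^T$ and $\tilde{\bm C}^T\bm{K}_n^{-1}\tilde{\bm C}=\Psi_n$ are the levers that make it work.
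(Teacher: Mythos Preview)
Your proposal is correct and follows essentially the same route as the paper: define the common quadratic form $T_0=\Vert\Psi_n^{-1/2}(\tilde{\bm C}^T\bm K_n^{-1}\bm W_n+\sqrt n\,\bm h_n)\Vert_2^2$, reduce each of $T_W,T_S,T_L$ to $T_0$ via the linear expansions of Theorem~\ref{theorem:estimator_statistical_properties} and Taylor expansion, apply a Bentkus-type convex-set CLT to $\mathcal T_n^{-1/2}\bm\omega_n$ (which is exactly where condition \eqref{eq:theorem(limiting_distribution)condition} enters), and finish with anti-concentration of the generalized $\chi^2$ limit. One precision to watch: since $r$ may diverge, the equivalences you must establish are $T=T_0+o_p(r)$ (not $o_p(1)$), and the anti-concentration step in the paper is correspondingly carried out at scale $\varepsilon r$; your final paragraph hints at this but the displayed $o_p(1)$ should be tightened.
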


\paragraph{Remark 4.1.} From \eqref{eq:limiting_distribution_result}, we proved that the partial penalized composite likelihood ratio test statistic $T_L$, partial penalized Wald test statistic $T_W$, and partial penalized score test statistic $T_S$ are asymptotically equivalent to the following quantity:
\begin{equation*}
    \mathcal{X}_n:=
    \left\Vert
                \Psi_n^{-1/2}
                \left(
                {\mathcal{T}_n}^{1/2} \bm{Z}
                +
                \sqrt{n}\bm{h}_n
                \right)
            \right\Vert_2^2.
\end{equation*}
When $K=1$, we have $\Psi_n={\mathcal{T}_n}$. Then $\mathcal{X}_n\sim\chi^2(r,\gamma_n)$, a chi-squared distribution with degrees of freedom $r$ and non-centrality parameter $\gamma_n=\Vert\sqrt{n}\Psi_n^{-1/2}\bh_n
\Vert_2^2$. This coincides with the limiting distribution of the partial penalized test statistics for generalized linear model developed by \cite{shi2019linear}.
When $K\neq 1$, we have $\Psi_n\neq{\mathcal{T}_n}$. Then $\mathcal{X}_n$ follows a generalized chi-squared distribution with $r$ degrees of freedom.

\subsection{Testing Procedure}\label{subsec:testing_procedure}
Under the null hypothesis, we have $\bm{h}_n=\bm{0}$, and hence, for $T=T_{W}$, $T_{S}$ or $T_{L}$,
\begin{equation*}
        \sup_{x}
        \left\vert
            \mathbb{P}(T\leq x)
            -
            \mathbb{P}\left(
                \bm{Z}^T
                {\mathcal{T}_n}^{1/2}
                \Psi_n^{-1}
                {\mathcal{T}_n}^{1/2} 
                \bm{Z}
            \leq x
            \right)
        \right\vert
        \rightarrow 0.
\end{equation*}
Given the significance level $\alpha\in(0,1)$, denote the $(1-\alpha)$-quantile of $\bm{Z}^T
                {\mathcal{T}_n}^{1/2}
                \Psi_n^{-1}
                {\mathcal{T}_n}^{1/2} 
                \bm{Z}$ 
as $\chi_{n,(1-\alpha)}$, i.e., $\chi_{n,(1-\alpha)}$ is the value satisfying the following equation:
\begin{equation*}
    \mathbb{P}\left(
                \bm{Z}^T
                {\mathcal{T}_n}^{1/2}
                \Psi_n^{-1}
                {\mathcal{T}_n}^{1/2} 
                \bm{Z}
            \leq \chi_{n,(1-\alpha)}
            \right)
            =
            1-\alpha.
\end{equation*}

When $K = 1$, we have $\Psi_n=\mathcal{T}_n$ and $\bm{Z}^T
                {\mathcal{T}_n}^{1/2}
                \Psi_n^{-1}
                {\mathcal{T}_n}^{1/2} 
                \bm{Z} = \bm{Z}^T\bm{Z}\sim\chi^2(r)$, which implies that $\chi_{n,(1-\alpha)}$ is the $(1-\alpha)$-quantile of the $\chi^2$ distribution with $r$ degrees of freedom.

    When $K > 1$, we have $\Psi_n \neq \mathcal{T}_n$, which implies that $\chi_{n,(1-\alpha)}$ is $(1-\alpha)\times 100\%$ quantile of a generalized $\chi^2$ distribution with $r$ degrees of freedom and the shape of the ellipsoid is given by the positive-definite matrix ${\mathcal{T}_n}^{1/2}
                \Psi_n^{-1}
                {\mathcal{T}_n}^{1/2} $,
    determined by the design matrix $\bm{X}$, the true coefficients $\cB^*$, and the testing matrix $\bC$. Since the true coefficients $\cB^*$ are unknown, the quantile $\chi_{n,(1-\alpha)}$ is unknown. To construct an approximated testing procedure, we plug in the unconstrained estimator, $\hat{\cB}_{a}$, to get the estimation of $\Psi_n$ and $\mathcal{T}_n$, given by
    \begin{align*}
    \widehat{\Psi}_{n,a}
    &
    :=
    \begin{pmatrix}
                    \bC^T 
                            \\
                            \bm{0}_{r\times (s+K)}^T
                \end{pmatrix}^T
            \widehat{\bm{K}}_{n,a}^{-1}
            \begin{pmatrix}
                    \bC^T 
                            \\
                            \bm{0}_{r\times (s+K)}^T
                \end{pmatrix}
                =
                \bm{C}
        \widehat{\Omega}_{a,mm}
        \bm{C}^T
        ,
    \\
    \widehat{\mathcal{T}}_{n,a}
    &
    :=
    \begin{pmatrix}
                    \bC^T 
                            \\
                            \bm{0}_{r\times (s+K)}^T
                \end{pmatrix}^T
            \widehat{\bm{K}}_{n,a}^{-1}
            \widehat{\bm{V}}_{n,a}
            \widehat{\bm{K}}_{n,a}^{-1}
            \begin{pmatrix}
                    \bC^T 
                            \\
                            \bm{0}_{r\times (s+K)}^T
                \end{pmatrix},
    \\
    \widehat{\bm{V}}_{n,a}
    &
    :=
    \frac{1}{n}\sum_{i=1}^{n}
            \sum_{k,k^{\prime}=1}^{K}
            w_{k} w_{k^{\prime}}
            \left(
                \bm{x}_{i,\cM\cup{\widehat{S}_a}\cup\cK}^{k}
            \right)
            \left(
                \bm{x}_{i,\cM\cup{\widehat{S}_a}\cup\cK}^{k^{\prime}}
            \right)^T
            \frac{
                \varphi\left(
                    \hat{\cB}_{a}^T
                    {\bm{x}_{i}^{k}}
                \right)
                \cdot\varphi\left(
                    \hat{\cB}_{a}^T
                    {\bm{x}_{i}^{k^{\prime}}}
                \right)
            }{
                \Phi\left(
                    \hat{\cB}_{a}^T
                    \bm{x}_{i}^{\min\{k,k^\prime\}}
                \right)
                \cdot
                \Phi\left(
                    -
                    \hat{\cB}_{a}^T
                    \bm{x}_{i}^{\max\{k,k^\prime\}}
                \right)
            },
    \end{align*}
where $\widehat{\bm{K}}_{n,a}$ is defined in equation \eqref{eq:test_statistic_Ka}. Then $\chi_{n,(1-\alpha)}$ can be approximated by $\hat{\chi}_{n,(1-\alpha)}$, which is defined as the solution to the following equation:
\begin{equation*}
    \mathbb{P}\left(\left.
                \bm{Z}^T
                {\widehat{\mathcal{T}}_{n,a}}^{1/2}
                {\widehat{\Psi}_{n,a}}^{-1}
                {\widehat{\mathcal{T}}_{n,a}}^{1/2} 
                \bm{Z}
            \leq 
            \hat{\chi}_{n,(1-\alpha)}
            \right\vert
            \bm{X},\bm{Y}
            \right)
            =
            1-\alpha,
\end{equation*}
where $\bm{Z}\sim{N}(\bm{0}_r,\bm{I}_r)$ and $\bm{Z} {\perp \!\!\! \perp} (\bm{X},\bm{Y})$. Notice that $\hat{\chi}_{n,(1-\alpha)}$ is a function of $(\bm{X},\bm{Y})$, thus is a random variable. Given a realization of $(\bX,\bm{Y})$, we can get a realization of $\hat{\chi}_{n,(1-\alpha)}$ using Monte Carlo simulation. We can prove that the testing procedure developed based on $\hat{\chi}_{n,(1-\alpha)}$ has the correct size asymptotically.

\begin{theorem}\label{theorem:type_I_error}
    For testing hypothesis $H_0:\bC\bbt^*_{\cM}=\bt$, we construct the testing procedure as follows: For a given significance level $\alpha\in(0,1)$, reject the null hypothesis when 
    \begin{equation}
        T>\hat{\chi}_{n,(1-\alpha)},
        \label{eq:testing_procedure}
    \end{equation}
    where $T\in\{T_L,~T_W,~T_S\}$.
    Assume that the conditions of Theorem \ref{theorem:testing_statistic_distribution} as well as $(s+m+K)r=o(\sqrt{n})$ hold. Then, under the null hypothesis, we have
    \begin{equation*}
    \lim_{n}\mathbb{P}\left(
                T
            > 
            \hat{\chi}_{n,(1-\alpha)}
            \right)=
            \alpha.
\end{equation*}
\end{theorem}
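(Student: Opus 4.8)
The plan is to show that the data-dependent critical value $\hat\chi_{n,(1-\alpha)}$ converges in probability (in the appropriate uniform sense) to the deterministic value $\chi_{n,(1-\alpha)}$ that governs the limiting distribution of $T$ under $H_0$, and then combine this with the distributional approximation from Theorem~\ref{theorem:testing_statistic_distribution}. Concretely, under $H_0$ we have $\bm{h}_n=\bm{0}$, so Theorem~\ref{theorem:testing_statistic_distribution} gives $\sup_x|\mathbb{P}(T\le x)-\mathbb{P}(\bm{Z}^T\mathcal{T}_n^{1/2}\Psi_n^{-1}\mathcal{T}_n^{1/2}\bm{Z}\le x)|\to 0$ for $T\in\{T_L,T_W,T_S\}$. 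Writing $A_n:=\mathcal{T}_n^{1/2}\Psi_n^{-1}\mathcal{T}_n^{1/2}$ and $\widehat A_{n,a}:=\widehat{\mathcal{T}}_{n,a}^{1/2}\widehat\Psi_{n,a}^{-1}\widehat{\mathcal{T}}_{n,a}^{1/2}$, the first step is to establish $\|\widehat A_{n,a}-A_n\|_2=o_p(1)$. This reduces, via the plug-in formulas for $\widehat{\bm K}_{n,a}$, $\widehat{\bm V}_{n,a}$ and the fact that $\widehat S_a=S$ with probability tending to one (Theorem~\ref{theorem:estimator_statistical_properties}(i) plus Remark~2.2), to showing $\|\widehat{\bm K}_{n,a}-\bm K_n\|_2=o_p(1)$ and $\|\widehat{\bm V}_{n,a}-\bm V_n\|_2=o_p(1)$. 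Each of these follows from a Taylor/Lipschitz argument: the summands are smooth functions of the linear predictors $(\bm x_i^k)^T\cB$, which are uniformly bounded on $\cN^*$ by (A1), and $\|\hat\cB_a-\cB^*\|_2=O_p(\sqrt{(m+s+K)/n})$ by Theorem~\ref{theorem:estimator_statistical_properties}(ii); the extra $(s+m+K)r=o(\sqrt n)$ condition is exactly what is needed to control the accumulation of the $O((m+s+K))$ entrywise perturbations when passing to the operator norm of an $r\times r$ matrix, using $\|\cdot\|_2\le\|\cdot\|_F$ and $\|\bm X_{\cM\cup S}\|_{2,\infty}$-type bounds. I would also record that $A_n$ has eigenvalues bounded away from $0$ and $\infty$ uniformly in $n$ (from (A1), (A3) and $\lambda_{\max}((\bm C\bm C^T)^{-1})=O(1)$), so that inverses and square roots are well-conditioned and the perturbation passes through continuously.

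The second step converts the matrix convergence into convergence of the conditional quantile. For a fixed realization with $\|\widehat A_{n,a}-A_n\|_2\le\varepsilon_n$ (a sequence $\varepsilon_n\to 0$ holding with probability $\to 1$), I would bound $|\mathbb{P}(\bm Z^T\widehat A_{n,a}\bm Z\le x\mid \bm X,\bm Y)-\mathbb{P}(\bm Z^T A_n\bm Z\le x)|$ uniformly in $x$ by a quantity tending to $0$. Since $\bm Z^T\widehat A_{n,a}\bm Z$ and $\bm Z^T A_n\bm Z$ are quadratic forms in the same Gaussian vector whose difference $\bm Z^T(\widehat A_{n,a}-A_n)\bm Z$ is $O_p(r\varepsilon_n)$, and since the distribution function of $\bm Z^T A_n\bm Z$ is Lipschitz with a constant controlled by the lower eigenvalue bound on $A_n$ (a generalized-chi-squared density is bounded when the smallest eigenvalue is bounded below, using $r$ fixed or slowly growing), the two c.d.f.'s are uniformly close. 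This yields $\hat\chi_{n,(1-\alpha)}-\chi_{n,(1-\alpha)}=o_p(1)$, and moreover $\chi_{n,(1-\alpha)}$ stays in a compact interval bounded away from $0$.

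The final step assembles the pieces. Write
\begin{equation*}
\mathbb{P}(T>\hat\chi_{n,(1-\alpha)})
=\mathbb{P}(T>\chi_{n,(1-\alpha)})
+\big[\mathbb{P}(T>\hat\chi_{n,(1-\alpha)})-\mathbb{P}(T>\chi_{n,(1-\alpha)})\big].
\end{equation*}
The first term tends to $\alpha$ by Theorem~\ref{theorem:testing_statistic_distribution} under $H_0$ together with the definition of $\chi_{n,(1-\alpha)}$. For the second term, on the event $\{|\hat\chi_{n,(1-\alpha)}-\chi_{n,(1-\alpha)}|\le\delta\}$ (probability $\to1$ for any fixed $\delta>0$) it is sandwiched between $\pm[\mathbb{P}(T\le\chi_{n,(1-\alpha)}+\delta)-\mathbb{P}(T\le\chi_{n,(1-\alpha)}-\delta)]$, which by the anti-concentration of the limiting generalized-$\chi^2$ law (again Lipschitz c.d.f.\ from the eigenvalue bounds) is $O(\delta)$ uniformly in $n$; letting $\delta\to 0$ after $n\to\infty$ finishes the argument. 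The main obstacle I anticipate is the first step — proving $\|\widehat{\bm K}_{n,a}-\bm K_n\|_2$ and $\|\widehat{\bm V}_{n,a}-\bm V_n\|_2$ are $o_p(1)$ while the dimension $m+s+K$ grows and the double sum over $k,k'$ in $\bm V_n$ has denominators $\Phi(\cdot)\Phi(-\cdot)$ that could in principle be small; here I would lean on the boundedness of the linear predictors from (A1) to keep these denominators bounded below, and on the operator-norm bounds in (A1) for the design blocks, with $(s+m+K)r=o(\sqrt n)$ providing the slack to sum the perturbations.
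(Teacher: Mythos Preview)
Your proposal is correct and follows essentially the same route as the paper: the paper packages your Step~1 as a lemma showing $\|\widehat{\mathcal T}_{n,a}^{1/2}\widehat\Psi_{n,a}^{-1}\widehat{\mathcal T}_{n,a}^{1/2}-\mathcal T_n^{1/2}\Psi_n^{-1}\mathcal T_n^{1/2}\|_2=O_p((s+m+K)/\sqrt n)$ (no $r$ needed there), then uses a $\chi^2$ tail bound on $\|\bm Z\|_2^2$ and the anti-concentration already established in the proof of Theorem~\ref{theorem:testing_statistic_distribution} to finish exactly as in your Steps~2--3. One small correction: the extra condition $(s+m+K)r=o(\sqrt n)$ is not used to control the matrix operator norm in Step~1 but enters precisely where you place it in Step~2, to ensure $\|\widehat A_{n,a}-A_n\|_2\cdot\|\bm Z\|_2^2=O_p\big(r(s+m+K)/\sqrt n\big)=o_p(1)$.
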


\section{Computing Algorithms}\label{sec:algorithm}
Define $\cky_{ki}=-1$ if $\tilde{y}_{ki}=0$ and $\cky_{ki}=1$ if $\tilde{y}_{ki}=1$. Then the negative composite probit likelihood function \eqref{eq:composite_probit_likelihood} can be reformulated as the weighted summation of loss of margins:
\begin{equation}
    -M_n(\bbt, \bb) = {\sum_{k=1}^{K}w_k\left\{
    \frac{1}{n}\sum_{i=1}^{n}L(t_{ki})
    \right\}},
    \label{eq:CPR_loss_of_margins}
\end{equation}
where $	L(t)=-\log(\Phi(t))$ is the large-margin loss function induced from the probit regression model, named as \textit{probit regression loss function}, and $t_{ki}=\check{y}_{ki}(\bx_i^{T}\bbt-b_{k})$ is the margin of the $(k,i)$-th pair of data. Some useful properties of this probit loss function are studied in \cite{zhou2023nonparametric}. In this section, we will develop the computing algorithms for solving constrained and unconstrained estimators, $\hat{\bbt}_0$ and $\hat{\bbt}_a$, within the framework defined in \eqref{eq:CPR_loss_of_margins}.

\subsection{Computing the partial penalized linearly-constrained CPR estimator $\hat{\bbt}_0$}
By the definition in \eqref{eq:beta_0}, $(\hat{\bbt}_0,{\hat{\bb}}_0)$ is the solution to the following linearly-constrained optimization problem:
\begin{align*}
    \min_{\bbt, \bb} 
    &
    ~{\sum_{k=1}^{K}w_k\left\{
    \frac{1}{n}\sum_{i=1}^{n}L(\check{y}_{ki}(\bx_i^{T}\bbt-b_{k}))
    \right\}}
    + 
    \sum_{j \notin \cM}p_\lambda(\vert\beta_j\vert).
    \\
    \text{s.t.}
    &
    ~ \bC\bbt_{\cM}=\bt,
\end{align*}
Since there remains a lack of computing algorithms that offer both computational efficiency and theoretical guarantees for directly finding local minima in this type of optimization problem, \cite{shi2019linear} introduced an additional variable $\bm{\theta}$ with an added constraint $\bbt_{\cM^c}=\bm{\theta}$. They decomposed the objective function into a smooth and convex function of $(\bbt,\bb)$ and a non-smooth and non-convex function of $\bm{\theta}$. Then implementing the augmented Largrangian method becomes the alternating direction method of multipliers (ADMM \citep{boyd2011distributed}). When updating $(\bbt,\bb)$ given $\bm{\theta}$, \cite{shi2019linear} suggested to use the Newton-Raphson algorithm to find the minimizer of the smooth and convex optimization subproblem. However, in (ultra-)high-dimensional settings, the use of second-order algorithms like Newton-Raphson can pose significant challenges related to computational efficiency and accuracy of computing Hessian matrices and their inverse, particularly when $p$ is extremely large and $\bbt^*$ is extremely sparse. To tackle optimization problems with constraints under such circumstances, we propose to employ a variant of the standard Augmented Lagrangian Method \citep{hestenes1969multiplier,powell1969method}, which incorporate the special structure of our problems and is inspired by the success of coordinate-decent type-of algorithm for regularized generalized linear models \citep{friedman2023glmnet}. The algorithm can be decomposed to the following nested loops.

\paragraph{Outer Loop: Augmented Lagrangian Method}
The augmented Lagrangian function of our linearly-constrained minimization problem is
\begin{align}
    \mathcal{L}(\bbt, \bb, \bm{v};\rho) 
    :=
    &~
    {\sum_{k=1}^{K}w_k\left\{
    \frac{1}{n}\sum_{i=1}^{n}L(\check{y}_{ki}(\bx_i^{T}\bbt-b_{k}))
    \right\}}
    + 
    \sum_{j \notin \cM}p_\lambda(\vert\beta_j\vert)
    \nonumber
    \\
    +
    &~
    {\bm{v}}^{T} (\bC\bbt_{\cM}-\bt)
    +
    \frac{\rho}{2}\Vert \bC\bbt_{\cM}-\bt \Vert_2^2,
    \label{eq:Augmented_Lagrangian_function}
\end{align}
where $\rho>0$ is the augmentation parameter and $\bm{v}\in\mathbb{R}^{r}$ is the dual variable.
Then the method of multiplies solves the problem by updating the primal variable $(\bbt, \bb)$ and dual variable $\bm{v}$ alternatively until convergence:
\begin{align}
        (\hat{\bbt}^{s+1},\hat{\bm{b}}^{s+1})
        &
        = \arg\min_{\bbt,\bm{b}} \mathcal{L}(\bbt, \bb, \hat{\bm{v}}^{s};\rho);
        \label{eq:AML_beta_update}
        \\
        \hat{\bm{v}}^{s+1} 
        &
        = \hat{\bm{v}}^{s} + \rho(\bC\hat{\bbt}_{\cM}^{s+1}-\bt);
        \label{eq:AML_v_update}
\end{align}

We observe that, given $\hat{\bm{v}}$, the augmented Lagrangian function \eqref{eq:Augmented_Lagrangian_function} can be separated into one smooth and convex part, and one non-smooth and non-convex part:
\begin{align}
    \mathcal{L}(\bbt, \bb, \hat{\bm{v}};\rho) 
    =
    \underbrace{
    -M_n(\bbt,\bb)
    + 
    \hat{\bm{v}}^{T} (\bC\bbt_{\cM}-\bt)
    +
    \frac{\rho}{2}\Vert \bC\bbt_{\cM}-\bt \Vert_2^2
    }_{\text{convex part}}
    +
    \underbrace{
    \sum_{j \notin \cM}p_\lambda(\vert\beta_j\vert)
    }_{\text{non-convex part}},
    \label{eq:AML_beta_objective}
\end{align}
where the convex part is the negative composite likelihood function with the quadratic augmentation terms, and the non-convex part is the folded-concave penalty function.
To mitigate the computational issues along with high-dimensionality, we develop a \emph{first-order coordinate-majorization-descent} type of algorithm that can efficiently solve the folded-concave penalized problem \eqref{eq:AML_beta_objective} without introducing redundant variable $\bm{\theta}$. The idea of coordinate-majorization-descent is borrowed from \cite{zhou2023nonparametric}, \cite{yang2013efficient}, and \cite{friedman2023glmnet}, has proven to be effective in optimizing unconstrained folded-concave penalized regression problems, providing a direct solution for computing $\hat{\bbt}_a$ (details shown in Appendix \ref{appendix_sec:compute_beta_a}). Our algorithm incorporates the principle of minimization-majorization for descent updates and leverages the  coordinate-descent updates to enhance computational efficiency. For the estimation of generalized linear models with convex penalties (e.g. LASSO, ridge, elastic net), the coordinate-descent type of algorithms developed by \cite{friedman2010regularization} and implemented in \texttt{glmnet} \citep{friedman2023glmnet} have been demonstrated to be highly efficient across various real-world applications, outperforming some well-known accelerated algorithms \citep{nesterov1983AGD} widely used in many machine learning applications. In high-dimensional settings, sparse estimation can benefit from heuristic techniques such as active set and warm start, as incorporated by \cite{friedman2010regularization} in their coordinate-descent algorithms. These techniques exploit sparsity within the feature set and therefore greatly improve the algorithm's performance on high-dimensional datasets. Below, we provide the details of the algorithm for solving problem \eqref{eq:AML_beta_objective}.

\paragraph{Middle Loop: local linear approximation}
The minimization-majorization principle is used to convert the folded-concave penalized problem to a sequence of weighted $\ell_1$ penalized problems.
Let $(\hat{\bbt}^\textit{curr},\hat{\bm{b}}^\textit{curr})$ be the current estimate. The folded concave penalty could be majorized by its local linear approximation function:
\begin{equation}
	\sum_{j\notin \cM}p_\lambda(\vert\beta_j\vert)\leq
	\sum_{j\notin \cM}p_\lambda(\vert\hat{\beta}_j^\textit{curr}\vert) + p_\lambda^{\prime}(\vert{\hat{\beta}_j^\textit{curr}}\vert)(\vert\beta_j\vert-\vert{\hat{\beta}_j^\textit{curr}}\vert),
\end{equation}
which is proven to be the best convex majorization of the concave penalty function (Theorem 2 of \citealt{zou2008one}). Then the objective function \eqref{eq:AML_beta_objective} could be majorized by the following function:
\begin{equation}
	-M_n(\bbt,\bm{b})
        +
        \hat{\bm{v}}^{T} (\bC\bbt_{\cM}-\bt)
        +
        \frac{\rho}{2}\Vert \bC\bbt_{\cM}-\bt \Vert_2^2
        +
        \sum_{j \notin \cM}p_\lambda(\vert\hat{\beta}_j^\textit{curr}\vert) + p_\lambda^{\prime}(\vert{\hat{\beta}_j^\textit{curr}}\vert)(\vert\beta_j\vert-\vert{\hat{\beta}_j^\textit{curr}}\vert),
\end{equation}
which is a weighted $\ell_1$ penalized optimization problem. The pseudocode for the LLA algorithm are summarized in Algorithm \ref{alg:partial_penalized_linearly_constrained_LLA}. 

\begin{algorithm}[ht!]
	\caption{The LLA Algorithm for Solving \eqref{eq:AML_beta_update}}\label{alg:partial_penalized_linearly_constrained_LLA}
	\begin{algorithmic}[1]
            \State Input the dual vector $\hat{\bm{v}}$.
		\State Initialize $\hat{\bbt}^{(0)}=\hat{\bbt}^{\text{initial}}$ and compute the adaptive weight
		\begin{equation*}
			\hat{\bm{\omega}}^{(0)}
                =
                \left(
                    \hat{\omega}_j^{(0)},~j\in [1,\dots,p]\backslash\cM 
                \right)
                =
                \left(
                    p_{\lambda}^\prime(|\hat{\beta}_j^{(0)}|),~j\in [1,\dots,p]\backslash\cM
			\right)
		\end{equation*}
		\State For $m=1,2,\dots,$ repeat the LLA iteration till convergence
		\begin{itemize}
			\item[(3.a)] Obtain $(\hat{\bbt}^{(m)},\hat{\bm{b}}^{(m)})$ by minimizing the following objective function
			\begin{equation}
				-M_n(\bbt,\bm{b})
                    +
                    \hat{\bm{v}}^{T} (\bC\bbt_{\cM}-\bt)
                    +
                    \frac{\rho}{2}\Vert \bC\bbt_{\cM}-\bt \Vert_2^2
                    +
                    \sum_{j\notin\cM}\hat{\omega}_{j}^{(m-1)}|\beta_j|,
				\label{eq:AML_cpr_weightedL1}
			\end{equation}
			\item[(3.b)] Update the adaptive weight vector $\hat{\bm{\omega}}^{(m)}$ with $\hat{\omega}_j^{(m)}=p_{\lambda}^\prime(|\hat{\beta}_j^{(m)}|)$.
		\end{itemize}
	\end{algorithmic}
\end{algorithm}

\paragraph{Inner Loop: coordinate majorization descent}
In each iteration within the LLA Algorithm \ref{alg:partial_penalized_linearly_constrained_LLA}, when solving our weighted $\ell_1$ penalized problem \eqref{eq:AML_cpr_weightedL1}, we leverage the efficiency of coordinate descent algorithm \citep{friedman2010regularization} which has proven successful in finding sparse solution for many high-dimensional regression models. In the case of our non-parametric Box-Cox model, solved using the composite of a sequence of probit regression models, it is important to be extremely careful about computing overflow errors that may arise when computing the cumulative distribution function (CDF) $\Phi(\cdot)$. During cyclic udates across each coordinate, conventional second-order optimization algorithms like Newton-Raphson are very sensitive to large values of the linear predictor \citep{demidenko2001computational}, leading to inaccurate and overly aggressive step sizes.

Therefore, our primary focus lies in developing a numerically stable and efficient coordinate-descent algorithm to solve \eqref{eq:AML_cpr_weightedL1}. Leveraging the good property of the probit regression loss, as demonstrated in Lemma 2 of  \cite{zhou2023nonparametric}—specifically, the second derivative of the probit regression loss function is bounded by $1$—we can effectively fix the computer overflow issue by employing the minimization-majorization principle. This algorithm relies solely on the first-order information (gradient) of the composite probit loss function. Further details are provided below.

Let $(\hat{\bbt}^\textit{curr},\hat{\bm{b}}^\textit{curr})$ be the current estimate. Define the current margin $\hat{r}_{ki}^\textit{curr}=\check{y}_{ki}(\bx_i^{T}{\hat{\bbt}^\textit{curr}}-\hat{{b}}_k^\textit{curr})$ for $k=1,\dots,K$, $i=1,\dots,n$  and current adaptive weights
$\hat{\omega}_j^\textit{curr}=p^\prime_{\lambda}(\vert\hat{\beta}^\textit{curr}_j\vert)$ for $j\in \{1,\dots,p\}\backslash\cM$. 

To update the $\zeta$-th coordinate of $\bbt$, where $\zeta\in\{1,\cdots,p\}\backslash\cM$, define the $F$ function:
\begin{equation}
    F(\beta_{\zeta}\vert \hat{\bbt}^\textit{curr},\hat{\bm{b}}^\textit{curr})
    :=
    \sum_{k=1}^{K}w_k\left\{\frac{1}{n}\sum_{i=1}^{n}L(\check{y}_{ki}x_{i{\zeta}}(\beta_{\zeta}-{\hat{\beta}^\textit{curr}}_{\zeta}) + \hat{r}_{ki}^\textit{curr})\right\} + \hat{\omega}^\textit{curr}_{\zeta}\vert\beta_{\zeta}\vert.
\end{equation}
By Lemma 2 of \cite{zhou2023nonparametric} and $\check{y}_{ki}^2=1$, this function can be majorized by a penalized quadratic function defined as
\begin{align*}
    &~
    \mathcal{F}(\beta_{\zeta}\vert\hat{\bbt}^\textit{curr},\hat{\bm{b}}^\textit{curr}):=
    \\
    &
    \sum_{k=1}^{K}w_k\cdot\frac{1}{n}\sum_{i=1}^{n}\left[
	L(\hat{r}_{ki}^\textit{curr}) + L^\prime(\hat{r}_{ki}^\textit{curr})\check{y}_{ki}x_{i{\zeta}}(\beta_{\zeta}-{\hat{\beta}^\textit{curr}}_{\zeta}) + \frac{1}{2}x_{i{\zeta}}^2(\beta_{\zeta}-{\hat{\beta}^\textit{curr}}_{\zeta})^2
    \right] 
    + 
    \hat{\omega}^\textit{curr}_{\zeta}\vert\beta_{\zeta}\vert.
\end{align*}
We can easily solve the minimizer of the majorization function by a simple soft thresholding rule:
\begin{equation}
    {\hat{\beta}^\textit{new}}_{\zeta}
    =\mathcal{S}\left(
	{\hat{\beta}^\textit{curr}}_{\zeta}
    -
    \frac{
        \sum_{k=1}^{K}w_k\sum_{i=1}^{n}L^\prime(\hat{r}_{ki}^\textit{curr})\check{y}_{ki}x_{i{\zeta}}
        }{
        \sum_{i=1}^{n}x_{i{\zeta}}^2
        }
    ,
    \frac{\hat{\omega}^\textit{curr}_{\zeta}}{\frac{1}{n}\sum_{i=1}^{n}x_{i{\zeta}}^2}
	\right),
\end{equation}
where $\mathcal{S}(z,t)  = (\vert z\vert -t)_+\cdot\text{sgn}(z)$. The updating of the margins is given by
\begin{equation}
    \hat{r}_{ki}^\textit{new} = \hat{r}_{ki}^\textit{curr} + \cky_{ki}x_{i{\zeta}}\cdot({\hat{\beta}^\textit{new}}_{\zeta} - {\hat{\beta}^\textit{curr}}_{\zeta}),~
    k=1,\dots,K,~
    i=1,\dots,n.
    \label{eq:margin_update_from_beta}
\end{equation}

To update $\zeta$-th coordinate of $\bbt$, where $\zeta\in\cM$, define the $F$ function:
\begin{align*}
    F(\beta_{\zeta}\vert \hat{\bbt}^\textit{curr},\hat{\bm{b}}^\textit{curr})
    :=
    &~
    \sum_{k=1}^{K}w_k\left\{\frac{1}{n}\sum_{i=1}^{n}L(\check{y}_{ki}x_{i{\zeta}}(\beta_{\zeta}-{\hat{\beta}^\textit{curr}}_{\zeta}) + \hat{r}_{ki}^\textit{curr})\right\} 
    +
    \frac{\rho}{2}
    \left\Vert
        C_{\zeta}(\beta_{\zeta}-\hat{\beta}_{\zeta}^\textit{curr})
        +
        \frac{\hat{\tau}^{\textit{curr}}}{\rho}
    \right\Vert_2^2,
\end{align*}
where $C_{\zeta}$ is the column of matrix $\bC$ corresponding to coefficient $\beta_{\zeta}$, and $\hat{\tau}^{\textit{curr}}=\rho(\bC\hat{\bbt}_{\cM}^{\textit{curr}} - \bt) + {\hat{\bm{v}}}$.
This function can be majorized by the following quadratic function:
\begin{align*}
    \mathcal{F}(\beta_{\zeta}\vert\hat{\bbt}^\textit{curr},\hat{\bm{b}}^\textit{curr})
    :=
    &~
    \sum_{k=1}^{K}w_k\cdot\frac{1}{n}\sum_{i=1}^{n}\left[
	L(\hat{r}_{ki}^\textit{curr}) + L^\prime(\hat{r}_{ki}^\textit{curr})\check{y}_{ki}x_{i{\zeta}}(\beta_{\zeta}-{\hat{\beta}^\textit{curr}}_{\zeta}) + \frac{1}{2}x_{is}^2(\beta_{\zeta}-{\hat{\beta}^\textit{curr}}_{\zeta})^2
    \right] 
    \\
    +
    &~
    \frac{\rho}{2}
    \left\Vert
        C_{\zeta}(\beta_{\zeta}-\hat{\beta}_{\zeta}^\textit{curr})
        +
        \frac{\hat{\tau}^{\textit{curr}}}{\rho}
    \right\Vert_2^2.
\end{align*}
whose minimizer is given by:
\begin{equation}
    \hat{\beta}_{\zeta}^\textit{new}=	{\hat{\beta}^\textit{curr}}_{\zeta}
    -
    \frac{
        \sum_{k=1}^{K}w_k\frac{1}{n}\sum_{i=1}^{n}L^\prime(\hat{r}_{ki}^\textit{curr})\check{y}_{ki}x_{i{\zeta}}
        +
        C_{\zeta}^{T} \hat{\tau}^{\textit{curr}}
    }
    {
        \frac{1}{n}\sum_{i=1}^{n}x_{i{\zeta}}^2
        +
        \rho\Vert C_{\zeta}\Vert_2^2
    }.
    \label{eq:beta_update_w_penalty}
\end{equation}
The updating of margins is also given by equation \eqref{eq:margin_update_from_beta}, and the updating of slackness vector $\tau$ is given by
\begin{equation*}
    \hat{\tau}^{\textit{new}}=\hat{\tau}^{\textit{curr}} + \rho C_{\zeta}(\hat{\beta}_{\zeta}^\textit{new} -	{\hat{\beta}^\textit{curr}}_{\zeta} ).
\end{equation*}

Similarly, for the $\zeta$-th intercept $b_{\zeta}$, we consider minimizing the quadratic majorization:
\begin{equation}
	\mathcal{F}(b_{\zeta}\vert\hat{\bbt}^\textit{curr},\hat{\bm{b}}^\textit{curr})
        :=
        \frac{1}{n}\sum_{i=1}^{n}\left\{
	   L(\hat{r}_{{\zeta}i}^\textit{curr}) + L^\prime(\hat{r}_{{\zeta}i}^\textit{curr})(-\check{y}_{{\zeta}i})(b_{\zeta}-\hat{b}_{\zeta}^\textit{curr}) + \frac{1}{2}(b_{\zeta}-\hat{b}_{\zeta}^\textit{curr})^2
	\right\},
\end{equation}
which has a minimizer 
\begin{equation}
    \hat{b}_{\zeta}^\textit{new}=\hat{b}_{\zeta}^\textit{curr}+\frac{1}{n}\sum_{i=1}^{n}L^\prime(r_{{\zeta}i})\check{y}_{{\zeta}i}.
\end{equation}
The updating of margins is given by
\begin{equation}
    \hat{r}_{ki}^\textit{new} =
    \begin{cases}
        \hat{r}_{{\zeta}i}^\textit{curr} - \cky_{{\zeta}i}(\hat{b}_{\zeta}^\textit{new} - \hat{b}_{\zeta}^\textit{curr}),
        &
        \text{if } k=\zeta;
        \\
        \hat{r}_{ki}^\textit{curr},
        &
        \text{if } k\neq \zeta.
    \end{cases}
    \label{eq:margin_update_from_b}
\end{equation}

To sum up, the CMD algorithm for solving problem \eqref{eq:AML_cpr_weightedL1} in each LLA iteration is given in Algorithm \ref{alg:CMD_for_ALM}.

\begin{algorithm}[ht!]
	\caption{The CMD algorithm for solving \eqref{eq:AML_cpr_weightedL1}.}\label{alg:CMD_for_ALM}
	\begin{algorithmic}[1]
		\State Input the dual vector $\hat{\bm{v}}$, the adaptive weight vector $\hat{\bm{\omega}}$.
		\State Initialize $(\hat{\bbt}, \hat{\bm{b}})$ and compute its corresponding margin $\hat{r}_{ki}=\check{y}_{ki}(\bx_i^{T}\hat{\bbt}-\hat{b}_{k})$, for $k=1,\dots,K$, and $i=1,\dots,n$, and the vector $\hat{\tau}=\rho(\bC\hat{\bbt}_\cM-\bt) + \hat{\bm{v}}$.
		\State Iterate (3.a)-(3.c) until convergence:
		\begin{itemize}
		    \item[(3.a)] Cyclic coordinate descent for penalized coefficients: for $j \in \{1,\dots,p\}\backslash \cM$,
		    \begin{itemize}
		        \item[(3.a.1)] Compute
                    \begin{equation*}
				        \hat{\beta}_{j}^{\textit{new}} = 
                                \mathcal{S}\left(
	                        \hat{\beta}_{j}-\frac{\sum_{k=1}^{K}w_k\sum_{i=1}^{n}L^\prime(\hat{r}_{ki})\check{y}_{ki}x_{ij}}{\sum_{i=1}^{n}x_{ij}^2},\frac{\hat{\omega}_{j}}{\frac{1}{n}\sum_{i=1}^{n}x_{ij}^2}
	                        \right),
			        \end{equation*}
                    \item[(3.a.2)] Compute the new margins: $\hat{r}_{ki}^\textit{new} = \hat{r}_{ki} + \cky_{ki}x_{ij}\cdot({\hat{\beta}^\textit{new}}_{j} - \hat{\beta}_{j})$, 
                    for $k=1,\dots,K$, $i=1,\dots,n$.
			    \item[(3.a.3)] Update $\hat{\beta}_{j} = {\hat{\beta}_{j}}^\textit{new}$, and $\hat{r}_{ki} = \hat{r}_{ki}^\textit{new}$, for $k=1,\dots,K$, $i=1,\dots,n$.
                \end{itemize}
                \item[(3.b)] Cyclic coordinate descent for constrained coefficients: for $j \in \cM$,
                \begin{itemize}
                    \item[(3.b.1)] Compute
			        \begin{equation*}
				        \hat{\beta}_{j}^{\textit{new}} = 
                                {\hat{\beta}}_{j}
                                    -
                                    \frac{
                                        \sum_{k=1}^{K}w_k\frac{1}{n}\sum_{i=1}^{n}L^\prime(\hat{r}_{ki})\check{y}_{ki}x_{ij}
                                        +
                                        C_{j}^{T} \hat{\tau}
                                    }
                                    {
                                        \frac{1}{n}\sum_{i=1}^{n}x_{ij}^2
                                        +
                                        \rho\Vert C_{j}\Vert_2^2
                                    },
			        \end{equation*}
                    \item[(3.b.2)] Compute the new margins: $\hat{r}_{ki}^\textit{new} = \hat{r}_{ki} + \cky_{ki}x_{ij}\cdot({\hat{\beta}^\textit{new}}_{j} - \hat{\beta}_{j})$, 
                    for $k=1,\dots,K$, $i=1,\dots,n$, and the vector: $\hat{\tau}^{\textit{new}}=\hat{\tau} + \rho C_{j}(\hat{\beta}_{j}^\textit{new} -	{\hat{\beta}}_{j} )$.
			    \item[(3.b.3)] Update $\hat{\beta}_{j} = {\hat{\beta}_{j}}^\textit{new}$, $\hat{r}_{ki} = \hat{r}_{ki}^\textit{new}$, for $k=1,\dots,K$, $i=1,\dots,n$, 
                    and $\hat{\tau}=\hat{\tau}^{\textit{new}}$.
		    \end{itemize}
		    \item[(3.c)] Cyclic coordinate descent for intercepts: for $k = 1, 2, \cdots, K$,
		    \begin{itemize}
		        \item[(3.c.1)] Compute
			        \begin{equation*}
				        \hat{b}_{k}^\textit{new}=\hat{b}_{k}+\frac{1}{n}\sum_{i=1}^{n}L^\prime(r_{ki})\check{y}_{ki}.
			        \end{equation*}
                    \item[(3.c.2)] Compute the new margins: $\hat{r}_{ki}^\textit{new} = \hat{r}_{ki} - \cky_{ki}(\hat{b}_{k}^\textit{new} - \hat{b}_k),$ for $i=1,\dots,n$.
			    \item[(3.c.3)] Update $\hat{b}_{k} = \hat{b}_{k}^\textit{new}$, and $\hat{r}_{ki}= \hat{r}_{ki}^\textit{new}$, for $i=1,\dots,n$.
		    \end{itemize}
		\end{itemize} 
	\end{algorithmic}
\end{algorithm}

\section{Numerical Studies}\label{sec:simulations}
In this section, we examine the finite sample performance of the proposed partial penalized composite likelihood ratio, Wald, and score tests in a variety of settings.

In our implementation, we use the SCAD penalty with $a=3.7$ for the folded-concave penalization, and set $\rho=1$ for the augmented Lagrangian parameter.

To obtain $\hat{\bbt}_0$ and $\hat{\bbt}_a$, we compute $\hat{\bbt}_0^{\lambda}$ and $\hat{\bbt}_a^{\lambda}$ for a series of log-spaced values in $[\lambda_{\min},\lambda_{\max}]$. Then we choose $\hat{\bbt}_0=\hat{\bbt}_0^{\hat{\lambda}_0}$ and $\hat{\bbt}_a=\hat{\bbt}_a^{\hat{\lambda}_a}$, where $\hat{\lambda}_0$ and $\hat{\lambda}_a$ are the minimizers of the following information criterion:
\begin{align*}
    \hat{\lambda}_0
    &=
    \arg\min_{\lambda}
    \left(
    -nM_n(\hat{\cB}_0^{\lambda})
    +
    c_n\left\Vert\hat{\bbt}_0^{\lambda}\right\Vert_0
    \right),
    \\
    \hat{\lambda}_a
    &=
    \arg\min_{\lambda}
    \left(
    -nM_n(\hat{\cB}_a^{\lambda})
    +
    c_n\left\Vert\hat{\bbt}_a^{\lambda}\right\Vert_0
    \right),
\end{align*}
where $c_n=\max\{\log(n),\log(\log(n))\log(p)\}$, which takes the larger one between the Bayesian Information Criterion  \citep{schwarz1978estimating} and the Generalized Information Criterion (GIC) proposed by \citep{schwarz1978estimating,fan2013tuning} for ultra-high dimensional cases where $p$ increases exponentially with $n$.

\paragraph{Simulation Settings}

Simulated data were generated from
\begin{equation*}
    g(Y) = 2X_1 - (2+h_1)X_2 + \varepsilon,
\end{equation*}
where $\varepsilon\sim N(0,1)$ and $\bX\sim N(\bm{0}_p,\Sigma)$, $g(\cdot)$ is some monotone increasing function, and $h_1$ is a constant. The true regression coefficient is $\bbt^*=(2, -(2+h_1), \mathbf{0}_{p-2})$. We focus on testing the following four pairs of hypotheses
\begin{align*}
    H_0^{(i)}:
    \beta_1+\beta_2 = 0,
    \quad &v.s.\quad
    H_a^{(i)}:
    \beta_1+\beta_2 \neq 0;
    \\
    H_0^{(ii)}:
    \beta_2 = -2,
    \quad &v.s.\quad
    H_a^{(ii)}:
    \beta_2 \neq -2;
    \\
    H_0^{(iii)}:
    \sum_{j=1}^{4}\beta_{j}= 0,
    \quad &v.s.\quad
    H_a^{(iii)}:
    \sum_{j=1}^{4}\beta_{j}\neq 0;
    \\
    H_0^{(iv)}:
    \begin{pmatrix}
        1&1&0&0\\
        0&1&0&0\\
        1&1&1&1
    \end{pmatrix}
    \begin{pmatrix}
        \beta_1\\
        \beta_2\\
        \beta_3\\
        \beta_4
    \end{pmatrix}
    =
    \begin{pmatrix}
        0\\
        -2\\
        0
    \end{pmatrix},
    \quad &v.s.\quad
    H_a^{(iv)}:
    \begin{pmatrix}
        1&1&0&0\\
        0&1&0&0\\
        1&1&1&1
    \end{pmatrix}
    \begin{pmatrix}
        \beta_1\\
        \beta_2\\
        \beta_3\\
        \beta_4
    \end{pmatrix}
    \neq
    \begin{pmatrix}
        0\\
        -2\\
        0
    \end{pmatrix};
\end{align*}
with significance level $\alpha=0.05$. Notice that for all four pairs of testing hypotheses given above, the null hypotheses hold if and only if $h_1=0$. And the fourth hypothesis $H_0^{(iv)}$ is a multiple testing involving the simultaneous testing of the first three hypotheses.

The number of observations has two choices, $n=200$ and $n=400$. For the dimension of the regression coefficients, we consider two different cases for each value of $n$, $p=1.25n$ and $p=2.5n$. For the covariance matrix $\Sigma$ when generating the design matrix, we consider the first-order autoregressive
structure $\Sigma = \left( \rho^{\vert i-j\vert} \right)_{p\times p}$ with $\rho=0.5$ or $0.8$. We consider the monotone transformation function: $g_1^{-1}(t) = \frac{1}{2}(2t-1)^{1/3}+\frac{1}{2}$ and $g_2^{-1}(t) = \frac{\exp((t-2)/3)}{1+0.5((t-2)/3)^2}$. This yields a total of 16 settings. For each hypothesis and each setting, we further consider four scenarios, by setting $h_1 =
0,0.1,0.2,0.4$. Therefore, the null holds when $h_1=0$ and the alternative holds when $h_1=0.1,0.2,0.4$. 

When using the composite probit regression method to construct the testing statistics under our non-parametric Box-Cox model, we 
set the number of thresholds to be $K=19$ and use the $5\%,~10\%,~15\%,\dots,~95\%$ empirical percentiles of the responses to dichotomize the response variable and combine the $19$ different probit regression models with the equal weights. Since the composite probit regression method for regression coefficient estimation is invariant against the unknown monotone transformation function, the corresponding rejection probabilities would also be invariant against any $g(\cdot)$. Therefore, we could combine the simulation results under $g_1(\cdot)$ and $g_2(\cdot)$, degenerating to 8 data-generating settings. The rejection probabilities for $H_0^{(i)}$, $H_0^{(ii)}$, $H_0^{(iii)}$ and $H_0^{(iv)}$ under the settings with $n=200$ and $400$, and $\Sigma = \left( 0.5^{\vert i-j\vert} \right)_{p\times p}$ and $\left( 0.8^{\vert i-j\vert} \right)_{p\times p}$ under the non-parametric Box-Cox model assumption are summarzied in Table \ref{table:rejection_probabilities_NBC_200_0.5} -- Table \ref{table:rejection_probabilities_NBC_400_0.8}.

Based on the results, it can be seen that under these null hypotheses, the Type-I error rates of the three tests are well controlled and close to the significance level $\alpha=0.05$. Under the alternative hypotheses, the powers of these three test statistics increase as $h_1$ increases, showing the consistency of our testing procedure. Moreover, the empirical rejection rates among these three test statistics are very close across all different scenarios and settings. This is consistent with our findings in Theorem \ref{theorem:testing_statistic_distribution} that these statistics are asymptotically equivalent.
One interesting observation is that under the alternative hypotheses, the estimated powers are different for different scenarios and settings, differing from the findings in \cite{shi2019linear}. This is because we prove in Theorem \ref{theorem:testing_statistic_distribution} that our test statistics are asymptotically equivalent to a generalized chi-squared distribution, with the shape of the ellipsoid determined by the design matrix $\bX$, the true regression coefficients, and the testing matrix $\bC$. Consequently, the power is determined by various factors such as the data we generated and the hypothesis we formulated.

\paragraph{Model Mis-specification:}  Under the assumption that the underlying true model is the linear model, we can apply the testing procedures for (generalized) linear model  \citep{shi2019linear} on our simulated data sets. The rejection probabilities for $H_0^{(i)}$, $H_0^{(ii)}$, $H_0^{(iii)}$ and $H_0^{(iv)}$ under the setting where $\Sigma = \left( 0.5^{\vert i-j\vert} \right)_{p\times p}$ and $p=250$ summarzied in Table \ref{table:rejection_probabilities_LS_0.5}. One can see that under model mis-specification, we might make wrong conclusions.

\begin{table}[ht!]
	\centering
	\caption{Rejection probabilities $(\%)$ of the partial penalized likelihood ratio, wald and score statistics with standard errors in parenthesis $(\%)$, under the setting where $n=200$, $\Sigma=(0.5^{\vert i-j\vert})_{(p\times p)}$.}
    \label{table:rejection_probabilities_NBC_200_0.5}
	\begin{tabular}{lcccccc}
		\toprule
		\multirow{2}{*}{}&\multicolumn{3}{c}{ $p=250$}&\multicolumn{3}{c}{$p=500$} \\
            \cmidrule(lr){2-4} \cmidrule(lr){5-7}
		& $T_L$ & $T_S$ & $T_W$ & $T_L$ & $T_S$ & $T_W$ \\
		\midrule
		$h_1$ & \multicolumn{6}{c}{$H_0^{(i)}$}\\
            \cmidrule{2-7}
	    $0$ & $~6.50(1.01)$ & $~6.50(1.01)$ & $~6.33(0.99)$ & $~6.50(1.01)$ & $~6.50(1.01)$ & $~6.33(0.99)$ \\
            $0.1$ & $18.00(1.57)$ & $18.17(1.57)$ & $17.83(1.56)$ & $21.83(1.68)$ & $21.67(1.68)$ & $21.50(1.68)$ \\
            $0.2$ & $64.17(1.96)$ & $63.83(1.96)$ & $64.17(1.96)$ & $63.00(1.97)$ & $62.83(1.97)$ & $62.67(1.97)$ \\
            $0.4$ & $99.00(0.41)$ & $98.83(0.44)$ & $99.00(0.41)$ & $99.50(0.29)$ & $99.50(0.29)$ & $99.50(0.29)$ \\
		\midrule
            $h_1$ & \multicolumn{6}{c}{$H_0^{(ii)}$}\\
            \cmidrule{2-7}
	    $0$ & $~5.00(0.89)$ & $~5.00(0.89)$ & $~5.00(0.89)$ & $~5.33(0.92)$ & $~5.00(0.89)$ & $~5.33(0.92)$ \\
            $0.1$ & $14.00(1.42)$ & $11.50(1.30)$ & $14.00(1.42)$ & $12.67(1.36)$ & $10.00(1.22)$ & $12.17(1.33)$ \\
            $0.2$ & $28.67(1.85)$ & $23.67(1.74)$ & $29.67(1.86)$ & $30.33(1.88)$ & $25.67(1.78)$ & $30.67(1.88)$ \\
            $0.4$ & $77.50(1.70)$ & $71.83(1.84)$ & $83.00(1.53)$ & $75.50(1.76)$ & $71.67(1.84)$ & $80.83(1.61)$ \\
		\midrule
            $h_1$ & \multicolumn{6}{c}{$H_0^{(iii)}$}\\
            \cmidrule{2-7}
	    $0$ & $~4.83(0.88)$ & $~4.83(0.88)$ & $~4.83(0.88)$ & $~4.67(0.86)$ & $~4.67(0.86)$ & $~4.67(0.86)$ \\
            $0.1$ & $16.67(1.52)$ & $16.67(1.52)$ & $16.00(1.50)$ & $18.33(1.58)$ & $18.33(1.58)$ & $17.50(1.55)$ \\
            $0.2$ & $46.83(2.04)$ & $46.33(2.04)$ & $46.17(2.04)$ & $46.00(2.03)$ & $45.67(2.03)$ & $45.83(2.03)$ \\
            $0.4$ & $94.67(0.92)$ & $94.83(0.90)$ & $94.50(0.93)$ & $95.17(0.88)$ & $95.17(0.88)$ & $95.17(0.88)$ \\
		\midrule
            $h_1$ & \multicolumn{6}{c}{$H_0^{(iv)}$}\\
            \cmidrule{2-7}
	    $0$ & $~6.50(1.01)$ & $~6.00(0.97)$ & $~6.00(0.97)$ & $~6.17(0.98)$ & $~6.17(0.98)$ & $~6.00(0.97)$ \\
            $0.1$ & $17.50(1.55)$ & $15.00(1.46)$ & $15.67(1.48)$ & $17.00(1.53)$ & $15.00(1.45)$ & $15.33(1.47)$ \\
            $0.2$ & $46.50(2.04)$ & $44.83(2.03)$ & $45.33(2.03)$ & $48.17(2.04)$ & $44.33(2.03)$ & $45.17(2.03)$ \\
            $0.4$ & $97.00(0.70)$ & $96.33(0.77)$ & $96.83(0.71)$ & $98.00(0.57)$ & $98.33(0.52)$ & $98.17(0.55)$ \\
		\bottomrule
	\end{tabular}
\end{table}

\begin{table}[ht!]
	\centering
	\caption{Rejection probabilities $(\%)$ of the partial penalized likelihood ratio, wald and score statistics with standard errors in parenthesis $(\%)$, under the setting where $n=400$, $\Sigma=(0.5^{\vert i-j\vert})_{(p\times p)}$.}
    \label{table:rejection_probabilities_NBC_400_0.5}
	\begin{tabular}{lcccccc}
		\toprule
		\multirow{2}{*}{}&\multicolumn{3}{c}{ $p=500$}&\multicolumn{3}{c}{$p=1000$} \\
            \cmidrule(lr){2-4} \cmidrule(lr){5-7}
		& $T_L$ & $T_S$ & $T_W$ & $T_L$ & $T_S$ & $T_W$ \\
		\midrule
		$h_1$ & \multicolumn{6}{c}{$H_0^{(i)}$}\\
            \cmidrule{2-7}
	    $0$ & $~3.33(0.73)$ & $~3.50(0.75)$ & $~3.17(0.71)$ & $~4.33(0.83)$ & $~4.17(0.81)$ & $~4.33(0.83)$ \\
            $0.1$ & $37.50(1.97)$ & $37.33(1.97)$ & $37.50(1.97)$ & $37.50(1.97)$ & $37.33(1.97)$ & $37.50(1.97)$ \\
            $0.2$ & $88.00(1.32)$ & $88.00(1.32)$ & $88.17(1.32)$ & $87.00(1.37)$ & $87.00(1.37)$ & $87.00(1.37)$ \\
            $0.4$ & $100.00(0.00)$ & $100.00(0.00)$ & $100.00(0.00)$ & $100.00(0.00)$ & $100.00(0.00)$ & $100.00(0.00)$ \\
		\midrule
            $h_1$ & \multicolumn{6}{c}{$H_0^{(ii)}$}\\
            \cmidrule{2-7}
	    $0$ & $~6.33(0.99)$ & $~6.50(1.00)$ & $~6.67(1.01)$ & $~3.33(0.73)$ & $~3.50(0.75)$ & $~3.50(0.75)$ \\
            $0.1$ & $16.67(1.52)$ & $14.67(1.44)$ & $17.00(1.53)$ & $14.83(1.45)$ & $12.83(1.36)$ & $14.33(1.43)$ \\
            $0.2$ & $53.83(2.03)$ & $48.50(2.04)$ & $54.50(2.03)$ & $47.00(2.03)$ & $44.00(2.03)$ & $48.83(2.04)$ \\
            $0.4$ & $96.33(0.76)$ & $95.50(0.84)$ & $97.33(0.65)$ & $97.00(0.69)$ & $96.67(0.83)$ & $97.83(0.59)$ \\
		\midrule
            $h_1$ & \multicolumn{6}{c}{$H_0^{(iii)}$}\\
            \cmidrule{2-7}
	    $0$ & $~3.83(0.78)$ & $~3.83(0.78)$ & $~3.83(0.78)$ & $~3.50(0.75)$ & $~3.33(0.73)$ & $~3.50(0.75)$ \\
            $0.1$ & $26.00(1.79)$ & $25.50(1.78)$ & $26.00(1.79)$ & $26.33(1.79)$ & $25.67(1.78)$ & $26.50(1.80)$ \\
            $0.2$ & $73.83(1.79)$ & $73.50(1.80)$ & $73.50(1.80)$ & $73.00(1.81)$ & $72.33(1.82)$ & $72.67(1.82)$ \\
            $0.4$ & $100.00(0.00)$ & $100.00(0.00)$ & $100.00(0.00)$ & $100.00(0.00)$ & $100.00(0.00)$ & $100.00(0.00)$ \\
		\midrule
            $h_1$ & \multicolumn{6}{c}{$H_0^{(iv)}$}\\
            \cmidrule{2-7}
	    $0$ & $~6.33(0.99)$ & $~5.50(0.93)$ & $~5.50(0.93)$ & $~3.00(0.69)$ & $~3.17(0.71)$ & $~2.83(0.67)$ \\
            $0.1$ & $27.33(1.82)$ & $25.50(1.78)$ & $25.50(1.78)$ & $25.00(1.76)$ & $23.83(1.74)$ & $23.83(1.74)$ \\
            $0.2$ & $79.33(1.65)$ & $77.33(1.71)$ & $78.00(1.69)$ & $77.83(1.69)$ & $76.83(1.72)$ & $76.67(1.72)$ \\
            $0.4$ & $100.00(0.00)$ & $100.00(0.00)$ & $100.00(0.00)$ & $100.00(0.00)$ & $100.00(0.00)$ & $100.00(0.00)$ \\
		\bottomrule
	\end{tabular}
\end{table}

\begin{table}[ht!]
	\centering
	\caption{Rejection probabilities $(\%)$ of the partial penalized likelihood ratio, wald and score statistics with standard errors in parenthesis $(\%)$, under the setting where $n=200$, $\Sigma=(0.8^{\vert i-j\vert})_{(p\times p)}$.}
    \label{table:rejection_probabilities_NBC_200_0.8}
	\begin{tabular}{lcccccc}
		\toprule
		\multirow{2}{*}{}&\multicolumn{3}{c}{ $p=250$}&\multicolumn{3}{c}{$p=500$} \\
            \cmidrule(lr){2-4} \cmidrule(lr){5-7}
		& $T_L$ & $T_S$ & $T_W$ & $T_L$ & $T_S$ & $T_W$ \\
		\midrule
		$h_1$ & \multicolumn{6}{c}{$H_0^{(i)}$}\\
            \cmidrule{2-7}
	    $0$ & $~5.67(0.94)$ & $~5.33(0.92)$ & $~5.67(0.94)$ & $~4.17(0.81)$ & $~4.17(0.81)$ & $~3.83(0.78)$ \\
            $0.1$ & $26.17(1.79)$ & $25.50(1.78)$ & $25.67(1.78)$ & $24.50(1.75)$ & $23.83(1.74)$ & $24.50(1.75)$ \\
            $0.2$ & $71.33(1.85)$ & $71.17(1.85)$ & $71.00(1.85)$ & $71.33(1.85)$ & $71.17(1.85)$ & $71.00(1.85)$ \\
            $0.4$ & $99.83(0.17)$ & $99.83(0.17)$ & $99.83(0.17)$ & $99.83(0.17)$ & $99.83(0.17)$ & $99.83(0.17)$ \\
		\midrule
            $h_1$ & \multicolumn{6}{c}{$H_0^{(ii)}$}\\
            \cmidrule{2-7}             
	    $0$ & $~6.50(1.01)$ & $~6.33(0.99)$ & $~5.83(0.96)$ & $~5.83(0.96)$ & $~5.67(0.94)$ & $~5.33(0.92)$ \\
            $0.1$ & $~8.50(1.14)$ & $~6.50(1.01)$ & $~8.33(1.13)$ & $~9.67(1.21)$ & $~6.83(1.03) $ & $~8.83(1.16)$ \\
            $0.2$ & $26.67(1.80)$ & $22.33(1.70)$ & $28.00(1.83)$ & $22.83(1.71)$ & $19.00(1.60)$ & $23.67(1.73)$ \\
            $0.4$ & $66.17(1.93)$ & $60.00(2.00)$ & $70.50(1.86)$ & $66.83(1.92)$ & $62.83(1.97)$ & $70.33(1.86)$ \\
		\midrule      
            $h_1$ & \multicolumn{6}{c}{$H_0^{(iii)}$}\\
            \cmidrule{2-7}
	    $0$ & $~4.17(0.82)$ & $~4.17(0.82)$ & $~4.00(0.80)$ & $~4.00(0.80)$ & $~4.00(0.80)$ & $~3.83(0.78)$ \\
            $0.1$ & $24.83(1.76)$ & $24.50(1.75)$ & $24.00(1.74)$ & $21.33(1.67)$ & $20.83(1.66)$ & $21.00(1.66)$ \\
            $0.2$ & $62.50(1.98)$ & $62.67(1.97)$ & $61.83(1.98)$ & $64.50(1.95)$ & $64.00(1.96)$ & $64.00(1.96)$ \\     
            $0.4$ & $99.33(0.33)$ & $99.50(0.29)$ & $99.33(0.33)$ & $99.17(0.37)$ & $99.17(0.37)$ & $99.17(0.37)$ \\
		\midrule
            $h_1$ & \multicolumn{6}{c}{$H_0^{(iv)}$}\\
            \cmidrule{2-7}     
	    $0$ & $~6.33(0.99)$ & $~5.83(0.95)$ & $~5.67(0.94)$ & $~4.67(0.86)$ & $~4.00(0.80)$ & $~3.67(0.77)$ \\
            $0.1$ & $16.83(1.53)$ & $16.33(1.51)$ & $16.17(1.50)$ & $14.67(1.44)$ & $13.17(1.38)$ & $13.50(1.39)$ \\
            $0.2$ & $55.00(2.03)$ & $52.83(2.04)$ & $53.33(2.04)$ & $57.33(2.02)$ & $54.83(2.03)$ & $55.67(2.03)$ \\      
            $0.4$ & $99.33(0.33)$ & $99.00(0.41)$ & $99.00(0.41)$ & $99.33(0.33)$ & $99.17(0.37)$ & $99.17(0.37)$ \\
		\bottomrule
	\end{tabular}
\end{table}

\begin{table}[ht!]
	\centering
	\caption{Rejection probabilities $(\%)$ of the partial penalized likelihood ratio, wald and score statistics with standard errors in parenthesis $(\%)$, under the setting where $n=400$, $\Sigma=(0.8^{\vert i-j\vert})_{(p\times p)}$.}
    \label{table:rejection_probabilities_NBC_400_0.8}
	\begin{tabular}{lcccccc}
		\toprule
		\multirow{2}{*}{}&\multicolumn{3}{c}{ $p=500$}&\multicolumn{3}{c}{$p=1000$} \\
            \cmidrule(lr){2-4} \cmidrule(lr){5-7}
		& $T_L$ & $T_S$ & $T_W$ & $T_L$ & $T_S$ & $T_W$ \\
		\midrule
		$h_1$ & \multicolumn{6}{c}{$H_0^{(i)}$}\\
            \cmidrule{2-7}
	    $0$ & $~5.00(0.89)$ & $~4.83(0.87)$ & $~5.00(0.89)$ & $~6.17(0.98)$ & $~6.00(0.97)$ & $~6.00(0.97)$ \\
            $0.1$ & $45.17(2.03)$ & $45.17(2.03)$ & $45.00(2.03)$ & $42.00(2.01)$ & $42.00(2.01)$ & $42.00(2.01)$ \\
            $0.2$ & $96.67(0.73)$ & $96.50(0.75)$ & $96.33(0.76)$ & $93.67(0.99)$ & $93.67(0.99)$ & $93.67(0.99)$ \\
            $0.4$ & $100.00(0.00)$ & $100.00(0.00)$ & $100.00(0.00)$ & $100.00(0.00)$ & $100.00(0.00)$ & $100.00(0.00)$ \\
		\midrule
            $h_1$ & \multicolumn{6}{c}{$H_0^{(ii)}$}\\
            \cmidrule{2-7}             
	    $0$ & $~4.83(0.87)$ & $~4.83(0.87)$ & $~4.67(0.86)$ & $~5.00(0.89)$ & $~5.17(0.90)$ & $~5.33(0.92)$ \\
            $0.1$ & $15.00(1.45)$ & $12.00(1.32)$ & $13.50(1.39)$ & $15.00(1.45)$ & $13.33(1.38) $ & $15.33(1.47)$ \\
            $0.2$ & $38.17(1.98)$ & $35.17(1.95)$ & $39.33(1.99)$ & $37.67(1.98)$ & $33.50(1.93)$ & $39.33(1.99)$ \\
            $0.4$ & $91.67(1.12)$ & $89.17(1.26)$ & $93.83(0.98)$ & $91.00(1.16)$ & $89.17(1.58)$ & $93.00(1.04)$ \\
		\midrule      
            $h_1$ & \multicolumn{6}{c}{$H_0^{(iii)}$}\\
            \cmidrule{2-7}
	    $0$ & $~4.83(0.87)$ & $~4.83(0.87)$ & $~4.83(0.87)$ & $~6.00(0.97)$ & $~5.83(0.95)$ & $~6.00(0.97)$ \\
            $0.1$ & $38.17(1.76)$ & $38.33(1.98)$ & $38.00(1.98)$ & $36.67(1.98)$ & $36.00(1.96)$ & $36.33(1.96)$ \\
            $0.2$ & $91.00(1.17)$ & $90.83(1.18)$ & $90.67(1.18)$ & $88.00(1.32)$ & $88.00(1.32)$ & $87.67(1.34)$ \\     
            $0.4$ & $100.00(0.00)$ & $100.00(0.00)$ & $100.00(0.00)$ & $100.00(0.00)$ & $100.00(0.00)$ & $100.00(0.00)$ \\
		\midrule
            $h_1$ & \multicolumn{6}{c}{$H_0^{(iv)}$}\\
            \cmidrule{2-7}     
	    $0$ & $~5.50(0.93)$ & $~5.17(0.90)$ & $~5.17(0.90)$ & $~4.50(0.84)$ & $~4.83(0.87)$ & $~4.50(0.84)$ \\
            $0.1$ & $33.33(1.92)$ & $31.00(1.88)$ & $32.00(1.90)$ & $31.88(1.90)$ & $31.33(1.89)$ & $31.17(1.89)$ \\
            $0.2$ & $86.50(1.39)$ & $86.00(1.41)$ & $85.67(1.43)$ & $83.33(1.52)$ & $81.67(1.58)$ & $82.17(1.56)$ \\      
            $0.4$ & $100.00(0.00)$ & $100.00(0.00)$ & $100.00(0.00)$ & $100.00(0.00)$ & $100.00(0.00)$ & $100.00(0.00)$ \\
		\bottomrule
	\end{tabular}
\end{table}

\begin{table}[ht!]
	\centering
	\caption{Model mis-specification (assuming linear model ignoring transformation on response): Rejection probabilities $(\%)$ of the partial penalized likelihood ratio, wald and score statistics with standard errors in parenthesis $(\%)$, under the setting where $\Sigma=(0.5^{\vert i-j\vert})_{(p\times p)}$, $p=250$.}
    \label{table:rejection_probabilities_LS_0.5}
	\begin{tabular}{lcccccc}
		\toprule
		\multirow{2}{*}{}&\multicolumn{3}{c}{ $g_1$}&\multicolumn{3}{c}{$g_2$} \\
            \cmidrule(lr){2-4} \cmidrule(lr){5-7}
		& $T_L$ & $T_S$ & $T_W$ & $T_L$ & $T_S$ & $T_W$ \\
		\midrule
		$h_1$ & \multicolumn{6}{c}{$H_0^{(i)}$}\\
            \cmidrule{2-7}
	    $0$ & $~~0.00(0.00)$ & $~~0.00(0.00)$ & $~~0.00(0.00)$ & $~~0.00(0.00)$ & $~~0.00(0.00)$ & $~~0.00(0.00)$ \\
            $0.1$ & $~~0.00(0.00)$ & $~~0.00(0.00)$ & $~~0.00(0.00)$ & $~~0.00(0.00)$ & $~~0.00(0.00)$ & $~~0.00(0.00)$ \\
            $0.2$ & $~~0.00(0.00)$ & $~~0.00(0.00)$ & $~~0.00(0.00)$ & $~~0.00(0.00)$ & $~~0.00(0.00)$ & $~~0.00(0.00)$ \\
            $0.4$ & $~~8.83(1.16)$ & $~~8.83(1.16)$ & $~~8.83(1.16)$ & $~~0.00(0.00)$ & $~~0.00(0.00)$ & $~~0.00(0.00)$ \\
		\midrule
            $h_1$ & \multicolumn{6}{c}{$H_0^{(ii)}$}\\
            \cmidrule{2-7}
	    $0$ & $100.00(0.00)$ & $100.00(0.00)$ & $100.00(0.00)$ & $100.00(0.00)$ & $100.00(0.00)$ & $100.00(0.00)$ \\
            $0.1$ & $100.00(0.00)$ & $100.00(0.00)$ & $100.00(0.00)$ & $100.00(0.00)$ & $100.00(0.00)$ & $100.00(0.00)$ \\
            $0.2$ & $100.00(0.00)$ & $100.00(0.00)$ & $100.00(0.00)$ & $100.00(0.00)$ & $100.00(0.00)$ & $100.00(0.00)$ \\
            $0.4$ & $100.00(0.00)$ & $100.00(0.00)$ & $100.00(0.00)$ & $100.00(0.00)$ & $100.00(0.00)$ & $100.00(0.00)$ \\
		\midrule
            $h_1$ & \multicolumn{6}{c}{$H_0^{(iii)}$}\\
            \cmidrule{2-7}
	    $0$ & $~~0.00(0.00)$ & $~~0.00(0.00)$ & $~~0.00(0.00)$ & $~~0.00(0.00)$ & $~~0.00(0.00)$ & $~~0.00(0.00)$ \\
            $0.1$ & $~~0.00(0.00)$ & $~~0.00(0.00)$ & $~~0.00(0.00)$ & $~~0.00(0.00)$ & $~~0.00(0.00)$ & $~~0.00(0.00)$ \\
            $0.2$ & $~~0.00(0.00)$ & $~~0.00(0.00)$ & $~~0.00(0.00)$ & $~~0.00(0.00)$ & $~~0.00(0.00)$ & $~~0.00(0.00)$ \\
            $0.4$ & $~~1.67(0.52)$ & $~~1.67(0.52)$ & $~~1.67(0.52)$ & $~~0.00(0.00)$ & $~~0.00(0.00)$ & $~~0.00(0.00)$ \\
		\midrule
            $h_1$ & \multicolumn{6}{c}{$H_0^{(iv)}$}\\
            \cmidrule{2-7}
	    $0$ & $100.00(0.00)$ & $100.00(0.00)$ & $100.00(0.00)$ & $100.00(0.00)$ & $100.00(0.00)$ & $100.00(0.00)$ \\
            $0.1$ & $100.00(0.00)$ & $100.00(0.00)$ & $100.00(0.00)$ & $100.00(0.00)$ & $100.00(0.00)$ & $100.00(0.00)$ \\
            $0.2$ & $100.00(0.00)$ & $100.00(0.00)$ & $100.00(0.00)$ & $100.00(0.00)$ & $100.00(0.00)$ & $100.00(0.00)$ \\
            $0.4$ & $100.00(0.00)$ & $100.00(0.00)$ & $100.00(0.00)$ & $100.00(0.00)$ & $100.00(0.00)$ & $100.00(0.00)$ \\
		\bottomrule
	\end{tabular}
\end{table}

\clearpage

\section{Discussion}\label{sec:discussion}

In this article, we are concerned about the challenges of hypothesis testing in high-dimensional regression, where the true underlying model may not be the typical low-dimensional linear regression model. To address issues of model mis-specification, we have introduced the non-parametric Box-Cox model with unspecified monotone transformation on the response as an alternative approach to robustify the testing procedures. This non-parametric Box-Cox framework provides a more flexible setting than the typical normal linear models for high-dimensional linear hypothesis testing while preserving the interpretability of predictors. 
The partial penalized composite probit regression estimation has been proposed to construct our partial penalized composite likelihood ratio, score, and Wald tests without knowing the non-parametric transformation function. Additionally, we have developed an efficient and stable algorithm to compute the estimators with linear constraints. Simulation studies have demonstrated the efficacy of our proposed testing procedures in both controlling Type-I error rates and achieving non-negligible power under local alternatives. Both numerical and empirical studies have shown the discrepancies between our method and standard high-dimensional methods, highlighting the importance of our robust approach.

There are several worthy discussion points. Following the context of composite likelihood inference, one can construct the composite likelihood versions of Wald and score statistics based on the Godambe information matrix \citep{godambe1960optimum} similar to the test statistics constructed for testing $H_0:\bbt=\bbt_0$ in \citep{molenberghs2005models}, and they will have the usual asymptotic chi-squared distribution. However, they are not asymptotically equivalent to the composite likelihood likelihood ratio statistic. For testing linear hypothesis $H_0:\bC\bbt_{0,\cM}=\bt$, the Wald-type statistic is given by
\begin{equation}
    \tilde{T}_W 
    = 
    n
    (\bm{C}\hat{\bbt}_{a,\cM}-\bm{t})^T
    (
        \bm{C}
        {\widehat{\cT}}_{a,mm}
        \bm{C}^T
    )^{-1}
    (\bm{C}\hat{\bbt}_{a,\cM}-\bm{t}),
\end{equation}
where ${\widehat{\cT}}_{a,mm}$ is the $m\times m$ submatrix of the Godambe information substitution estimator ${\widehat{\cT}}_{n,a}$
and the score-type statistic is given by
\begin{equation}
    T_S
    =
    \frac{1}{n}
    \mathcal{S}_n(\hat{\cB}_0)^T
    \widehat{\bK}_{n,0}^{-1}
    \begin{pmatrix}
                    \bC^T 
                            \\
                            \bm{0}_{r\times (s+K)}^T
                \end{pmatrix}
    \left(
    \begin{pmatrix}
                    \bC^T 
                            \\
                            \bm{0}_{r\times (s+K)}^T
                \end{pmatrix}^T
            \widehat{\bK}_{n,0}^{-1}
            \widehat{\bm{V}}_{n,0}
            \widehat{\bK}_{n,0}^{-1}
            \begin{pmatrix}
                    \bC^T 
                            \\
                            \bm{0}_{r\times (s+K)}^T
                \end{pmatrix}
    \right)^{-1}
    \begin{pmatrix}
                    \bC^T 
                            \\
                            \bm{0}_{r\times (s+K)}^T
                \end{pmatrix}^T
                \widehat{\bK}_{n,0}^{-1}
    \mathcal{S}_n(\hat{\cB}_0),
\end{equation}
both of which can be proved to have asymptotic chi-squared distribution.

To fix ideas, we have used the equal weights in composite probit regression in both the numerical and empirical studies. While our current findings are encouraging, it is interesting to study the optimal weights that maximize the estimation efficiency for a given problem. Estimators with higher efficiency typically exhibit greater power in hypothesis testing. One potential approach involves deriving the asymptotic covariance matrix of the estimator based on its asymptotic distribution for any choice of weights, followed by identifying the optimal choice of weights that minimize the trace of that matrix. A similar problem was studied in \cite{Bradic2011} for penalized composite quasi-likelihood estimators. This aspect requires further investigation and could be a topic for future research efforts.

\bibliographystyle{agsm}
\setcitestyle{authoryear,open={},close={}}
\bibliography{main}

\clearpage
\appendix
\section{Appendix / supplemental material}
\subsection{Lemma}
\begin{lemma}\label{lemma0}
    Suppose $\bU$ and $\bV$ are two $n\times n$ symmetric real positive-definite matrices, then we have
    \begin{equation}
        \left\Vert\bU^{1/2} - \bV^{1/2}\right\Vert_2
        \leq
        \Vert\bU - \bV\Vert_2
        \cdot
        \frac{1}{\lambda_{\min}(\bU^{1/2} +
        \bV^{1/2})}.
        \label{eq:lemma0}
    \end{equation}
\end{lemma}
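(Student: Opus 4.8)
The plan is to reduce the bound to a statement about a single eigenvector of the symmetric matrix $\bU^{1/2}-\bV^{1/2}$. Write $A=\bU^{1/2}$ and $B=\bV^{1/2}$; since $\bU$ and $\bV$ are positive-definite, so are $A$ and $B$, hence $A+B$ is positive-definite and $\lambda_{\min}(A+B)>0$, so the right-hand side of \eqref{eq:lemma0} is well defined. The first step is to record the algebraic identity
\[
    \bU-\bV \;=\; A^2-B^2 \;=\; A(A-B)+(A-B)B,
\]
which rewrites the known quantity $\bU-\bV$ in terms of the difference $D:=A-B$ whose operator norm we want to control.

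For the second step, note that $D$ is symmetric, so $\Vert D\Vert_2=\vert\lambda\vert$ for some real eigenvalue $\lambda$ of $D$ of maximal modulus; let $\bx$ be a corresponding unit eigenvector, $D\bx=\lambda\bx$, $\Vert\bx\Vert_2=1$. Substituting the identity into the quadratic form and using $D\bx=\lambda\bx$ together with the symmetry of $D$ (so that $\bx^T D=\lambda\bx^T$) gives
\[
    \bx^T(\bU-\bV)\bx \;=\; \bx^T A D\bx+\bx^T D B\bx \;=\; \lambda\,\bx^T(A+B)\bx.
\]
The final step is to bound the two sides separately: $\vert\bx^T(\bU-\bV)\bx\vert\le\Vert\bU-\bV\Vert_2$ by the definition of the spectral norm, while $\bx^T(A+B)\bx\ge\lambda_{\min}(A+B)>0$ because $A+B$ is positive-definite. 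Dividing gives $\vert\lambda\vert\le\Vert\bU-\bV\Vert_2/\lambda_{\min}(A+B)$, and since $\Vert D\Vert_2=\Vert\bU^{1/2}-\bV^{1/2}\Vert_2$ this is exactly \eqref{eq:lemma0}.

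I do not expect a genuine obstacle here: the argument is these three short steps. The only point requiring a moment's care is verifying that $A+B=\bU^{1/2}+\bV^{1/2}$ is itself positive-definite, so that dividing by $\lambda_{\min}(A+B)$ is legitimate and the quadratic-form lower bound in the last step is strictly positive; this is immediate because $\bU$ and $\bV$ positive-definite force $\bU^{1/2}$ and $\bV^{1/2}$ to be positive-definite. Evaluating the quadratic form specifically along an eigenvector of $D$, rather than bounding the Lyapunov-type operator $X\mapsto AX+XB$ generically, is what produces the sharp constant $\lambda_{\min}(A+B)$ rather than the weaker $\lambda_{\min}(A)+\lambda_{\min}(B)$.
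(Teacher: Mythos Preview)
Your proof is correct and essentially identical to the paper's own argument: both pick a unit eigenvector $\bx$ of $D=\bU^{1/2}-\bV^{1/2}$ corresponding to the eigenvalue of largest modulus, use the factorization $\bU-\bV=\bU^{1/2}D+D\,\bV^{1/2}$ to compute $\bx^{T}(\bU-\bV)\bx=\mu\,\bx^{T}(\bU^{1/2}+\bV^{1/2})\bx$, and then bound the quadratic forms on each side. Your write-up is slightly more explicit about why $\bU^{1/2}+\bV^{1/2}$ is positive-definite, but otherwise the two proofs match line for line.
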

\begin{proof}
    Let $\bx\in\mathbb{R}^{n}$ be eigenvector of $(\bU^{1/2} - \bV^{1/2})$ with eigenvalue $\mu$ satisfying $\Vert\bx\Vert_2=1$ and $\vert\mu\vert=\left\Vert\bU^{1/2} - \bV^{1/2}\right\Vert_2$. Then we have
    \begin{align*}
        \left\Vert
        \bU - \bV
        \right\Vert_2
        &\geq
        \left\vert
        \bx^T
        (\bU-\bV)
        \bx
        \right\vert
        \\
        &=
        \left\vert
        \bx^T
        \bU^{1/2} (\bU^{1/2} -\bV^{1/2} )
        \bx
        +
        \bx^T
        (\bU^{1/2} -\bV^{1/2} )\bV^{1/2}
        \bx
        \right\vert
        \\
        &=
        \left\vert
        \bx^T
        \bU^{1/2}
        \mu\bx
        +
        \mu\bx^T
        \bV^{1/2}
        \bx
        \right\vert
        \\
        &=
        \vert\mu\vert
        \cdot
        \left\vert
        \bx^T
        (\bU^{1/2}
        +
        \bV^{1/2})
        \bx
        \right\vert
        \\
        &\geq
        \vert\mu\vert
        \cdot
        \lambda_{\min}
        (\bU^{1/2}
        +
        \bV^{1/2})
        \\
        &=
        \left\Vert\bU^{1/2} - \bV^{1/2}\right\Vert_2
        \cdot
        \lambda_{\min}
        (\bU^{1/2}
        +
        \bV^{1/2}),
    \end{align*}
    as desired.
\end{proof}

\begin{lemma}\label{lemma1}
    Under the conditions in Theorem \ref{theorem:testing_statistic_distribution}, we have
    \begin{align}
        \left\Vert
                \bm{I} - 
                    \Psi_n^{1/2}
                    ({\bm{C}\widehat{\Omega}_{a,mm}\bm{C}^T})^{-1}
                    \Psi_n^{1/2}
            \right\Vert_2
            &=
            O_p
            \left(
            \frac{s+m+K}{\sqrt{n}}
            \right),
        \label{eq:lemma_result1}
        \\
        \left\Vert
                \bm{I} - 
                    \Omega_n^{-1/2}
                    \widehat{\Omega}_0
                    \Omega_n^{-1/2}
            \right\Vert_2
        &=
        O_p
            \left(
            \frac{s+m+K}{\sqrt{n}}
            \right),
        \label{eq:lemma_result2}
        \\
        \left\Vert
                \widehat{\bK}_{n,a}^{-1}
                \widehat{\bV}_{n,a}
                \widehat{\bK}_{n,a}^{-1}
                - 
                {\bK}_{n}^{-1}
                {\bV}_{n}
                {\bK}_{n}^{-1}
            \right\Vert_2
        &=
        O_p
            \left(
            \frac{s+m+K}{\sqrt{n}}
            \right),
        \label{eq:lemma_result3}
        \\
        \lambda_{\max}\left(
                    \widehat{\bK}_{n,a}^{-1}
                    \widehat{\bV}_{n,a}
                    \widehat{\bK}_{n,a}^{-1}
                \right)
        &=
        O_p(1)
        \label{eq:lemma_result3+},
        \\
        \lambda_{\max}\left(\widehat{\bK}_{n,a}
                    \widehat{\bV}_{n,a}^{-1}
                    \widehat{\bK}_{n,a}\right)
        &=
        O_p(1)
        \label{eq:lemma_result3++},
        \\
        \left\Vert
                \bm{I} - 
                    \cT_n^{1/2}
                    \widehat{\cT}_{n,a}^{-1}
                    \cT_n^{1/2}
            \right\Vert_2
        &=
        O_p
            \left(
            \frac{s+m+K}{\sqrt{n}}
            \right),
        \label{eq:lemma_result4}
        \\
        \left\Vert
            \Psi_n
            \left(
            \widehat{\cT}_{n,a}^{-1}
            -
            \cT_n^{-1}
            \right)
        \right\Vert_2
        &=
        O_p
            \left(
            \frac{s+m+K}{\sqrt{n}}
            \right),
        \label{eq:lemma_result5}
        \\
        \left\Vert                        {\widehat{\mathcal{T}}_{n,a}}^{1/2}
                {\widehat{\Psi}_{n,a}}^{-1}
                {\widehat{\mathcal{T}}_{n,a}}^{1/2} - 
                {{\mathcal{T}}_{n}}^{1/2}
                {\Psi}_{n}^{-1}
                {{\mathcal{T}}_{n}}^{1/2}
                \right\Vert_2
        &=O_p
            \left(
            \frac{s+m+K}{\sqrt{n}}
            \right),
        \label{eq:lemma_result6}
        \\
        \lambda_{\max}
        \left(
                {{\mathcal{T}}_{n}}^{1/2}
                {\Psi}_{n}^{-1}
                {{\mathcal{T}}_{n}}^{1/2}
        \right)
        &=O(1),
        \label{eq:lemma_result6+}
        \\
        \lambda_{\max}
        \left(
                {{\mathcal{T}}_{n}}^{-1/2}
                {\Psi}_{n}
                {{\mathcal{T}}_{n}}^{-1/2}
        \right)
        &=O(1)
        \label{eq:lemma_result6++}
        \\
        \left\Vert
        {\Psi}_{n}^{-1/2}\bh_n
        \right\Vert_2
        &=
        O\left(
            \sqrt{r/n}
            \right)
        \label{eq:lemma_result7}
    \end{align}
\end{lemma}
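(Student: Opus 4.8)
The plan is to reduce all eleven displays to two inputs and then carry out elementary matrix perturbation. The first input (\textbf{concentration}) is that the plug-in matrices track their population counterparts at the sharp rate $O_p((s+m+K)/\sqrt n)$: $\Vert\widehat{\bK}_{n,a}-\bK_n\Vert_2$, $\Vert\widehat{\bK}_{n,0}-\bK_n\Vert_2$ and $\Vert\widehat{\bV}_{n,a}-\bV_n\Vert_2$ are all $O_p((s+m+K)/\sqrt n)$. The second input (\textbf{population spectra}) is that $\bK_n$, $\bV_n$, $\Omega_{mm}$, $\Psi_n$ and $\cT_n$ all have eigenvalues bounded away from $0$ and $\infty$, together with the Loewner sandwich $c\,\Psi_n\preceq\cT_n\preceq C\,\Psi_n$. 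Granting these, every remaining item follows from repeated use of the identity $(\bm I+\bm E)^{-1}-\bm I=-(\bm I+\bm E)^{-1}\bm E$ (so the left side has operator norm $\le 2\Vert\bm E\Vert_2$ once $\Vert\bm E\Vert_2\le 1/2$), the square-root Lipschitz bound of Lemma \ref{lemma0}, and telescoping of three-factor products.

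For the concentration input I would first invoke Theorem \ref{theorem:estimator_statistical_properties}(i): the partial active sets satisfy $\widehat S_a=\widehat S_0=S$ with probability tending to $1$, so on that event $\widehat{\bK}_{n,a}$ and $\bK_n$ are the same quadratic form evaluated at $\Sigma(\bX^k\hat{\cB}_a)$ versus $\Sigma(\bX^k\cB^*)$. Writing $\sigma(\eta)=\varphi^2(\eta)/(\Phi(\eta)(1-\Phi(\eta)))$, the bound $\max_{i,k}\vert(\bx_i^k)^T\cB^*\vert=O(1)$ from the third line of (A1), the fact that $\hat{\cB}_a\in\cN^*$ (Theorem \ref{theorem:estimator_statistical_properties}), and $\Vert\bX_{\cM\cup S}\Vert_{2,\infty}=O(\sqrt{n/((m+s+K)\log n)})$ together give $\max_{i,k}\vert(\bx_i^k)^T\hat{\cB}_a\vert=O_p(1)$, so $\sigma$ is Lipschitz along the relevant segment. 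A first-order expansion then writes $\Sigma(\bX^k\hat{\cB}_a)-\Sigma(\bX^k\cB^*)=\sum_{j\in\cM\cup S\cup\cK}(\hat{\cB}_a-\cB^*)_j\,\mathrm{diag}\{\sigma'(\xi_i^k)(\bx_i^k)_j\}$; substituting, bounding the operator norm of each summand by the fifth line of (A1) (the $\mathrm{diag}\{\vert\bx_{(j)}\vert\}$ bound, the $\cK$-coordinates being trivially $O(1)$), and using $\Vert\hat{\cB}_a-\cB^*\Vert_1\le\sqrt{m+s+K}\,\Vert\hat{\cB}_a-\cB^*\Vert_2=O_p((m+s+K)/\sqrt n)$ yields the rate. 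The constrained estimator gives $\Vert\widehat{\bK}_{n,0}-\bK_n\Vert_2$ identically (with the faster rate of Theorem \ref{theorem:estimator_statistical_properties}(ii)), and the same template applied to the bounded, locally Lipschitz double-sum kernel of $\bV_n$ gives $\Vert\widehat{\bV}_{n,a}-\bV_n\Vert_2$.

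For the population spectra, the sixth and seventh lines of (A1) give $\lambda_{\min}(\bK_n)\ge c$ and $\lambda_{\min}(\bV_n)\ge c$, the fourth line gives $\lambda_{\max}(\bK_n)=O(1)$, and bounded linear predictors plus the operator-norm conditions of (A1) give $\lambda_{\max}(\bV_n)=O(1)$; $\Omega_{mm}$ inherits spectra in $[c,C]$ as a principal submatrix of $\bK_n^{-1}$, and with $\bC\bC^T$ having eigenvalues in $[c,C]$ (from (A3); the upper bound is harmless since $T_W,T_S,T_L$ are invariant to rescaling the rows of $\bC$, so one may take $\bC\bC^T=\bm I_r$), $\Psi_n=\bC\Omega_{mm}\bC^T$ and $\cT_n=\tilde{\bC}^T\bK_n^{-1}\bV_n\bK_n^{-1}\tilde{\bC}$ with $\tilde{\bC}:=(\bC^T,\bm 0_{r\times(s+K)}^T)^T$ have spectra in $[c,C]$, which is \eqref{eq:lemma_result6+} and \eqref{eq:lemma_result6++} once one extracts the sandwich via the substitution $\bu=\bK_n^{-1/2}\tilde{\bC}\bv$: then $\bv^T\Psi_n\bv=\Vert\bu\Vert_2^2$ and $\bv^T\cT_n\bv=\bu^T\bK_n^{-1/2}\bV_n\bK_n^{-1/2}\bu$. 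Then \eqref{eq:lemma_result7} is immediate, and from $\Vert\widehat{\bK}_{n,a}^{-1}-\bK_n^{-1}\Vert_2=O_p((s+m+K)/\sqrt n)$ (concentration plus the spectra) I get \eqref{eq:lemma_result3} by telescoping $\widehat{\bK}_{n,a}^{-1}\widehat{\bV}_{n,a}\widehat{\bK}_{n,a}^{-1}-\bK_n^{-1}\bV_n\bK_n^{-1}$ into three pieces, and \eqref{eq:lemma_result3+}--\eqref{eq:lemma_result3++} by Weyl. Next, \eqref{eq:lemma_result1} follows by writing $\Psi_n^{-1/2}\widehat{\Psi}_{n,a}\Psi_n^{-1/2}=\bm I+\bm B(\widehat{\Omega}_{a,mm}-\Omega_{mm})\bm B^T$ with $\bm B=\Psi_n^{-1/2}\bC$, noting $\bm B\Omega_{mm}\bm B^T=\bm I_r$ hence $\Vert\bm B\Vert_2\le\Vert\Omega_{mm}^{-1/2}\Vert_2=O(1)$, and applying the inversion identity; the same $(\bm I+\bm B\cdot(\text{perturbation})\cdot\bm B^T)^{-1}$ scheme — with $\bm B=\bK_n^{-1/2}$, perturbation $\widehat{\bK}_{n,0}-\bK_n$ for \eqref{eq:lemma_result2}, and $\bm B=\cT_n^{-1/2}\tilde{\bC}^T$, perturbation $\widehat{\bK}_{n,a}^{-1}\widehat{\bV}_{n,a}\widehat{\bK}_{n,a}^{-1}-\bK_n^{-1}\bV_n\bK_n^{-1}$ for \eqref{eq:lemma_result4} — disposes of those two. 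Then \eqref{eq:lemma_result5} follows from \eqref{eq:lemma_result4} after factoring $\Psi_n(\widehat{\cT}_{n,a}^{-1}-\cT_n^{-1})=\Psi_n\cT_n^{-1/2}\bigl(\cT_n^{1/2}\widehat{\cT}_{n,a}^{-1}\cT_n^{1/2}-\bm I\bigr)\cT_n^{-1/2}$ and using the sandwich to bound $\Vert\Psi_n\cT_n^{-1/2}\Vert_2=O(1)$, and \eqref{eq:lemma_result6} follows by telescoping $\widehat{\cT}_{n,a}^{1/2}\widehat{\Psi}_{n,a}^{-1}\widehat{\cT}_{n,a}^{1/2}-\cT_n^{1/2}\Psi_n^{-1}\cT_n^{1/2}$, with Lemma \ref{lemma0} supplying $\Vert\widehat{\cT}_{n,a}^{1/2}-\cT_n^{1/2}\Vert_2\le\Vert\widehat{\cT}_{n,a}-\cT_n\Vert_2/\lambda_{\min}(\cT_n^{1/2})$ and \eqref{eq:lemma_result1} supplying $\Vert\widehat{\Psi}_{n,a}^{-1}-\Psi_n^{-1}\Vert_2$.

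The main obstacle is the concentration step: obtaining the rate $O_p((s+m+K)/\sqrt n)$ rather than the crude $O_p(\sqrt{(s+m+K)/n})$ one gets from a single $\max_{i,k}\vert(\bx_i^k)^T(\hat{\cB}_a-\cB^*)\vert$ estimate. This forces the coordinate-by-coordinate expansion of $\Sigma(\cdot)$ and the use of the tailor-made fifth line of (A1), and it rests on uniform boundedness of the linear predictors at $\hat{\cB}_a$, which in turn marries the $\Vert\bX_{\cM\cup S}\Vert_{2,\infty}$ bound of (A1) to the $\ell_2$-rate of Theorem \ref{theorem:estimator_statistical_properties}. A secondary nuisance is bookkeeping: everything above is stated on the probability-one-in-the-limit event $\{\widehat S_a=\widehat S_0=S\}$, on whose complement the plug-in matrices have a different block structure; since that complement has vanishing probability it contributes nothing to any $O_p$ statement, but it must be carried along explicitly.
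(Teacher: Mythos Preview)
Your proposal is correct and follows the same two-step architecture as the paper: first establish the plug-in concentration bounds $\|\widehat{\bK}_{n,a}-\bK_n\|_2$, $\|\widehat{\bK}_{n,0}-\bK_n\|_2$, $\|\widehat{\bV}_{n,a}-\bV_n\|_2=O_p((s+m+K)/\sqrt n)$ via a first-order expansion of the diagonal weights together with the fifth line of (A1) and the $\ell_2$-rate from Theorem~\ref{theorem:estimator_statistical_properties} on the event $\{\widehat S_a=\widehat S_0=S\}$, then propagate through inverses, square roots, and three-factor products by standard perturbation. The paper does the concentration step row-by-row and passes through $\|\cdot\|_\infty$ (using $\|A\|_2\le\|A\|_\infty$ for symmetric $A$); you sum coordinate-by-coordinate and use $\|\hat{\cB}_a-\cB^*\|_1\le\sqrt{m+s+K}\,\|\hat{\cB}_a-\cB^*\|_2$, which yields the identical rate.

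The one real point of divergence is how $\bC$ is handled. You take $\bC\bC^T=\bm I_r$ WLOG (invariance of $T_W,T_S,T_L$) to secure $\lambda_{\max}(\Psi_n),\lambda_{\max}(\cT_n)=O(1)$; the paper instead repeatedly inserts self-normalizing factors such as $\Psi_n^{-1/2}\bC\,\Omega_{mm}^{1/2}$ and $\cT_n^{-1/2}\tilde{\bC}^T\bK_n^{-1}\bV_n^{1/2}$, each of operator norm exactly $1$, so that $\|\bC\|$ never appears in isolation. Your WLOG is legitimate for the downstream Theorems~\ref{theorem:testing_statistic_distribution}--\ref{theorem:type_I_error}, but be aware that several of the lemma's norms themselves (for instance \eqref{eq:lemma_result5}, which transforms by conjugation $A(\cdot)A^{-1}$ under $\bC\to A\bC$) are \emph{not} invariant under general row transformations of $(\bC,\bt)$, so the self-normalization trick is what actually proves the lemma as stated under (A3) alone. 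Apart from this, your route to \eqref{eq:lemma_result6} via a direct telescope and Lemma~\ref{lemma0} on $\widehat{\cT}_{n,a}^{1/2}-\cT_n^{1/2}$ is shorter than the paper's detour through $\widehat{\cT}_{n,a}^{-1}\Psi_n^2\widehat{\cT}_{n,a}^{-1}-\cT_n^{-1}\Psi_n^2\cT_n^{-1}$ and works once $\lambda_{\max}(\cT_n)=O(1)$ is in hand.
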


\paragraph{Proof of equation \eqref{eq:lemma_result1}}
\begin{proof}
    By Cauchy-Schwartz inequality
    \begin{align*}
        \left\Vert
                \bm{I} - 
                    \Psi_n^{1/2}
                    ({\bm{C}\widehat{\Omega}_{a,mm}\bm{C}^T})^{-1}
                    \Psi_n^{1/2}
            \right\Vert_2
            &=
        \left\Vert
            \left(
            \Psi_n^{-1/2}
                    {\bm{C}\widehat{\Omega}_{a,mm}\bm{C}^T}
                    \Psi_n^{-1/2}
            -
            \bm{I} 
            \right)
            \Psi_n^{1/2}
                    ({\bm{C}\widehat{\Omega}_{a,mm}\bm{C}^T})^{-1}
                    \Psi_n^{1/2}     
            \right\Vert_2
        \\
        &
        \leq
        \left\Vert
            \Psi_n^{-1/2}
            {\bm{C}}
            \left(
            \widehat{\Omega}_{a,mm}
            -
            \Omega_{mm} 
            \right) 
            {\bm{C}}^T
            \Psi_n^{-1/2}
            \right\Vert_2
        \left\Vert
            \Psi_n^{1/2}
                    ({\bm{C}\widehat{\Omega}_{a,mm}\bm{C}^T})^{-1}
                    \Psi_n^{1/2}
        \right\Vert_2
        .
    \end{align*}
    We will prove that
    \begin{equation}
        \left\Vert
            \Psi_n^{-1/2}
            {\bm{C}}
            \left(
            \widehat{\Omega}_{a,mm}
            -
            \Omega_{mm} 
            \right) 
            {\bm{C}}^T
            \Psi_n^{-1/2}
            \right\Vert_2
        =
        O_p
            \left(
            \frac{s+m+K}{\sqrt{n}}
            \right),
        \label{eq:lemma-I}
    \end{equation}
    and
    \begin{equation}
        \left\Vert
            \Psi_n^{1/2}
                    ({\bm{C}\widehat{\Omega}_{a,mm}\bm{C}^T})^{-1}
                    \Psi_n^{1/2}
        \right\Vert_2
        =
        O_p(1).
        \label{eq:lemma-II}
    \end{equation}
    
    Define $\Tilde{\Omega}_{a}=(\Tilde{{\bm{K}}}_{n,a})^{-1}$, where
    \begin{equation*}
        \Tilde{{\bm{K}}}_{n,a}
        =
        \frac{1}{n}
        \sum_{k=1}^{K}w_k 
        \begin{pmatrix}
            \bm{X}_{\cM}^T
                \\
            \left( {\bm{X}^k_{S\cup \cK}} \right)^T
        \end{pmatrix}
        {\Sigma}(\bm{X}^k{\hat{\cB}_a})
        \begin{pmatrix}
            \bm{X}_{\cM}^T
            \\
            \left( {\bm{X}^k_{S\cup \cK}} \right)^T
        \end{pmatrix}^T,
    \end{equation*}
    and $\Tilde{\Omega}_{a,mm}$ as the submatrix of $\Tilde{{\bm{K}}}_{n,a}$ formed by its first $m$ rows and $m$ columns.
    Assume for now we have
    \begin{equation}
        \Vert
        {\bm{K}}_n - \Tilde{{\bm{K}}}_{n,a}
        \Vert_2
        =
        O_p
            \left(
            \frac{s+m+K}{\sqrt{n}}
            \right).
        \label{eq:lemma-1}
    \end{equation}
    Notice that
    \begin{align*}
        \lim\inf_{n}\lambda_{\min}(\Tilde{{\bm{K}}}_{n,a})
        &
        \geq
        \lim\inf_{n}\lambda_{\min}({{\bm{K}}}_{n})
        -
        \lim\sup_{n}\Vert
        {\bm{K}}_n - \Tilde{{\bm{K}}}_{n,a}
        \Vert_2.    
    \end{align*}
    Under \textcolor{black}{the 6th condition of (A1)}, we have $\lim\inf_n\lambda_{\min}(\bm{K}_n)>0$, this together with condition $s+m+K=o(n^{1/2})$ and \eqref{eq:lemma-1} implies that
    \begin{equation}
        \lim\inf_{n}\lambda_{\min}(\Tilde{{\bm{K}}}_{n,a})>0,
        \label{eq:lemma-2}
    \end{equation}
    with probability tending to $1$. Or equivalently, $\lambda_{\max}(\Tilde{{\bm{K}}}_{n,a}^{-1})=O_p(1)$. Hence, we have
    \begin{align}
        \Vert
        {\bm{K}}_n^{-1} - \Tilde{{\bm{K}}}_{n,a}^{-1}
        \Vert_2
        &=
        \Vert
        {\bm{K}}_n^{-1}({\bm{K}}_n - \Tilde{{\bm{K}}}_{n,a})\Tilde{{\bm{K}}}_{n,a}^{-1}
        \Vert_2
        \nonumber
        \\
        &
        \leq
        \lambda_{\max}({\bm{K}}_n^{-1})
        \Vert
        {\bm{K}}_n - \Tilde{{\bm{K}}}_{n,a}
        \Vert_2
        \lambda_{\max}(\Tilde{{\bm{K}}}_{n,a}^{-1})
        \nonumber
        \\
        &=
        O_p\left(
            \frac{s+m+K}{\sqrt{n}}
            \right).
            \label{eq:lemma-3}
    \end{align}
    This together with $\Vert
        \Omega_{mm} - \Tilde{\Omega}_{a,mm}
        \Vert_2
        \leq \Vert
        {\bm{K}}_n^{-1} - \Tilde{{\bm{K}}}_{n,a}^{-1}
        \Vert_2$ implies that $\Vert
        \Omega_{mm} - \Tilde{\Omega}_{a,mm}
        \Vert_2
        =O_p\left(
            \frac{s+m+K}{\sqrt{n}}
            \right)$.
    Under event $\left\{\widehat{S}_{a}=S\right\}$, we have $\Tilde{\Omega}_{a}=\widehat{\Omega}_{a}$ and $\Tilde{\Omega}_{a,mm}=\widehat{\Omega}_{a,mm}$. According to Theorem \ref{theorem:estimator_statistical_properties}, $\widehat{S}_{a}=S$ with probability tending to $1$. Therefore, we have 
    \begin{equation}
        \Vert
        \Omega_{mm} - \widehat{\Omega}_{a,mm}
        \Vert_2
        =O_p\left(
            \frac{s+m+K}{\sqrt{n}}
            \right).
        \label{eq:lemma-4}
    \end{equation}   
    Then, 
    \begin{align*}
        &~
        \left\Vert
            \Psi_n^{-1/2}
            {\bm{C}}
            \left(
            \widehat{\Omega}_{a,mm}
            -
            \Omega_{mm} 
            \right) 
            {\bm{C}}^T
            \Psi_n^{-1/2}
            \right\Vert_2
        \\
        =&~
        \left\Vert
            \Psi_n^{-1/2}
            {\bm{C}}\Omega_{mm}^{1/2}
            \Omega_{mm}^{-1/2}
            \left(
            \widehat{\Omega}_{a,mm}
            -
            \Omega_{mm} 
            \right) 
            \Omega_{mm}^{-1/2}
            \Omega_{mm}^{1/2}
            {\bm{C}}^T
            \Psi_n^{-1/2}
            \right\Vert_2
        \\
        \leq&~
        \left\Vert
            \Psi_n^{-1/2}
            {\bm{C}}\Omega_{mm}^{1/2}
        \right\Vert_2^2
        \left\Vert
            \Omega_{mm}^{-1/2}
            \left(
            \widehat{\Omega}_{a,mm}
            -
            \Omega_{mm} 
            \right) 
            \Omega_{mm}^{-1/2}
        \right\Vert_2
        \\
        \leq&~
        1\cdot
        \Vert
         \Omega_{mm}^{-1/2}
         \Vert_2^2
         \left\Vert
            \widehat{\Omega}_{a,mm}
            -
            \Omega_{mm} 
        \right\Vert_2
        \\
        \leq&~
        O_p\left(
            \frac{s+m+K}{\sqrt{n}}
            \right),
    \end{align*}
    where $\Vert
         \Omega_{mm}^{-1/2}
         \Vert_2^2=O(1)$
    is because \textcolor{black}{the 4th condition in (A1)} implies that $\lambda_{\max}(\bm{K}_n)=O(1)$ and thus $\lim\inf_n\lambda_{\min}(\Omega_{mm})>0$. Thus, we proved the equation \eqref{eq:lemma-I}. To show equation \eqref{eq:lemma-II}, notice that
    \begin{align*}
        &~
        \lim\inf_{n}\lambda_{\min}
        \left(
            \Psi_n^{-1/2}
                    {\bm{C}\widehat{\Omega}_{a,mm}\bm{C}^T}
                    \Psi_n^{-1/2}
        \right)
        \\
        \geq&~
        \lim\inf_{n}\lambda_{\min}
        \left(
            \Psi_n^{-1/2}
                    {\bm{C}{\Omega}_{mm}\bm{C}^T}
                    \Psi_n^{-1/2}
        \right)
        -
        \lim\sup_{n}
        \left\Vert
        \Psi_n^{-1/2}
                    {\bm{C}({\Omega}_{mm}-\widehat{\Omega}_{a,mm})\bm{C}^T}
                    \Psi_n^{1/2}
        \right\Vert_2
        \\
        =&~
        1-
        \lim\sup_{n}
        \left\Vert
        \Psi_n^{-1/2}
                    {\bm{C}({\Omega}_{mm}-\widehat{\Omega}_{a,mm})\bm{C}^T}
                    \Psi_n^{1/2}
        \right\Vert_2.
    \end{align*}
    This together with \eqref{eq:lemma-I} and the condition $s+m+K=o(\sqrt{n})$ implies that
    \begin{equation}
        \lim\inf_{n}\lambda_{\min}
        \left(
            \Psi_n^{-1/2}
                    {\bm{C}\widehat{\Omega}_{a,mm}\bm{C}^T}
                    \Psi_n^{-1/2}
        \right)
        >0
    \end{equation}
    with probability tending to $1$. Or equivalently, $\lambda_{\max}\left(
            \Psi_n^{1/2}
                    ({\bm{C}\widehat{\Omega}_{a,mm}\bm{C}^T})^{-1}
                    \Psi_n^{1/2}
        \right)=O_p(1)$, as desired.

    It remains to show \eqref{eq:lemma-1}. 
    By definition and symmetry of ${\bm{K}}_n$ and $\Tilde{{\bm{K}}}_{n,a}$, it suffices to show that $\Vert
        {\bm{K}}_n - \Tilde{{\bm{K}}}_{n,a}
        \Vert_\infty=O_p((s+m+K)/\sqrt{n})$.
    \begin{align*}
        \Vert
        {\bm{K}}_n - \Tilde{{\bm{K}}}_{n,a}
        \Vert_\infty
        &=
        \max_{j\in {S\cup \cM\cup\cK}}
        \left\Vert
        \sum_{k=1}^{K}w_k \frac{1}{n}
                        \begin{pmatrix}
                            \bm{X}_{\cM}^T
                            \\
                            (\bm{X}^k_{S\cup \cK})^T
                        \end{pmatrix}
                         \left({\Sigma}(\bm{X}^k\cB^*) - {\Sigma}(\bm{X}^k\hat{\cB}_a)\right)
                        \begin{pmatrix}
                            \bm{X}_{\cM}^T
                            \\
                            (\bm{X}^k_{S\cup \cK})^T
                        \end{pmatrix}^T
        \right\Vert_1
                \\
        &\leq
        \max_{j\in {S\cup \cM\cup\cK}}
        \left\Vert
        \sum_{k=1}^{K}w_k \frac{1}{n}
                        \begin{pmatrix}
                            \bm{X}_{\cM}^T
                            \\
                            (\bm{X}^k_{S\cup \cK})^T
                        \end{pmatrix}
                         \left({\Sigma}(\bm{X}^k\cB^*) - {\Sigma}(\bm{X}^k\hat{\cB}_a)\right)
                        \begin{pmatrix}
                            \bm{X}_{\cM}^T
                            \\
                            (\bm{X}^k_{S\cup \cK})^T
                        \end{pmatrix}^T
        \right\Vert_2\sqrt{s+m+K}
    \end{align*}
    By Taylor's expansion, for $j\in S\cup \cM$,
    \begin{align*}
        &~
        \sum_{k=1}^{K}w_k
                        \bx_{(j)}^T
                         \left({\Sigma}(\bm{X}^k\cB^*) - {\Sigma}(\bm{X}^k\hat{\cB}_a)\right)
                        \begin{pmatrix}
                            \bm{X}_{\cM}^T
                            \\
                            (\bm{X}^k_{S\cup \cK})^T
                        \end{pmatrix}^T
        \\
        =&~
        \sum_{k=1}^{K}w_k 
                    \begin{pmatrix}
                            \bm{X}_{\cM}^T
                            \\
                            (\bm{X}^k_{S\cup \cK})^T
                        \end{pmatrix}
                    \text{diag}\left\{
                        \bx_{(j)}\circ{\Sigma}^\prime(\bX^k\bar{\cB}_j)
                    \right\}
                    \begin{pmatrix}
                            \bm{X}_{\cM}^T
                            \\
                            (\bm{X}^k_{S\cup \cK})^T
                        \end{pmatrix}^T
        (\hat{\cB}_0 - {\cB^*} ),
    \end{align*}
    for some $\bar{\cB}_j$ on the line segment joining $\cB^*$ and $\hat{\cB}_a$. 
    Since $\sup_{t}\vert{\Sigma}^\prime(t)\vert=O(1)$ and by the \textcolor{black}{5th condition of (A1)}, we have
    \begin{align*}
        &~
        \left\Vert
        \sum_{k=1}^{K}w_k
                        \bx_{(j)}^T
                         \left({\Sigma}(\bm{X}^k\cB^*) - {\Sigma}(\bm{X}^k\hat{\cB}_a)\right)
                        \begin{pmatrix}
                            \bm{X}_{\cM}^T
                            \\
                            (\bm{X}^k_{S\cup \cK})^T
                        \end{pmatrix}^T
        \right\Vert_2
        \\
        \leq&~
        \left\Vert
        \sum_{k=1}^{K}w_k 
                    \begin{pmatrix}
                            \bm{X}_{\cM}^T
                            \\
                            (\bm{X}^k_{S\cup \cK})^T
                        \end{pmatrix}
                    \text{diag}\left\{
                        \vert\bx_{(j)}\vert
                        \circ
                        \vert{\Sigma}^\prime(\bX^k\bar{\cB}_j)\vert
                    \right\}
                    \begin{pmatrix}
                            \bm{X}_{\cM}^T
                            \\
                            (\bm{X}^k_{S\cup \cK})^T
                        \end{pmatrix}^T
        \right\Vert_2
        \left\Vert
        \hat{\cB}_a - {\cB^*}
        \right\Vert_2
        \\
        \leq&~
        \sup_{t}\vert{\Sigma}^\prime(t)\vert
        \cdot
        \lambda_{\max}
        \left(
        \sum_{k=1}^{K}w_k 
                    \begin{pmatrix}
                            \bm{X}_{\cM}^T
                            \\
                            (\bm{X}^k_{S\cup \cK})^T
                        \end{pmatrix}
                    \text{diag}\left\{
                        \vert\bx_{(j)}\vert
                    \right\}
                    \begin{pmatrix}
                            \bm{X}_{\cM}^T
                            \\
                            (\bm{X}^k_{S\cup \cK})^T
                        \end{pmatrix}^T
        \right)
        \left\Vert
        \hat{\cB}_a - {\cB^*}
        \right\Vert_2
        \\
        =&~
        O(n)\left\Vert
        \hat{\cB}_a - {\cB^*}
        \right\Vert_2.
    \end{align*}
    This together with $\left\Vert
        \hat{\cB}_a - {\cB^*}
        \right\Vert_2=O_p(\sqrt{(s+m+K)/n})$ from Theorem \eqref{theorem:estimator_statistical_properties} implies that for $j\in S\cup\cM$,
    \begin{equation*}
        \left\Vert
        \sum_{k=1}^{K}w_k \frac{1}{n}
                        \bx_{(j)}^T
                         \left({\Sigma}(\bm{X}^k\cB^*) - {\Sigma}(\bm{X}^k\hat{\cB}_a)\right)
                        \begin{pmatrix}
                            \bm{X}_{\cM}^T
                            \\
                            (\bm{X}^k_{S\cup \cK})^T
                        \end{pmatrix}^T
        \right\Vert_2
        =
        O_p\left(
            \sqrt{
            \frac{s+m+K}{{n}}
            }
            \right).
    \end{equation*}
    Similarly, we can prove along with the \textcolor{black}{4th condition of (A1)} that for $j\in \cK$,
    \begin{equation*}
        \left\Vert
        \sum_{k=1}^{K}w_k \frac{1}{n}
                        (\bx_{(j)}^{k})^T
                         \left({\Sigma}(\bm{X}^k\cB^*) - {\Sigma}(\bm{X}^k\hat{\cB}_a)\right)
                        \begin{pmatrix}
                            \bm{X}_{\cM}^T
                            \\
                            (\bm{X}^k_{S\cup \cK})^T
                        \end{pmatrix}^T
        \right\Vert_2
        =
        O_p\left(
            \sqrt{
            \frac{s+m+K}{{n}}
            }
            \right).
    \end{equation*}
    Combining those together, we have
    \begin{align*}
        \Vert
        {\bm{K}}_n - \Tilde{{\bm{K}}}_{n,a}
        \Vert_\infty
        &\leq
        \max_{j\in {S\cup \cM\cup\cK}}
        \left\Vert
        \sum_{k=1}^{K}w_k \frac{1}{n}
                        (\bx_{(j)}^{k})^T
                         \left({\Sigma}(\bm{X}^k\cB^*) - {\Sigma}(\bm{X}^k\hat{\cB}_a)\right)
                        \begin{pmatrix}
                            \bm{X}_{\cM}^T
                            \\
                            (\bm{X}^k_{S\cup \cK})^T
                        \end{pmatrix}^T
        \right\Vert_2
        \sqrt{m+s+K}
        \\
        &=
        O_p\left(
            \sqrt{
            \frac{s+m+K}{{n}}
            }
            \right)\sqrt{s+m+K} = 
        O_p\left(
        \frac{s+m+K}{\sqrt{n}}
        \right).
    \end{align*}
    This ends the prove of equation \eqref{eq:lemma_result1}.
\end{proof}

\paragraph{Proof of equation \eqref{eq:lemma_result2}}
\begin{proof}
    By Cauchy-Schwartz inequality
    \begin{align*}
        \left\Vert
                \bm{I} - 
                    \Omega_n^{-1/2}
                    \widehat{\Omega}_0
                    \Omega_n^{-1/2}
            \right\Vert_2
            &\leq
        \left\Vert
            \Omega_n^{-1/2}
            (\Omega_n - \widehat{\Omega}_0)
            \Omega_n^{-1/2}    
            \right\Vert_2
        \\
        &
        \leq
        \left\Vert
            \Omega_n^{-1}    
            \right\Vert_2
        \left\Vert
            \Omega_n - \widehat{\Omega}_0
        \right\Vert_2
        .
    \end{align*}
    By \textcolor{black}{the 4th condition in (A1)}, we have $\lambda_{\max}(\bm{K}_n)=O(1)$. It remains to bound $\left\Vert
            \bm{K}_n^{-1} - \widehat{\bm{K}}_{n,0}^{-1}
        \right\Vert_2$.
    Assume for now, we have
    \begin{equation}
        \left\Vert
            \bm{K}_n- \widehat{\bm{K}}_{n,0}
        \right\Vert_2
        =
        O_p\left(\frac{s+m+K}{\sqrt{n}}\right),
        \label{eq:lemma-5}
    \end{equation}
    this together with $\lambda_{\max}(\bm{K}_n^{-1})=O(1)$ and $s+m+K=o(\sqrt{n})$ implies that with probability tending to $1$,
    \begin{align*}
        \lim\inf_{n}\lambda_{\min}(\widehat{\bm{K}}_{n,0})
        &
        \geq
        \lim\inf_{n}\lambda_{\min}(\bm{K}_{n})
        -
        \lim\sup_{n}\Vert
        \bm{K}_n - \widehat{\bm{K}}_{n,0}
        \Vert_2
        >0,
    \end{align*}
    which is equivalent to $\lambda_{\max}(\widehat{\bm{K}}_{n,0}^{-1})=O_p(1)$. Therefore,
    \begin{align*}
        \left\Vert
            \bm{K}_n^{-1}- \widehat{\bm{K}}_{n,0}^{-1}
        \right\Vert_2
        &=
        \left\Vert
            \bm{K}_n^{-1}(\bm{K}_n- \widehat{\bm{K}}_{n,0})
            \widehat{\bm{K}}_{n,0}^{-1}
        \right\Vert_2
        \\
        &\leq
        \left\Vert
            \bm{K}_n^{-1}
        \right\Vert_2
        \left\Vert
            \bm{K}_n- \widehat{\bm{K}}_{n,0}
        \right\Vert_2
        \left\Vert
            \widehat{\bm{K}}_{n,0}^{-1}
        \right\Vert_2
        =
        O_p\left(\frac{s+m+K}{\sqrt{n}}\right).
    \end{align*}
    It remains to prove equation \eqref{eq:lemma-5}. Similar to the prove of equation \eqref{eq:lemma-1}, we can show that under event $\{S=\hat{S}_0\}$, $\Vert\bm{K}_n-\Tilde{\bm{K}}_{n,0}\Vert_2=O_p\left(\frac{s+m+K}{\sqrt{n}}\right)$, where
    \begin{equation*}
        \Tilde{{\bm{K}}}_{n,0}
        =
        \frac{1}{n}
        \sum_{k=1}^{K}w_k 
        \begin{pmatrix}
            \bm{X}_{\cM}^T
                \\
            \left( {\bm{X}^k_{S\cup \cK}} \right)^T
        \end{pmatrix}
        {\Sigma}(\bm{X}^k{\hat{\cB}_0})
        \begin{pmatrix}
            \bm{X}_{\cM}^T
            \\
            \left( {\bm{X}^k_{S\cup \cK}} \right)^T
        \end{pmatrix}^T.
    \end{equation*}
    This together with the fact that $\mathbb{P}(S=\hat{S}_0)\rightarrow 1$ proves the equation \eqref{eq:lemma-5}. This ends the proof of equation \eqref{eq:lemma_result2}.
    
\end{proof}

\paragraph{Proof of equation \eqref{eq:lemma_result3}, \eqref{eq:lemma_result3+},
\eqref{eq:lemma_result3++}}
\begin{proof}
    We first prove that $\lambda_{\max}(\bV_n)=O(1)$ by showing that $\lambda_{\max}(\bV_n)\leq\lambda_{\max}(\bK_n)$:
    \begin{align*}
        \lambda_{\max}(\bV_n)
        &=
        \sup_{\ba:\Vert\ba\Vert_2=1}
        \ba^T\bV_{n}\ba
        \\
        &=
        \sup_{\ba:\Vert\ba\Vert_2=1}
        \frac{1}{n}
        \mathbb{E}
        \left[
        \left\Vert
            \sum_{k=1}^{K}w_k
                \ba^T
                        \begin{pmatrix}
                                \bm{X}_{\cM}^T
                                \\
                                \left( {\bm{X}^k_{{{S}}\cup \cK}} \right)^T
                            \end{pmatrix}
                            \bm{H}(\bm{X}^k{{\cB}_0})
                            \left\{
                                \bm{Y}^k - \bm{\mu}(\bm{X}^k{{\cB^*}})
                            \right\}
        \right\Vert_2^2
        \right]
        \\
        &\leq
        \sup_{\ba:\Vert\ba\Vert_2=1}
        \frac{1}{n}
        \sum_{k=1}^{K}w_k
        \mathbb{E}
        \left[
        \left\Vert
                \ba^T
                        \begin{pmatrix}
                                \bm{X}_{\cM}^T
                                \\
                                \left( {\bm{X}^k_{{{S}}\cup \cK}} \right)^T
                            \end{pmatrix}
                            \bm{H}(\bm{X}^k{{\cB}_0})
                            \left\{
                                \bm{Y}^k - \bm{\mu}(\bm{X}^k{{\cB^*}})
                            \right\}
        \right\Vert_2^2
        \right]
        \\
        &=
        \sup_{\ba:\Vert\ba\Vert_2=1}
        \frac{1}{n}
        \sum_{k=1}^{K}w_k
        \ba^T
            \begin{pmatrix}
                                \bm{X}_{\cM}^T
                                \\
                                (\bm{X}^k_{S\cup \cK})^T
                            \end{pmatrix}
                            {\Sigma}(\bm{X}^k{\cB^*})
                            \begin{pmatrix}
                                \bm{X}_{\cM}^T
                                \\
                                (\bm{X}^k_{S\cup \cK})^T
                            \end{pmatrix}^T
        \ba
        \\
        &=
        \sup_{\ba:\Vert\ba\Vert_2=1}
        \ba^T
        \bK_n
        \ba
        =
        \lambda_{\max}(\bK_n),
    \end{align*}
    where the inequality is from the convexity of function $\Vert\cdot\Vert_2^2$. Since $\lambda_{\max}(\bK_n)=O(1)$ by \textcolor{black}{the 4th condition in (A1)}, we have $\lambda_{\max}(\bV_n)=O(1)$. And by \textcolor{black}{the 7th condition in (A1)}, we have $\lambda_{\max}(\bV_n^{-1})=O(1)$. And similar to the proof equation \eqref{eq:lemma-1} and equation \eqref{eq:lemma-5}, we can show that
    \begin{equation}
            \left\Vert
                \bV_n- \widehat{\bV}_{n,a}
            \right\Vert_2
            =
            O_p\left(\frac{s+m+K}{\sqrt{n}}\right),
            \label{eq:lemma-6}
        \end{equation}
    
    We next prove equation \eqref{eq:lemma_result3} by splitting $\widehat{\bK}_{n,a}^{-1}
                    \widehat{\bV}_{n,a}
                    \widehat{\bK}_{n,a}^{-1}
                    - 
                    {\bK}_{n}^{-1}
                    {\bV}_{n}
                    {\bK}_{n}^{-1}$
    into several parts:
    \begin{align*}
        &~
        \widehat{\bK}_{n,a}^{-1}
                    \widehat{\bV}_{n,a}
                    \widehat{\bK}_{n,a}^{-1}
                    - 
                    {\bK}_{n}^{-1}
                    {\bV}_{n}
                    {\bK}_{n}^{-1}
        \nonumber
        \\
        =&~
        \underbrace
        {
        \widehat{\bK}_{n,a}^{-1}
        (\widehat{\bV}_{n,a}-{\bV}_{n})
        \widehat{\bK}_{n,a}^{-1}
        }_{I}
        +
        \underbrace
        {
        (\widehat{\bK}_{n,a}^{-1}-{\bK}_{n}^{-1})
        {\bV}_{n}
        (\widehat{\bK}_{n,a}^{-1}-{\bK}_{n}^{-1})
        }_{II}
        \\
        &~
        +
        \underbrace
        {
        (\widehat{\bK}_{n,a}^{-1}-{\bK}_{n}^{-1})
        {\bV}_{n}
        {\bK}_{n}^{-1}
        +
        {\bK}_{n}^{-1}
        {\bV}_{n}
        (\widehat{\bK}_{n,a}^{-1}-{\bK}_{n}^{-1})
        }_{III}
    \end{align*}
    For the term $\widehat{\bK}_{n,a}^{-1}$ in $I$, we proved in equation \eqref{eq:lemma-2} that $\lambda_{\max}(\tilde{\bK}_{n,a}^{-1})=O_p(1)$. This together with the fact that $\mathbb{P}(S=\hat{S}_{a})\rightarrow 1$ proves that $\lambda_{\max}(\widehat{\bK}_{n,a}^{-1})=O_p(1)$. Then along with equation \eqref{eq:lemma-6}, we have
    \begin{equation*}
        \Vert
        {
        \widehat{\bK}_{n,a}^{-1}
        (\widehat{\bV}_{n,a}-{\bV}_{n})
        \widehat{\bK}_{n,a}^{-1}
        }
        \Vert_2
        \leq
        \Vert
        \widehat{\bK}_{n,a}^{-1}
        \Vert_2^2
        \cdot
        \Vert
        {
        \widehat{\bV}_{n,a}-{\bV}_{n}
        }
        \Vert_2
        =
        O_p\left(\frac{s+m+K}{\sqrt{n}}\right).
    \end{equation*}
    For term $(\widehat{\bK}_{n,a}^{-1}-{\bK}_{n}^{-1})$ in $II$, we proved in equation \eqref{eq:lemma-3} that $\Vert {\bm{K}}_n^{-1} - \Tilde{{\bm{K}}}_{n,a}^{-1} \Vert_2=O_p\left(\frac{s+m+K}{\sqrt{n}}\right)$. This together with the fact that $\mathbb{P}(S=\hat{S}_{a})\rightarrow 1$ proves that $\Vert {\bm{K}}_n^{-1} - \widehat{{\bm{K}}}_{n,a}^{-1} \Vert_2=O_p\left(\frac{s+m+K}{\sqrt{n}}\right)$. Then along with $\lambda_{\max}(\bV_n)=O(1)$ and $s+m+K=o(\sqrt{n})$, we have
    \begin{equation*}
        \Vert
        {
        {
        (\widehat{\bK}_{n,a}^{-1}-{\bK}_{n}^{-1})
        {\bV}_{n}
        (\widehat{\bK}_{n,a}^{-1}-{\bK}_{n}^{-1})
        }
        }
        \Vert_2
        \leq
        \Vert
        {\bV}_{n}
        \Vert_2
        \cdot
        \Vert
        {
        \widehat{\bK}_{n,a}^{-1}-{\bK}_{n}^{-1}
        }
        \Vert_2^2
        =
        o_p\left(\frac{s+m+K}{\sqrt{n}}\right).
    \end{equation*}
    For bounding $III$, we proved that $\lambda_{\max}(\bK_n^{-1})=O(1)$. Then we have
    \begin{equation*}
        \Vert
        {
        (\widehat{\bK}_{n,a}^{-1}-{\bK}_{n}^{-1})
        {\bV}_{n}
        {\bK}_{n}^{-1}
        +
        {\bK}_{n}^{-1}
        {\bV}_{n}
        (\widehat{\bK}_{n,a}^{-1}-{\bK}_{n}^{-1})
        }
        \Vert_2
        \leq
        2
        \Vert
        {\bK}_{n}^{-1}
        \Vert_2
        \Vert
        {\bV}_{n}
        \Vert_2
        \cdot
        \Vert
        {
        \widehat{\bK}_{n,a}^{-1}-{\bK}_{n}^{-1}
        }
        \Vert_2
        =
        O_p\left(\frac{s+m+K}{\sqrt{n}}\right).
    \end{equation*}
    Combining the above three results, we proved the desired equation \eqref{eq:lemma_result3}. This, together with the fact that $s+m+K=o(\sqrt{n})$ as well as $\left\Vert
                    {\bK}_{n}^{-1}
                    {\bV}_{n}
                    {\bK}_{n}^{-1}
                \right\Vert_2=O(1)$ proves the result \eqref{eq:lemma_result3+}.
    We can further prove result \eqref{eq:lemma_result3++}
    by proving that $\lim\inf_{n}\lambda_{\min}\left(\widehat{\bK}_{n,a}^{-1}
                    \widehat{\bV}_{n,a}
                    \widehat{\bK}_{n,a}^{-1}\right)>0$ with probability tending to $1$. Notice that
    \begin{align*}
        &~
        \lim\inf_{n}\lambda_{\min}\left(\widehat{\bK}_{n,a}^{-1}
                    \widehat{\bV}_{n,a}
                    \widehat{\bK}_{n,a}^{-1}\right)
        \\
        \geq &~
        \lim\inf_{n}\lambda_{\min}\left({\bK}_{n}^{-1}
                    {\bV}_{n}
                    {\bK}_{n}^{-1}\right)
        -
        \lim\sup_{n}
        \left\Vert
                    \widehat{\bK}_{n,a}^{-1}
                    \widehat{\bV}_{n,a}
                    \widehat{\bK}_{n,a}^{-1}
                    - 
                    {\bK}_{n}^{-1}
                    {\bV}_{n}
                    {\bK}_{n}^{-1}
                \right\Vert_2.
    \end{align*}
    This together with the fact that $\lambda_{\max}({\bK}_{n})=O(1)$, $\lambda_{\max}({\bV}_{n}^{-1})=O(1)$, $s+m+K=o(\sqrt{n})$, as well as the equation \eqref{eq:lemma_result3} proved above, we have $\lim\inf_{n}\lambda_{\min}\left(\widehat{\bK}_{n,a}^{-1}
                    \widehat{\bV}_{n,a}
                    \widehat{\bK}_{n,a}^{-1}\right)>0$ with probability tending to $1$.
    
\end{proof}

\paragraph{Proof of equation \eqref{eq:lemma_result4}}
\begin{proof}
    We first prove that 
    \begin{equation}
        \left\Vert
        \bm{I} - 
                    \cT_n^{-1/2}
                    \widehat{\cT}_{n,a}
                    \cT_n^{-1/2}
        \right\Vert_2 = O_p\left(\frac{s+m+K}{\sqrt{n}}\right).
        \label{eq:lemma-9}
    \end{equation}
    Notice that
    \begin{align}
        &~
        \bm{I} - 
                    \cT_n^{-1/2}
                    \widehat{\cT}_{n,a}
                    \cT_n^{-1/2}
        \nonumber
        \\
        =&~
        \cT_n^{-1/2}
        ( \cT_n - \widehat{\cT}_{n,a})
        \cT_n^{-1/2}
        \nonumber
        \\
        =&~
        \cT_n^{-1/2}
        \begin{pmatrix}
                    \bC^T 
                            \\
                            \bm{0}_{r\times (s+K)}^T
                \end{pmatrix}^T
                (
                {\bK}_{n}^{-1}
                    {\bV}_{n}
                    {\bK}_{n}^{-1}
                -
                \widehat{\bK}_{n,a}^{-1}
                    \widehat{\bV}_{n,a}
                    \widehat{\bK}_{n,a}^{-1}
                )
            \begin{pmatrix}
                    \bC^T 
                            \\
                            \bm{0}_{r\times (s+K)}^T
                \end{pmatrix}
        \cT_n^{-1/2}
        \nonumber
        \\
        =&~
        \cT_n^{-1/2}
        \begin{pmatrix}
                    \bC^T 
                            \\
                            \bm{0}_{r\times (s+K)}^T
                \end{pmatrix}^T
                {\bK}_{n}^{-1}
                {\bV}_{n}^{1/2}
                \cdot
                    {\bV}_{n}^{-1/2}
                    {\bK}_{n}
                (
                {\bK}_{n}^{-1}
                    {\bV}_{n}
                    {\bK}_{n}^{-1}
                -
                \widehat{\bK}_{n,a}^{-1}
                    \widehat{\bV}_{n,a}
                    \widehat{\bK}_{n,a}^{-1}
                )
                {\bK}_{n}
                {\bV}_{n}^{-1/2}
                \cdot
                {\bV}_{n}^{1/2}
                {\bK}_{n}^{-1}
            \begin{pmatrix}
                    \bC^T 
                            \\
                            \bm{0}_{r\times (s+K)}^T
                \end{pmatrix}
        \cT_n^{-1/2}
        \nonumber.
    \end{align}
    This implies that
    \begin{align*}
        &~
        \left\Vert
        \bm{I} - 
                    \cT_n^{-1/2}
                    \widehat{\cT}_{n,a}
                    \cT_n^{-1/2}
        \right\Vert_2
        \nonumber
        \\
        \leq &~
        \left\Vert
        {\bK}_{n}^{-1}
                    {\bV}_{n}
                    {\bK}_{n}^{-1}
                -
                \widehat{\bK}_{n,a}^{-1}
                    \widehat{\bV}_{n,a}
                    \widehat{\bK}_{n,a}^{-1}
        \right\Vert_2
        \cdot
        \left\Vert
        {\bK}_{n}
                {\bV}_{n}^{-1/2}
        \right\Vert_2^2
        \cdot
        \left\Vert
        {\bV}_{n}^{1/2}
                {\bK}_{n}^{-1}
            \begin{pmatrix}
                    \bC^T 
                            \\
                            \bm{0}_{r\times (s+K)}^T
                \end{pmatrix}
        \cT_n^{-1/2}
        \right\Vert_2^2
        \\
        =&~
        \left\Vert
        {\bK}_{n}^{-1}
                    {\bV}_{n}
                    {\bK}_{n}^{-1}
                -
                \widehat{\bK}_{n,a}^{-1}
                    \widehat{\bV}_{n,a}
                    \widehat{\bK}_{n,a}^{-1}
        \right\Vert_2
        \cdot
        \left\Vert
                {\bK}_{n}
                {\bV}_{n}^{-1}
                {\bK}_{n}
        \right\Vert_2
        \cdot
        \left\Vert
        \cT_n^{-1/2}
        \cT_n
        \cT_n^{-1/2}
        \right\Vert_2
        \\
        =&~
        O_p\left(\frac{s+m+K}{\sqrt{n}}\right),
    \end{align*}
    where the last equality is from the definition of $\cT_n$, the proved equation \eqref{eq:lemma_result3} and the fact that $\lambda_{\max}(\bK_n)=O(1)$, $\lambda_{\max}(\bV_n^{-1})=O(1)$. This ends the proof of equation \eqref{eq:lemma-9}. We then prove that
    \begin{equation}
        \lambda_{\max}
        \left(
                    \cT_n^{1/2}
                    \widehat{\cT}_{n,a}^{-1}
                    \cT_n^{1/2}
        \right)
        =O_p(1)
        \label{eq:lemma-8}
    \end{equation}
    by proving that $\lim\inf_{n}\lambda_{\min}\left(\cT_n^{-1/2}
                    \widehat{\cT}_{n,a}
                    \cT_n^{-1/2}\right)>0$ with probability tending to $1$. Notice that
    \begin{align*}
        &~
        \lim\inf_{n}\lambda_{\min}\left(
                    \cT_n^{-1/2}
                    \widehat{\cT}_{n,a}
                    \cT_n^{-1/2}
        \right)
        \\
        \geq &~
        \lim\inf_{n}\lambda_{\min}\left(
                    \cT_n^{-1/2}
                    {\cT}_{n}
                    \cT_n^{-1/2}\right)
        -
        \lim\sup_{n}
        \left\Vert
                    \cT_n^{-1/2}
                    (
                    {\cT}_{n}
                    -
                    \widehat{\cT}_{n,a}
                    )
                    \cT_n^{-1/2}
                \right\Vert_2
        \\
        = &~
        1 - 
        \lim\sup_{n}
        \left\Vert
                    \bm{I} - 
                    \cT_n^{-1/2}
                    \widehat{\cT}_{n,a}
                    \cT_n^{-1/2}
                \right\Vert_2
    \end{align*}
    This together with the fact that $s+m+K=o(\sqrt{n})$, as well as the equation \eqref{eq:lemma-9} proved above, implies that $\lim\inf_{n}\lambda_{\min}\left(\cT_n^{1/2}
                    \widehat{\cT}_{n,a}^{-1}
                    \cT_n^{1/2}\right)>0$ with probability tending to $1$. Finally, combining the results \eqref{eq:lemma-9} and \eqref{eq:lemma-8}, we have
    \begin{align*}
        \left\Vert
                \bm{I} - 
                    \cT_n^{1/2}
                    \widehat{\cT}_{n,a}^{-1}
                    \cT_n^{1/2}
            \right\Vert_2
        &=
        \left\Vert
        \left(
                \cT_n^{-1/2}
                    \widehat{\cT}_{n,a}
                    \cT_n^{-1/2}
                    -
                \bm{I}
        \right)
                    \cT_n^{1/2}
                    \widehat{\cT}_{n,a}^{-1}
                    \cT_n^{1/2}
            \right\Vert_2
        \\
        &\leq
        \left\Vert
                \cT_n^{-1/2}
                    \widehat{\cT}_{n,a}
                    \cT_n^{-1/2}
                    -
                \bm{I}
            \right\Vert_2
        \left\Vert
                    \cT_n^{1/2}
                    \widehat{\cT}_{n,a}^{-1}
                    \cT_n^{1/2}
            \right\Vert_2
        \\
        &
        =O_p\left(\frac{s+m+K}{\sqrt{n}}\right).
    \end{align*}
    This ends the proof of equation \eqref{eq:lemma_result4}.
\end{proof}

\paragraph{Proof of equation \eqref{eq:lemma_result5}}
\begin{proof}
    \begin{align*}
        &~
        \left\Vert
            \Psi_n
            \left(
            \widehat{\cT}_{n,a}^{-1}
            -
            \cT_n^{-1}
            \right)
        \right\Vert_2
        \\
        =&~
        \left\Vert
            \Psi_n
            \cT_n^{-1/2}
            \left(
            \cT_n^{1/2}
            \widehat{\cT}_{n,a}^{-1}
            \cT_n^{1/2}
            -
            \bm{I}
            \right)
            \cT_n^{-1/2}
        \right\Vert_2
        \\
        \leq &~
        \left\Vert
            \Psi_n
            \cT_n^{-1/2}
        \right\Vert_2
        \left\Vert
            \cT_n^{1/2}
            \widehat{\cT}_{n,a}^{-1}
            \cT_n^{1/2}
            -
            \bm{I}
        \right\Vert_2
        \left\Vert
            \cT_n^{-1/2}
        \right\Vert_2.
    \end{align*}
    Notice that
    \begin{align*}
        &~
        \left\Vert
            \Psi_n
            \cT_n^{-1/2}
        \right\Vert_2
        \\
        =&~
        \left\Vert
        \begin{pmatrix}
                    \bC^T 
                            \\
                            \bm{0}_{r\times (s+K)}^T
                \end{pmatrix}^T
            \bm{K}_n^{-1}
            \begin{pmatrix}
                    \bC^T 
                            \\
                            \bm{0}_{r\times (s+K)}^T
                \end{pmatrix}
        \cT_n^{-1/2}
        \right\Vert_2
        \\
        =&~
        \left\Vert
        \begin{pmatrix}
                    \bC^T 
                            \\
                            \bm{0}_{r\times (s+K)}^T
                \end{pmatrix}^T
            \bm{K}_n^{-1}
            \bV_{n}^{1/2}
            \cdot
            \bV_{n}^{-1/2}
            \bm{K}_n
            \bV_{n}^{-1/2}
            \cdot
            \bV_{n}^{1/2}
            \bm{K}_n^{-1}
            \begin{pmatrix}
                    \bC^T 
                            \\
                            \bm{0}_{r\times (s+K)}^T
                \end{pmatrix}
        \cT_n^{-1/2}
        \right\Vert_2
        \\
        \leq&~
        \left\Vert
        \begin{pmatrix}
                    \bC^T 
                            \\
                            \bm{0}_{r\times (s+K)}^T
                \end{pmatrix}^T
            \bm{K}_n^{-1}
            \bV_{n}^{1/2}
            \right\Vert_2
            \cdot
            \left\Vert
            \bV_{n}^{-1/2}
            \bm{K}_n
            \bV_{n}^{-1/2}
            \right\Vert_2
            \cdot
            \left\Vert
            \bV_{n}^{1/2}
            \bm{K}_n^{-1}
            \begin{pmatrix}
                    \bC^T 
                            \\
                            \bm{0}_{r\times (s+K)}^T
                \end{pmatrix}
        \cT_n^{-1/2}
        \right\Vert_2
        \\
        =&~
        \left\Vert
        \cT_n^{1/2}
        \right\Vert_2
        \cdot
            \left\Vert
            \bV_{n}^{-1/2}
            \bm{K}_n
            \bV_{n}^{-1/2}
            \right\Vert_2
            \cdot
            \left\Vert
             \cT_n^{-1/2}
              \cT_n
        \cT_n^{-1/2}
        \right\Vert_2.
    \end{align*}
    Then we have
    \begin{align*}
        &~
        \left\Vert
            \Psi_n
            \left(
            \widehat{\cT}_{n,a}^{-1}
            -
            \cT_n^{-1}
            \right)
        \right\Vert_2
        \\
        \leq &~
        \left\Vert
        \cT_n^{1/2}
        \right\Vert_2
        \cdot
            \left\Vert
            \bV_{n}^{-1/2}
            \bm{K}_n
            \bV_{n}^{-1/2}
            \right\Vert_2
            \cdot
        \left\Vert
            \cT_n^{1/2}
            \widehat{\cT}_{n,a}^{-1}
            \cT_n^{1/2}
            -
            \bm{I}
        \right\Vert_2
        \cdot
        \left\Vert
            \cT_n^{-1/2}
        \right\Vert_2
        \\
        =&~
        \left\Vert
            \bV_{n}^{-1/2}
            \bm{K}_n
            \bV_{n}^{-1/2}
            \right\Vert_2
            \cdot
        \left\Vert
            \cT_n^{1/2}
            \widehat{\cT}_{n,a}^{-1}
            \cT_n^{1/2}
            -
            \bm{I}
        \right\Vert_2
        \\
        =&~
        O_p\left(\frac{s+m+K}{\sqrt{n}}\right),
    \end{align*}
    where the last equation is from the fact that $\lambda_{\max}(\bK_n)=O(1)$, $\lambda_{\max}(\bV_n^{-1})=O(1)$ as well as equation \eqref{eq:lemma_result4}. This ends the proof of equation \eqref{eq:lemma_result5}.
\end{proof}

\paragraph{Proof of equation \eqref{eq:lemma_result6}, \eqref{eq:lemma_result6+},
\eqref{eq:lemma_result6++}}
\begin{proof}
    Notice that
    \begin{align}
        &~
        {\widehat{\mathcal{T}}_{n,a}}^{1/2}
                {\widehat{\Psi}_{n,a}}^{-1}
                {\widehat{\mathcal{T}}_{n,a}}^{1/2} - 
                {{\mathcal{T}}_{n}}^{1/2}
                {\Psi}_{n}^{-1}
                {{\mathcal{T}}_{n}}^{1/2}
        \nonumber
        \\
        =&~
        {\widehat{\mathcal{T}}_{n,a}}^{1/2}
        \left(
                {\widehat{\Psi}_{n,a}}^{-1}
                -
                {\Psi}_{n}^{-1}
        \right)
                {\widehat{\mathcal{T}}_{n,a}}^{1/2}
        +
        \left(
        {\widehat{\mathcal{T}}_{n,a}}^{1/2}
                {\Psi}_{n}^{-1}
                {\widehat{\mathcal{T}}_{n,a}}^{1/2}
                -
                {{\mathcal{T}}_{n}}^{1/2}
                {\Psi}_{n}^{-1}
                {{\mathcal{T}}_{n}}^{1/2}
        \right).
        \label{eq:lemma-10}
    \end{align}
    For the first term, we have
    \begin{align*}
        &~
        \left\Vert
        {\widehat{\mathcal{T}}_{n,a}}^{1/2}
        \left(
                {\widehat{\Psi}_{n,a}}^{-1}
                -
                {\Psi}_{n}^{-1}
        \right)
                {\widehat{\mathcal{T}}_{n,a}}^{1/2}
        \right\Vert_2
        \\
        =&~
        \left\Vert
        {\widehat{\mathcal{T}}_{n,a}}^{1/2}
        {\Psi}_{n}^{-1/2}
        \left(
                {\Psi}_{n}^{1/2}
                {\widehat{\Psi}_{n,a}}^{-1}
                {\Psi}_{n}^{1/2}
                -
                \bm{I}
        \right)
       {\Psi}_{n}^{-1/2}
                {\widehat{\mathcal{T}}_{n,a}}^{1/2}
        \right\Vert_2
        \\
        \leq &~
        \left\Vert
        {\Psi}_{n}^{1/2}
                {\widehat{\Psi}_{n,a}}^{-1}
                {\Psi}_{n}^{1/2}
                -
                \bm{I}
        \right\Vert_2
        \left\Vert
       {\Psi}_{n}^{-1/2}
                {\widehat{\mathcal{T}}_{n,a}}^{1/2}
        \right\Vert_2^2.
    \end{align*}
    By result \eqref{eq:lemma_result1} , we have $\left\Vert
        {\Psi}_{n}^{1/2}
                {\widehat{\Psi}_{n,a}}^{-1}
                {\Psi}_{n}^{1/2}
                -
                \bm{I}
        \right\Vert_2 = O_p
            \left(
            \frac{s+m+K}{\sqrt{n}}
            \right)$. For $\left\Vert
       {\Psi}_{n}^{-1/2}
                {\widehat{\mathcal{T}}_{n,a}}^{1/2}
        \right\Vert_2^2
        =
        \left\Vert
       {\Psi}_{n}^{-1/2}
                {\widehat{\mathcal{T}}_{n,a}}
                {\Psi}_{n}^{-1/2}
        \right\Vert_2$,
    notice that
    \begin{align*}
        &~
        \left\Vert
       {\Psi}_{n}^{-1/2}
                {\widehat{\mathcal{T}}_{n,a}}
                {\Psi}_{n}^{-1/2}
        \right\Vert_2
        \\
        =&~
        \left\Vert
       {\Psi}_{n}^{-1/2}
                \begin{pmatrix}
                    \bC^T 
                            \\
                            \bm{0}_{r\times (s+K)}^T
                \end{pmatrix}^T
                \widehat{\bK}_{n,a}^{-1}
                    \widehat{\bV}_{n,a}
                    \widehat{\bK}_{n,a}^{-1}
            \begin{pmatrix}
                    \bC^T 
                            \\
                            \bm{0}_{r\times (s+K)}^T
                \end{pmatrix}
                {\Psi}_{n}^{-1/2}
        \right\Vert_2
        \\
        =&~
        \left\Vert
       {\Psi}_{n}^{-1/2}
                \begin{pmatrix}
                    \bC^T 
                            \\
                            \bm{0}_{r\times (s+K)}^T
                \end{pmatrix}^T
                \bK_n^{-1/2}
                \cdot
                \bK_n^{1/2}
                \widehat{\bK}_{n,a}^{-1}
                    \widehat{\bV}_{n,a}
                    \widehat{\bK}_{n,a}^{-1}
                \bK_n^{1/2}
                \cdot
                \bK_n^{-1/2}
            \begin{pmatrix}
                    \bC^T 
                            \\
                            \bm{0}_{r\times (s+K)}^T
                \end{pmatrix}
                {\Psi}_{n}^{-1/2}
        \right\Vert_2
        \\
        \leq &~
        \left\Vert
                \bK_n^{1/2}
                \widehat{\bK}_{n,a}^{-1}
                    \widehat{\bV}_{n,a}
                    \widehat{\bK}_{n,a}^{-1}
                \bK_n^{1/2}
        \right\Vert_2
                \cdot
        \left\Vert
                \bK_n^{-1/2}
            \begin{pmatrix}
                    \bC^T 
                            \\
                            \bm{0}_{r\times (s+K)}^T
                \end{pmatrix}
                {\Psi}_{n}^{-1/2}
        \right\Vert_2^2
        \\
        \leq&~
        \left\Vert
                \bK_n
        \right\Vert_2
        \left\Vert
                \widehat{\bK}_{n,a}^{-1}
                    \widehat{\bV}_{n,a}
                    \widehat{\bK}_{n,a}^{-1}
        \right\Vert_2
                \cdot
        \left\Vert
            {\Psi}_{n}^{-1/2}
            {\Psi}_{n}
                {\Psi}_{n}^{-1/2}
        \right\Vert_2
        \\
        =&~
        O_p
            \left(
            \frac{s+m+K}{\sqrt{n}}
            \right),
    \end{align*}
    where the last equation is from the definition of $\Psi_n$ as well as the fact that $\lambda_{\max}(\bK_n)=O(1)$ and result \eqref{eq:lemma_result3+} proved before. Therefore, we proved that $\left\Vert
        {\widehat{\mathcal{T}}_{n,a}}^{1/2}
        \left(
                {\widehat{\Psi}_{n,a}}^{-1}
                -
                {\Psi}_{n}^{-1}
        \right)
                {\widehat{\mathcal{T}}_{n,a}}^{1/2}
        \right\Vert_2=O_p
            \left(
            \frac{s+m+K}{\sqrt{n}}
            \right)$.

    For the second term in equation \eqref{eq:lemma-10}, assume for now that we have the following result
    \begin{equation}
        \left\Vert
        {\widehat{\mathcal{T}}_{n,a}}^{-1/2}
                {\Psi}_{n}
                {\widehat{\mathcal{T}}_{n,a}}^{-1/2}
                -
                {{\mathcal{T}}_{n}}^{-1/2}
                {\Psi}_{n}
                {{\mathcal{T}}_{n}}^{-1/2}
        \right\Vert_2
        =
        O_p
            \left(
            \frac{s+m+K}{\sqrt{n}}
            \right).
        \label{eq:lemma-11}
    \end{equation}
    Next, we prove that $\left\Vert
                {{\mathcal{T}}_{n}}^{1/2}
                {\Psi}_{n}^{-1}
                {{\mathcal{T}}_{n}}^{1/2}
        \right\Vert_2=O(1)$ and $\left\Vert
        {\widehat{\mathcal{T}}_{n,a}}^{1/2}
                {\Psi}_{n}^{-1}
                {\widehat{\mathcal{T}}_{n,a}}^{1/2}
        \right\Vert_2=O_p(1)$. Notice that
    \begin{align}
        &~
        \lim\inf_{n}\lambda_{\min}
        \left(
                {{\mathcal{T}}_{n}}^{-1/2}
                {\Psi}_{n}
                {{\mathcal{T}}_{n}}^{-1/2}
        \right)
        \nonumber
        \\
        =&~
        \lim\inf_{n}
        \min_{\ba:\Vert\ba\Vert_2=1}
         \ba^T
                {{\mathcal{T}}_{n}}^{-1/2}
                \begin{pmatrix}
                    \bC^T 
                            \\
                            \bm{0}_{r\times (s+K)}^T
                \end{pmatrix}^T
                {\bK}_{n}^{-1}
            \begin{pmatrix}
                    \bC^T 
                            \\
                            \bm{0}_{r\times (s+K)}^T
                \end{pmatrix}
                {{\mathcal{T}}_{n}}^{-1/2}
        \ba
        \nonumber
        \\
        =&~
        \lim\inf_{n}
        \min_{\ba:\Vert\ba\Vert_2=1}
         \ba^T
                {{\mathcal{T}}_{n}}^{-1/2}
                \begin{pmatrix}
                    \bC^T 
                            \\
                            \bm{0}_{r\times (s+K)}^T
                \end{pmatrix}^T
                {\bK}_{n}^{-1}
                \bV_{n}^{1/2}
                \cdot
                \bV_{n}^{-1/2}
                {\bK}_{n}
                 \bV_{n}^{-1/2}
                 \cdot
                 \bV_{n}^{1/2}
                 {\bK}_{n}^{-1}
            \begin{pmatrix}
                    \bC^T 
                            \\
                            \bm{0}_{r\times (s+K)}^T
                \end{pmatrix}
                {{\mathcal{T}}_{n}}^{-1/2}
        \ba
        \nonumber
        \\
        \geq&~
        \lim\inf_{n}
        \lambda_{\min}
        \left(
                \bV_{n}^{-1/2}
                {\bK}_{n}
                 \bV_{n}^{-1/2}
        \right)
        \lambda_{\min}
        \left(
                {{\mathcal{T}}_{n}}^{-1/2}
                 {{\mathcal{T}}_{n}}
                {{\mathcal{T}}_{n}}^{-1/2}
        \right)
        >0,
        \label{eq:lemma-13}
    \end{align}
    where the last inequality is from the definition of $\Psi_n$ as well as the fact that $\lambda_{\max}(\bK_n^{-1})=O(1)$ and $\lambda_{\max}(\bV_n)=O(1)$. Similarly, we can prove result \eqref{eq:lemma_result6++}.  By the definition of $\Psi_n$ and ${\widehat{\mathcal{T}}_{n,a}}$, we have
    \begin{align}
        &~
         \lim\inf_{n}\lambda_{\min}
        \left(
        {\widehat{\mathcal{T}}_{n,a}}^{-1/2}
                {\Psi}_{n}
                {\widehat{\mathcal{T}}_{n,a}}^{-1/2}
        \right)
        \nonumber
        \\
        =&~
        \lim\inf_{n}
        \min_{\ba:\Vert\ba\Vert_2=1}
         \ba^T
                {\widehat{\mathcal{T}}_{n,a}}^{-1/2}
                \begin{pmatrix}
                    \bC^T 
                            \\
                            \bm{0}_{r\times (s+K)}^T
                \end{pmatrix}^T
                \widehat{\bK}_{n,a}^{-1}
                \widehat{\bV}_{n,a}^{1/2}
                \cdot
                \widehat{\bV}_{n,a}^{-1/2}
                 \widehat{\bK}_{n,a}
                 {\bK}_{n}^{-1}
                 \widehat{\bK}_{n,a}
                 \widehat{\bV}_{n,a}^{-1/2}
                 \cdot
                 \widehat{\bV}_{n,a}^{1/2}
                 \widehat{\bK}_{n,a}^{-1}
            \begin{pmatrix}
                    \bC^T 
                            \\
                            \bm{0}_{r\times (s+K)}^T
                \end{pmatrix}
                {\widehat{\mathcal{T}}_{n,a}}^{-1/2}
        \ba
        \nonumber
        \\
        \geq&~
        \lim\inf_{n}
        \lambda_{\min}
        \left(
                \widehat{\bV}_{n,a}^{-1/2}
                 \widehat{\bK}_{n,a}
                 {\bK}_{n}^{-1}
                 \widehat{\bK}_{n,a}
                 \widehat{\bV}_{n,a}^{-1/2}
        \right)>0,
        \label{eq:lemma-14}
    \end{align}
    where the last inequality is because
    \begin{align*}
        \left\Vert
                \widehat{\bV}_{n,a}^{1/2}
                 \widehat{\bK}_{n,a}^{-1}
                 {\bK}_{n}
                 \widehat{\bK}_{n,a}^{-1}
                 \widehat{\bV}_{n,a}^{1/2}
        \right\Vert_2
        \leq
        \left\Vert
                 {\bK}_{n}
        \right\Vert_2
        \left\Vert
                 \widehat{\bK}_{n,a}^{-1}
                 \widehat{\bV}_{n,a}
                 \widehat{\bK}_{n,a}^{-1}
        \right\Vert_2=O_p(1)
    \end{align*}
    along with the fact that $\lambda_{\max}(\bK_n^{-1})=O(1)$ and result \eqref{eq:lemma_result3+} proved before.
    Then, we have
    \begin{align*}
        &~
        \left\Vert
        {\widehat{\mathcal{T}}_{n,a}}^{1/2}
                {\Psi}_{n}^{-1}
                {\widehat{\mathcal{T}}_{n,a}}^{1/2}
                -
                {{\mathcal{T}}_{n}}^{1/2}
                {\Psi}_{n}^{-1}
                {{\mathcal{T}}_{n}}^{1/2}
        \right\Vert_2
        \\
        =&~
        \left\Vert
        {\widehat{\mathcal{T}}_{n,a}}^{1/2}
                {\Psi}_{n}^{-1}
                {\widehat{\mathcal{T}}_{n,a}}^{1/2}
        (
        {\widehat{\mathcal{T}}_{n,a}}^{-1/2}
                {\Psi}_{n}
                {\widehat{\mathcal{T}}_{n,a}}^{-1/2}
                -
        {{\mathcal{T}}_{n}}^{-1/2}
                {\Psi}_{n}
                {{\mathcal{T}}_{n}}^{-1/2}
        )
        {\widehat{\mathcal{T}}_{n,a}}^{1/2}
                {\Psi}_{n}^{-1}
                {\widehat{\mathcal{T}}_{n,a}}^{1/2}  
        \right\Vert_2
        \\
        \leq&~
        \left\Vert
        {\widehat{\mathcal{T}}_{n,a}}^{1/2}
                {\Psi}_{n}^{-1}
                {\widehat{\mathcal{T}}_{n,a}}^{1/2}
        \right\Vert_2
        \left\Vert
        {\widehat{\mathcal{T}}_{n,a}}^{-1/2}
                {\Psi}_{n}
                {\widehat{\mathcal{T}}_{n,a}}^{-1/2}
                -
        {{\mathcal{T}}_{n}}^{-1/2}
                {\Psi}_{n}
                {{\mathcal{T}}_{n}}^{-1/2}
        \right\Vert_2
        \left\Vert
                {\widehat{\mathcal{T}}_{n,a}}^{1/2}
                {\Psi}_{n}^{-1}
                {\widehat{\mathcal{T}}_{n,a}}^{1/2} 
        \right\Vert_2
        \\
        =&~
        O_p(1)\cdot
        O_p
            \left(
            \frac{s+m+K}{\sqrt{n}}
            \right)
            \cdot
        O(1)=
        O_p
            \left(
            \frac{s+m+K}{\sqrt{n}}
            \right).
    \end{align*}
    So it remains to prove equation \eqref{eq:lemma-11}. We start from proving that
    \begin{equation}
        \left\Vert
        {\widehat{\mathcal{T}}_{n,a}}^{-1}
                {\Psi}_{n}^2
                {\widehat{\mathcal{T}}_{n,a}}^{-1}
                -
                {{\mathcal{T}}_{n}}^{-1}
                {\Psi}_{n}^2
                {{\mathcal{T}}_{n}}^{-1}
        \right\Vert_2
        =
        O_p
            \left(
            \frac{s+m+K}{\sqrt{n}}
            \right).
        \label{eq:lemma-12}
    \end{equation}
    Notice that
    \begin{align*}
        &~
        \left\Vert
        {\widehat{\mathcal{T}}_{n,a}}^{-1}
                {\Psi}_{n}^2
                {\widehat{\mathcal{T}}_{n,a}}^{-1}
                -
                {{\mathcal{T}}_{n}}^{-1}
                {\Psi}_{n}^2
                {{\mathcal{T}}_{n}}^{-1}
        \right\Vert_2
        \\
        =&~
        \left\Vert
        (
        {\widehat{\mathcal{T}}_{n,a}}^{-1}
        -
        {{\mathcal{T}}_{n}}^{-1}
        )
        {\Psi}_{n}^2
        (
        {\widehat{\mathcal{T}}_{n,a}}^{-1}
        -
        {{\mathcal{T}}_{n}}^{-1}
        )
        +
        (
        {\widehat{\mathcal{T}}_{n,a}}^{-1}
        -
        {{\mathcal{T}}_{n}}^{-1}
        )
        {\Psi}_{n}^2
        {{\mathcal{T}}_{n}}^{-1}
        +
        {{\mathcal{T}}_{n}}^{-1}
        {\Psi}_{n}^2
        (
        {\widehat{\mathcal{T}}_{n,a}}^{-1}
        -
        {{\mathcal{T}}_{n}}^{-1}
        )
        \right\Vert_2
        \\
        \leq&~
        \left\Vert
        {\Psi}_{n}
        (
        {\widehat{\mathcal{T}}_{n,a}}^{-1}
        -
        {{\mathcal{T}}_{n}}^{-1}
        )
        \right\Vert_2^2
        +
        2
        \left\Vert
        {{\mathcal{T}}_{n}}^{-1}
        {\Psi}_{n}
        \right\Vert_2
        \left\Vert
        {\Psi}_{n}
        (
        {\widehat{\mathcal{T}}_{n,a}}^{-1}
        -
        {{\mathcal{T}}_{n}}^{-1}
        )
        \right\Vert_2
        \\
        =&~
        O_p
            \left(
            \frac{(s+m+K)^2}{n}
            \right)
        +
        \left\Vert
        {{\mathcal{T}}_{n}}^{-1}
        {\Psi}_{n}
        \right\Vert_2
                O_p
            \left(
            \frac{s+m+K}{\sqrt{n}}
            \right),
    \end{align*}
    where the last equation is from the result \eqref{eq:lemma_result5} proved before. 
    And $\left\Vert
        {{\mathcal{T}}_{n}}^{-1}
        {\Psi}_{n}
        \right\Vert_2=O(1)$ because
    \begin{align*}
        &~
        \left\Vert
        {{\mathcal{T}}_{n}}^{-1}
        {\Psi}_{n}
        \right\Vert_2
        \\
        \leq&~
        \left\Vert
        {{\mathcal{T}}_{n}}^{-1/2}
        \right\Vert_2
        \left\Vert
        {{\mathcal{T}}_{n}}^{-1/2}
        \begin{pmatrix}
                    \bC^T 
                            \\
                            \bm{0}_{r\times (s+K)}^T
                \end{pmatrix}^T
                {\bK}_{n}^{-1}
                {\bV}_{n}^{1/2}
                \cdot
                {\bV}_{n}^{-1/2}
                {\bK}_{n}
                {\bV}_{n}^{-1/2}
                \cdot
                {\bV}_{n}^{1/2}
                {\bK}_{n}^{-1}
            \begin{pmatrix}
                    \bC^T 
                            \\
                            \bm{0}_{r\times (s+K)}^T
                \end{pmatrix}
        \right\Vert_2
        \\
        \leq&~
        \left\Vert
        {{\mathcal{T}}_{n}}^{-1/2}
        \right\Vert_2
        \left\Vert
        {{\mathcal{T}}_{n}}^{-1/2}
        \begin{pmatrix}
                    \bC^T 
                            \\
                            \bm{0}_{r\times (s+K)}^T
                \end{pmatrix}^T
                {\bK}_{n}^{-1}
                {\bV}_{n}^{1/2}
        \right\Vert_2
        \left\Vert
        {\bV}_{n}^{-1/2}
                {\bK}_{n}
                {\bV}_{n}^{-1/2}
        \right\Vert_2
        \left\Vert
        {\bV}_{n}^{1/2}
                {\bK}_{n}^{-1}
            \begin{pmatrix}
                    \bC^T 
                            \\
                            \bm{0}_{r\times (s+K)}^T
                \end{pmatrix}
        \right\Vert_2
        \\
        =&~
        \left\Vert
        {{\mathcal{T}}_{n}}^{-1/2}
        \right\Vert_2
        \cdot
        1
        \cdot
        O(1)
        \cdot
        \left\Vert
        {{\mathcal{T}}_{n}}^{1/2}
        \right\Vert_2=O(1).
    \end{align*}
    Along with the fact that $s+m+K=o(\sqrt{n})$, we proved the desired equation \eqref{eq:lemma-12}. By applying Lemma \ref{lemma0}, we have
    \begin{align*}
        &~
        \left\Vert
        {\widehat{\mathcal{T}}_{n,a}}^{-1/2}
                {\Psi}_{n}
                {\widehat{\mathcal{T}}_{n,a}}^{-1/2}
                -
                {{\mathcal{T}}_{n}}^{-1/2}
                {\Psi}_{n}
                {{\mathcal{T}}_{n}}^{-1/2}
        \right\Vert_2
        \\
        \leq&~
        \left\Vert
        {\widehat{\mathcal{T}}_{n,a}}^{-1}
                {\Psi}_{n}^2
                {\widehat{\mathcal{T}}_{n,a}}^{-1}
                -
                {{\mathcal{T}}_{n}}^{-1}
                {\Psi}_{n}^2
                {{\mathcal{T}}_{n}}^{-1}
        \right\Vert_2
        \frac{1}{\lambda_{\min}({\widehat{\mathcal{T}}_{n,a}}^{-1/2}
                {\Psi}_{n}
                {\widehat{\mathcal{T}}_{n,a}}^{-1/2}
                +
                {{\mathcal{T}}_{n}}^{-1/2}
                {\Psi}_{n}
                {{\mathcal{T}}_{n}}^{-1/2})}.
    \end{align*}
    From results \eqref{eq:lemma-13} and \eqref{eq:lemma-14}, we have $1/{\lambda_{\min}({\widehat{\mathcal{T}}_{n,a}}^{-1/2}
                {\Psi}_{n}
                {\widehat{\mathcal{T}}_{n,a}}^{-1/2}
                +
                {{\mathcal{T}}_{n}}^{-1/2}
                {\Psi}_{n}
                {{\mathcal{T}}_{n}}^{-1/2})}=O_p(1)$.
    Along with result \eqref{eq:lemma-12}, we proved the equation \eqref{eq:lemma-11} as desired. This ends the proof for equation \eqref{eq:lemma_result6}.
    
\end{proof}

\paragraph{Proof of equation \eqref{eq:lemma_result7}}
\begin{proof}
    First proof that $\lambda_{\max}(\cT_n^{-1})=O(1)$. By definition of $\cT$, we have
    \begin{align*}
        \lim\inf_{n}\lambda_{\min}(\cT_n)
        &=
        \lim\inf_{n}
        \min_{\ba:\Vert\ba\Vert_2=1}
         \ba^T
        \cT_n
        \ba
        \\
        &=
        \lim\inf_{n}
        \min_{\ba:\Vert\ba\Vert_2=1}
        \ba^T
        \begin{pmatrix}
                    \bC^T 
                            \\
                            \bm{0}_{r\times (s+K)}^T
                \end{pmatrix}^T
                {\bK}_{n}^{-1}
                {\bV}_{n}
                {\bK}_{n}^{-1}
            \begin{pmatrix}
                    \bC^T 
                            \\
                            \bm{0}_{r\times (s+K)}^T
                \end{pmatrix}
        \ba
        \\
        &\geq
        \lim\inf_{n}
        \lambda_{\min}
        \left(
            {\bK}_{n}^{-1}
            {\bV}_{n}
            {\bK}_{n}^{-1}
        \right)
        \cdot
        \lim\inf_{n}
        \lambda_{\min}
        \left(
            \bC\bC^T
        \right)>0,
    \end{align*}
    where the last inequality is from the fact that $\lambda_{\max}(\bK_n)=O(1)$, $\lambda_{\max}(\bV_n^{-1})=O(1)$, as well as $\lambda_{\max}((\bC\bC^T)^{-1})=O(1)$ given in \textcolor{black}{assumption (A3)}. Along with the fact that $\Vert\bh_n\Vert_2=O(\sqrt{\min(m+s+K-r, r)/n})\leq O(\sqrt{r/n})$ given in \textcolor{black}{assumption (A3)}, we proved the desired result \eqref{eq:lemma_result7}.
\end{proof}

\subsection{Proof of Theorem \ref{theorem:estimator_statistical_properties}}

\begin{proof}
        Since the true coefficients satisfy $\cB^*_{(\cM\cup S\cup\cK)^c}=\bm{0}$, we define an oracle estimator $\hat{\cB}$ as solution to
        \begin{equation}
            \max_{\cB}Q_n(\cB)~~\text{s.t.}~ \bC\cB_\cM = \bt,~\cB_{(\cM\cup S\cup\cK)^c}=\bm{0}.
            \label{eq:oracle_optimization}
        \end{equation}
        \paragraph{Step 1: Show that $\Vert\hat{\cB} - \cB^*\Vert_2=O_p(\sqrt{(m+s+K-r)/n})$.}

            Define a $(p+K)$-dimensional vector $\cB_0$ as
            \begin{equation*}
                \begin{cases}
                    \cB_{0,\cM}
                        & = \cB_\cM^* - \bC^T(\bC\bC^T)^{-1}\bh_n,
                        \\
                    
                    \cB_{0,\cM^c}
                        & = \cB_{\cM^c}^*.
                \end{cases}
            \end{equation*}
            It is immediate to see that
            \begin{equation*}
                \bC\cB_{0,\cM} - \bt = \bm{0}.
            \end{equation*}
            And it follows from \textcolor{black}{(A3)} that 
            \begin{equation}
                \Vert\cB_0 - \cB^*\Vert_2^2 
                = \Vert \bC^T(\bC\bC^T)^{-1}\bh_n \Vert_2^2
                \leq \lambda_{\max}((\bC\bC^T)^{-1})\Vert \bh_n\Vert_2^2
                \leq O((m+s+K-r)/n).
                \label{eq:step1-1}
            \end{equation}
            So it suffices to show that $\Vert \hat{\cB} - \cB_0 \Vert_2^2 = O_p((m+s+K-r)/n)$.
            \\
            Observe that $\bC\cB_{0,\cM}=\bt$, then for any $\cB$ with $\bC\cB_{\cM}=\bt$, we have $\bC(\cB_{\cM} - \cB_{0,\cM})=\bm{0}$, i.e., $\cB_{\cM} - \cB_{0,\cM}\in Null(\bC)$. Take basis matrix $\bm{Z}\in\mathbb{R}^{m\times (m-r)}$ of $Null(\bC)$, i.e., $\bC\bm{Z}=\bm{0}$, and $\bm{Z}^T\bm{Z}=\bm{I}_{m-r}$. Then for any $\cB_\cM$ s.t. $\bC\cB_{\cM}=\bt$, $\exists \bm{v}\in\mathbb{R}^{m-r}$, s.t. $\cB_\cM = \cB_{0,\cM}+\bm{Z}\bm{v}$. Then define $\overline{Q}_n(\bm{\delta}) = Q_n(\cB(\bm{\delta}))$ for any $\bm{\delta}\in\mathbb{R}^{(m-r)+s+K}$ where $\cB(\bm{\delta})$ is defined as
            \begin{equation*}
                \begin{cases}
                    \cB(\bm{\delta})_{\cM} 
                        & = \cB_{0,\cM} + \bm{Z} \bm{\delta}_{J_0},
                    \\
                    \cB(\bm{\delta})_{S\cup\cK}
                        & = \cB_{0,S\cup\cK} + \bm{\delta}_{J_0^c},
                    \\
                    \cB(\bm{\delta})_{(\cM \cup S\cup\cK)^c}
                        & = \cB_{0,(\cM \cup S\cup\cK)^c} ,
                \end{cases}
            \end{equation*}
            where $J_0=[1,2,\dots, m-r]$. Then solving the constrained optimization problem \eqref{eq:oracle_optimization} is equivalent to solving $\max_{
            \bm{\delta}\in\mathbb{R}^{(m-r)+s+K}}\overline{Q}_n(\bm{\delta})$.
            Since $\Vert \bm{Z}\bm{\delta}_{J_0}\Vert_2^2 = \Vert \bm{\delta}_{J_0}\Vert_2^2$, it suffices to show that there exists a local maximizer $\bm{\delta}_0$ of $\overline{Q}_n(\bm{\delta})$ such that $\Vert \bm{\delta}_0 \Vert_2 = O_p(\sqrt{(m+s+K-r)/n})$.
            
            Define a neighborhood $\cN_\tau = \{\bm{\delta}\in\mathbb{R}^{(m-r)+s+K}: \Vert \bm{\delta}\Vert_2\leq \tau\sqrt{(m+s+K-r)/n}\}$, and an event $H_n:=\left\{\overline{Q}_n(\bm{0}) > \max_{\bm{\delta}\in {\partial\cN_\tau}}\overline{Q}_n(\bm{\delta})\right\}$, where ${\partial\cN_\tau}$ denotes the boundary of ${\cN_\tau}$. On the event $H_n$, there must exist a local maximizer in $\cN_\tau$. Hence, it suffices to show $\mathbb{P}(H_n)\rightarrow 1$ as $n\rightarrow \infty$ and some sufficiently large $\tau$.
            
            For any $\bm{\delta}\in\partial\cN_\tau$, by the second-order Taylor's expansion
            \begin{equation}
                \overline{Q}_n(\bm{\delta}) - \overline{Q}_n(\bm{0})
                    =
                \bm{\delta}^T\bm{\nu} - \frac{1}{2} \bm{\delta}^T \bm{D} \bm{\delta},
                \label{eq:step1-2}
            \end{equation}
            where
            \begin{align*}
                \bm{\nu} 
                    & = \sum_{k=1}^{K}w_k \frac{1}{n}
                        \begin{pmatrix}
                            \bm{Z}^T \bm{X}_{\cM}^T
                            \\
                            \bm{X}_{S}^T
                            \\
                            -\bm{E}_k^T
                        \end{pmatrix}
                        \bm{H}(\bm{X}^k\cB_0)
                        \left\{
                            \bm{Y}^k - \bm{\mu}(\bm{X}^k\cB_0)
                        \right\}
                        -
                        \begin{pmatrix}
                            \bm{0}_{m-r}
                            \\
                            \lambda_{n,0}\bar{\rho}(\cB_{0,S},\lambda_{n,0})
                            \\
                            \bm{0}_{K}
                        \end{pmatrix},
                \\
                \bm{D}
                    & = \sum_{k=1}^{K}w_k \frac{1}{n}
                        \begin{pmatrix}
                            \bm{Z}^T \bm{X}_{\cM}^T
                            \\
                            \bm{X}_{S}^T
                            \\
                            -\bm{E}_k^T
                        \end{pmatrix}
                        \hat{\Sigma}(\bm{X}^k\cB^{**})
                        \begin{pmatrix}
                            \bm{Z}^T \bm{X}_{\cM}^T
                            \\
                            \bm{X}_{S}^T
                            \\
                            -\bm{E}_k^T
                        \end{pmatrix}^T
                        -
                        \begin{pmatrix}
                            \bm{0}_{(m-r)\times(m-r)} & \bm{0}_{(m-r)\times s} & \bm{0}_{(m-r)\times K}
                            \\
                            \bm{0}_{s \times(m-r)} & \Lambda_0 & \bm{0}_{s \times K} 
                            \\
                            \bm{0}_{K \times(m-r)} & \bm{0}_{K \times s} & \bm{0}_{K \times K}
                        \end{pmatrix},
            \end{align*}
            where $\cB^{**}$ lies in the line segment joining $\cB(\bm{\delta})$ and $\cB_0$, and $\Lambda_0$ is a $r\times r$ diagonal matrix with non-negative diagonal elements and 
            \begin{align*}
                \bar{\rho}(\cB_{S},\lambda_{n,0})
                    & = \left(
                    \text{sign}(\cB_{j})\rho^{\prime}(\vert\cB_{j}\vert,\lambda_{n,0}),j\in S
                    \right)^T
                \\
                \bm{H}(\bm{X}^k\cB)
                    & = \text{diag}
                        \left\{
                            h^{\prime}((\bx_i^k)^T\cB),~i=1,\dots,n
                        \right\},
                \\
                \bm{\mu}(\bm{X}^k\cB)
                    & = \begin{pmatrix}
                            \Phi((\bx_1^k)^T\cB),\cdots,\Phi((\bx_n^k)^T\cB)
                        \end{pmatrix}^T,
                \\
                \hat{\Sigma}(\bm{X}^k\cB)
                    & = \text{diag}
                        \left\{
                            \tilde{y}_{ki}\ell^{\prime\prime}((\bx_{i}^k)^T\cB)+(1-\tilde{y}_{ki})\ell^{\prime\prime}(-(\bx_{i}^k)^T\cB),~i=1,\dots,n
                        \right\},
            \end{align*}
            where $\ell(t)=-\log(\Phi(t))$ is the probit loss function introduced and studied in \cite{zhou2023nonparametric} with $\ell(t)^{\prime\prime}$ bounded within $[0,1]$. By the definition of $\cB(\bm{\delta})$ and $\cB_0$, we have $\cB^{**}_{(\cM\cup S\cup \cK)^c}=\bm{0}$. Moreover, it follows from equation \eqref{eq:step1-1} and $\bm{\delta}\in\partial\cN_\tau$ that
            \begin{align}
                \Vert \cB^{**} - \cB^* \Vert_2
                & \leq
                \tau\sqrt{(s-r+m+K)/n} + O(\sqrt{(s-r+m+K)/n})
                \nonumber
                \\
                & \ll
                \sqrt{(s+m+K)\log(n)/n},
                \label{eq:step1-B**}
            \end{align}
            for $n$ sufficiently large and for $\tau\ll \sqrt{\log{n}}$. Therefore, we have $\cB^{**}\in\cN^*$. By definition of $\kappa_{0,0}$, we have $\lambda_{\max}(\Lambda_0)\leq\lambda_{n,0}\kappa_{0,0}$. Let
            \begin{equation*}
                \bm{L} = 
                    \begin{pmatrix}
                        \bm{Z} & \bm{0}_{m\times(s+K)}
                        \\
                        \bm{0}_{(s+K)\times (m-r)} & \bm{I}_{s+K}
                    \end{pmatrix},
            \end{equation*}
            we have
            \begin{equation*}
                \sum_{k=1}^{K}w_k \frac{1}{n}
                        \begin{pmatrix}
                            \bm{Z}^T \bm{X}_{\cM}^T
                            \\
                            (\bm{X}^k_{S\cup \cK})^T
                        \end{pmatrix}
                        \hat{\Sigma}(\bm{X}^k\cB^{**})
                        \begin{pmatrix}
                            \bm{Z}^T \bm{X}_{\cM}^T
                            \\
                            (\bm{X}^k_{S\cup \cK})^T
                        \end{pmatrix}^T
                =
                \sum_{k=1}^{K}w_k \frac{1}{n}
                        \bm{L}^T
                        \begin{pmatrix}
                            \bm{X}_{\cM}^T
                            \\
                            (\bm{X}^k_{S\cup \cK})^T
                        \end{pmatrix}
                        \hat{\Sigma}(\bm{X}^k\cB^{**})
                        \begin{pmatrix}
                            \bm{X}_{\cM}^T
                            \\
                            (\bm{X}^k_{S\cup \cK})^T
                        \end{pmatrix}^T
                        \bm{L}.
            \end{equation*}
            Then on the event defined below:
            \begin{equation}
            E_n:=
            \left\{
                \left\Vert
                    \sum_{k=1}^{K}w_k \frac{1}{n}
                        \begin{pmatrix}
                            \bm{X}_{\cM}^T
                            \\
                            (\bm{X}^k_{S\cup \cK})^T
                        \end{pmatrix}
                        \left(
                        \hat{\Sigma}(\bm{X}^k\cB^*)
                        -
                        {\Sigma}(\bm{X}^k\cB^*)
                        \right)
                        \begin{pmatrix}
                            \bm{X}_{\cM}^T
                            \\
                            (\bm{X}^k_{S\cup \cK})^T
                        \end{pmatrix}^T
                \right\Vert_2
                < \frac{c}{8}
            \right\},
            \label{eq:step1_event}
            \end{equation}
            where ${\Sigma}(\bm{X}^k\cB) = \text{diag}
                        \left\{
                            \frac{\varphi^2}{\Phi(1-\Phi)}((\bx_{i}^k)^T\cB)
                            ,~i=1,\dots,n
                        \right\}$, and $c$ is the constant in \textcolor{black}{the 6th condition of (A1)},
            we must have 
            \begin{equation}
                \lambda_{\min}
                    \left\{
                        \sum_{k=1}^{K}w_k \frac{1}{n}
                        \begin{pmatrix}
                            \bm{Z}^T \bm{X}_{\cM}^T
                            \\
                            (\bm{X}^k_{S\cup \cK})^T
                        \end{pmatrix}
                        \hat{\Sigma}(\bm{X}^k\cB^{**})
                        \begin{pmatrix}
                            \bm{Z}^T \bm{X}_{\cM}^T
                            \\
                            (\bm{X}^k_{S\cup \cK})^T
                        \end{pmatrix}^T
                    \right\}\geq \frac{3}{4}c,
                    \label{eq:step1-D1}
            \end{equation}
            for sufficiently large $n$. This is because for any $\bm{\delta}\in\mathbb{R}^{m-r+s+K}$, notice that $\Vert\bm{L}\bm{\delta}\Vert_2=\Vert\bm{\delta}\Vert_2$, then
            \begin{align}
                &~
                \bm{\delta}^T
                \sum_{k=1}^{K}w_k \frac{1}{n}
                        \begin{pmatrix}
                            \bm{Z}^T \bm{X}_{\cM}^T
                            \\
                            (\bm{X}^k_{S\cup \cK})^T
                        \end{pmatrix}
                        \hat{\Sigma}(\bm{X}^k\cB^{**})
                        \begin{pmatrix}
                            \bm{Z}^T \bm{X}_{\cM}^T
                            \\
                            (\bm{X}^k_{S\cup \cK})^T
                        \end{pmatrix}^T
                \bm{\delta}
                \nonumber
                \\
                =
                &~
                (\bm{L}\bm{\delta})^T
                \sum_{k=1}^{K}w_k \frac{1}{n}
                        \begin{pmatrix}
                            \bm{X}_{\cM}^T
                            \\
                            (\bm{X}^k_{S\cup \cK})^T
                        \end{pmatrix}
                        \left(
                        {\Sigma}(\bm{X}^k\cB^{**})
                        +
                        \hat{\Sigma}(\bm{X}^k\cB^{**})
                        -
                        {\Sigma}(\bm{X}^k\cB^{**})
                        \right)
                        \begin{pmatrix}
                            \bm{X}_{\cM}^T
                            \\
                            (\bm{X}^k_{S\cup \cK})^T
                        \end{pmatrix}^T
                (\bm{L}\bm{\delta})
                \nonumber
                \\
                \geq
                &~
                \lambda_{\min}
                \left(
                    \sum_{k=1}^{K}w_k \frac{1}{n}
                        \begin{pmatrix}
                            \bm{X}_{\cM}^T
                            \\
                            (\bm{X}^k_{S\cup \cK})^T
                        \end{pmatrix}
                        {\Sigma}(\bm{X}^k\cB^{**})
                        \begin{pmatrix}
                            \bm{X}_{\cM}^T
                            \\
                            (\bm{X}^k_{S\cup \cK})^T
                        \end{pmatrix}^T
                \right)
                \Vert \bm{L}\bm{\delta}\Vert_2^2
                \nonumber
                \\
                -
                &~
                \lambda_{\max}
                \left(
                    \sum_{k=1}^{K}w_k \frac{1}{n}
                        \begin{pmatrix}
                            \bm{X}_{\cM}^T
                            \\
                            (\bm{X}^k_{S\cup \cK})^T
                        \end{pmatrix}
                        \left(
                        \hat{\Sigma}(\bm{X}^k\cB^{**})
                        -
                        {\Sigma}(\bm{X}^k\cB^{**})
                        \right)
                        \begin{pmatrix}
                            \bm{X}_{\cM}^T
                            \\
                            (\bm{X}^k_{S\cup \cK})^T
                        \end{pmatrix}^T
                \right)
                \Vert \bm{L}\bm{\delta}\Vert_2^2
                \nonumber
                \\
                \geq
                &~
                c\Vert \bm{\delta}\Vert_2^2
                -
                \lambda_{\max}(\bm{U_n})\Vert \bm{\delta}\Vert_2^2,
                \label{eq:step1-boundD}
            \end{align}
            where the last inequality is from the fact that $\cB^{**}\in\cN^*$ and \textcolor{black}{the 6th condition of (A1)}, and 
            \begin{equation*}
                \bm{U}_n:=\sum_{k=1}^{K}w_k \frac{1}{n}
                        \begin{pmatrix}
                            \bm{X}_{\cM}^T
                            \\
                            (\bm{X}^k_{S\cup \cK})^T
                        \end{pmatrix}
                        \left(
                        \hat{\Sigma}(\bm{X}^k\cB^{**})
                        -
                        {\Sigma}(\bm{X}^k\cB^{**})
                        \right)
                        \begin{pmatrix}
                            \bm{X}_{\cM}^T
                            \\
                            (\bm{X}^k_{S\cup \cK})^T
                        \end{pmatrix}^T.
            \end{equation*}
       
            Since
            \begin{align*}
                \hat{\Sigma}(\bm{X}^k\cB^{**})
                        -
                        {\Sigma}(\bm{X}^k\cB^{**})
                & =
                \left(\hat{\Sigma}(\bm{X}^k\cB^{**}) - \hat{\Sigma}(\bm{X}^k\cB^*)\right)
                +
                \left({\Sigma}(\bm{X}^k\cB^*) - {\Sigma}(\bm{X}^k\cB^{**})\right)
                \\
                &
                +
                \left(\hat{\Sigma}(\bm{X}^k\cB^*) - {\Sigma}(\bm{X}^k\cB^*)\right),
            \end{align*}
            we have $\bm{U}_n = \bm{U}_n^1+\bm{U}_n^2+\bm{U}_n^3$, where
            \begin{align*}
                \bm{U}_n^1
                &=
                \sum_{k=1}^{K}w_k \frac{1}{n}
                        \begin{pmatrix}
                            \bm{X}_{\cM}^T
                            \\
                            (\bm{X}^k_{S\cup \cK})^T
                        \end{pmatrix}
                         \left(\hat{\Sigma}(\bm{X}^k\cB^{**}) - \hat{\Sigma}(\bm{X}^k\cB^*)\right)
                        \begin{pmatrix}
                            \bm{X}_{\cM}^T
                            \\
                            (\bm{X}^k_{S\cup \cK})^T
                        \end{pmatrix}^T,
                \\
                \bm{U}_n^2
                &=
                \sum_{k=1}^{K}w_k \frac{1}{n}
                        \begin{pmatrix}
                            \bm{X}_{\cM}^T
                            \\
                            (\bm{X}^k_{S\cup \cK})^T
                        \end{pmatrix}
                         \left({\Sigma}(\bm{X}^k\cB^*) - {\Sigma}(\bm{X}^k\cB^{**})\right)
                        \begin{pmatrix}
                            \bm{X}_{\cM}^T
                            \\
                            (\bm{X}^k_{S\cup \cK})^T
                        \end{pmatrix}^T,
                \\
                \bm{U}_n^3
                &=
                \sum_{k=1}^{K}w_k \frac{1}{n}
                        \begin{pmatrix}
                            \bm{X}_{\cM}^T
                            \\
                            (\bm{X}^k_{S\cup \cK})^T
                        \end{pmatrix}
                         \left(\hat{\Sigma}(\bm{X}^k\cB^*) - {\Sigma}(\bm{X}^k\cB^*)\right)
                        \begin{pmatrix}
                            \bm{X}_{\cM}^T
                            \\
                            (\bm{X}^k_{S\cup \cK})^T
                        \end{pmatrix}^T.
            \end{align*}

            By \textcolor{black}{the 1st condition of (A1)} and the fact that $\Vert \cB^{**} - \cB^* \Vert_2\ll\sqrt{(s+m+K)\log(n)/n}$ as long as $\tau\ll \sqrt{\log(n)}$ proved in \eqref{eq:step1-B**}, we have $\max_{1\leq i\leq n}\max_{1\leq k\leq K}\vert (\bx_{i}^k)^T(\cB^{**} - \cB^*)\vert =o(1)$. This, combined with \textcolor{black}{the 4th condition of (A1)}, implies that $\lambda_{\max}(\bm{U}_n^1)<c/16$ and $\lambda_{\max}(\bm{U}_n^2)<c/16$ when $n$ is sufficiently large. Therefore, on event $E_n$, we must have
            \begin{equation*}
                \lambda_{\max}(\bm{U}_n)\leq\lambda_{\max}(\bm{U}_n^1) + \lambda_{\max}(\bm{U}_n^2) + \lambda_{\max}(\bm{U}_n^3)\leq\frac{c}{4},
            \end{equation*}
            for sufficiently large $n$. This, combined with inequality \eqref{eq:step1-boundD}, implies that
            \begin{align*}
                \bm{\delta}^T
                \sum_{k=1}^{K}w_k \frac{1}{n}
                        \begin{pmatrix}
                            \bm{Z}^T \bm{X}_{\cM}^T
                            \\
                            (\bm{X}^k_{S\cup \cK})^T
                        \end{pmatrix}
                        \hat{\Sigma}(\bm{X}^k\cB^{**})
                        \begin{pmatrix}
                            \bm{Z}^T \bm{X}_{\cM}^T
                            \\
                            (\bm{X}^k_{S\cup \cK})^T
                        \end{pmatrix}^T
                \bm{\delta}
                \geq
                c\Vert \bm{\delta}\Vert_2^2
                -
                \frac{c}{4}\Vert \bm{\delta}\Vert_2^2=\frac{3c}{4}\Vert \bm{\delta}\Vert_2^2,
            \end{align*}
            i.e., inequality \eqref{eq:step1-D1} holds on event $E_n$.
            
            
            By condition $\lambda_{n,0}\kappa_{0,0}=o(1)$ in \textcolor{black}{(A2)}, we have $\lambda_{\max}(\Lambda_0)=o(1)$. Thus, combining with inequality \eqref{eq:step1-D1}, we have $\lambda_{\min}(\bm{D})\geq \frac{1}{2}{c}$ for some sufficiently large $n$. Therefore, on the event $E_n$ defined in \eqref{eq:step1_event}, it follows from equation \eqref{eq:step1-2} that 
            \begin{align*}
                \sup_{\bm{\delta}\in\partial\cN_\tau}\overline{Q}_n(\bm{\delta}) - \overline{Q}_n(\bm{0})
                &
                \leq
                \sup_{\bm{\delta}\in\partial\cN_\tau}
                \left(
                \Vert \bm{\delta}\Vert_2\Vert\bm{\nu}\Vert_2
                -
                \frac{1}{2} \lambda_{\min}(D)\Vert \bm{\delta}\Vert_2^2
                \right)
                \\
                & \leq
                \sup_{\bm{\delta}\in\partial\cN_\tau}
                    \left(
                        \Vert \bm{\delta}\Vert_2\Vert\bm{\nu}\Vert_2 - \frac{c}{4}\Vert\bm{\delta}\Vert_2^2
                    \right)
                \\
                & = 
                \Vert\bm{\nu}\Vert_2 \tau\sqrt{\frac{m-r+s+K}{n}} - \frac{c\tau^2}{4}\frac{(m-r+s+K)}{n}.
            \end{align*}
            By Markov's inequality, 
            \begin{align*}
                \mathbb{P}(H_n) 
                &\geq
                \mathbb{P}(H_n\cap E_n) 
                \\
                &\geq 
                    \mathbb{P}
                        \left(
                        \left\{
                            \Vert \bm{\nu}\Vert_2 < \bar{c}\tau\sqrt{\frac{m-r+s+K}{n}}
                        \right\}
                        \cap
                        E_n
                        \right)
                \\
                &
                \geq
                \mathbb{P}
                        \left(
                            \Vert \bm{\nu}\Vert_2 < \bar{c}\tau\sqrt{\frac{m-r+s+K}{n}}
                        \right)
                -
                 \mathbb{P}(E_n^c)
                \\
                &
                    \geq
                    1-\frac{n\mathbb{E}\Vert\bm{\nu}\Vert_2^2}{\bar{c}^2\tau^2(m-r+s+K)}
                    -
                     \mathbb{P}(E_n^c)
                   ,
            \end{align*}
            for any fixed $\tau$ satisfying $\tau\leq\frac{1}{2}\sqrt{\log n}$ and for $n$ sufficiently large. 
            
            Next show that $\mathbb{E}\Vert\bm{\nu}\Vert_2^2 = O((m-r+s+K)/n)$. By Cauchy-Schwartz inequality,
            \begin{align*}
                \mathbb{E}\Vert\bm{\nu}\Vert_2^2 
                    & \leq
                    2\Vert \lambda_{n,0}\bar{\rho}(\cB_{0,S};\lambda_{n,0}) \Vert_2^2
                    +
                    2\mathbb{E}\left\Vert
                        \sum_{k=1}^{K}w_k \frac{1}{n}
                        \begin{pmatrix}
                            \bm{Z}^T \bm{X}_{\cM}^T
                            \\
                            (\bm{X}^k_{S\cup \cK})^T
                        \end{pmatrix}
                        \bm{H}(\bm{X}^k\cB_0)
                        \left\{
                            \bm{Y}^k - \bm{\mu}(\bm{X}^k\cB_0)
                        \right\}
                    \right\Vert_2^2
                    \\
                    & \overset{\Delta}{=}
                    2I_0+2I_1
            \end{align*}
            By the concavity of $\rho$, its derivative $\rho^{\prime}$ is monotone decreasing. Along with the definition of $d_n$, and \textcolor{black}{the 2nd condition of (A2)}, we have
            \begin{equation*}
                I_0\leq s \left\{
                    \lambda_{n,0}\rho^\prime(d_n)
                \right\}^2
                =
                o(1/n).
            \end{equation*} 
            Next show that $I_1=O(\frac{m-r+s+K}{n})$:
            \begin{align*}
                I_1
                &=
                \frac{1}{n^2}\mathbb{E}\left\Vert
                        \sum_{k=1}^{K}w_k 
                        \begin{pmatrix}
                            \bm{Z}^T \bm{X}_{\cM}^T
                            \\
                            (\bm{X}^k_{S\cup \cK})^T
                        \end{pmatrix}
                        \bm{H}(\bm{X}^k\cB_0)
                        \left\{
                            \bm{Y}^k - \bm{\mu}(\bm{X}^k\cB_0)
                        \right\}
                    \right\Vert_2^2
                \\
                &\leq 
                \frac{1}{n^2}\mathbb{E}
                \left[
                    \sum_{k=1}^{K}w_k 
                    \left\Vert
                        \begin{pmatrix}
                            \bm{Z}^T \bm{X}_{\cM}^T
                            \\
                            (\bm{X}^k_{S\cup \cK})^T
                        \end{pmatrix}
                        \bm{H}(\bm{X}^k\cB_0)
                        \left\{
                            \bm{Y}^k - \bm{\mu}(\bm{X}^k\cB_0)
                        \right\}
                    \right\Vert_2^2
                \right]
                 \\
                 & = 
                 \frac{1}{n}
                 \text{tr}\left[
                    \frac{1}{n}
                    \sum_{k=1}^{K}w_k 
                    \begin{pmatrix}
                            \bm{Z}^T \bm{X}_{\cM}^T
                            \\
                            (\bm{X}^k_{S\cup \cK})^T
                        \end{pmatrix}
                        \text{diag}\left\{(h^{\prime})^2((\bx_i^k)^T\cB_0)\cdot\Phi(1-\Phi)((\bx_i^k)^T\cB^*),i=1,\dots,n\right\}
                        \begin{pmatrix}
                            \bm{Z}^T \bm{X}_{\cM}^T
                            \\
                            (\bm{X}^k_{S\cup \cK})^T
                        \end{pmatrix}^T
                 \right]
                 \\
                 & +
                 \frac{1}{n}\cdot
                 \frac{1}{n}
                 \sum_{k=1}^{K}w_k 
                 \left\Vert
                    \begin{pmatrix}
                            \bm{Z}^T \bm{X}_{\cM}^T
                            \\
                            (\bm{X}^k_{S\cup \cK})^T
                        \end{pmatrix}
                        \bm{H}(\bm{X}^k\cB_0)
                        \left\{
                            \bm{\mu}(\bm{X}^k\cB_0) - \bm{\mu}(\bm{X}^k\cB^*) 
                        \right\}
                 \right\Vert_2^2
                 \\
                 & \overset{\Delta}{=}
                 \frac{1}{n}I_2+\frac{1}{n}I_3,
            \end{align*}
            where the first inequality is by the convexity of $\Vert\cdot\Vert_2^2$.
            
            To bound $I_2$, consider for any $i=1,\dots, n$, $k=1,\dots,K$,
            \begin{align*}
                &~
                (h^{\prime})^2((\bx_i^k)^T\cB_0)\cdot\Phi(1-\Phi)((\bx_i^k)^T\cB^*)
                \\
                = &~
                \left[
                    h^{\prime}((\bx_i^k)^T\cB^*) + h^{\prime\prime}((\bx_i^k)^T\cB^{**})\bx_{i,\cM}^T(\cB_{0,\cM} - \cB^*_{\cM})
                \right]^2
                \cdot\Phi(1-\Phi)((\bx_i^k)^T\cB^*)
                \\
                = &~
                \frac{\varphi^2}{\Phi(1-\Phi)}((\bx_i^k)^T\cB^*)
                +
                2\varphi((\bx_i^k)^T\cB)h^{\prime\prime}((\bx_i^k)^T\cB^{**}) \bx_{i,\cM}^T(\cB_{0,\cM} - \cB^*_{\cM})
                \\
                &~
                +
                \left[
                    h^{\prime\prime}((\bx_i^k)^T\cB^{**})\bx_{i,\cM}^T(\cB_{0,\cM} - \cB^*_{\cM})
                \right]^2 
                \cdot\Phi(1-\Phi)((\bx_i^k)^T\cB^*)
                \\
                \leq &~
                4{\varphi(0)^2} + O(\bx_{i,\cM}^T(\cB_{0,\cM} - \cB^*_{\cM})) + O(\left[\bx_{i,\cM}^T(\cB_{0,\cM} - \cB^*_{\cM})\right]^2)
                \\
                = &~
                4{\varphi(0)^2} + o(1)
            \end{align*}
            where the first equality is from the Taylor's expansion and $\cB_{\cM^c}^* = \cB_{0,\cM^c}$; the inequality is from the fact that $\frac{\varphi^2}{\Phi (1-\Phi)}(t)\leq \frac{\varphi^2}{\Phi (1-\Phi)}(0)$, $\varphi(t)\leq\varphi(0)$, $\vert h^{\prime\prime}(t)\vert\leq 1$ and $\Phi(1-\Phi)(t)\leq 1/4$; the last equality is because $\bx_{i,\cM}^T(\cB_{0,\cM} - \cB^*_{\cM})=o(1)$ derived from \textcolor{black}{the 1st condition of (A1)} and the fact that $\Vert \cB_{\cM}^* - \cB_{0,\cM}\Vert_2=O(\sqrt{(m-r+s+K)/n})$. Therefore, we have
            \begin{align*}
                I_2
                \leq
                &~
                 \text{tr}\left[
                    \frac{1}{n}
                    \sum_{k=1}^{K}w_k 
                    \begin{pmatrix}
                            \bm{Z}^T \bm{X}_{\cM}^T
                            \\
                            (\bm{X}^k_{S\cup \cK})^T
                        \end{pmatrix}
                        \begin{pmatrix}
                            \bm{Z}^T \bm{X}_{\cM}^T
                            \\
                            (\bm{X}^k_{S\cup \cK})^T
                        \end{pmatrix}^T
                 \right]
                 \cdot
                 O(1)
                 \\
                 =
                 &~
                 \text{tr}\left[
                 \bm{L}^T
                 \frac{1}{n}
                 \sum_{k=1}^{K}w_k
                 \begin{pmatrix}
                            \bm{X}_{\cM}^T
                            \\
                            (\bm{X}^k_{S\cup \cK})^T
                        \end{pmatrix}
                        \begin{pmatrix}
                            \bm{X}_{\cM}^T
                            \\
                            (\bm{X}^k_{S\cup \cK})^T
                        \end{pmatrix}^T
                 \bm{L}
                 \right]
                 \cdot
                 O\left(1\right)
                 \\
                 \leq
                 &~
                 (m-r+s+K)
                 \cdot
                 \lambda_{\max}
                 \left(
                 \frac{1}{n}
                 \sum_{k=1}^{K}w_k
                 \begin{pmatrix}
                            \bm{X}_{\cM}^T
                            \\
                            (\bm{X}^k_{S\cup \cK})^T
                        \end{pmatrix}
                        \begin{pmatrix}
                            \bm{X}_{\cM}^T
                            \\
                            (\bm{X}^k_{S\cup \cK})^T
                        \end{pmatrix}^T
                 \right)
                 \cdot
                 O\left(1\right).
            \end{align*}
            Along with \textcolor{black}{the 4th condition of (A1)}, we have $I_2/n=O((m-r+s+K)/n)$. 
            
            To bound $I_3$, consider the $i$-th coordinate of vector $\bm{H}(\bm{X}^k\cB_0)
                        \left(
                            \bm{\mu}(\bm{X}^k\cB_0) - \bm{\mu}(\bm{X}^k\cB^*) 
                        \right)$ for any $i=1,\dots,n$, $k=1,\dots, K$,
            \begin{align*}
                &~
                h^{\prime}((\bx_i^k)^T\cB_0)
                \cdot
                \left(
                \Phi((\bx_i^k)^T\cB_0) - \Phi((\bx_i^k)^T\cB^*)
                \right)
                \\
                =
                &~
                \left.
                \left[
                h^{\prime\prime}(\eta) \left(
                \Phi(\eta) - \Phi((\bx_i^k)^T\cB^*)
                \right)
                +
                h^{\prime}(\eta) \varphi(\eta)
                \right]
                \right\vert_{\eta=(\bx_i^k)^T\cB^{**}}
                \cdot
                \bx_{i,\cM}^T(\cB_{0,\cM} - \cB^*_{\cM})
                \\
                =
                &~
                O(\left[\bx_{i,\cM}^T(\cB_{0,\cM} - \cB^*_{\cM})\right]^2) + O(\bx_{i,\cM}^T(\cB_{0,\cM} - \cB^*_{\cM}))
            \end{align*}
            where the last equality is from the fact that $\vert h^{\prime\prime}(t)\vert \leq 1$, $\vert\Phi(t_1)-\Phi(t_2)\vert\leq\varphi(0)\vert t_1-t_2\vert$, and $h^{\prime}(t) \varphi(t) =\frac{\varphi^2}{\Phi(1-\Phi)}(t)\leq\frac{\varphi^2}{\Phi(1-\Phi)}(0)$. Along with \textcolor{black}{the 4th condition of (A1)}, we have $I_3/n=o(1/n)$. Therefore, $I_1 = O((m-r+s+K)/n)$, as desired.

            It remains to bound 
            \begin{equation*}
                \mathbb{P}(E_n^c)=\mathbb{P}\left(
                \left\Vert
                    \sum_{k=1}^{K}w_k \frac{1}{n}
                        \begin{pmatrix}
                            \bm{X}_{\cM}^T
                            \\
                            (\bm{X}^k_{S\cup \cK})^T
                        \end{pmatrix}
                        \left(
                        \hat{\Sigma}(\bm{X}^k\cB^*)
                        -
                        {\Sigma}(\bm{X}^k\cB^*)
                        \right)
                        \begin{pmatrix}
                            \bm{X}_{\cM}^T
                            \\
                            (\bm{X}^k_{S\cup \cK})^T
                        \end{pmatrix}^T
                \right\Vert_2
                \geq \frac{c}{8}
                \right).
            \end{equation*}
            Denote
            \begin{align*}
                {\mathcal{R}_n} = 1+\max_{1\leq i\leq n}\Vert \bx_{i,\cM\cup S}\Vert_2^2,
                ~
                T_n=\max_{1\leq k\leq K}
                \lambda_{\max}
                \left(
                    \begin{pmatrix}
                            \bm{X}_{\cM}^T
                            \\
                            (\bm{X}^k_{S\cup \cK})^T
                        \end{pmatrix}
                    \begin{pmatrix}
                            \bm{X}_{\cM}^T
                            \\
                            (\bm{X}^k_{S\cup \cK})^T
                        \end{pmatrix}^T
                \right)
            \end{align*}
            By the matrix Chernoff bound, we have
            \begin{align}
                &~
                \mathbb{P}\left(
                    \left\Vert
                \sum_{k=1}^{K}w_k 
                        \begin{pmatrix}
                            \bm{X}_{\cM}^T
                            \\
                            (\bm{X}^k_{S\cup \cK})^T
                        \end{pmatrix}
                        \left(
                        {\Sigma}(\bm{X}^k{\cB^*}) - \hat{\Sigma}(\bm{X}^k{\cB^*})
                        \right)
                        \begin{pmatrix}
                            \bm{X}_{\cM}^T
                            \\
                            (\bm{X}^k_{S\cup \cK})^T
                        \end{pmatrix}^T
                \right\Vert_2
                >
                t
                \right)
                \nonumber
                \\
                \leq 
                &~
                (s+m+K)
                \exp\left(
                -
                \frac{t^2}
                {{\mathcal{R}_n} T_n
                +
                {\mathcal{R}_n} t/3}
                \right)
                \nonumber
                \\
                =
                &~
                \exp\left(
                -
                \frac{t^2}
                {{\mathcal{R}_n} (T_n+t/3)}
                +\log(s+m+K)
                \right)
                \label{eq:step1-matrix_Chernoff}
            \end{align}
            From \textcolor{black}{the 1st and 4th conditions of (A1)}, we have ${\mathcal{R}_n}=O\left({\frac{n}{(s+m+K)\log n}}\right)$, $T_n=O(n)$. Then there exists constants $c_4$, $c_5$ such that ${\mathcal{R}_n}\leq  {\frac{c_4n}{(s+m+K)\log n}}$, $T_n\leq c_5 n$ for sufficiently large $n$. Thus
            \begin{align*}
                &~
                \mathbb{P}\left(
                \left\Vert
                    \sum_{k=1}^{K}w_k\frac{1}{n}
                        \begin{pmatrix}
                            \bm{X}_{\cM}^T
                            \\
                            (\bm{X}^k_{S\cup \cK})^T
                        \end{pmatrix}
                        \left(
                        \hat{\Sigma}(\bm{X}^k\cB^*)
                        -
                        {\Sigma}(\bm{X}^k\cB^*)
                        \right)
                        \begin{pmatrix}
                            \bm{X}_{\cM}^T
                            \\
                            (\bm{X}^k_{S\cup \cK})^T
                        \end{pmatrix}^T
                \right\Vert_2
                \geq \frac{c}{8}
                \right)
                \\
                \leq
                &~
                \exp\left(
                -
                \tilde{c}
                {(s+m+K)\log(n)}
                +\log(s+m+K)
                \right)
                \\
                \leq
                &~
                \exp\left(
                -
                \frac{\tilde{c}}{2}
                {(s+m+K)\log(n)}
                \right),
            \end{align*}
            for some constant $\tilde{c}$ and sufficiently large $n$.

        \paragraph{Step 2: Show that with probability tending to 1, the oracle local maximizer $\hat{\cB}$ is indeed a maximizer of $Q_n(\cB)$ with the linear constraint $\bC\cB_{\cM}=\bt$.}
            
            From step 1, we have for some constant $c$, and any $\tau\ll{\sqrt{\log n}}$,
            \begin{equation*}
                \mathbb{P}\left(
                    \Vert
                        \hat{\cB} - \cB^*
                    \Vert_2\leq\tau\sqrt{\frac{m-r+s+K}{n}}
                \right)
                \geq 
                1-c\frac{1}{\tau^2}-
                \exp\left(
                -\frac{\tilde{c}}{2}
                {(s+m+K)\log(n)}
                \right).
            \end{equation*}
            Similar to Theorem 1 in \cite{fan2011nonconcave}, it suffices to show that with probability tending to 1, $\hat{\cB}$ satisfies the following inequality:
            \begin{equation*}
                \left\Vert
                    \sum_{k=1}^{K}w_k 
                            (\bm{X}^k_{(\cM\cup S\cup\cK)^c})^T
                        \bm{H}(\bm{X}^k\hat{\cB})
                        \left\{
                            \bm{Y}^k - \bm{\mu}(\bm{X}^k\hat{\cB}) 
                        \right\}
                \right\Vert_{\infty}
                < 
                n\lambda_{n,0}{\rho}^{\prime}(0+).
            \end{equation*}
            By the Taylor's expansion, $\forall j\in(\cM\cup S\cup\cK)^c$,
            \begin{align}
                & \sum_{k=1}^{K}w_k 
                            \bm{x}_{(j)}^T
                        \bm{H}(\bm{X}^k\hat{\cB})
                        \left\{
                            \bm{Y}^k - \bm{\mu}(\bm{X}^k\hat{\cB})
                        \right\}
                                \nonumber
                \\
                = & 
                \sum_{k=1}^{K}w_k 
                            \bm{x}_{(j)}^T
                        \bm{H}(\bm{X}^k{\cB^*})
                        \left\{
                            \bm{Y}^k - \bm{\mu}(\bm{X}^k{\cB^*})
                        \right\}
                -
                \sum_{k=1}^{K}w_k 
                            \bm{x}_{(j)}^T
                        \hat{\Sigma}(\bm{X}^k{\cB^*})\bm{X}^k(\hat{\cB} - \cB^*) 
                + 
                R_j,
                \label{eq:step2-Taylor}
            \end{align}
            where 
            \begin{equation*}
                R_j = \sum_{k=1}^{K}w_k 
                    \hat{\bm{\Delta}}_{{\cM\cup S\cup\cK}}^T
                    \begin{pmatrix}
                            \bm{X}_{\cM}^T
                            \\
                            (\bm{X}^k_{S\cup \cK})^T
                        \end{pmatrix}
                    \text{diag}\left\{
                        \bx_{(j)}\circ\hat{\Sigma}^\prime(\bX^K\bar{\cB}_j)
                    \right\}
                    \begin{pmatrix}
                            \bm{X}_{\cM}^T
                            \\
                            (\bm{X}^k_{S\cup \cK})^T
                        \end{pmatrix}^T
                    \hat{\bm{\Delta}}_{{\cM\cup S\cup\cK}},
            \end{equation*}
            for some $\bar{\cB}_j$ lying on the line segment joining $\cB^*$ and $\hat{\cB}$,
            where $\hat{\bm{\Delta}}_{{\cM\cup S\cup\cK}}=\hat{\cB}_{\cM\cup S\cup\cK} - \cB^*_{\cM\cup S\cup\cK}$. Under the event 
            \begin{equation}
                \left\{\Vert
                        \hat{\cB} - \cB^*
                    \Vert_2\leq\tau\sqrt{{(m-r+s+K)}/{n}}
                \right\},
                \label{eq:step2-1}
            \end{equation}
            we have $\hat{\cB}\in\cN^*$. Along with $\cB^*\in\cN^*$, we have $\bar{\cB}_j\in\cN^*$. Then under the same event \eqref{eq:step2-1}, it follows from \textcolor{black}{the 5th condition of (A1)} that
            \begin{align*}
                \vert R_j\vert
                & \leq
                \sum_{k=1}^{K}w_k 
                    \hat{\bm{\Delta}}_{{\cM\cup S\cup\cK}}^T
                    \begin{pmatrix}
                            \bm{X}_{\cM}^T
                            \\
                            (\bm{X}^k_{S\cup \cK})^T
                        \end{pmatrix}
                    \text{diag}\left\{
                        \vert\bx_{(j)}\vert
                        \circ
                        \vert\hat{\Sigma}^\prime(\bX^K\bar{\cB}_j)\vert
                    \right\}
                    \begin{pmatrix}
                            \bm{X}_{\cM}^T
                            \\
                            (\bm{X}^k_{S\cup \cK})^T
                        \end{pmatrix}^T
                    \hat{\bm{\Delta}}_{{\cM\cup S\cup\cK}},
                \\
                & \leq
                n
                \lambda_{\max}
                \left(
                    \frac{1}{n}
                    \sum_{k=1}^{K}w_k 
                    \begin{pmatrix}
                            \bm{X}_{\cM}^T
                            \\
                            (\bm{X}^k_{S\cup \cK})^T
                        \end{pmatrix}
                    \text{diag}\left\{
                        \vert\bx_{(j)}\vert
                    \right\}
                    \begin{pmatrix}
                            \bm{X}_{\cM}^T
                            \\
                            (\bm{X}^k_{S\cup \cK})^T
                        \end{pmatrix}^T
                \right)
                \cdot
                \Vert
                    \hat{\bm{\Delta}}_{{\cM\cup S\cup\cK}}
                \Vert_2^2
                \\
                &\leq
                c_1n\cdot\tau^2{\frac{m-r+s+K}{n}}
                \\
                & =
                c_1\tau^2{(m-r+s+K)}.
            \end{align*}
            Therefore, we have
            \begin{equation}
                \sup_{j\in (\cM\cup S\cup\cK)^c}
                \vert R_j \vert
                \leq
                c_1\tau^2{(m-r+s+K)}.
                \label{eq:step2-2}
            \end{equation}

            For the second term of the RHS of inequality \eqref{eq:step2-Taylor}, 
            we can prove using \textcolor{black}{2nd and 4th conditions of (A1)} that
            \begin{equation*}
                \left\Vert
                    \frac{1}{n}
                    \sum_{k=1}^{K}w_k 
                            \bX_{(\cM\cup S\cup\cK)^c}^T
                        \hat{\Sigma}(\bm{X}^k{\cB^*})\bm{X}_{\cM\cup S\cup\cK}^k
                \right\Vert_{2,\infty}=O(1).
            \end{equation*}

            Therefore, under the same event \eqref{eq:step2-1}, we have
            \begin{align}
                &~
                \left\Vert
                    \sum_{k=1}^{K}w_k 
                            \bX_{(\cM\cup S\cup\cK)^c}^T
                        \hat{\Sigma}(\bm{X}^k{\cB^*})\bm{X}^k(\hat{\cB} - \cB^*) 
                \right\Vert_{\infty}
                \nonumber
                \\
                \leq &~
                \left\Vert
                    \sum_{k=1}^{K}w_k 
                            \bX_{(\cM\cup S\cup\cK)^c}^T
                        \hat{\Sigma}(\bm{X}^k{\cB^*})\bm{X}_{\cM\cup S\cup\cK}^k
                \right\Vert_{2,\infty}
                \cdot
                \left\Vert
                    \hat{\cB}_{\cM\cup S\cup\cK} - \cB^*_{\cM\cup S\cup\cK}
                \right\Vert_2
                \nonumber
                \\
                \leq &~
                c_2n\cdot\tau\sqrt{\frac{m-r+s+K}{n}}
                \nonumber
                \\
                = &~
                c_2\tau\sqrt{(m-r+s+K)n}.
                \label{eq:step2-3}
            \end{align}
            For the first term in the RHS of inequality \eqref{eq:step2-Taylor}, by Hoeffding's inequality, given some constant $\gamma>0$, $\forall j \in(\cM\cup S\cup\cK)^c$, 
            \begin{align*}
                \mathbb{P}\left(
                    \left\vert
                        \sum_{k=1}^{K}w_k 
                            \bm{x}_{(j)}^T
                        \bm{H}(\bm{X}^k{\cB^*})
                        \left\{
                            \bm{Y}^k - \bm{\mu}(\bm{X}^k{\cB^*})
                        \right\}
                    \right\vert
                    >
                    \gamma\sqrt{n\log p}
                \right)
                \\
                \leq 
                2\exp\left(
                -
                \frac
                    {2\gamma^2n\log p}{
                    \sum_{i=1}^{n}x_{ij}^2
                    [\sum_{k=1}^{K}w_k
                    h^{\prime}((\bx_{i}^{k})^T\cB^*)]^2
                    }
                \right).
            \end{align*}
            By \textcolor{black}{the 2nd and 3rd conditions of (A1)} and $\vert h^{\prime\prime}(\cdot)\vert\leq 1$, we have
            \begin{equation*}
                {\sum_{i=1}^{n}
                    \left[
                        \sum_{k=1}^{K}w_k
                        \vert
                            h^{\prime}((\bx_{i}^{k})^T\cB^*)\cdot x_{ij}
                        \vert
                    \right]^2}
                =O(n).
            \end{equation*}
            Therefore, combined with the previous inequality, we have
            \begin{align*}
                \mathbb{P}\left(
                    \left\vert
                        \sum_{k=1}^{K}w_k 
                            \bm{x}_{(j)}^T
                        \bm{H}(\bm{X}^k{\cB^*})
                        \left\{
                            \bm{Y}^k - \bm{\mu}(\bm{X}^k{\cB^*})
                        \right\}
                    \right\vert
                    >
                    \gamma\sqrt{n\log p}
                \right)
                \leq
                2\exp\left(
                -\frac{2\gamma^2\log p}
                {M}
                \right)
            \end{align*}
            for some constant $M$ and for $n$ sufficiently large. Thus, by Bonferroni's inequality, we have
            \begin{align*}
                &~
                \mathbb{P}\left(
                    \left\Vert
                        \sum_{k=1}^{K}w_k 
                            \bX_{(\cM\cup S\cup\cK)^c}^T
                        \bm{H}(\bm{X}^k{\cB^*})
                        \left\{
                            \bm{Y}^k - \bm{\mu}(\bm{X}^k{\cB^8})
                        \right\}
                    \right\Vert_\infty
                    >
                    \gamma\sqrt{n\log p}
                \right)
                \\
                \leq &~
                2\exp\left(
                -\frac{2\gamma^2\log p}{M}
                +
                \log(p-m-s)
                \right)
                \\
                \leq &~
                2\exp\left(
                -\frac{\gamma^2\log p}{M}
                \right),
            \end{align*}
            as long as $\gamma^2\geq M$.
            
            Combining with inequalities \eqref{eq:step2-2}, and \eqref{eq:step2-3}, we have under event \eqref{eq:step2-1} and 
            \begin{equation*}
                \left\{
                    \left\Vert
                        \sum_{k=1}^{K}w_k 
                            \bX_{(\cM\cup S\cup\cK)^c}^T
                        \bm{H}(\bm{X}^k{\cB^*})
                        \left\{
                            \bm{Y}^k - \bm{\mu}(\bm{X}^k{\cB^*})
                        \right\}
                    \right\Vert_\infty
                    \leq
                    \gamma\sqrt{n\log p}
                \right\},
            \end{equation*}
            we have
            \begin{align*}
                &~
                \left\Vert
                    \sum_{k=1}^{K}w_k 
                            (\bm{X}^k_{(\cM\cup S\cup\cK)^c})^T
                        \bm{H}(\bm{X}^k\hat{\cB})
                        \left\{
                            \bm{Y}^k - \bm{\mu}(\bm{X}^k\hat{\cB})
                        \right\}
                \right\Vert_{\infty}
                \\
                \leq
                &~
                \gamma\sqrt{n\log p} 
                +
                c_2\tau\sqrt{(m-r+s+K)n}
                +
                c_1\tau^2{(m-r+s+K)}
                \\
                < 
                &~
                n\lambda_{n,0}{\rho}^{\prime}(0+),
            \end{align*}
            as long as we have  $\gamma < \frac{{\rho}^{\prime}(0+)}{3}\lambda_{n,0}\sqrt{{n}/{\log p}}$, $\tau < \frac{{\rho}^{\prime}(0+)}{3c_2}\lambda_{n,0}\sqrt{{n}/{(m-r+s+K)}}$, and $$\tau^2 < \frac{{\rho}^{\prime}(0+)}{3c_1}{\frac{\lambda_{n,0}n}{(m-r+s+K)}}.$$ By \textcolor{black}{the 3rd condition of (A2)} and \textcolor{black}{$s+m
            +K=o(\sqrt{n})$}, we know that $\lambda_{n,0}\sqrt{{n}/{\log p}}$, $\lambda_{n,0}\sqrt{{n}/{(m-r+s+K)}}$, and $\lambda_{n,0}{{n}/{(m-r+s+K)}}$ could be arbitrarily large if $n$ is sufficiently large. Therefore, by choosing a diverging sequence of $\{\tau_n\}$ and a sufficiently large positive constant $\gamma$ such that all the previous inequality constraints on $\gamma$ and $\tau$ hold, we could have
            \begin{align*}
                & ~
                \mathbb{P}\left(
                    \left\Vert
                    \sum_{k=1}^{K}w_k 
                            (\bm{X}^k_{(\cM\cup S\cup\cK)^c})^T
                        \bm{H}(\bm{X}^k\hat{\cB})
                        \left\{
                            \bm{Y}^k - \bm{\mu}(\bm{X}^k\hat{\cB})
                        \right\}
                    \right\Vert_{\infty}
                    < 
                    n\lambda_{n,0}{\rho}^{\prime}(0+).
                \right)
                \\
                \geq 
                & ~
                1-c\frac{1}{\tau^2}
                \exp\left(
                -\frac{\tilde{c}}{2}
                {(s+m+K)\log(n)}
                \right)
                -
                2\exp\left(
                -\frac{\gamma^2\log p}{M}
                \right)
                \rightarrow 1,
            \end{align*}
            as desired.

        \paragraph{Step 3: Derive the asymptotic distribution of $\hat{\cB}_0$}
            
            We have shown in Step 1 and Step 2 that for any $\tau\ll\sqrt{\log n}$ and sufficiently large $n$,
            \begin{align}
                    \Vert
                        \hat{\cB}_{0,\cM\cup S\cup\cK} 
                        - 
                        \cB^*_{\cM\cup S\cup\cK} 
                    \Vert_2
                    =
                    O_p\left(
                        \sqrt{\frac{m-r+s+K}{n}}
                        \right),
                    \label{eq:step1-result1}
                \\
                \mathbb{P}\left(
                    \hat{\cB}_{0,(\cM\cup S\cup\cK)^c}=\bm{0}
                \right)
                \rightarrow 1,
                \label{eq:step1-result2}
                \\
                \bC\hat{\cB}_{0,\cM}=\bt.
                \label{eq:step1-result3}
            \end{align}
            Finally, we show that 
            \begin{align*}
                \sqrt{n}
                \begin{pmatrix}
                    \hat{\cB}_{0,\cM} - \cB^*_{\cM} \\
                    \hat{\cB}_{0,S\cup\cK} - \cB^*_{S\cup\cK}
                \end{pmatrix}
                &=
                \frac{1}{\sqrt{n}}
                \bm{K}_n^{-1/2}(\bm{I} - \bm{P}_n)\bm{K}_n^{-1/2}
                \sum_{k=1}^{K}w_k
                    \begin{pmatrix}
                            \bm{X}_{\cM}^T
                            \\
                            (\bm{X}^k_{S\cup \cK})^T
                        \end{pmatrix}
                        \bm{H}(\bm{X}^k\cB^*)
                        \left\{
                            \bm{Y}^k - \bm{\mu}(\bm{X}^k\cB^*)
                        \right\}
                \\
                & - 
                \sqrt{n}
                \bm{K}_n^{-1/2} \bm{P}_n\bm{K}_n^{-1/2}
                \begin{pmatrix}
                    \bC^T(\bC\Omega_{mm}\bC^T)^{-1}\bm{h}_n
                    \\
                    \bm{0}_{s+K}
                \end{pmatrix}
                +
                o_p(1).
            \end{align*}
            
            From the results of Step 1 and Step 2, we know that with probability tending to 1, $\hat{\cB}_0$ is the local maximizer of $Q_n(\cB)$ with the constraints $\bC\cB_{\cM}=\bt$, and $\cB_{(\cM\cup S\cup \cK)^c}=\bm{0}$. This implies that there exists some vector $\bm{v}\in\mathbb{R}^{r}$ such that
            \begin{equation}
                \sum_{k=1}^{K}w_k
                    \begin{pmatrix}
                            \bm{X}_{\cM}^T
                            \\
                            \bm{X}_{S}^T
                            \\
                            (\bm{X}^k_{\cK})^T
                        \end{pmatrix}
                        \bm{H}(\bm{X}^k{\hat{\cB}_0})
                        \left\{
                            \bm{Y}^k - \bm{\mu}(\bm{X}^k{\hat{\cB}_0})
                        \right\}
                =
                \begin{pmatrix}
                    \sqrt{n}\bC^T \bm{v}
                            \\
                            n\lambda_{n,0}\bar{\rho}({\hat{\cB}_{0,S}})
                            \\
                            \bm{0}_{K}
                \end{pmatrix}
                \label{eq:step3-1}
            \end{equation}
            By Taylor's expansion, the LHS of equation \eqref{eq:step3-1} is equal to
            \begin{align}
                &~
                \sum_{k=1}^{K}w_k
                    \begin{pmatrix}
                            \bm{X}_{\cM}^T
                            \\
                            (\bm{X}^k_{S\cup \cK})^T
                        \end{pmatrix}
                        \bm{H}(\bm{X}^k{{\cB^*}})
                        \left\{
                            \bm{Y}^k - \bm{\mu}(\bm{X}^k{{\cB^*}})
                        \right\}
                \nonumber
                \\
                -
                &~
                \sum_{k=1}^{K}w_k 
                        \begin{pmatrix}
                            \bm{X}_{\cM}^T
                            \\
                            (\bm{X}^k_{S\cup \cK})^T
                        \end{pmatrix}
                        {\Sigma}(\bm{X}^k{\cB^*})
                        \begin{pmatrix}
                            \bm{X}_{\cM}^T
                            \\
                            (\bm{X}^k_{S\cup \cK})^T
                        \end{pmatrix}^T
                        \begin{pmatrix}
                            \hat{\cB}_{0,\cM} - \cB^*_{\cM} \\
                            \hat{\cB}_{0,S\cup\cK} - \cB^*_{S\cup\cK}
                        \end{pmatrix}
                \nonumber
                \\
                + 
                & ~
                \sum_{k=1}^{K}w_k 
                        \begin{pmatrix}
                            \bm{X}_{\cM}^T
                            \\
                            (\bm{X}^k_{S\cup \cK})^T
                        \end{pmatrix}
                        \left(
                        {\Sigma}(\bm{X}^k{\cB^*}) - \hat{\Sigma}(\bm{X}^k{\cB^*})
                        \right)
                        \begin{pmatrix}
                            \bm{X}_{\cM}^T
                            \\
                            (\bm{X}^k_{S\cup \cK})^T
                        \end{pmatrix}^T
                        \begin{pmatrix}
                            \hat{\cB}_{0,\cM} - \cB^*_{\cM} \\
                            \hat{\cB}_{0,S\cup\cK} - \cB^*_{S\cup\cK}
                        \end{pmatrix}
                +
                \bm{R},
                \label{eq:step3-2}
            \end{align}
            where 
            \begin{equation*}
                R_j = \sum_{k=1}^{K}w_k 
                    \hat{\Delta}_{0, \cM\cup S\cup\cK}^T
                    \begin{pmatrix}
                            \bm{X}_{\cM}^T
                            \\
                            (\bm{X}^k_{S\cup \cK})^T
                        \end{pmatrix}
                    \text{diag}\left\{
                        \bx^k_{(j)}\circ\hat{\Sigma}^\prime(\bX^K\bar{\cB}_j)
                    \right\}
                    \begin{pmatrix}
                            \bm{X}_{\cM}^T
                            \\
                            (\bm{X}^k_{S\cup \cK})^T
                        \end{pmatrix}^T
                    \hat{\Delta}_{0, \cM\cup S\cup\cK},
            \end{equation*}
            for some $\bar{\cB}_j$ lying on the line segment joining $\cB^*$ and $\hat{\cB}_0$, and $\hat{\Delta}_{0, \cM\cup S\cup\cK} =\hat{\cB}_{0,\cM\cup S\cup\cK} - \cB^*_{\cM\cup S\cup\cK}$. For $j\in\cM\cup S$,
            \begin{align*}
                \vert R_j\vert
                & \leq
                \sum_{k=1}^{K}w_k 
                    \hat{\Delta}_{0, \cM\cup S\cup\cK}^T
                    \begin{pmatrix}
                            \bm{X}_{\cM}^T
                            \\
                            (\bm{X}^k_{S\cup \cK})^T
                        \end{pmatrix}
                    \text{diag}\left\{
                        \vert \bx_{(j)} \vert 
                        \circ
                        \vert \hat{\Sigma}^\prime(\bX^K\bar{\cB}_j)\vert
                    \right\}
                    \begin{pmatrix}
                            \bm{X}_{\cM}^T
                            \\
                            (\bm{X}^k_{S\cup \cK})^T
                        \end{pmatrix}^T
                    \hat{\Delta}_{0, \cM\cup S\cup\cK},
                \\
                &\leq
                n
                \lambda_{\max}
                \left(
                    \frac{1}{n}
                    \sum_{k=1}^{K}w_k 
                    \begin{pmatrix}
                            \bm{X}_{\cM}^T
                            \\
                            (\bm{X}^k_{S\cup \cK})^T
                        \end{pmatrix}
                    \text{diag}\left\{
                        \vert\bx_{(j)}\vert
                    \right\}
                    \begin{pmatrix}
                            \bm{X}_{\cM}^T
                            \\
                            (\bm{X}^k_{S\cup \cK})^T
                        \end{pmatrix}^T
                \right)
                \cdot
                \Vert
                    \hat{\cB}_{0,\cM\cup S\cup\cK} - \cB^*_{\cM\cup S\cup\cK}
                \Vert_2^2
                \\
                &=
                O(n)\cdot O_p\left(\frac{m-r+s+K}{n}\right),
            \end{align*}
            where the last equality is by \textcolor{black}{the 5th condition of (A1)} and result \eqref{eq:step1-result1} proved in step 1 and 2. Similarly, for $j\in\cK$, we have $\bx^{k}_{(j)}=-\mathbf{1}_{n}\mathbf{I}\{k=j\}$. By \textcolor{black}{the 4th condition of (A1)}, we have
            \begin{align*}
                \vert R_j\vert
                &  \leq 
                w_j
                \lambda_{\max}
                \left(
                    \begin{pmatrix}
                            \bm{X}_{\cM}^T
                            \\
                            (\bm{X}^j_{S\cup \cK})^T
                        \end{pmatrix}
                    \bm{I}_n
                    \begin{pmatrix}
                            \bm{X}_{\cM}^T
                            \\
                            (\bm{X}^j_{S\cup \cK})^T
                        \end{pmatrix}^T
                \right)
                \cdot
                \Vert
                    \hat{\cB}_{0,\cM\cup S\cup\cK} - \cB^*_{\cM\cup S\cup\cK}
                \Vert_2^2
                \\
                &=
                O(n)\cdot O_p\left(\frac{m-r+s+K}{n}\right).
            \end{align*}
            Therefore, we have
            \begin{equation}
                \Vert \bm{R}\Vert_2\leq \sqrt{s+m+K}\Vert\bm{R}\Vert_\infty=O_p\left(\sqrt{(s+m+K)^3}\right)=o_p(\sqrt{n}),
                \label{eq:step3-3}
            \end{equation}
            where the last equality is because we assume that \textcolor{black}{$s+m+K=o(n^{1/3})$}.
            
            Next we show that 
            \begin{equation}
                \left\Vert
                \sum_{k=1}^{K}w_k 
                        \begin{pmatrix}
                            \bm{X}_{\cM}^T
                            \\
                            (\bm{X}^k_{S\cup \cK})^T
                        \end{pmatrix}
                        \left(
                        {\Sigma}(\bm{X}^k{\cB^*}) - \hat{\Sigma}(\bm{X}^k{\cB^*})
                        \right)
                        \begin{pmatrix}
                            \bm{X}_{\cM}^T
                            \\
                            (\bm{X}^k_{S\cup \cK})^T
                        \end{pmatrix}^T
                        \begin{pmatrix}
                            \hat{\cB}_{0,\cM} - \cB^*_{\cM} \\
                            \hat{\cB}_{0,S\cup\cK} - \cB^*_{S\cup\cK}
                        \end{pmatrix}
                \right\Vert_2
                = o_p(\sqrt{n}).
                \label{eq:step3-4}
            \end{equation}
            Since $\Vert
                    \hat{\cB}_{0,\cM\cup S\cup\cK} - \cB^*_{\cM\cup S\cup\cK}
                \Vert_2 = O_p(\sqrt{(m-r+s+K)/n})$, it suffices to show that
            \begin{equation*}
                \left\Vert
                \sum_{k=1}^{K}w_k 
                        \begin{pmatrix}
                            \bm{X}_{\cM}^T
                            \\
                            (\bm{X}^k_{S\cup \cK})^T
                        \end{pmatrix}
                        \left(
                        {\Sigma}(\bm{X}^k{\cB^*}) - \hat{\Sigma}(\bm{X}^k{\cB^*})
                        \right)
                        \begin{pmatrix}
                            \bm{X}_{\cM}^T
                            \\
                            (\bm{X}^k_{S\cup \cK})^T
                        \end{pmatrix}^T
                \right\Vert_2
                = o_p\left(\frac{n}{\sqrt{s+m+K}}\right).
            \end{equation*}
            By the matrix Chernoff bound \eqref{eq:step1-matrix_Chernoff} in step 1, we have
            \begin{align*}
                &~
                \mathbb{P}\left(
                    \left\Vert
                \sum_{k=1}^{K}w_k 
                        \begin{pmatrix}
                            \bm{X}_{\cM}^T
                            \\
                            (\bm{X}^k_{S\cup \cK})^T
                        \end{pmatrix}
                        \left(
                        {\Sigma}(\bm{X}^k{\cB^*}) - \hat{\Sigma}(\bm{X}^k{\cB^*})
                        \right)
                        \begin{pmatrix}
                            \bm{X}_{\cM}^T
                            \\
                            (\bm{X}^k_{S\cup \cK})^T
                        \end{pmatrix}^T
                \right\Vert_2
                >
                \tau
                \frac{n}{\sqrt{s+m+K}}
                \cdot
                \sqrt{\frac{{\log(s+m+K)}}{{\log n}}}
                \right)
                \\
                \leq
                &~
                \exp\left(
                -
                \frac{
                    \tau^2
                    \frac{n^2}
                    {s+m+K}\frac{\log(s+m+K)}{{\log n}}
                     }
                {
                 {\frac{c_4n}{(s+m+K)\log n}}
                \left(
                    c_5n+ \frac{\tau}{3}\frac{n}{\sqrt{s+m+K}}
                \sqrt{\frac{{\log(s+m+K)}}{{\log n}}}
                \right)
                }
                +\log(s+m+K)
                \right)
                \\
                =
                &~
                \exp\left(
                -
                \frac{
                    \tau^2
                    \log(s+m+K)
                     }
                {
                c_4
                \left(
                    c_5+ \frac{\tau}{3}
                \sqrt{\frac{{\log(s+m+K)}}{{(s+m+K)\log n}}}
                \right)
                }
                +\log(s+m+K)
                \right)
                \\
                \leq
                &~
                \exp\left(
                -
                \frac{
                    \tau^2
                    \log(s+m+K)
                     }
                {
                c_4
                \left(
                    c_5+ \frac{\tau}{3}c_6
                \right)
                }
                +\log(s+m+K)
                \right),
            \end{align*}
            where the last inequality is because $\sqrt{\frac{{\log(s+m+K)}}{{(s+m+K)\log n}}}=o(1)$ and $c_6>0$ is some constant. The RHS of the above inequality could be arbitrarily small as long as $\tau$ is sufficiently large and $n$ is sufficiently large. Therefore, we proved that 
            \begin{align}
                \left\Vert
                \sum_{k=1}^{K}w_k 
                        \begin{pmatrix}
                            \bm{X}_{\cM}^T
                            \\
                            (\bm{X}^k_{S\cup \cK})^T
                        \end{pmatrix}
                        \left(
                        {\Sigma}(\bm{X}^k{\cB^*}) - \hat{\Sigma}(\bm{X}^k{\cB^*})
                        \right)
                        \begin{pmatrix}
                            \bm{X}_{\cM}^T
                            \\
                            (\bm{X}^k_{S\cup \cK})^T
                        \end{pmatrix}^T
                \right\Vert_2
                & =
                O_p\left(
                    \frac{n}{\sqrt{s+m+K}}
                    \cdot
                    \sqrt{\frac{{\log(s+m+K)}}{{\log n}}}
                \right)
                \nonumber
                \\
                &=
                o_p\left(
                    \frac{n}{\sqrt{s+m+K}}
                \right),
                \label{eq:step3-4.5}
            \end{align}
            as desired.
            
            Combining equations \eqref{eq:step3-2}, \eqref{eq:step3-3}, and \eqref{eq:step3-4}, we have
            \begin{align}
                & ~ 
                \sum_{k=1}^{K}w_k
                    \begin{pmatrix}
                            \bm{X}_{\cM}^T
                            \\
                            (\bm{X}^k_{S\cup \cK})^T
                        \end{pmatrix}
                        \bm{H}(\bm{X}^k{\hat{\cB}_0})
                        \left\{
                            \bm{Y}^k - \bm{\mu}(\bm{X}^k{\hat{\cB}_0})
                        \right\}
                \nonumber
                \\
                =
                &~
                \sum_{k=1}^{K}w_k
                    \begin{pmatrix}
                            \bm{X}_{\cM}^T
                            \\
                            (\bm{X}^k_{S\cup \cK})^T
                        \end{pmatrix}
                        \bm{H}(\bm{X}^k{{\cB^*}})
                        \left\{
                            \bm{Y}^k - \bm{\mu}(\bm{X}^k{{\cB^*}})
                        \right\}
                \nonumber
                \\
                -
                &~
                \sum_{k=1}^{K}w_k 
                        \begin{pmatrix}
                            \bm{X}_{\cM}^T
                            \\
                            (\bm{X}^k_{S\cup \cK})^T
                        \end{pmatrix}
                        {\Sigma}(\bm{X}^k{\cB^*})
                        \begin{pmatrix}
                            \bm{X}_{\cM}^T
                            \\
                            (\bm{X}^k_{S\cup \cK})^T
                        \end{pmatrix}^T
                        \begin{pmatrix}
                            \hat{\cB}_{0,\cM} - \cB^*_{\cM} \\
                            \hat{\cB}_{0,S\cup\cK} - \cB^*_{S\cup\cK}
                        \end{pmatrix}
                + o_p(\sqrt{n}).
                \label{eq:step3-5}
            \end{align}
            Combining equations \eqref{eq:step3-1} and \eqref{eq:step3-5}, we have
            \begin{align}
                \begin{pmatrix}
                    \sqrt{n}\bC^T \bm{v}
                            \\
                            n\lambda_{n,0}\bar{\rho}({\hat{\cB}_{0,S}})
                            \\
                            \bm{0}_{K}
                \end{pmatrix}
                =
                &~
                \sum_{k=1}^{K}w_k
                    \begin{pmatrix}
                            \bm{X}_{\cM}^T
                            \\
                            (\bm{X}^k_{S\cup \cK})^T
                        \end{pmatrix}
                        \bm{H}(\bm{X}^k{{\cB^*}})
                        \left\{
                            \bm{Y}^k - \bm{\mu}(\bm{X}^k{{\cB^*}})
                        \right\}
                \nonumber
                \\
                -
                &~
                \sum_{k=1}^{K}w_k 
                        \begin{pmatrix}
                            \bm{X}_{\cM}^T
                            \\
                            (\bm{X}^k_{S\cup \cK})^T
                        \end{pmatrix}
                        {\Sigma}(\bm{X}^k{\cB^*})
                        \begin{pmatrix}
                            \bm{X}_{\cM}^T
                            \\
                            (\bm{X}^k_{S\cup \cK})^T
                        \end{pmatrix}^T
                        \begin{pmatrix}
                            \hat{\cB}_{0,\cM} - \cB^*_{\cM} \\
                            \hat{\cB}_{0,S\cup\cK} - \cB^*_{S\cup\cK}
                        \end{pmatrix}
                + o_p(\sqrt{n}).
                \label{eq:step3-6}
            \end{align}
            Define that 
            \begin{equation*}
                \bm{K}_n = 
                \frac{1}{n}
                \sum_{k=1}^{K}w_k 
                        \begin{pmatrix}
                            \bm{X}_{\cM}^T
                            \\
                            (\bm{X}^k_{S\cup \cK})^T
                        \end{pmatrix}
                        {\Sigma}(\bm{X}^k{\cB^*})
                        \begin{pmatrix}
                            \bm{X}_{\cM}^T
                            \\
                            (\bm{X}^k_{S\cup \cK})^T
                        \end{pmatrix}^T
            \end{equation*}
            Under \textcolor{black}{the 6th condition of (A1)}, we have $\lim\inf_n\lambda_{\min}(\bm{K}_n)>0$. This implies $\Vert \bm{K}_n^{-1}\bm{R}\Vert_2=o_p(\sqrt{n})$. Combining this together with equation \eqref{eq:step3-6}, we have
            \begin{align}
                \sqrt{n}
                \begin{pmatrix}
                            \hat{\cB}_{0,\cM} - \cB^*_{\cM} \\
                            \hat{\cB}_{0,S\cup\cK} - \cB^*_{S\cup\cK}
                        \end{pmatrix}
                &=
                \frac{1}{\sqrt{n}}
                \bm{K}_n^{-1}
                \sum_{k=1}^{K}w_k
                    \begin{pmatrix}
                            \bm{X}_{\cM}^T
                            \\
                            (\bm{X}^k_{S\cup \cK})^T
                        \end{pmatrix}
                        \bm{H}(\bm{X}^k{{\cB^*}})
                        \left\{
                            \bm{Y}^k - \bm{\mu}(\bm{X}^k{{\cB^*}})
                        \right\}
                \nonumber
                \\
                &
                - \bm{K}_n^{-1}
                \begin{pmatrix}
                    \bC^T \bm{v}
                            \\
                            \sqrt{n}
                            \lambda_{n,0}\bar{\rho}({\hat{\cB}_{0,S}})
                            \\
                            \bm{0}_{K}
                \end{pmatrix}
                +
                o_p(1)
                \label{eq:step3-7}
            \end{align}
            Since $\Vert
                        \hat{\cB}_{0,\cM\cup S\cup\cK} 
                        - 
                        \cB^*_{\cM\cup S\cup\cK} 
                    \Vert_2
                    =
                    O_p\left(
                        \sqrt{\frac{m-r+s+K}{n}}
                        \right)$ 
                        and $d_n \gg \sqrt{(s+m+K)/n}$ by \textcolor{black}{the 3rd condition of (A2)}, 
                        we have $
                        \Vert
                        \hat{\cB}_{0,\cM\cup S\cup\cK} 
                        - 
                        \cB^*_{\cM\cup S\cup\cK} 
                        \Vert_{\infty}
                        \leq 
                        \Vert
                        \hat{\cB}_{0,\cM\cup S\cup\cK} 
                        - 
                        \cB^*_{\cM\cup S\cup\cK} 
                        \Vert_{2}
                        = o_p(d_n)$.
            Thus, with probability tending to $1$,
            \begin{equation*}
                \min_{j\in{S}}\vert\hat{\cB}_{0,j}\vert 
                    \geq
                    \min_{j\in{S}}\vert{\cB^*_{j}}\vert
                    - 
                    \Vert
                        \hat{\cB}_{0,\cM\cup S\cup\cK} 
                        - 
                        \cB^*_{\cM\cup S\cup\cK} 
                        \Vert_{\infty}
                    >
                    d_n.
            \end{equation*}
            By the monotonicity of $\rho^{\prime}$ and \textcolor{black}{the 2nd condition of (A2)}, we obtain
            \begin{equation*}
                \Vert 
                    \sqrt{n}
                            \lambda_{n,0}\bar{\rho}({\hat{\cB}_{0,S}})
                \Vert_2
                \leq
                \sqrt{s}
                \sqrt{n}\lambda_{n,0}
                \vert \rho^{\prime}(d_n)\vert 
                =
                o(1),
            \end{equation*}
            with probability tending to 1. This together with equation \eqref{eq:step3-7} suggests that
            \begin{align}
                \sqrt{n}
                \begin{pmatrix}
                            \hat{\cB}_{0,\cM} - \cB^*_{\cM} \\
                            \hat{\cB}_{0,S\cup\cK} - \cB^*_{S\cup\cK}
                        \end{pmatrix}
                &=
                \frac{1}{\sqrt{n}}
                \bm{K}_n^{-1}
                \sum_{k=1}^{K}w_k
                    \begin{pmatrix}
                            \bm{X}_{\cM}^T
                            \\
                            (\bm{X}^k_{S\cup \cK})^T
                        \end{pmatrix}
                        \bm{H}(\bm{X}^k{{\cB^*}})
                        \left\{
                            \bm{Y}^k - \bm{\mu}(\bm{X}^k{{\cB^*}})
                        \right\}
                \nonumber
                \\
                &-
                \bm{K}_n^{-1}
                \begin{pmatrix}
                    \bC^T 
                            \\
                            \bm{0}_{r\times (s+K)}^T
                \end{pmatrix}
                \bm{v}
                +
                o_p(1).
                \label{eq:step3-8}
            \end{align}
            
            Since $\bm{C}\cB^*_{\cM}-\bm{t}=\bm{h}_n$, we have $\bm{C}(\hat{\cB}_{0,\cM} - \cB^*_{\cM})=-\bm{h}_n$. Therefore, multiplying both sides of equation \eqref{eq:step3-8} by 
            $
            \begin{pmatrix}
                    \bC 
                            &
                            \bm{0}_{r\times (s+K)}
                \end{pmatrix}
            $,
            we have
            \begin{align}
                &~
                \begin{pmatrix}
                    \bC^T 
                            \\
                            \bm{0}_{r\times (s+K)}^T
                \end{pmatrix}^T
                \bm{K}_n^{-1}
                \begin{pmatrix}
                    \bC^T 
                            \\
                            \bm{0}_{r\times (s+K)}^T
                \end{pmatrix}
                \bm{v}
                -
                \sqrt{n}\bm{h}_n
                \nonumber
                \\
                =
                &~
                \frac{1}{\sqrt{n}}
                \begin{pmatrix}
                    \bC^T 
                            \\
                            \bm{0}_{r\times (s+K)}^T
                \end{pmatrix}^T
                \bm{K}_n^{-1}
                \sum_{k=1}^{K}w_k
                    \begin{pmatrix}
                            \bm{X}_{\cM}^T
                            \\
                            (\bm{X}^k_{S\cup \cK})^T
                        \end{pmatrix}
                        \bm{H}(\bm{X}^k{{\cB^*}})
                        \left\{
                            \bm{Y}^k - \bm{\mu}(\bm{X}^k{{\cB^*}})
                        \right\}
                +
                \bm{C}\bm{R}^*
                \label{eq:step3-9}
            \end{align}
            for some $m$-dimensional vector $\bm{R}^*$ such that $\Vert\bm{R}^*\Vert_2=o_p(1)$. Define $\Omega_{n}=\bm{K}_n^{-1}$ and $\Omega_{mm}$ is the $m\times m$ top-left block of $\Omega_{n}$. By multiplying ${(\bm{C}\Omega_{mm}\bm{C}^T)^{-1}}$ on both sides of equation \eqref{eq:step3-9}, we have
            \begin{align*}
                \bm{v}
                & =
                \frac{1}{\sqrt{n}}
                {(\bm{C}\Omega_{mm}\bm{C}^T)^{-1}}
                \begin{pmatrix}
                    \bC^T 
                            \\
                            \bm{0}_{r\times (s+K)}^T
                \end{pmatrix}^T
                \bm{K}_n^{-1}
                \sum_{k=1}^{K}w_k
                    \begin{pmatrix}
                            \bm{X}_{\cM}^T
                            \\
                            (\bm{X}^k_{S\cup \cK})^T
                        \end{pmatrix}
                        \bm{H}(\bm{X}^k{{\cB^*}})
                        \left\{
                            \bm{Y}^k - \bm{\mu}(\bm{X}^k{{\cB^*}})
                        \right\}
                \\
                & +
                {(\bm{C}\Omega_{mm}\bm{C}^T)^{-1}}
                \bm{C}\bm{R}^*
                +
                \sqrt{n}
                {(\bm{C}\Omega_{mm}\bm{C}^T)^{-1}}
                \bm{h}_n.
            \end{align*}
            This together with equation \eqref{eq:step3-8} yields
            \begin{align}
                &~
                \sqrt{n}
                \begin{pmatrix}
                            \hat{\cB}_{0,\cM} - \cB^*_{\cM} \\
                            \hat{\cB}_{0,S\cup\cK} - \cB^*_{S\cup\cK}
                        \end{pmatrix}
                =
                \frac{1}{\sqrt{n}}
                \bm{K}_n^{-1}
                \sum_{k=1}^{K}w_k
                    \begin{pmatrix}
                            \bm{X}_{\cM}^T
                            \\
                            (\bm{X}^k_{S\cup \cK})^T
                        \end{pmatrix}
                        \bm{H}(\bm{X}^k{{\cB^*}})
                        \left\{
                            \bm{Y}^k - \bm{\mu}(\bm{X}^k{{\cB^*}})
                        \right\}
                \nonumber
                \\
                -
                & ~
                \frac{1}{\sqrt{n}}
                \bm{K}_n^{-1}
                \begin{pmatrix}
                    \bC^T 
                            \\
                            \bm{0}_{r\times (s+K)}^T
                \end{pmatrix}
                {(\bm{C}\Omega_{mm}\bm{C}^T)^{-1}}
                \begin{pmatrix}
                    \bC^T 
                            \\
                            \bm{0}_{r\times (s+K)}^T
                \end{pmatrix}^T
                \bm{K}_n^{-1}
                \sum_{k=1}^{K}w_k
                    \begin{pmatrix}
                            \bm{X}_{\cM}^T
                            \\
                            (\bm{X}^k_{S\cup \cK})^T
                        \end{pmatrix}
                        \bm{H}(\bm{X}^k{{\cB^*}})
                        \left\{
                            \bm{Y}^k - \bm{\mu}(\bm{X}^k{{\cB^*}})
                        \right\}
                \nonumber
                \\
                -
                &~
                \sqrt{n}
                \bm{K}_n^{-1}
                \begin{pmatrix}
                    \bC^T 
                            \\
                            \bm{0}_{r\times (s+K)}^T
                \end{pmatrix}
                {(\bm{C}\Omega_{mm}\bm{C}^T)^{-1}}
                \bm{h}_n
                -
                \bm{K}_n^{-1}
                \begin{pmatrix}
                    \bC^T 
                            \\
                            \bm{0}_{r\times (s+K)}^T
                \end{pmatrix}
                {(\bm{C}\Omega_{mm}\bm{C}^T)^{-1}}
                \bm{C}\bm{R}^*
                +
                o_p(1)
                \nonumber
                \\
                =
                &~
                \frac{1}{\sqrt{n}}
                \bm{K}_n^{-1/2}
                (\bm{I}-\bm{P}_n)
                \bm{K}_n^{-1/2}
                \sum_{k=1}^{K}w_k
                    \begin{pmatrix}
                            \bm{X}_{\cM}^T
                            \\
                            (\bm{X}^k_{S\cup \cK})^T
                        \end{pmatrix}
                        \bm{H}(\bm{X}^k{{\cB^*}})
                        \left\{
                            \bm{Y}^k - \bm{\mu}(\bm{X}^k{{\cB^*}})
                        \right\}
                \nonumber
                \\
                -
                &~
                \sqrt{n}
                \bm{K}_n^{-1}
                \begin{pmatrix}
                    \bC^T 
                            \\
                            \bm{0}_{r\times (s+K)}^T
                \end{pmatrix}
                {(\bm{C}\Omega_{mm}\bm{C}^T)^{-1}}
                \bm{h}_n
                -
                \bm{K}_n^{-1}
                \begin{pmatrix}
                    \bC^T 
                            \\
                            \bm{0}_{r\times (s+K)}^T
                \end{pmatrix}
                {(\bm{C}\Omega_{mm}\bm{C}^T)^{-1}}
                \bm{C}\bm{R}^*
                +
                o_p(1),
                \label{eq:step3-10}
            \end{align}
            where $\bm{P}_n$ is a $(m+s+K)\times (m+s+K)$ projection matrix of rank $r$.
            \begin{equation*}
                \bm{P}_n 
                = 
                \bm{K}_n^{-1/2}
                \begin{pmatrix}
                    \bC^T 
                            \\
                            \bm{0}_{r\times (s+K)}^T
                \end{pmatrix}
                \left(
                \begin{pmatrix}
                    \bC^T 
                            \\
                            \bm{0}_{r\times (s+K)}^T
                \end{pmatrix}^T
                \bm{K}_n^{-1}
                \begin{pmatrix}
                    \bC^T 
                            \\
                            \bm{0}_{r\times (s+K)}^T
                \end{pmatrix}
                \right)^{-1}
                \begin{pmatrix}
                    \bC^T 
                            \\
                            \bm{0}_{r\times (s+K)}^T
                \end{pmatrix}^T
                \bm{K}_n^{-1/2}.
            \end{equation*}
            
            Next, we show that
            \begin{equation}
                \bm{K}_n^{-1}
                    \begin{pmatrix}
                        \bC^T 
                            \\
                            \bm{0}_{r\times (s+K)}^T
                    \end{pmatrix}
                    {(\bm{C}\Omega_{mm}\bm{C}^T)^{-1}}
                    \bm{C}\bm{R}^*
                =
                o_p(1)
                \label{eq:step3-11}
            \end{equation}
            Under \textcolor{black}{the 4th condition in (A1)}, we have $\lambda_{\max}(\bm{K}_n)=O(1)$. This implies that $\lim\inf_n\lambda_{\min}(\bm{K}_n^{-1})>0$. Thus $\lim\inf_n\lambda_{\min}(\Omega_{mm})>0$. Similarly, by \textcolor{black}{the 6th condition in (A1)}, we have $\lim\inf_n\lambda_{\min}(\bm{K}_n)>0$, which implies that $\lambda_{\max}(\Omega_{mm})=O(1)$. Therefore, using Cauchy-Schwartz inequality, we have
            \begin{align*}
                &~
                \left\Vert
                    \bm{K}_n^{-1}
                    \begin{pmatrix}
                        \bC^T 
                            \\
                            \bm{0}_{r\times (s+K)}^T
                    \end{pmatrix}
                    {(\bm{C}\Omega_{mm}\bm{C}^T)^{-1}}
                    \bm{C}\bm{R}^*
                \right\Vert_2^2
                \\
                \leq
                &~
                \Vert
                \bm{K}_n^{-1/2}
                \Vert_2^2
                \cdot
                \left\Vert
                    \bm{K}_n^{-1/2}
                    \begin{pmatrix}
                        \bC^T 
                            \\
                            \bm{0}_{r\times (s+K)}^T
                    \end{pmatrix}
                    {(\bm{C}\Omega_{mm}\bm{C}^T)^{-1}}
                    \bm{C}
                \right\Vert_2^2
                \cdot
                \Vert
                    \bm{R}^*
                \Vert_2^2
                \\
                =
                &~
                \lambda_{\max}
                (
                \bm{K}_n^{-1}
                )
                \cdot
                \left\Vert
                    \bm{C}^T
                    {(\bm{C}\Omega_{mm}\bm{C}^T)^{-1}}
                    \begin{pmatrix}
                        \bC^T 
                            \\
                            \bm{0}_{r\times (s+K)}^T
                    \end{pmatrix}^T
                    \bm{K}_n^{-1}
                    \begin{pmatrix}
                        \bC^T 
                            \\
                            \bm{0}_{r\times (s+K)}^T
                    \end{pmatrix}
                    {(\bm{C}\Omega_{mm}\bm{C}^T)^{-1}}
                    \bm{C}
                \right\Vert_2
                \cdot
                \Vert
                    \bm{R}^*
                \Vert_2^2
                \\
                =
                &~
                \lambda_{\max}
                (
                \bm{K}_n^{-1}
                )
                \cdot
                \left\Vert
                    \bm{C}^T
                    {(\bm{C}\Omega_{mm}\bm{C}^T)^{-1}}
                    \bm{C}
                \right\Vert_2
                \cdot
                \Vert
                    \bm{R}^*
                \Vert_2^2
                \\
                =
                &~
                \lambda_{\max}
                (
                \bm{K}_n^{-1}
                )
                \cdot
                \left\Vert
                    {(\bm{C}\Omega_{mm}\bm{C}^T)^{-1/2}}
                    \bm{C}
                    \Omega_{mm}^{1/2}
                    \cdot
                    \Omega_{mm}^{-1/2}
                \right\Vert_2^2
                \cdot
                \Vert
                    \bm{R}^*
                \Vert_2^2
                \\
                \leq
                &~
                \lambda_{\max}
                (
                \bm{K}_n^{-1}
                )
                \cdot
                \left\Vert
                    {(\bm{C}\Omega_{mm}\bm{C}^T)^{-1/2}}
                    \bm{C}
                    \Omega_{mm}^{1/2}
                \right\Vert_2^2
                \cdot
                \Vert
                    \Omega_{mm}^{-1/2}
                \Vert_2^2
                \cdot
                \Vert
                    \bm{R}^*
                \Vert_2^2
                \\
                =
                &~
                \lambda_{\max}
                (
                \bm{K}_n^{-1}
                )
                \cdot
                \Vert
                \bm{I}_r
                \Vert_2
                \cdot
                \lambda_{\max}
                (
                \Omega_{mm}^{-1}
                )
                \cdot
                \Vert
                    \bm{R}^*
                \Vert_2^2=o_p(1).
            \end{align*}
            This proves equation \eqref{eq:step3-11}. Hence, it follows from equation \eqref{eq:step3-10} that
            \begin{align}
                \sqrt{n}
                \begin{pmatrix}
                            \hat{\cB}_{0,\cM} - \cB^*_{\cM} \\
                            \hat{\cB}_{0,S\cup\cK} - \cB^*_{S\cup\cK}
                        \end{pmatrix}
                & = 
                \frac{1}{\sqrt{n}}
                \bm{K}_n^{-1/2}
                (\bm{I}-\bm{P}_n)
                \bm{K}_n^{-1/2}
                \sum_{k=1}^{K}w_k
                    \begin{pmatrix}
                            \bm{X}_{\cM}^T
                            \\
                            (\bm{X}^k_{S\cup \cK})^T
                        \end{pmatrix}
                        \bm{H}(\bm{X}^k{{\cB^*}})
                        \left\{
                            \bm{Y}^k - \bm{\mu}(\bm{X}^k{{\cB^*}})
                        \right\}
                \nonumber
                \\
                & -
                \sqrt{n}
                \bm{K}_n^{-1}
                \begin{pmatrix}
                    \bC^T 
                            \\
                            \bm{0}_{r\times (s+K)}^T
                \end{pmatrix}
                {(\bm{C}\Omega_{mm}\bm{C}^T)^{-1}}
                \bm{h}_n
                +
                o_p(1).
                \label{eq:step3-12}
            \end{align}
            Moreover, observe the following equality
            \begin{align*}
                \bm{h}_n
                &
                =
                {\bm{C}\Omega_{mm}\bm{C}^T}
                ({\bm{C}\Omega_{mm}\bm{C}^T})^{-1}
                \bm{h}_n
                \\
                &
                =
                \begin{pmatrix}
                    \bC^T 
                            \\
                            \bm{0}_{r\times (s+K)}^T
                \end{pmatrix}^T
                \bm{K}_n^{-1}
                \begin{pmatrix}
                    \bC^T 
                            \\
                            \bm{0}_{r\times (s+K)}^T
                \end{pmatrix}
                ({\bm{C}\Omega_{mm}\bm{C}^T})^{-1}
                \bm{h}_n
                \\
                &
                =
                \begin{pmatrix}
                    \bC^T 
                            \\
                            \bm{0}_{r\times (s+K)}^T
                \end{pmatrix}^T
                \bm{K}_n^{-1}
                \begin{pmatrix}
                    \bm{C}^T({\bm{C}\Omega_{mm}\bm{C}^T})^{-1}\bm{h}_n
                    \\
                    \bm{0}_{s+K}
                \end{pmatrix}.
            \end{align*}
            Hence, we have
            \begin{align}
                &~
                \bm{K}_n^{-1}
                \begin{pmatrix}
                    \bC^T 
                            \\
                            \bm{0}_{r\times (s+K)}^T
                \end{pmatrix}
                {(\bm{C}\Omega_{mm}\bm{C}^T)^{-1}}
                \bm{h}_n
                \nonumber
                \\
                &
                =
                \bm{K}_n^{-1}
                \begin{pmatrix}
                    \bC^T 
                            \\
                            \bm{0}_{r\times (s+K)}^T
                \end{pmatrix}
                {(\bm{C}\Omega_{mm}\bm{C}^T)^{-1}}
                \begin{pmatrix}
                    \bC^T 
                            \\
                            \bm{0}_{r\times (s+K)}^T
                \end{pmatrix}^T
                \bm{K}_n^{-1}
                \begin{pmatrix}
                    \bm{C}^T({\bm{C}\Omega_{mm}\bm{C}^T})^{-1}\bm{h}_n
                    \\
                    \bm{0}_{s+K}
                \end{pmatrix}
                \nonumber
                \\
                &
                =
                \bm{K}_n^{-1/2}
                \bm{P}_n
                \bm{K}_n^{-1/2}
                \begin{pmatrix}
                    \bm{C}^T({\bm{C}\Omega_{mm}\bm{C}^T})^{-1}\bm{h}_n
                    \\
                    \bm{0}_{s+K}
                \end{pmatrix}
                \label{eq:step3-13}
            \end{align}
            This together with equation \eqref{eq:step3-13} implies that
            \begin{align}
                \sqrt{n}
                \begin{pmatrix}
                            \hat{\cB}_{0,\cM} - \cB^*_{\cM} \\
                            \hat{\cB}_{0,S\cup\cK} - \cB^*_{S\cup\cK}
                        \end{pmatrix}
                & = 
                \frac{1}{\sqrt{n}}
                \bm{K}_n^{-1/2}
                (\bm{I}-\bm{P}_n)
                \bm{K}_n^{-1/2}
                \sum_{k=1}^{K}w_k
                    \begin{pmatrix}
                            \bm{X}_{\cM}^T
                            \\
                            (\bm{X}^k_{S\cup \cK})^T
                        \end{pmatrix}
                        \bm{H}(\bm{X}^k{{\cB^*}})
                        \left\{
                            \bm{Y}^k - \bm{\mu}(\bm{X}^k{{\cB^*}})
                        \right\}
                \nonumber
                \\
                & -
                \sqrt{n}
                \bm{K}_n^{-1/2}
                \bm{P}_n
                \bm{K}_n^{-1/2}
                \begin{pmatrix}
                    \bm{C}^T({\bm{C}\Omega_{mm}\bm{C}^T})^{-1}\bm{h}_n
                    \\
                    \bm{0}_{s+K}
                \end{pmatrix}
                +
                o_p(1).
                \label{eq:step3-14}
            \end{align}
    Similarly, one can prove that
    \begin{equation}
        \sqrt{n}
                \begin{pmatrix}
                            \hat{\cB}_{a,\cM} - \cB^*_{\cM} \\
                            \hat{\cB}_{a,S\cup\cK} - \cB^*_{S\cup\cK}
                        \end{pmatrix}
        =
        \frac{1}{\sqrt{n}}
                \bm{K}_n^{-1}
                \sum_{k=1}^{K}w_k
                    \begin{pmatrix}
                            \bm{X}_{\cM}^T
                            \\
                            (\bm{X}^k_{S\cup \cK})^T
                        \end{pmatrix}
                        \bm{H}(\bm{X}^k{{\cB^*}})
                        \left\{
                            \bm{Y}^k - \bm{\mu}(\bm{X}^k{{\cB^8}})
                        \right\}
         +
                o_p(1).
        \label{eq:step3-15}
    \end{equation}
\end{proof}

\subsection{Proof of Theorem \ref{theorem:testing_statistic_distribution}}
\begin{proof}
    Define 
    \begin{equation*}
        T_0=
        ({\bm{\omega}_n + \sqrt{n}\bm{h}_n})^T
        ({\bm{C}\Omega_{mm}\bm{C}^T})^{-1}
        ({\bm{\omega}_n + \sqrt{n}\bm{h}_n})
    \end{equation*}
    and 
    \begin{equation*}
        \bm{\omega}_n
        =
        \frac{1}{\sqrt{n}}
                \begin{pmatrix}
                    \bC^T 
                            \\
                            \bm{0}_{r\times (s+K)}^T
                \end{pmatrix}^T
                \bm{K}_n^{-1}
                \sum_{k=1}^{K}w_k
                    \begin{pmatrix}
                            \bm{X}_{\cM}^T
                            \\
                            (\bm{X}^k_{S\cup \cK})^T
                        \end{pmatrix}
                        \bm{H}(\bm{X}^k{{\cB^*}})
                        \left\{
                            \bm{Y}^k - \bm{\mu}(\bm{X}^k{{\cB^*}})
                        \right\}.
    \end{equation*}
    \paragraph{Step 1: Show that $T_W/r$ is equivalent to $T_0/r$.}
    
        By Theorem 2.1,
        \begin{equation*}
        \sqrt{n}
                \begin{pmatrix}
                            \hat{\cB}_{a,\cM} - \cB^*_{\cM} \\
                            \hat{\cB}_{a,S\cup\cK} - \cB^*_{S\cup\cK}
                        \end{pmatrix}
        =
        \frac{1}{\sqrt{n}}
                \bm{K}_n^{-1}
                \sum_{k=1}^{K}w_k
                    \begin{pmatrix}
                            \bm{X}_{\cM}^T
                            \\
                            (\bm{X}^k_{S\cup \cK})^T
                        \end{pmatrix}
                        \bm{H}(\bm{X}^k{{\cB^*}})
                        \left\{
                            \bm{Y}^k - \bm{\mu}(\bm{X}^k{{\cB^*}})
                        \right\}
         +
                \bm{R}_a,
        \end{equation*}
        where $\bm{R}_a$ is a $(m+s+K)$-dimensional vector satisfying $\Vert\bm{R}_a\Vert_2=o_p(1)$. Therefore, by multiplying both sides of the above equation by $\begin{pmatrix} \bC  & \bm{0}_{r\times (s+K)}  \end{pmatrix}$, we have
        \begin{equation}
            \sqrt{n}\bm{C}
            (\hat{\cB}_{a,\cM} - \cB^*_{\cM})
            =
            \bm{\omega}_n
            +
            \bm{C}\bm{R}_{a,J_0},
            \label{eq:*step1-1}
        \end{equation}
        where $J_0=[1,\dots,m]$. Since $\bm{C}\cB^*_{\cM}=\bm{t}+\bm{h}_n$, it follow from equation \eqref{eq:*step1-1} that
        \begin{equation*}
            \sqrt{n}
            (\bm{C}
            \hat{\cB}_{a,\cM}
            -
            \bm{t})
            =
            \bm{\omega}_n
            +
            \bm{C}\bm{R}_{a,J_0}
            +
            \sqrt{n}
            \bm{h}_n
        \end{equation*}
        and hence
        \begin{align}
            \sqrt{n}
            ({\bm{C}\Omega_{mm}\bm{C}^T})^{-1/2}
            (\bm{C}
            \hat{\cB}_{a,\cM}
            -
            \bm{t})
            &
            =
            ({\bm{C}\Omega_{mm}\bm{C}^T})^{-1/2}
            (
            \bm{\omega}_n
            +
            \bm{C}\bm{R}_{a,J_0}
            +
            \sqrt{n}
            \bm{h}_n
            )
            \nonumber
            \\
            &
            =
            ({\bm{C}\Omega_{mm}\bm{C}^T})^{-1/2}
            (
            \bm{\omega}_n
            +
            \sqrt{n}
            \bm{h}_n
            )
            +
            ({\bm{C}\Omega_{mm}\bm{C}^T})^{-1/2}
            \bm{C}\bm{R}_{a,J_0}
            \label{eq:*step1-2}
        \end{align}
        by \textcolor{black}{the 6th condition in (A1)}, we have $\lim\inf_n\lambda_{\min}(\bm{K}_n)>0$, which implies that $\lambda_{\max}(\Omega_{mm})=O(1)$.
        Then we can prove that $\Vert
            ({\bm{C}\Omega_{mm}\bm{C}^T})^{-1/2}
            \bm{C}
            \Vert_2 = O(1)$, 
        \begin{align*}
            \Vert
            ({\bm{C}\Omega_{mm}\bm{C}^T})^{-1/2}
            \bm{C}
            \Vert_2^2
            & 
            =
            \Vert
            ({\bm{C}\Omega_{mm}\bm{C}^T})^{-1/2}
            \bm{C}\Omega_{mm}^{1/2}
            \cdot
            \Omega_{mm}^{1/2}
            \Vert_2^2
            \\
            &
            \leq
            \Vert
            ({\bm{C}\Omega_{mm}\bm{C}^T})^{-1/2}
            \bm{C}\Omega_{mm}^{1/2}
            \Vert_2^2
            \Vert
            \Omega_{mm}^{1/2}
            \Vert_2^2
            \\
            &
            =
            \lambda_{\max}
            \left(
            ({\bm{C}\Omega_{mm}\bm{C}^T})^{-1/2}
            {\bm{C}\Omega_{mm}\bm{C}^T}
            ({\bm{C}\Omega_{mm}\bm{C}^T})^{-1/2}
            \right)
            \cdot
            \Vert
            \Omega_{mm}
            \Vert_2
            \\
            &
            =
            \lambda_{\max}(\bm{I}_r)
            \Vert
            \Omega_{mm}
            \Vert_2
            =
            O(1).
        \end{align*}
        This implies that 
        \begin{equation*}
            \Vert
            ({\bm{C}\Omega_{mm}\bm{C}^T})^{-1/2}
            \bm{C}\bm{R}_{a,J_0}
            \Vert_2
            \leq
            \Vert
            ({\bm{C}\Omega_{mm}\bm{C}^T})^{-1/2}
            \bm{C}
            \Vert_2
            \Vert
            \bm{R}_{a,J_0}
            \Vert_2
            =o_p(1).
        \end{equation*}
        This together with equation \eqref{eq:*step1-2} gives
        \begin{equation}
            \sqrt{n}
            ({\bm{C}\Omega_{mm}\bm{C}^T})^{-1/2}
            (\bm{C}
            \hat{\cB}_{a,\cM}
            -
            \bm{t})
            =
            ({\bm{C}\Omega_{mm}\bm{C}^T})^{-1/2}
            (
            \bm{\omega}_n
            +
            \sqrt{n}
            \bm{h}_n
            )
            +
            o_p(1).
            \label{eq:*step1-3}
        \end{equation}
        Next, show that 
        \begin{equation}
            \Vert
            ({\bm{C}\Omega_{mm}\bm{C}^T})^{-1/2}
            (
            \bm{\omega}_n
            +
            \sqrt{n}
            \bm{h}_n
            )
            \Vert_2
            =
            O_p(\sqrt{r}).
            \label{eq:*step1-4}
        \end{equation}
        To prove $\Vert
            ({\bm{C}\Omega_{mm}\bm{C}^T})^{-1/2}
            \bm{\omega}_n
            \Vert_2
            =
            O_p(\sqrt{r})$, we notice that
        \begin{align*}
            &~
            \Vert
            ({\bm{C}\Omega_{mm}\bm{C}^T})^{-1/2}
            \bm{\omega}_n
            \Vert_2^2
            \\
            =
            &~
            \frac{1}{\sqrt{n}}
            \left(
            \sum_{k=1}^{K}w_k
                    \begin{pmatrix}
                            \bm{X}_{\cM}^T
                            \\
                            (\bm{X}^k_{S\cup \cK})^T
                        \end{pmatrix}
                        \bm{H}(\bm{X}^k{{\cB^*}})
                        \left\{
                            \bm{Y}^k - \bm{\mu}(\bm{X}^k{{\cB^*}})
                        \right\}
            \right)^T
            \bm{K}_n^{-1/2}
            \cdot
            \\
            &~
            \bm{K}_n^{-1/2}
            \begin{pmatrix}
                    \bC^T 
                            \\
                            \bm{0}_{r\times (s+K)}^T
                \end{pmatrix}
            ({\bm{C}\Omega_{mm}\bm{C}^T})^{-1}
            \begin{pmatrix}
                    \bC^T 
                            \\
                            \bm{0}_{r\times (s+K)}^T
                \end{pmatrix}^T
            \bm{K}_n^{-1/2}
            \cdot
            \\
            &~
            \bm{K}_n^{-1/2}
            \frac{1}{\sqrt{n}}
            \left(
            \sum_{k=1}^{K}w_k
                    \begin{pmatrix}
                            \bm{X}_{\cM}^T
                            \\
                            (\bm{X}^k_{S\cup \cK})^T
                        \end{pmatrix}
                        \bm{H}(\bm{X}^k{{\cB^*}})
                        \left\{
                            \bm{Y}^k - \bm{\mu}(\bm{X}^k{{\cB^*}})
                        \right\}
            \right)
            \\
            =
            &~
            \left\Vert
            \bm{P}_n^{1/2}
            \bm{K}_n^{-1/2}
            \frac{1}{\sqrt{n}}
            \left(
            \sum_{k=1}^{K}w_k
                    \begin{pmatrix}
                            \bm{X}_{\cM}^T
                            \\
                            (\bm{X}^k_{S\cup \cK})^T
                        \end{pmatrix}
                        \bm{H}(\bm{X}^k{{\cB^*}})
                        \left\{
                            \bm{Y}^k - \bm{\mu}(\bm{X}^k{{\cB^*}})
                        \right\}
            \right)
            \right\Vert_2^2.
        \end{align*}
        Since
        \begin{align*}
            &~
            \mathbb{E}\left[
            \left\Vert
            \bm{P}_n^{1/2}
            \bm{K}_n^{-1/2}
            \frac{1}{\sqrt{n}}
            \left(
            \sum_{k=1}^{K}w_k
                    \begin{pmatrix}
                            \bm{X}_{\cM}^T
                            \\
                            (\bm{X}^k_{S\cup \cK})^T
                        \end{pmatrix}
                        \bm{H}(\bm{X}^k{{\cB^*}})
                        \left\{
                            \bm{Y}^k - \bm{\mu}(\bm{X}^k{{\cB^*}})
                        \right\}
            \right)
            \right\Vert_2^2
            \right]
            \\
            =
            &~
            \frac{1}{n}
            \mathbb{E}\left[
            \left\Vert
            \left.
            \sum_{k=1}^{K}w_k
            \bm{P}_n^{1/2}
            \bm{K}_n^{-1/2}
                    \begin{pmatrix}
                            \bm{X}_{\cM}^T
                            \\
                            (\bm{X}^k_{S\cup \cK})^T
                        \end{pmatrix}
                        \bm{H}(\bm{X}^k{{\cB^*}})
                        \left\{
                            \bm{Y}^k - \bm{\mu}(\bm{X}^k{{\cB^*}})
                        \right\}
            \right.
            \right\Vert_2^2
            \right]
            \\
            \leq
            &~
            \frac{1}{n}
            \sum_{k=1}^{K}w_k
            \mathbb{E}\left[
            \left\Vert
            \left.
            \bm{P}_n^{1/2}
            \bm{K}_n^{-1/2}
                    \begin{pmatrix}
                            \bm{X}_{\cM}^T
                            \\
                            (\bm{X}^k_{S\cup \cK})^T
                        \end{pmatrix}
                        \bm{H}(\bm{X}^k{{\cB^*}})
                        \left\{
                            \bm{Y}^k - \bm{\mu}(\bm{X}^k{{\cB^*}})
                        \right\}
            \right.
            \right\Vert_2^2
            \right]
            \\
            =
            &~
            \frac{1}{n}
            \sum_{k=1}^{K}w_k
            \text{tr}\left[
            \bm{P}_n^{1/2}
            \bm{K}_n^{-1/2}
            \begin{pmatrix}
                            \bm{X}_{\cM}^T
                            \\
                            (\bm{X}^k_{S\cup \cK})^T
                        \end{pmatrix}
            \bm{\Sigma}(\bm{X}^k{{\cB^*}})
            \begin{pmatrix}
                            \bm{X}_{\cM}^T
                            \\
                            (\bm{X}^k_{S\cup \cK})^T
                        \end{pmatrix}^T
            \bm{K}_n^{-1/2}
            \bm{P}_n^{1/2}
            \right]
            \\
            =
            &~
            \text{tr}\left[
            \bm{P}_n^{1/2}
            \bm{K}_n^{-1/2}
            \left(
            \frac{1}{n}
            \sum_{k=1}^{K}w_k
            \begin{pmatrix}
                            \bm{X}_{\cM}^T
                            \\
                            (\bm{X}^k_{S\cup \cK})^T
                        \end{pmatrix}
            \bm{\Sigma}(\bm{X}^k{{\cB^*}})
            \begin{pmatrix}
                            \bm{X}_{\cM}^T
                            \\
                            (\bm{X}^k_{S\cup \cK})^T
                        \end{pmatrix}^T
            \right)
            \bm{K}_n^{-1/2}
            \bm{P}_n^{1/2}
            \right]
            \\
            =
            &~
            \text{tr}\left[
            \bm{P}_n^{1/2}
            \bm{K}_n^{-1/2}
            \bm{K}_n
            \bm{K}_n^{-1/2}
            \bm{P}_n^{1/2}
            \right]
            \\
            =
            &~
            \text{tr}\left[
            \bm{P}_n
            \right]
            =
            r,
        \end{align*}
        where the first inequality is by the convexity of $\Vert\cdot\Vert_2^2$, and $\text{tr}\left[
            \bm{P}_n
            \right]
            =
            r$ is because $\bm{P}_n$ is the projection matrix of rank $r$. By Markov's inequality, we have
        \begin{equation*}
            \left\Vert
            \bm{P}_n^{1/2}
            \bm{K}_n^{-1/2}
            \frac{1}{\sqrt{n}}
            \left(
            \sum_{k=1}^{K}w_k
                    \begin{pmatrix}
                            \bm{X}_{\cM}^T
                            \\
                            (\bm{X}^k_{S\cup \cK})^T
                        \end{pmatrix}
                        \bm{H}(\bm{X}^k{{\cB^*}})
                        \left\{
                            \bm{Y}^k - \bm{\mu}(\bm{X}^k{{\cB^*}})
                        \right\}
            \right)
            \right\Vert_2^2
            =
            O_p(r).
        \end{equation*}
        Therefore, we proved that $\Vert
            ({\bm{C}\Omega_{mm}\bm{C}^T})^{-1/2}
            \bm{\omega}_n
            \Vert_2
            =O_p(\sqrt{r})$. 
        Besides, we can show that $\Vert
            ({\bm{C}\Omega_{mm}\bm{C}^T})^{-1/2}
            \Vert_2=O(1)$:
        \begin{align*}
            \lambda_{\min}({\bm{C}\Omega_{mm}\bm{C}^T})
            &
            =
            \inf_{\bm{a}\in\mathbb{R}^r,\Vert\bm{a}\Vert_2=1}
            \bm{a}^T
            {\bm{C}\Omega_{mm}\bm{C}^T}
            \bm{a}
            \\
            &
            \geq
            \inf_{\bm{a}\in\mathbb{R}^r,\Vert\bm{a}\Vert_2=1}
            \lambda_{\min}
            ({\Omega_{mm}})
            \cdot
            \bm{a}^T
            {\bm{C}\bm{C}^T}
            \bm{a}
            \\
            &
            \geq
            \lambda_{\min}
            ({\Omega_{n}})
            \cdot 
            \lambda_{\min } 
            ({\bm{C}\bm{C}^T}).
        \end{align*}
        Under \textcolor{black}{the 4th condition in (A1)}, we have $\lambda_{\max}(\bm{K}_n)=O(1)$. This implies that $\liminf_{n}\lambda_{\min}(\bm{K}_n^{-1})>0$. And \textcolor{black}{condition (A3)}: $\lambda_{\max}(({\bm{C}\bm{C}^T})^{-1})=O(1)$ implies that $\liminf_{n}\lambda_{\min}({\bm{C}\bm{C}^T})>0$. Thus, $\lim\inf_{n}\lambda_{\min}({\bm{C}\Omega_{mm}\bm{C}^T})>0$, or equivalently, $\Vert
            ({\bm{C}\Omega_{mm}\bm{C}^T})^{-1/2}
            \Vert_2=O(1)$. This together with \textcolor{black}{condition (A3)} $\Vert\bm{h}_n\Vert_2=O(\sqrt{r/n})$ implies that
        \begin{equation*}
            \Vert
            \sqrt{n}
            ({\bm{C}\Omega_{mm}\bm{C}^T})^{-1/2}
            \bm{h}_n
            \Vert_2=O(\sqrt{r}).
        \end{equation*}
        Thus equation \eqref{eq:*step1-4} is proved as desired. Along with \eqref{eq:*step1-3} implies that
        \begin{equation}
            \Vert
                \sqrt{n}
                ({\bm{C}\Omega_{mm}\bm{C}^T})^{-1/2}
                (\bm{C}
                \hat{\cB}_{a,\cM}
                -
                \bm{t})
            \Vert_2
            =
            O_p(\sqrt{r}).
            \label{eq:*step1-5}
        \end{equation}
        By Lemma \ref{lemma1}, we have
        \begin{equation}
            \left\Vert
                \bm{I} - 
                    \Psi_n^{1/2}
                    ({\bm{C}\widehat{\Omega}_{a,mm}\bm{C}^T})^{-1}
                    \Psi_n^{1/2}
            \right\Vert_2
            =
            O_p
            \left(
            \frac{s+m+K}{\sqrt{n}}
            \right),
            \label{eq:*step1-lemma1}
        \end{equation}
        where $\Psi_n={\bm{C}\Omega_{mm}\bm{C}^T}$, 
        then we will have
        \begin{align*}
            &~
            \left\vert
                n
                (\bm{C}
                \hat{\cB}_{a,\cM}
                -
                \bm{t})^T
                ({\bm{C}\Omega_{mm}\bm{C}^T})^{-1}
                (\bm{C}
                \hat{\cB}_{a,\cM}
                -
                \bm{t})
                -
                n
                (\bm{C}
                \hat{\cB}_{a,\cM}
                -
                \bm{t})^T
                ({\bm{C}\widehat{\Omega}_{a,mm}\bm{C}^T})^{-1}
                (\bm{C}
                \hat{\cB}_{a,\cM}
                -
                \bm{t})
            \right\vert
            \\
            = 
            &~
            \left\vert
                n
                (\bm{C}
                \hat{\cB}_{a,\cM}
                -
                \bm{t})^T
                \Psi_n^{-1/2}
                \left(
                    \bm{I} - 
                    \Psi_n^{1/2}
                    ({\bm{C}\widehat{\Omega}_{a,mm}\bm{C}^T})^{-1}
                    \Psi_n^{1/2}
                \right)
                \Psi_n^{-1/2}
                (\bm{C}
                \hat{\cB}_{a,\cM}
                -
                \bm{t})
            \right\vert
            \\
            \leq
            &~
            \left\Vert
                \bm{I} - 
                    \Psi_n^{1/2}
                    ({\bm{C}\widehat{\Omega}_{a,mm}\bm{C}^T})^{-1}
                    \Psi_n^{1/2}
            \right\Vert_2
            \left\Vert
                \sqrt{n}
                \Psi_n^{-1/2}
                (\bm{C}
                \hat{\cB}_{a,\cM}
                -
                \bm{t})
            \right\Vert_2^2
            \\
            =
            &~
            O_p
            \left(
            \frac{s+m+K}{\sqrt{n}}
            \right)
            \cdot
            O_p(\sqrt{r})
            =
            o_p(r),
        \end{align*}
        where the last equality is because the assumption that $s+m+K=o(n^{1/3})$. Therefore, by the definition of $T_W$, we have
        \begin{align*}
            T_W
            &=
            n
                (\bm{C}
                \hat{\cB}_{a,\cM}
                -
                \bm{t})^T
                ({\bm{C}\Omega_{mm}\bm{C}^T})^{-1}
                (\bm{C}
                \hat{\cB}_{a,\cM}
                -
                \bm{t})
            +
            o_p(r)
            \\
            &=
            T_0
            +
            o_p({r}).
        \end{align*}

    \paragraph{Step 2: Show that $T_S/r$ is equivalent to $T_0/r$.}

        Based on the proof of Theorem 2.1, we have:
        \begin{align}
                & ~ 
                \frac{1}{\sqrt{n}}
                \sum_{k=1}^{K}w_k
                    \begin{pmatrix}
                            \bm{X}_{\cM}^T
                            \\
                            (\bm{X}^k_{S\cup \cK})^T
                        \end{pmatrix}
                        \bm{H}(\bm{X}^k{\hat{\cB}_0})
                        \left\{
                            \bm{Y}^k - \bm{\mu}(\bm{X}^k{\hat{\cB}_0})
                        \right\}
                \nonumber
                \\
                =
                &~
                \frac{1}{\sqrt{n}}
                \sum_{k=1}^{K}w_k
                    \begin{pmatrix}
                            \bm{X}_{\cM}^T
                            \\
                            (\bm{X}^k_{S\cup \cK})^T
                        \end{pmatrix}
                        \bm{H}(\bm{X}^k{{\cB^*}})
                        \left\{
                            \bm{Y}^k - \bm{\mu}(\bm{X}^k{{\cB^*}})
                        \right\}
                \nonumber
                \\
                -
                &~
                \frac{1}{\sqrt{n}}
                \sum_{k=1}^{K}w_k 
                        \begin{pmatrix}
                            \bm{X}_{\cM}^T
                            \\
                            (\bm{X}^k_{S\cup \cK})^T
                        \end{pmatrix}
                        {\Sigma}(\bm{X}^k{\cB^*})
                        \begin{pmatrix}
                            \bm{X}_{\cM}^T
                            \\
                            (\bm{X}^k_{S\cup \cK})^T
                        \end{pmatrix}^T
                        \begin{pmatrix}
                            \hat{\cB}_{0,\cM} - \cB^*_{\cM} \\
                            \hat{\cB}_{0,S\cup\cK} - \cB^*_{S\cup\cK}
                        \end{pmatrix}
                + o_p(1).
                \label{eq:*step2-1}
            \end{align}
            and 
            \begin{align}
                \sqrt{n}
                \begin{pmatrix}
                            \hat{\cB}_{0,\cM} - \cB^*_{\cM} \\
                            \hat{\cB}_{0,S\cup\cK} - \cB^*_{S\cup\cK}
                        \end{pmatrix}
                & = 
                \frac{1}{\sqrt{n}}
                \bm{K}_n^{-1/2}
                (\bm{I}-\bm{P}_n)
                \bm{K}_n^{-1/2}
                \sum_{k=1}^{K}w_k
                    \begin{pmatrix}
                            \bm{X}_{\cM}^T
                            \\
                            (\bm{X}^k_{S\cup \cK})^T
                        \end{pmatrix}
                        \bm{H}(\bm{X}^k{{\cB^*}})
                        \left\{
                            \bm{Y}^k - \bm{\mu}(\bm{X}^k{{\cB^*}})
                        \right\}
                \nonumber
                \\
                & -
                \sqrt{n}
                \bm{K}_n^{-1}
                \begin{pmatrix}
                    \bC^T 
                            \\
                            \bm{0}_{r\times (s+K)}^T
                \end{pmatrix}
                {(\bm{C}\Omega_{mm}\bm{C}^T)^{-1}}
                \bm{h}_n
                +
                o_p(1).
            \end{align}
            Since $\lambda_{\max}(\bm{K}_n)=O(1)$, we have
            \begin{align}
                \sqrt{n}
                \bm{K}_n
                \begin{pmatrix}
                            \hat{\cB}_{0,\cM} - \cB^*_{\cM} \\
                            \hat{\cB}_{0,S\cup\cK} - \cB^*_{S\cup\cK}
                        \end{pmatrix}
                & = 
                \frac{1}{\sqrt{n}}
                \bm{K}_n^{1/2}
                (\bm{I}-\bm{P}_n)
                \bm{K}_n^{-1/2}
                \sum_{k=1}^{K}w_k
                    \begin{pmatrix}
                            \bm{X}_{\cM}^T
                            \\
                            (\bm{X}^k_{S\cup \cK})^T
                        \end{pmatrix}
                        \bm{H}(\bm{X}^k{{\cB^*}})
                        \left\{
                            \bm{Y}^k - \bm{\mu}(\bm{X}^k{{\cB^*}})
                        \right\}
                \nonumber
                \\
                & -
                \sqrt{n}
                \begin{pmatrix}
                    \bC^T 
                            \\
                            \bm{0}_{r\times (s+K)}^T
                \end{pmatrix}
                {(\bm{C}\Omega_{mm}\bm{C}^T)^{-1}}
                \bm{h}_n
                +
                o_p(1),
            \end{align}
            The LHS of above equation is the same as the second term in the RHS of equation \eqref{eq:*step2-1}. Thus, this together with \eqref{eq:*step2-1} implies that
            \begin{align}
                & ~ 
                \frac{1}{\sqrt{n}}
                \sum_{k=1}^{K}w_k
                    \begin{pmatrix}
                            \bm{X}_{\cM}^T
                            \\
                            (\bm{X}^k_{S\cup \cK})^T
                        \end{pmatrix}
                        \bm{H}(\bm{X}^k{\hat{\cB}_0})
                        \left\{
                            \bm{Y}^k - \bm{\mu}(\bm{X}^k{\hat{\cB}_0})
                        \right\}
                \nonumber
                \\
                =
                &~
                \frac{1}{\sqrt{n}}
                \sum_{k=1}^{K}w_k
                    \begin{pmatrix}
                            \bm{X}_{\cM}^T
                            \\
                            (\bm{X}^k_{S\cup \cK})^T
                        \end{pmatrix}
                        \bm{H}(\bm{X}^k{{\cB^*}})
                        \left\{
                            \bm{Y}^k - \bm{\mu}(\bm{X}^k{{\cB^*}})
                        \right\}
                \nonumber
                \\
                -
                &~
                \frac{1}{\sqrt{n}}
                \bm{K}_n^{1/2}
                (\bm{I}-\bm{P}_n)
                \bm{K}_n^{-1/2}
                \sum_{k=1}^{K}w_k
                    \begin{pmatrix}
                            \bm{X}_{\cM}^T
                            \\
                            (\bm{X}^k_{S\cup \cK})^T
                        \end{pmatrix}
                        \bm{H}(\bm{X}^k{{\cB^*}})
                        \left\{
                            \bm{Y}^k - \bm{\mu}(\bm{X}^k{{\cB^*}})
                        \right\}
                \nonumber
                \\
                +
                &~
                \sqrt{n}
                \begin{pmatrix}
                    \bC^T 
                            \\
                            \bm{0}_{r\times (s+K)}^T
                \end{pmatrix}
                {(\bm{C}\Omega_{mm}\bm{C}^T)^{-1}}
                \bm{h}_n
                + o_p(1)
                \nonumber
                \\
                =
                &~
                \frac{1}{\sqrt{n}}
                \bm{K}_n^{1/2}
                \bm{P}_n
                \bm{K}_n^{-1/2}
                \sum_{k=1}^{K}w_k
                    \begin{pmatrix}
                            \bm{X}_{\cM}^T
                            \\
                            (\bm{X}^k_{S\cup \cK})^T
                        \end{pmatrix}
                        \bm{H}(\bm{X}^k{{\cB^*}})
                        \left\{
                            \bm{Y}^k - \bm{\mu}(\bm{X}^k{{\cB^*}})
                        \right\}
                \nonumber
                \\
                +
                &~
                \sqrt{n}
                \begin{pmatrix}
                    \bC^T 
                            \\
                            \bm{0}_{r\times (s+K)}^T
                \end{pmatrix}
                {(\bm{C}\Omega_{mm}\bm{C}^T)^{-1}}
                \bm{h}_n
                + o_p(1)
                \nonumber
                \\
                =
                &~
                \begin{pmatrix}
                    \bC^T 
                            \\
                            \bm{0}_{r\times (s+K)}^T
                \end{pmatrix}
                {(\bm{C}\Omega_{mm}\bm{C}^T)^{-1}}
                (\bm{\omega_n}
                +
                \sqrt{n}
                \bm{h}_n
                )
                +
                o_p(1),
            \end{align}
            where the last equality is by the definition of $\bm{P}_n$ and $\bm{\omega}_n$.
            Multiplying both sides by $\bm{K}_n^{-1/2}$ and from the fact that $\lim\inf_n\lambda_{\min}(\bm{K}_n)>0$, we have
            \begin{align}
                & ~ 
                \frac{1}{\sqrt{n}}
                \bm{K}_n^{-1/2}
                \sum_{k=1}^{K}w_k
                    \begin{pmatrix}
                            \bm{X}_{\cM}^T
                            \\
                            (\bm{X}^k_{S\cup \cK})^T
                        \end{pmatrix}
                        \bm{H}(\bm{X}^k{\hat{\cB}_0})
                        \left\{
                            \bm{Y}^k - \bm{\mu}(\bm{X}^k{\hat{\cB}_0})
                        \right\}
                \nonumber
                \\
                =
                &~
                \bm{K}_n^{-1/2}
                \begin{pmatrix}
                    \bC^T 
                            \\
                            \bm{0}_{r\times (s+K)}^T
                \end{pmatrix}
                {(\bm{C}\Omega_{mm}\bm{C}^T)^{-1}}
                (\bm{\omega_n}
                +
                \sqrt{n}
                \bm{h}_n
                )
                +
                o_p(1),
            \end{align}
            We proved in Step 1 that $\Vert
            ({\bm{C}\Omega_{mm}\bm{C}^T})^{-1/2}
            (\bm{\omega_n}
                +
                \sqrt{n}
                \bm{h}_n
                )
            \Vert_2
            =O_p(\sqrt{r})$, thus
            \begin{equation*}
                \left\Vert
                    \bm{K}_n^{-1/2}
                \begin{pmatrix}
                    \bC^T 
                            \\
                            \bm{0}_{r\times (s+K)}^T
                \end{pmatrix}
                {(\bm{C}\Omega_{mm}\bm{C}^T)^{-1}}
                (\bm{\omega_n}
                +
                \sqrt{n}
                \bm{h}_n
                )
                \right\Vert_2^2
                =
                \Vert
            ({\bm{C}\Omega_{mm}\bm{C}^T})^{-1/2}
            (\bm{\omega_n}
                +
                \sqrt{n}
                \bm{h}_n
                )
            \Vert_2^2
            =O_p({r}).
            \end{equation*}
        This implies that 
        \begin{equation*}
            \left\Vert
                    \frac{1}{\sqrt{n}}
                \bm{K}_n^{-1/2}
                \sum_{k=1}^{K}w_k
                    \begin{pmatrix}
                            \bm{X}_{\cM}^T
                            \\
                            (\bm{X}^k_{S\cup \cK})^T
                        \end{pmatrix}
                        \bm{H}(\bm{X}^k{\hat{\cB}_0})
                        \left\{
                            \bm{Y}^k - \bm{\mu}(\bm{X}^k{\hat{\cB}_0})
                        \right\}
                \right\Vert_2
            =O_p({\sqrt{r}}).
        \end{equation*}
        By \textcolor{black}{Lemma \ref{lemma1}}, we have
        \begin{equation*}
            \left\Vert
                \bm{I} 
                - 
                \Omega_n^{-1/2}
                \widehat{\Omega}_0
                \Omega_n^{-1/2}
            \right\Vert_2
            =
            O_p
            \left(
            \frac{s+m+K}{\sqrt{n}}
            \right).
        \end{equation*}
        Then under event $\{\widehat{S}_0=S\}$, we have
        \begin{align*}
            &~
        \left\vert
            \frac{1}{n}
            \mathcal{S}_n(\hat{\cB}_0)^T
            (\Omega_n - \widehat{\Omega}_0)
            \mathcal{S}_n(\hat{\cB}_0)
            \right\vert
            \\
            \leq
            &~
            \left\Vert
                \frac{1}{\sqrt{n}}
                \bm{K}_n^{-1/2}
                \sum_{k=1}^{K}w_k
                    \begin{pmatrix}
                            \bm{X}_{\cM}^T
                            \\
                            (\bm{X}^k_{\widehat{S}_0\cup \cK})^T
                        \end{pmatrix}
                        \bm{H}(\bm{X}^k{\hat{\cB}_0})
                        \left\{
                            \bm{Y}^k - \bm{\mu}(\bm{X}^k{\hat{\cB}_0})
                        \right\}
            \right\Vert_2^2
            \left\Vert
                \bm{I} 
                - 
                \Omega_n^{-1/2}
                \widehat{\Omega}_0
                \Omega_n^{-1/2}
            \right\Vert_2
            \\
            =
            &~
            \left\Vert
                \frac{1}{\sqrt{n}}
                \bm{K}_n^{-1/2}
                \sum_{k=1}^{K}w_k
                    \begin{pmatrix}
                            \bm{X}_{\cM}^T
                            \\
                            (\bm{X}^k_{S\cup \cK})^T
                        \end{pmatrix}
                        \bm{H}(\bm{X}^k{\hat{\cB}_0})
                        \left\{
                            \bm{Y}^k - \bm{\mu}(\bm{X}^k{\hat{\cB}_0})
                        \right\}
            \right\Vert_2^2
            \left\Vert
                \bm{I} 
                - 
                \Omega_n^{-1/2}
                \widehat{\Omega}_0
                \Omega_n^{-1/2}
            \right\Vert_2
            \\
            =
            &~
            O_p
            \left(
            \frac{r(s+m+K)}{\sqrt{n}}
            \right)
            =
            o_p(r).
        \end{align*}
        Since $\mathbb{P}(\widehat{S}_0=S)\rightarrow 1$ and $s+m+K=o(\sqrt{n})$, we have
        \begin{equation*}
            T_S = T_0+o_p(r).
        \end{equation*}

    \paragraph{Step 3: Show that $T_L/r$ is equivalent to $T_0/r$.}

        By Theorem 2.1, we have
        \begin{align}
                \sqrt{n}
                \begin{pmatrix}
                            \hat{\cB}_{a,\cM} - \hat{\cB}_{0,\cM} \\
                            \hat{\cB}_{a,S\cup\cK} - \hat{\cB}_{0,S\cup\cK}
                        \end{pmatrix}
                & = 
                \frac{1}{\sqrt{n}}
                \bm{K}_n^{-1/2}
                {\bm{P}_n}
                \bm{K}_n^{-1/2}
                \sum_{k=1}^{K}w_k
                    \begin{pmatrix}
                            \bm{X}_{\cM}^T
                            \\
                            (\bm{X}^k_{S\cup \cK})^T
                        \end{pmatrix}
                        \bm{H}(\bm{X}^k{{\cB^*}})
                        \left\{
                            \bm{Y}^k - \bm{\mu}(\bm{X}^k{{\cB^*}})
                        \right\}
                \nonumber
                \\
                & +
                \sqrt{n}
                \bm{K}_n^{-1/2}
                \bm{P}_n
                \bm{K}_n^{-1/2}
                \begin{pmatrix}
                    \bm{C}^T({\bm{C}\Omega_{mm}\bm{C}^T})^{-1}\bm{h}_n
                    \\
                    \bm{0}_{s+K}
                \end{pmatrix}
                +
                o_p(1).
                \nonumber
        \end{align}
        Notice that
        \begin{align*}
            & ~
            \bm{K}_n^{-1/2}
                \bm{P}_n
                \bm{K}_n^{-1/2}
                \begin{pmatrix}
                    \bm{C}^T({\bm{C}\Omega_{mm}\bm{C}^T})^{-1}\bm{h}_n
                    \\
                    \bm{0}_{s+K}
                \end{pmatrix}
            \\
            =
            &~
            \bm{K}_n^{-1}
            \begin{pmatrix}
                    \bC^T 
                            \\
                            \bm{0}_{r\times (s+K)}^T
                \end{pmatrix}
            ({\bm{C}\Omega_{mm}\bm{C}^T})^{-1}
            \begin{pmatrix}
                    \bC^T 
                            \\
                            \bm{0}_{r\times (s+K)}^T
                \end{pmatrix}^T
            \bm{K}_n^{-1}
            \begin{pmatrix}
                    \bC^T 
                            \\
                            \bm{0}_{r\times (s+K)}^T
                \end{pmatrix}
            ({\bm{C}\Omega_{mm}\bm{C}^T})^{-1}
            \bm{h}_n    
            \\
            =
            & ~
            \bm{K}_n^{-1}
            \begin{pmatrix}
                    \bC^T 
                            \\
                            \bm{0}_{r\times (s+K)}^T
                \end{pmatrix}
            ({\bm{C}\Omega_{mm}\bm{C}^T})^{-1}
            \bm{h}_n.   
        \end{align*}
        It follows that
        \begin{align}
                \sqrt{n}
                \begin{pmatrix}
                            \hat{\cB}_{a,\cM} - \hat{\cB}_{0,\cM} \\
                            \hat{\cB}_{a,S\cup\cK} - \hat{\cB}_{0,S\cup\cK}
                        \end{pmatrix}
            & = 
                \frac{1}{\sqrt{n}}
                \bm{K}_n^{-1/2}
                {\bm{P}_n}
                \bm{K}_n^{-1/2}
                \sum_{k=1}^{K}w_k
                    \begin{pmatrix}
                            \bm{X}_{\cM}^T
                            \\
                            (\bm{X}^k_{S\cup \cK})^T
                        \end{pmatrix}
                        \bm{H}(\bm{X}^k{{\cB^*}})
                        \left\{
                            \bm{Y}^k - \bm{\mu}(\bm{X}^k{{\cB^*}})
                        \right\}
                \nonumber
            \\
            & +
                \sqrt{n}
                \bm{K}_n^{-1}
            \begin{pmatrix}
                    \bC^T 
                            \\
                            \bm{0}_{r\times (s+K)}^T
                \end{pmatrix}
            ({\bm{C}\Omega_{mm}\bm{C}^T})^{-1}
            \bm{h}_n
                +
                o_p(1)
            \nonumber
            \\
            & =
                \bm{K}_n^{-1}
            \begin{pmatrix}
                    \bC^T 
                            \\
                            \bm{0}_{r\times (s+K)}^T
                \end{pmatrix}
            ({\bm{C}\Omega_{mm}\bm{C}^T})^{-1}
            (\bm{\omega}_n+\sqrt{n}\bm{h}_n)
            + o_p(1).
                \label{eq:*step3-1}
        \end{align}
        We proved in Step 1 that $\Vert
            ({\bm{C}\Omega_{mm}\bm{C}^T})^{-1/2}
            (\bm{\omega_n}
                +
                \sqrt{n}
                \bm{h}_n
                )
            \Vert_2
            =O_p(\sqrt{r})$, thus
            \begin{align}
                &~
                \left\Vert
                    \bm{K}_n^{-1}
                \begin{pmatrix}
                    \bC^T 
                            \\
                            \bm{0}_{r\times (s+K)}^T
                \end{pmatrix}
                {(\bm{C}\Omega_{mm}\bm{C}^T)^{-1}}
                (\bm{\omega_n}
                +
                \sqrt{n}
                \bm{h}_n
                )
                \right\Vert_2^2
                \nonumber
                \\
                \leq&~
                \left\Vert
                \bm{K}_n^{-1/2}
                \right\Vert_2^2
                \left\Vert
                    \bm{K}_n^{-1/2}
                    \begin{pmatrix}
                        \bC^T 
                                \\
                                \bm{0}_{r\times (s+K)}^T
                    \end{pmatrix}
                    {(\bm{C}\Omega_{mm}\bm{C}^T)^{-1}}
                    (\bm{\omega_n}
                    +
                    \sqrt{n}
                    \bm{h}_n
                    )
                \right\Vert_2^2
                \nonumber
            \\
            =&~
            \lambda_{\max}(\bm{K}_n^{-1})
            \left\Vert
                    {(\bm{C}\Omega_{mm}\bm{C}^T)^{-1/2}}
                    (\bm{\omega_n}
                    +
                    \sqrt{n}
                    \bm{h}_n
                    )
                \right\Vert_2^2
                \nonumber
            \\
            =&~
            O_p({r}).
            \label{eq:*step3-1.5}
        \end{align}
        Thus, we proved that
        \begin{equation}
            \left\Vert
            \hat{\cB}_{a,\cM\cup S\cup\cK} - \hat{\cB}_{0,\cM\cup S\cup\cK}
            \right\Vert_2
            =
            O_p({\sqrt{r/n}})
            \label{eq:*step3-2}
        \end{equation}
        Under the event $\left\{\hat{\cB}_{a,(\cM\cup S\cup\cK)^c} = \hat{\cB}_{0,(\cM\cup S\cup\cK)^c}=\bm{0}\right\}$, denote $\hat{\Delta}_{a,0}=\hat{\cB}_{a,\cM\cup S\cup\cK} - \hat{\cB}_{0,\cM\cup S\cup\cK}$. By the second-order Taylor expansion, we obtain that
        \begin{align}
            M_n(\hat{\cB}_{a})
            -
            M_n(\hat{\cB}_{0})
            &
            =
            \frac{1}{n}
                \hat{\Delta}_{a,0}^T
            \sum_{k=1}^{K}w_k
                    \begin{pmatrix}
                            \bm{X}_{\cM}^T
                            \\
                            (\bm{X}^k_{S\cup \cK})^T
                        \end{pmatrix}
                        \bm{H}(\bm{X}^k{\hat{\cB}_a})
                        \left\{
                            \bm{Y}^k - \bm{\mu}(\bm{X}^k{\hat{\cB}_a})
                        \right\}
            \nonumber
            \\
            & 
            +
            \frac{1}{2n}
            \hat{\Delta}_{a,0}^T
            \sum_{k=1}^{K}
            w_k 
                        \begin{pmatrix}
                            \bm{X}_{\cM}^T
                            \\
                            (\bm{X}^k_{S\cup \cK})^T
                        \end{pmatrix}
                        \hat{\Sigma}(\bm{X}^k{\hat{\cB}_a})
                        \begin{pmatrix}
                            \bm{X}_{\cM}^T
                            \\
                            (\bm{X}^k_{S\cup \cK})^T
                        \end{pmatrix}^T
                        \hat{\Delta}_{a,0}
                +
            \frac{1}{n}
            \hat{\Delta}_{a,0}^T
                \bm{R},
                \label{eq:*step3-2.5}
        \end{align}
        where $\bm{R}$ satisfies $\Vert\bm{R}\Vert_{2}=O_p(r\sqrt{s+m+K})$. Therefore, we have
        \begin{align}
            \left\Vert
                \frac{1}{n}
                \hat{\Delta}_{a,0}^T
                \bm{R}
            \right\Vert_2
            & 
            \leq
            \frac{1}{n}
            \left\Vert
                \hat{\Delta}_{a,0}
            \right\Vert_2
            \left\Vert
                \bm{R}
            \right\Vert_2
            =
            O_p\left(
            \frac{\sqrt{r}}{n}
            \cdot
            \frac{r\sqrt{s+m+K}}{\sqrt{n}}
            \right)
            =
            o_p
            \left(
            \frac{{\sqrt{r}}}{n}
            \right),
            \label{eq:*step3-2.75}
        \end{align}
        where the last equality is because $(s+m+K)=o(n^{1/3})$. Notice that 
        \begin{equation*}
            \hat{\Sigma}(\bm{X}^k{\hat{\cB}_a}) = 
            (\hat{\Sigma}(\bm{X}^k{\hat{\cB}_a})
                        -
                        \hat{\Sigma}(\bm{X}^k{{\cB^*}}))
            +
            (\hat{\Sigma}(\bm{X}^k{{\cB^*}})
                        -
                        {\Sigma}(\bm{X}^k{{\cB^*}}))
            +
            {\Sigma}(\bm{X}^k{{\cB^*}}).
        \end{equation*}
        By Cauchy Schwartz inequality,
        \begin{align}
            &~
            \left\vert
                \frac{1}{n}
                \hat{\Delta}_{a,0}^T
                \sum_{k=1}^{K}w_k 
                        \begin{pmatrix}
                            \bm{X}_{\cM}^T
                            \\
                            (\bm{X}^k_{S\cup \cK})^T
                        \end{pmatrix}
                        (\hat{\Sigma}(\bm{X}^k{\hat{\cB}_a})
                        -
                        \hat{\Sigma}(\bm{X}^k{{\cB^*}})
                        \begin{pmatrix}
                            \bm{X}_{\cM}^T
                            \\
                            (\bm{X}^k_{S\cup \cK})^T
                        \end{pmatrix}^T
                        \hat{\Delta}_{a,0}
            \right\vert
            \nonumber
            \\
            \leq&~
            \left\Vert
            \frac{1}{n}
            \sum_{k=1}^{K}w_k 
                        \begin{pmatrix}
                            \bm{X}_{\cM}^T
                            \\
                            (\bm{X}^k_{S\cup \cK})^T
                        \end{pmatrix}
                        (\hat{\Sigma}(\bm{X}^k{\hat{\cB}_a})
                        -
                        \hat{\Sigma}(\bm{X}^k{{\cB^*}})
                        \begin{pmatrix}
                            \bm{X}_{\cM}^T
                            \\
                            (\bm{X}^k_{S\cup \cK})^T
                        \end{pmatrix}^T
            \right\Vert_2
            \left\Vert
            \hat{\Delta}_{a,0}
            \right\Vert_2^2.
            \nonumber
        \end{align}
        By the \textcolor{black}{the 4th condition of (A1)} and $\sup_{t}\vert\hat{\Sigma}^\prime(t)\vert=O_p(1)$, we have
        \begin{align*}
            &~
            \left\Vert
            \frac{1}{n}
            \sum_{k=1}^{K}w_k 
                        \begin{pmatrix}
                            \bm{X}_{\cM}^T
                            \\
                            (\bm{X}^k_{S\cup \cK})^T
                        \end{pmatrix}
                        (\hat{\Sigma}(\bm{X}^k{\hat{\cB}_a})
                        -
                        \hat{\Sigma}(\bm{X}^k{{\cB^*}})
                        \begin{pmatrix}
                            \bm{X}_{\cM}^T
                            \\
                            (\bm{X}^k_{S\cup \cK})^T
                        \end{pmatrix}^T
            \right\Vert_2
            \\
            =&~
            \left\Vert
            \frac{1}{n}
            \sum_{k=1}^{K}w_k 
                        \begin{pmatrix}
                            \bm{X}_{\cM}^T
                            \\
                            (\bm{X}^k_{S\cup \cK})^T
                        \end{pmatrix}
                        \hat{\Sigma}^\prime(\bm{X}^k{\bar{\cB}})
                        \begin{pmatrix}
                            \bm{X}_{\cM}^T
                            \\
                            (\bm{X}^k_{S\cup \cK})^T
                        \end{pmatrix}^T
                 \hat{\Delta}_{a,0}
            \right\Vert_2
            \\
            \leq &~
            \left\Vert
            \frac{1}{n}
            \sum_{k=1}^{K}w_k 
                        \begin{pmatrix}
                            \bm{X}_{\cM}^T
                            \\
                            (\bm{X}^k_{S\cup \cK})^T
                        \end{pmatrix}
                        \sup_{t}\vert\hat{\Sigma}^\prime(t)\vert
                        \begin{pmatrix}
                            \bm{X}_{\cM}^T
                            \\
                            (\bm{X}^k_{S\cup \cK})^T
                        \end{pmatrix}^T
            \right\Vert_2
            \left\Vert
                 \hat{\Delta}_{a,0}
            \right\Vert_2
            \\
            =&~
            O_p(1)\cdot
            O_p\left(
            \sqrt{{r}/{n}}
            \right).
        \end{align*}
        Therefore, we have
        \begin{align}
            &~
            \left\vert
                \frac{1}{n}
                 \hat{\Delta}_{a,0}^T
                \sum_{k=1}^{K}w_k 
                        \begin{pmatrix}
                            \bm{X}_{\cM}^T
                            \\
                            (\bm{X}^k_{S\cup \cK})^T
                        \end{pmatrix}
                        (\hat{\Sigma}(\bm{X}^k{\hat{\cB}_a})
                        -
                        \hat{\Sigma}(\bm{X}^k{{\cB^*}})
                        \begin{pmatrix}
                            \bm{X}_{\cM}^T
                            \\
                            (\bm{X}^k_{S\cup \cK})^T
                        \end{pmatrix}^T
                         \hat{\Delta}_{a,0}
            \right\vert
            \nonumber
            \\
            \leq&~
            O_p(\sqrt{r/n})\cdot O_p(r/n)=o_p(\sqrt{r}/n).
            \label{eq:*step3-3}
        \end{align}
        Next, show that
        \begin{align}
            \left\vert
            \frac{1}{n}
                 \hat{\Delta}_{a,0}^T
                \sum_{k=1}^{K}w_k 
                        \begin{pmatrix}
                            \bm{X}_{\cM}^T
                            \\
                            (\bm{X}^k_{S\cup \cK})^T
                        \end{pmatrix}
                        (
                        \hat{\Sigma}(\bm{X}^k{{\cB^*}})
                        -
                        {\Sigma}(\bm{X}^k{{\cB^*}})
                        )
                        \begin{pmatrix}
                            \bm{X}_{\cM}^T
                            \\
                            (\bm{X}^k_{S\cup \cK})^T
                        \end{pmatrix}^T
                         \hat{\Delta}_{a,0}
            \right\vert 
            =
            o_p
            \left(
            \frac{{\sqrt{r}}}{n}
            \right).
            \label{eq:*step3-4}
        \end{align}
        When proving the asymptotic distribution in Theorem 2.1 \eqref{eq:step3-4.5}, we proved that
        \begin{align*}
                \left\Vert
                \sum_{k=1}^{K}w_k 
                        \begin{pmatrix}
                            \bm{X}_{\cM}^T
                            \\
                            (\bm{X}^k_{S\cup \cK})^T
                        \end{pmatrix}
                        \left(
                        {\Sigma}(\bm{X}^k{\cB^*}) - \hat{\Sigma}(\bm{X}^k{\cB^*})
                        \right)
                        \begin{pmatrix}
                            \bm{X}_{\cM}^T
                            \\
                            (\bm{X}^k_{S\cup \cK})^T
                        \end{pmatrix}^T
                \right\Vert_2
                & =
                O_p\left(
                    \frac{n}{\sqrt{s+m+K}}
                    \cdot
                    \sqrt{\frac{{\log(s+m+K)}}{{\log n}}}
                \right).
            \end{align*}
        This implies that
        \begin{align*}
            & ~
            \left\vert
            \frac{1}{n}
                 \hat{\Delta}_{a,0}^T
                \sum_{k=1}^{K}w_k 
                        \begin{pmatrix}
                            \bm{X}_{\cM}^T
                            \\
                            (\bm{X}^k_{S\cup \cK})^T
                        \end{pmatrix}
                        (
                        \hat{\Sigma}(\bm{X}^k{{\cB^*}})
                        -
                        {\Sigma}(\bm{X}^k{{\cB^*}})
                        )
                        \begin{pmatrix}
                            \bm{X}_{\cM}^T
                            \\
                            (\bm{X}^k_{S\cup \cK})^T
                        \end{pmatrix}^T
                         \hat{\Delta}_{a,0}
            \right\vert 
            \\
            \leq
            & ~
            \frac{1}{n}
            \left\Vert
                \sum_{k=1}^{K}w_k 
                        \begin{pmatrix}
                            \bm{X}_{\cM}^T
                            \\
                            (\bm{X}^k_{S\cup \cK})^T
                        \end{pmatrix}
                        \left(
                        {\Sigma}(\bm{X}^k{\cB^*}) - \hat{\Sigma}(\bm{X}^k{\cB^*})
                        \right)
                        \begin{pmatrix}
                            \bm{X}_{\cM}^T
                            \\
                            (\bm{X}^k_{S\cup \cK})^T
                        \end{pmatrix}^T
                \right\Vert_2
            \cdot
            \left\Vert
                 \hat{\Delta}_{a,0}
            \right\Vert_2^2
            \\
            =
            &~
            O_p\left(
                    \frac{1}{\sqrt{s+m+K}}
                    \cdot
                    \sqrt{\frac{{\log(s+m+K)}}{{\log n}}}
                \right)
            \cdot
            O_p\left(
            {\frac{r}{n}}
            \right)
            \\
            =
            & ~
            O_p\left(
            \frac{\sqrt{r}}{n}
            \cdot 
            \sqrt{\frac{r}{s+m+K}}
            \sqrt{\frac{{\log(s+m+K)}}{{\log n}}}
            \right)
            =
            o_p\left(
            \frac{\sqrt{r}}{n}
            \right).
        \end{align*}
        where the last equality is because $r\leq s+m+K$ and $s+m+K=o(\sqrt{n})$. Combining \eqref{eq:*step3-2.5}, \eqref{eq:*step3-2.75}, \eqref{eq:*step3-3}, and \eqref{eq:*step3-4}, we have under the event $\left\{\hat{\cB}_{a,(\cM\cup S\cup\cK)^c} = \hat{\cB}_{0,(\cM\cup S\cup\cK)^c}=\bm{0}\right\}$,
        \begin{align}
            n
            \left[ 
            M_n(\hat{\cB}_{a})
            -
            M_n(\hat{\cB}_{0})
            \right]
            &
            =
            \begin{pmatrix}
                            \hat{\cB}_{a,\cM} - \hat{\cB}_{0,\cM} \\
                            \hat{\cB}_{a,S\cup\cK} - \hat{\cB}_{0,S\cup\cK}
            \end{pmatrix}^T
            \sum_{k=1}^{K}w_k
                    \begin{pmatrix}
                            \bm{X}_{\cM}^T
                            \\
                            (\bm{X}^k_{S\cup \cK})^T
                        \end{pmatrix}
                        \bm{H}(\bm{X}^k{\hat{\cB}_a})
                        \left\{
                            \bm{Y}^k - \bm{\mu}(\bm{X}^k{\hat{\cB}_a})
                        \right\}
                        \nonumber
            \\
            & 
            +
            \frac{n}{2}
            \begin{pmatrix}
                            \hat{\cB}_{a,\cM} - \hat{\cB}_{0,\cM} \\
                            \hat{\cB}_{a,S\cup\cK} - \hat{\cB}_{0,S\cup\cK}
                \end{pmatrix}^T
            \bm{K}_n
                        \begin{pmatrix}
                            \hat{\cB}_{a,\cM} - \hat{\cB}_{0,\cM} \\
                            \hat{\cB}_{a,S\cup\cK} - \hat{\cB}_{0,S\cup\cK}
                        \end{pmatrix}
                \nonumber
            \\
            &
            +
            o_p\left(
            \sqrt{r}
            \right).
            \label{eq:*step3-4.5}
        \end{align}
        Next show that 
        \begin{equation}
            \left\Vert
                \sum_{k=1}^{K}w_k
                    \begin{pmatrix}
                            \bm{X}_{\cM}^T
                            \\
                            (\bm{X}^k_{S\cup \cK})^T
                        \end{pmatrix}
                        \bm{H}(\bm{X}^k{\hat{\cB}_a})
                        \left\{
                            \bm{Y}^k - \bm{\mu}(\bm{X}^k{\hat{\cB}_a})
                        \right\}
            \right\Vert_2
            =o_p(\sqrt{n})
            \label{eq:*step3-5}
        \end{equation}
        By Theorem 2.1, we have with probability tending to 1 that $\min_{j\in S}\vert\hat{\cB}_{a,j}\vert\geq d_n$. By the first-order optimality condition of $\hat{\cB}_a$, we have
        \begin{equation*}
            \sum_{k=1}^{K}w_k
                    \begin{pmatrix}
                            \bm{X}_{\cM}^T
                            \\
                            (\bm{X}^k_{S\cup \cK})^T
                        \end{pmatrix}
                        \bm{H}(\bm{X}^k{\hat{\cB}_a})
                        \left\{
                            \bm{Y}^k - \bm{\mu}(\bm{X}^k{\hat{\cB}_a})
                        \right\}
            =
            \begin{pmatrix}
                    \bm{0}_{m}
                    \\
                    n\lambda_{n,a}\bar{\rho}(\hat{\cB}_{a,S)},\lambda_{n,a})
                    \\
                    \bm{0}_{K}
            \end{pmatrix}.
        \end{equation*}
        Under the \textcolor{black}{2nd condition of (A2)} that $p^{\prime}_{\lambda_{n,a}}(d_n)=o(1/\sqrt{(s+m)n})$, we have
        \begin{equation*}
            \Vert
                n\lambda_{n,a}\bar{\rho}(\hat{\cB}_{a,S\cup\cK)},\lambda_{n,a})
            \Vert_2
            \leq 
            n\sqrt{s}
            \vert 
            p^{\prime}_{\lambda_{n,a}}(d_n)
            \vert 
            =
            o(\sqrt{n}).
        \end{equation*}
        This proves equation \eqref{eq:*step3-5} as desired. Along with \eqref{eq:*step3-2}, we have
        \begin{equation}
            \left\vert
            \begin{pmatrix}
                            \hat{\cB}_{a,\cM} - \hat{\cB}_{0,\cM} \\
                            \hat{\cB}_{a,S\cup\cK} - \hat{\cB}_{0,S\cup\cK}
            \end{pmatrix}^T
            \sum_{k=1}^{K}w_k
                    \begin{pmatrix}
                            \bm{X}_{\cM}^T
                            \\
                            (\bm{X}^k_{S\cup \cK})^T
                        \end{pmatrix}
                        \bm{H}(\bm{X}^k{\hat{\cB}_a})
                        \left\{
                            \bm{Y}^k - \bm{\mu}(\bm{X}^k{\hat{\cB}_a})
                        \right\}
            \right\vert 
            =
            o_p(\sqrt{r}).
        \end{equation}
        As a result, we have under the event $\left\{\hat{\cB}_{a,(\cM\cup S\cup\cK)^c} = \hat{\cB}_{0,(\cM\cup S\cup\cK)^c}=\bm{0}\right\}$,
        \begin{align}
            2n
            \left[ 
            M_n(\hat{\cB}_{a})
            -
            M_n(\hat{\cB}_{0})
            \right]
            &
            =
            {n}
            \begin{pmatrix}
                            \hat{\cB}_{a,\cM} - \hat{\cB}_{0,\cM} \\
                            \hat{\cB}_{a,S\cup\cK} - \hat{\cB}_{0,S\cup\cK}
                \end{pmatrix}^T
            \bm{K}_n
                        \begin{pmatrix}
                            \hat{\cB}_{a,\cM} - \hat{\cB}_{0,\cM} \\
                            \hat{\cB}_{a,S\cup\cK} - \hat{\cB}_{0,S\cup\cK}
                        \end{pmatrix}
            +
            o_p\left(
            \sqrt{r}
            \right).
            \label{eq:*step3-6}
        \end{align}
        From the fact that $\left\Vert
            {(\bm{C}\Omega_{mm}\bm{C}^T)^{-1/2}}
                (\bm{\omega_n}
                +
                \sqrt{n}
                \bm{h}_n
                )
            \right\Vert_2=O_p(\sqrt{r})$ and $\lambda_{\max}(\bm{K}_n)=O(1)$, we have
        \begin{align*}
            &~
            \left\Vert
            \begin{pmatrix}
                    \bC^T 
                            \\
                            \bm{0}_{r\times (s+K)}^T
                \end{pmatrix}^T
            {(\bm{C}\Omega_{mm}\bm{C}^T)^{-1}}
                (\bm{\omega_n}
                +
                \sqrt{n}
                \bm{h}_n
                )
            \right\Vert_2^2
            \\
            =
            &~
            \left\Vert
            \bm{K}_n^{1/2}
            \bm{K}_n^{-1/2}
            \begin{pmatrix}
                    \bC^T 
                            \\
                            \bm{0}_{r\times (s+K)}^T
                \end{pmatrix}^T
            {(\bm{C}\Omega_{mm}\bm{C}^T)^{-1}}
                (\bm{\omega_n}
                +
                \sqrt{n}
                \bm{h}_n
                )
            \right\Vert_2^2
            \\
            \leq&~
            \Vert
            \bm{K}_n^{1/2}
            \Vert_2^2
            \left\Vert
            \bm{K}_n^{-1/2}
            \begin{pmatrix}
                    \bC^T 
                            \\
                            \bm{0}_{r\times (s+K)}^T
                \end{pmatrix}^T
            {(\bm{C}\Omega_{mm}\bm{C}^T)^{-1}}
                (\bm{\omega_n}
                +
                \sqrt{n}
                \bm{h}_n
                )
            \right\Vert_2^2
            \\
            =&~
            \Vert
            \bm{K}_n
            \Vert_2
            \left\Vert
            {(\bm{C}\Omega_{mm}\bm{C}^T)^{-1/2}}
                (\bm{\omega_n}
                +
                \sqrt{n}
                \bm{h}_n
                )
            \right\Vert_2^2=O_p(r)
        \end{align*}
        In view of equation \eqref{eq:*step3-1}, we have
        \begin{align*}
            &~
            \left\vert
            {n}
            \begin{pmatrix}
                            \hat{\cB}_{a,\cM} - \hat{\cB}_{0,\cM} \\
                            \hat{\cB}_{a,S\cup\cK} - \hat{\cB}_{0,S\cup\cK}
                \end{pmatrix}^T
            \bm{K}_n
                        \begin{pmatrix}
                            \hat{\cB}_{a,\cM} - \hat{\cB}_{0,\cM} \\
                            \hat{\cB}_{a,S\cup\cK} - \hat{\cB}_{0,S\cup\cK}
                        \end{pmatrix}
            -
            (\bm{\omega_n}
                +
                \sqrt{n}
                \bm{h}_n
                )^T
                {(\bm{C}\Omega_{mm}\bm{C}^T)^{-1}}
                (\bm{\omega_n}
                +
                \sqrt{n}
                \bm{h}_n
                )
            \right\vert
            \\
            =&~
            o_p\left(
            (\bm{\omega_n}
                +
                \sqrt{n}
                \bm{h}_n
                )^T
            {(\bm{C}\Omega_{mm}\bm{C}^T)^{-1}}
            \begin{pmatrix}
                    \bC^T 
                            \\
                            \bm{0}_{r\times (s+K)}^T
                \end{pmatrix}
                \right)
            +
            o_p(\bm{K}_n)
            \\
            =&~
            o_p(\sqrt{r}).
        \end{align*}
        Along with equation \eqref{eq:*step3-6} and the fact that $\mathbb{P}(\hat{\cB}_{a,(\cM\cup S\cup\cK)^c} = \hat{\cB}_{0,(\cM\cup S\cup\cK)^c}=\bm{0})\rightarrow 1$ proved in Theorem \ref{theorem:estimator_statistical_properties}, we can prove that
        \begin{equation*}
            T_L=T_0+o_p(\sqrt{r})=T_0+o_p(r).
        \end{equation*}

    \paragraph{Step 4: Find the asymptotic distribution of $T_0$.}

        Define $\mathcal{T}_n = \text{Cov}(\bm{\omega}_n)$, we then prove that
        \begin{equation}
            \sup_{\mathcal{A}}
            \left\vert
            \mathbb{P}
            \left(
                \mathcal{T}_n^{-1/2}\bm{\omega}_n\in\mathcal{A}
            \right)
            -
            \mathbb{P}
            \left(
                \bm{Z}\in\mathcal{A}
            \right)
            \right\vert
            \rightarrow 0,
        \end{equation}
        where $\bm{Z}\in\mathbb{R}^r$ stands for a mean zero Gaussian random vector with identity covariance matrix, and the supremum is taken over all convex sets $\mathcal{A}\subseteq\mathbb{R}^r$. 
        By the definition of $\bm{\omega}_n$, we have
        \begin{align*}
            \mathcal{T}_n^{-1/2}\bm{\omega}_n
            &=
            \frac{1}{\sqrt{n}}
                \mathcal{T}_n^{-1/2}
                \begin{pmatrix}
                    \bC^T 
                            \\
                            \bm{0}_{r\times (s+K)}^T
                \end{pmatrix}^T
                \bm{K}_n^{-1}
                \sum_{k=1}^{K}w_k
                    \begin{pmatrix}
                            \bm{X}_{\cM}^T
                            \\
                            (\bm{X}^k_{S\cup \cK})^T
                        \end{pmatrix}
                        \bm{H}(\bm{X}^k{{\cB^*}})
                        \left\{
                            \bm{Y}^k - \bm{\mu}(\bm{X}^k{{\cB^*}})
                        \right\}
            \\
            &=\sum_{i=1}^{n}
            \frac{1}{\sqrt{n}}
            \mathcal{T}_n^{-1/2}
            \begin{pmatrix}
                    \bC^T 
                            \\
                            \bm{0}_{r\times (s+K)}^T
                \end{pmatrix}^T
            \bm{K}_n^{-1}
            \sum_{k=1}^{K}w_k
            h^{\prime}((\bx_{i}^{k})^T{\cB^*})
            \left\{
                \tilde{y}_{ik} - \mu((\bx_{i}^{k})^T{\cB^*})
            \right\}
            \begin{pmatrix}
                            \bx_{i,\cM}
                            \\
                            \bx^k_{i,S\cup \cK}
                        \end{pmatrix}
            \\
            &=:\sum_{i=1}^{n}\xi_i,
        \end{align*}
        where $\xi_i$'s are independent. Since $\mathcal{T}_n = \text{Cov}(\bm{\omega}_n)$, we have $\sum_{i=1}^{n}\text{Cov}(\xi_i)=I_r$. By Theorem 1 in \cite{bentkus2005lyapunov}, we have
        \begin{equation*}
            \sup_{\mathcal{A}}
            \left\vert
            \mathbb{P}
            \left(
                \mathcal{T}_n^{-1/2}\bm{\omega}_n\in\mathcal{A}
            \right)
            -
            \mathbb{P}
            \left(
                \bm{Z}\in\mathcal{A}
            \right)
            \right\vert
            \leq
            c_0
            r^{1/4}
            \sum_{i=1}^{n}\mathbb{E}\Vert\xi_i\Vert_2^3,
        \end{equation*}
        for some constant $c_0$, where the supremum is taken for all convex subsets in $\mathbb{R}^{r}$. To bound $\sum_{i=1}^{n}\mathbb{E}\Vert\xi_i\Vert_2^3$, notice that by convexity of $\Vert\cdot\Vert_2^3$,
        \begin{align*}
            \sum_{i=1}^{n}\mathbb{E}\Vert\xi_i\Vert_2^3
            &=
            \frac{1}{n^{3/2}}
            \sum_{i=1}^{n}
            \mathbb{E}
            \left\Vert
            \sum_{k=1}^{K}w_k
            \mathcal{T}_n^{-1/2}
            \begin{pmatrix}
                    \bC^T 
                            \\
                            \bm{0}_{r\times (s+K)}^T
                \end{pmatrix}^T
            \bm{K}_n^{-1}
            h^{\prime}((\bx_{i}^{k})^T{\cB^*})
            \left\{
                \tilde{y}_{ik} - \mu((\bx_{i}^{k})^T{\cB^*})
            \right\}
            \begin{pmatrix}
                            \bx_{i,\cM}
                            \\
                            \bx^k_{i,S\cup \cK}
                        \end{pmatrix}
            \right\Vert_2^3
            \\
            &\leq
            \frac{1}{n^{3/2}}
            \sum_{i=1}^{n}
            \sum_{k=1}^{K}w_k
            \mathbb{E}
            \left\Vert
            \mathcal{T}_n^{-1/2}
            \begin{pmatrix}
                    \bC^T 
                            \\
                            \bm{0}_{r\times (s+K)}^T
                \end{pmatrix}^T
            \bm{K}_n^{-1}
            h^{\prime}((\bx_{i}^{k})^T{\cB^*})
            \left\{
                \tilde{y}_{ik} - \mu((\bx_{i}^{k})^T{\cB^*})
            \right\}
            \begin{pmatrix}
                            \bx_{i,\cM}
                            \\
                            \bx^k_{i,S\cup \cK}
                        \end{pmatrix}
            \right\Vert_2^3
            \\
            &=
            \frac{1}{n^{3/2}}
            \sum_{i=1}^{n}
            \sum_{k=1}^{K}w_k
            \left\Vert
            \mathcal{T}_n^{-1/2}
            \begin{pmatrix}
                    \bC^T 
                            \\
                            \bm{0}_{r\times (s+K)}^T
                \end{pmatrix}^T
            \bm{K}_n^{-1}
            \begin{pmatrix}
                            \bx_{i,\cM}
                            \\
                            \bx^k_{i,S\cup \cK}
                        \end{pmatrix}
            \right\Vert_2^3
            \left\vert
            h^{\prime}((\bx_{i}^{k})^T{\cB^*})
            \right\vert^3
            \mathbb{E}
            \left\vert
                \tilde{y}_{ik} - \mu((\bx_{i}^{k})^T{\cB^*})
            \right\vert^3.
        \end{align*}
        Since $\left\vert \tilde{y}_{ik} - \mu((\bx_{i}^{k})^T{\cB^*}) \right\vert\leq 1$, we have $\mathbb{E}
            \left\vert
                \tilde{y}_{ik} - \mu((\bx_{i}^{k})^T{\cB^*})
            \right\vert^3=O(1)$. 
        By the \textcolor{black}{3rd condition of (A1)} and $\vert h^{\prime\prime}\vert\leq1$, we have $\vert h^{\prime}((\bx_{i}^{k})^T{\cB^*})\vert=O(1)$. Therefore,
        \begin{align*}
            \sum_{i=1}^{n}\mathbb{E}\Vert\xi_i\Vert_2^3
            &\leq
            \frac{1}{n^{3/2}}
            \sum_{i=1}^{n}
            \sum_{k=1}^{K}w_k
            \left\Vert
            \mathcal{T}_n^{-1/2}
            \begin{pmatrix}
                    \bC^T 
                            \\
                            \bm{0}_{r\times (s+K)}^T
                \end{pmatrix}^T
            \bm{K}_n^{-1}
            \begin{pmatrix}
                            \bx_{i,\cM}
                            \\
                            \bx^k_{i,S\cup \cK}
                        \end{pmatrix}
            \right\Vert_2^3
            O(1)
            \\
            &\leq
            \frac{1}{n^{3/2}}
            \sum_{i=1}^{n}
            \sum_{k=1}^{K}w_k
            \left\Vert
            \mathcal{T}_n^{-1/2}
            \begin{pmatrix}
                    \bC^T 
                            \\
                            \bm{0}_{r\times (s+K)}^T
                \end{pmatrix}^T
            \bm{K}_n^{-1}
            \bm{V}_n^{1/2}
            \right\Vert_2^3
            \left\Vert
            \bm{V}_n^{-1/2}
            \begin{pmatrix}
                            \bx_{i,\cM}
                            \\
                            \bx^k_{i,S\cup \cK}
                        \end{pmatrix}
            \right\Vert_2^3
            O(1).
        \end{align*}
        Notice that 
        \begin{align*}
            \left\Vert
            \mathcal{T}_n^{-1/2}
            \begin{pmatrix}
                    \bC^T 
                            \\
                            \bm{0}_{r\times (s+K)}^T
                \end{pmatrix}^T
            \bm{K}_n^{-1}
            \bm{V}_n^{1/2}
            \right\Vert_2
            &=
            \lambda_{\max}
            \left(
            \mathcal{T}_n^{-1/2}
            \begin{pmatrix}
                    \bC^T 
                            \\
                            \bm{0}_{r\times (s+K)}^T
                \end{pmatrix}^T
            \bm{K}_n^{-1}
            \bm{V}_n
            \bm{K}_n^{-1}
            \begin{pmatrix}
                    \bC^T 
                            \\
                            \bm{0}_{r\times (s+K)}^T
                \end{pmatrix}
            \mathcal{T}_n^{-1/2}
            \right)^{1/2}
            =1,
        \end{align*}
        this along with the additional condition \eqref{eq:theorem(limiting_distribution)condition} in Theorem \ref{theorem:testing_statistic_distribution}:
        \begin{equation*}
            \frac{r^{1/4}}{n^{3/2}}\sum_{i=1}^{n}\sum_{k=1}^{K}
            w_k
            \left\{
                (\bm{x}_{i,\cM\cup{S}\cup\cK}^k)^T
                \bm{V}_n^{-1}
                (\bm{x}_{i,\cM\cup{S}\cup\cK}^k)
            \right\}^{3/2}
            \rightarrow 0,
        \end{equation*}
        implies that
        \begin{equation*}
            \sup_{\mathcal{A}}
            \left\vert
            \mathbb{P}
            \left(
                \mathcal{T}_n^{-1/2}\bm{\omega}_n\in\mathcal{A}
            \right)
            -
            \mathbb{P}
            \left(
                \bm{Z}\in\mathcal{A}
            \right)
            \right\vert
            \rightarrow 0.
        \end{equation*}
        
        Since for any $x\in\mathbb{R}^+$, the event $\{T_0\leq x\}$ is equivalent to the event $\{\mathcal{T}_n^{-1/2}\bm{\omega}_n\in\mathcal{A}_x\}$, where 
        \begin{equation*}
            \mathcal{A}_x 
            =
            \left\{
            \bm{z}\in\mathbb{R}^r:
            \left\Vert
                \Psi_n^{-1/2}
                \left(
                {\mathcal{T}_n}^{1/2} \cdot \bm{z}
                +
                \sqrt{n}\bm{h}_n
                \right)
            \right\Vert_2^2
            \leq x
            \right\}.
        \end{equation*}
        It follows that
        \begin{equation}
            \sup_{x}
            \left\vert
            \mathbb{P}
            \left(
                T_0\leq x
            \right)
            -
            \mathbb{P}
            \left(
                \bm{Z}\in\mathcal{A}_x
            \right)
            \right\vert
            \rightarrow 0,
            \label{eq:*step4-T0_result}
        \end{equation}
        where 
        $\left\Vert\Psi_n^{-1/2}
                \left(
                {\mathcal{T}_n}^{1/2} \cdot \bm{Z}
                +
                \sqrt{n}\bm{h}_n
                \right)
            \right\Vert_2^2$ follows a generalized Chi-squared distribution.

    Consider any statistic $T=T_0+o_p(r)$. For any $x$ and any sufficiently small $\varepsilon>0$, we have $\mathbb{P}(\vert T-T_0\vert > \varepsilon r)=o(1)$. It follows from \eqref{eq:*step4-T0_result} that
    \begin{equation}
        \begin{aligned}
        \mathbb{P}(T\leq x)
        &
        \leq
        \mathbb{P}(T_0\leq x+\varepsilon r) + \mathbb{P}(\vert T-T_0\vert > \varepsilon r)
        \leq
        \mathbb{P}
            \left(
                \bm{Z}\in\mathcal{A}_{x+\varepsilon r}
            \right)
        +o(1)
        \\
        \mathbb{P}(T\leq x)
        &
        \geq
        \mathbb{P}(T_0\leq x-\varepsilon r) - \mathbb{P}(\vert T-T_0\vert > \varepsilon r)
        \geq
        \mathbb{P}
            \left(
                \bm{Z}\in\mathcal{A}_{x-\varepsilon r}
            \right)-o(1).
        \end{aligned}
        \label{eq:*step4-T_bound_by_T0}
    \end{equation}
    It remains to prove that
    \begin{equation}
        \lim_{\varepsilon\rightarrow 0^+}{\lim\sup}_{n}
        \left\vert
        \mathbb{P}
            \left(
                \bm{Z}\in\mathcal{A}_{x+\varepsilon r}
            \right)
        -
        \mathbb{P}
            \left(
                \bm{Z}\in\mathcal{A}_{x-\varepsilon r}
            \right)
        \right\vert
        = 0
        \label{eq:*step4-gen_chisq_result}
    \end{equation}
    or equivalently,
    \begin{equation*}
        \lim_{\varepsilon\rightarrow 0^+}{\lim\sup}_{n}
        \mathbb{P}
            \left(
            x-\varepsilon r
            \leq
                \left(
                \bm{Z}
                +
                \sqrt{n}{\mathcal{T}_n}^{-1/2} \bm{h}_n
                \right)^T
                {\mathcal{T}_n}^{1/2}
                \Psi_n^{-1}
                {\mathcal{T}_n}^{1/2}
                \left(
                \bm{Z}
                +
                \sqrt{n}{\mathcal{T}_n}^{-1/2} \bm{h}_n
                \right)
            \leq 
            x+\varepsilon r
            \right)
        = 0
    \end{equation*}
    This result can be proved using the fact that
    $
    \Vert
        \sqrt{n}{\cT}_{n}^{-1/2}\bh_n
    \Vert_2
    =O(\sqrt{r})
    $
    ,
    $\lambda_{\max}(
                {\mathcal{T}_n}^{1/2}
                \Psi_n^{-1}
                {\mathcal{T}_n}^{1/2})=O(1)$ 
    as well as 
    $\lim\sup_{n}\lambda_{\min}(
                {\mathcal{T}_n}^{1/2}
                \Psi_n^{-1}
                {\mathcal{T}_n}^{1/2})=O(1)$
    given in results \eqref{eq:lemma_result6+}, \eqref{eq:lemma_result6++} and \eqref{eq:lemma_result7} of Lemma \ref{lemma1}. Combining \eqref{eq:*step4-T_bound_by_T0} with \eqref{eq:*step4-gen_chisq_result}, we obtain that
    \begin{equation}
            \sup_{x}
            \left\vert
            \mathbb{P}
            \left(
                T\leq x
            \right)
            -
            \mathbb{P}
            \left(
                \bm{Z}\in\mathcal{A}_x
            \right)
            \right\vert
            \rightarrow 0.
            \label{eq:*step4-T_result}
    \end{equation}
    In the first three steps, we have shown $T_0 = T_W + o_p(1) = T_S + o_p(1) = T_L + o_p(1)$. This together with \eqref{eq:*step4-T_result} implies that the generalized $\chi^2$ approximation holds for our partial penalized statistics. Hence, the proof is completed.

\end{proof}

\subsection{Proof of Theorem \ref{theorem:type_I_error}}
From the results
\eqref{eq:*step4-gen_chisq_result} proved in Theorem \ref{theorem:testing_statistic_distribution}, it suffices to prove that
\begin{equation}
        \mathbb{P}
            \left(
                \bm{Z}^T
                {\mathcal{T}_n}^{1/2}
                \Psi_n^{-1}
                {\mathcal{T}_n}^{1/2}
                \bm{Z}
            \leq
            \hat{\chi}_{n,(1-\alpha)}
            |\bm{X},\bm{Y}\right)
    = 1-\alpha + o_p(1),
    \label{eq:thm3-result1}
\end{equation}
For any fixed $\varepsilon>0$, we have the following inequality
\begin{align*}
    &~
    \mathbb{P}
            \left(
                \bm{Z}^T
                {\widehat{\mathcal{T}}_{n,a}}^{1/2}
                {\widehat{\Psi}_{n,a}}^{-1}
                {\widehat{\mathcal{T}}_{n,a}}^{1/2}
                \bm{Z}
            \leq
            \hat{\chi}_{n,(1-\alpha)}-\varepsilon
            |\bm{X},\bm{Y}\right)
    -
    \mathbb{P}
            \left(
                \left\vert
                \bm{Z}^T
                (
                {\widehat{\mathcal{T}}_{n,a}}^{1/2}
                {\widehat{\Psi}_{n,a}}^{-1}
                {\widehat{\mathcal{T}}_{n,a}}^{1/2}
                -
                {\mathcal{T}_n}^{1/2}
                \Psi_n^{-1}
                {\mathcal{T}_n}^{1/2}
                )
                \bm{Z}
                \right\vert
            >
            \varepsilon
            |\bm{X},\bm{Y}\right)
    \\
    \leq&~
    \mathbb{P}
            \left(
                \bm{Z}^T
                {\mathcal{T}_n}^{1/2}
                \Psi_n^{-1}
                {\mathcal{T}_n}^{1/2}
                \bm{Z}
            \leq
            \hat{\chi}_{n,(1-\alpha)}
            |\bm{X},\bm{Y}\right)
    \\
    \leq&~
    \mathbb{P}
            \left(
                \bm{Z}^T
                {\widehat{\mathcal{T}}_{n,a}}^{1/2}
                {\widehat{\Psi}_{n,a}}^{-1}
                {\widehat{\mathcal{T}}_{n,a}}^{1/2}
                \bm{Z}
            \leq
            \hat{\chi}_{n,(1-\alpha)}+\varepsilon
            |\bm{X},\bm{Y}\right)
    +
    \mathbb{P}
            \left(
                \left\vert
                \bm{Z}^T
                (
                {\widehat{\mathcal{T}}_{n,a}}^{1/2}
                {\widehat{\Psi}_{n,a}}^{-1}
                {\widehat{\mathcal{T}}_{n,a}}^{1/2}
                -
                {\mathcal{T}_n}^{1/2}
                \Psi_n^{-1}
                {\mathcal{T}_n}^{1/2}
                )
                \bm{Z}
                \right\vert
            >
            \varepsilon
            |\bm{X},\bm{Y}\right).
\end{align*}
On the event
\begin{equation}
    \mathcal{E}_{n}
    :=
    \left\{
        \left\Vert                        {\widehat{\mathcal{T}}_{n,a}}^{1/2}
                {\widehat{\Psi}_{n,a}}^{-1}
                {\widehat{\mathcal{T}}_{n,a}}^{1/2} - 
                {{\mathcal{T}}_{n}}^{1/2}
                {\Psi}_{n}^{-1}
                {{\mathcal{T}}_{n}}^{1/2}
                \right\Vert_2
        \leq
        \frac{\varepsilon}{
        \frac{\varepsilon}{2}
        \frac{\sqrt{n}}{s+m+K}
        +r
        }
    \right\},
    \label{eq:thm3-event1}
\end{equation}
by the sub-exponential tail bound for $\chi^2(0,r)$, we have
\begin{align*}
    &~
    \mathbb{P}
            \left(
                \left\vert
                \bm{Z}^T
                (
                {\widehat{\mathcal{T}}_{n,a}}^{1/2}
                {\widehat{\Psi}_{n,a}}^{-1}
                {\widehat{\mathcal{T}}_{n,a}}^{1/2}
                -
                {\mathcal{T}_n}^{1/2}
                \Psi_n^{-1}
                {\mathcal{T}_n}^{1/2}
                )
                \bm{Z}
                \right\vert
            >
            \varepsilon
            |\bm{X},\bm{Y}\right)
    \\
    \leq&~
    \mathbb{P}
            \left(
                \left\Vert                        {\widehat{\mathcal{T}}_{n,a}}^{1/2}
                {\widehat{\Psi}_{n,a}}^{-1}
                {\widehat{\mathcal{T}}_{n,a}}^{1/2} - 
                {{\mathcal{T}}_{n}}^{1/2}
                {\Psi}_{n}^{-1}
                {{\mathcal{T}}_{n}}^{1/2}
                \right\Vert_2
                \bZ^T\bZ
            >
            \varepsilon
            |\bm{X},\bm{Y}\right)
    \\
    \leq&~
    \exp\left(
    -\frac{1}{8}
    \left(
    \frac{\varepsilon}{
    \Vert                        {\widehat{\mathcal{T}}_{n,a}}^{1/2}
                {\widehat{\Psi}_{n,a}}^{-1}
                {\widehat{\mathcal{T}}_{n,a}}^{1/2} - 
                {{\mathcal{T}}_{n}}^{1/2}
                {\Psi}_{n}^{-1}
                {{\mathcal{T}}_{n}}^{1/2}
    \Vert_2
    }
    -
    r
    \right)
    \right)
    \\
    \leq&~
    \exp\left(
    -\frac{\varepsilon}{16}\frac{\sqrt{n}}{s+m+K}
    \right),
\end{align*}
which could be arbitrarily small as long as $n$ is large enough. Furthermore, combining the fact that
$\Vert                        {\widehat{\mathcal{T}}_{n,a}}^{1/2}
                {\widehat{\Psi}_{n,a}}^{-1}
                {\widehat{\mathcal{T}}_{n,a}}^{1/2} - 
                {{\mathcal{T}}_{n}}^{1/2}
                {\Psi}_{n}^{-1}
                {{\mathcal{T}}_{n}}^{1/2}
                \Vert_2
        =O_p
            \left(
            {s+m+K}/{\sqrt{n}}
            \right)$
proved in Lemma \ref{lemma1} as well as the additional condition $(s+m+K)r=o(\sqrt{n})$, we know that with probability tending to $1$, the event $\mathcal{E}_{n}$ \eqref{eq:thm3-event1} holds. Therefore, the above ineuqality holds with probability tending to $1$. Then we have
\begin{align*}
    &~
    \mathbb{P}
            \left(
                \left\vert
                \bm{Z}^T
                (
                {\widehat{\mathcal{T}}_{n,a}}^{1/2}
                {\widehat{\Psi}_{n,a}}^{-1}
                {\widehat{\mathcal{T}}_{n,a}}^{1/2}
                -
                {\mathcal{T}_n}^{1/2}
                \Psi_n^{-1}
                {\mathcal{T}_n}^{1/2}
                )
                \bm{Z}
                \right\vert
            >
            \varepsilon
            \right)
    \\
    \leq&~
    \mathbb{E}
    \left[
        \bm{1}_{\mathcal{E}_{n}}
        \cdot
        \mathbb{P}
            \left(
                \left\vert
                \bm{Z}^T
                (
                {\widehat{\mathcal{T}}_{n,a}}^{1/2}
                {\widehat{\Psi}_{n,a}}^{-1}
                {\widehat{\mathcal{T}}_{n,a}}^{1/2}
                -
                {\mathcal{T}_n}^{1/2}
                \Psi_n^{-1}
                {\mathcal{T}_n}^{1/2}
                )
                \bm{Z}
                \right\vert
            >
            \varepsilon
            |\bm{X},\bm{Y}\right)
    \right]
    +
    \mathbb{P}(\mathcal{E}_{n}^{c})
    \\
    \leq&~
    \exp\left(
    -\frac{\varepsilon}{16}\frac{\sqrt{n}}{s+m+K}
    \right)
    +
    \mathbb{P}(\mathcal{E}_{n}^{c}),
\end{align*}
which implies that
\begin{equation}
    {\lim\sup}_{n}
    \mathbb{P}
            \left(
                \left\vert
                \bm{Z}^T
                (
                {\widehat{\mathcal{T}}_{n,a}}^{1/2}
                {\widehat{\Psi}_{n,a}}^{-1}
                {\widehat{\mathcal{T}}_{n,a}}^{1/2}
                -
                {\mathcal{T}_n}^{1/2}
                \Psi_n^{-1}
                {\mathcal{T}_n}^{1/2}
                )
                \bm{Z}
                \right\vert
            >
            \varepsilon
            \right)
    =0.
    \label{eq:thm3-result_part_I}
\end{equation}
Similar to the proof of \eqref{eq:*step4-gen_chisq_result} with $\bh_n=\bm{0}$ under the null hypothesis, we can prove that 
\begin{equation*}
    \lim_{\varepsilon\rightarrow 0^+}{\lim\sup}_{n}
        \mathbb{P}
            \left(
            x-\varepsilon r
            \leq
                \bm{Z}^T
                {\widehat{\mathcal{T}}_{n,a}}^{1/2}
                {\widehat{\Psi}_{n,a}}^{-1}
                {\widehat{\mathcal{T}}_{n,a}}^{1/2}
                \bm{Z}
            \leq 
            x+\varepsilon r
            |\bX,\bY \right)
        \overset{a.s.}{=} 0.
\end{equation*}
Combining the previous results with the fact that 
\begin{equation*}
    \mathbb{P}\left(\left.
                \bm{Z}^T
                {\widehat{\mathcal{T}}_{n,a}}^{1/2}
                {\widehat{\Psi}_{n,a}}^{-1}
                {\widehat{\mathcal{T}}_{n,a}}^{1/2} 
                \bm{Z}
            \leq 
            \hat{\chi}_{n,(1-\alpha)}
            \right\vert
            \bm{X},\bm{Y}
            \right)
            =
            1-\alpha,
\end{equation*}
given by the definition of $\hat{\chi}_{n,(1-\alpha)}$, we proved the desired result \eqref{eq:thm3-result1}.

\subsection{Analysis of the Supermarket Data}\label{sec:realdata}
In our empirical study, we use the supermarket data in \cite{lan2016testing}. This data set contains a total of $n=464$ daily records. For each record, the response is the number of customers and the predictors are the sales volumes of $p=6398$ products. Both the response and predictors are standardized so that they have zero mean and unit variance. This study identifies the products that attract the most customers and further performs hypothesis testings on those products to evaluate the impact of those products on the customers.

We apply the testing procedures of \cite{shi2019linear} under the standard normal linear model assumptions, and our testing procedures under non-parametric Box-Cox model to this dataset.
To formulate the testing hypotheses, we randomly split the data set into two subsets: one for preliminary analysis ($232$ observations) and one for hypothesis testings ($232$ observations).

\subsection{Computing the partial penalized CPR estimators $\hat{\bbt}_a$}\label{appendix_sec:compute_beta_a}
From \eqref{eq:beta_a}, $(\hat{\bbt}_a, \hat{\bb}_a)$ is the minimizer of the following objective function
\begin{equation}
    {\sum_{k=1}^{K}w_k\left\{
    \frac{1}{n}\sum_{i=1}^{n}L(\check{y}_{ki}(\bx_i^{T}\bbt-b_{k}))
    \right\}}
    + 
    \sum_{j \notin \cM}p_\lambda(\vert\beta_j\vert).
    \label{eq:partial_penalized_objective}
\end{equation}
It can be solved by combining the local linear approximation (LLA) algorithm and the coordinate-majorization-descent (CMD) algorithm. Details of the LLA-CMD algorithm are discussed below.

\paragraph{Outer loop: local linear approximation}
The local linear approximation (LLA) algorithm \citep{zou2008one} takes advantage of the special folded concave structure and utilizes the majorization-minimization (MM) principle to turn a concave regularization problem into a sequence of weighted $\ell_1$ penalized problems. 
Let $(\hat{\bbt}^\textit{curr},\hat{\bm{b}}^\textit{curr})$ be the current estimate. The folded concave penalty could be majorized by a local linear approximation function:
\begin{equation}
	\sum_{j\notin \cM}p_\lambda(\vert\beta_j\vert)\leq
	\sum_{j\notin \cM}p_\lambda(\vert\hat{\beta}_j^\textit{curr}\vert) + p_\lambda^{\prime}(\vert{\hat{\beta}_j^\textit{curr}}\vert)(\vert\beta_j\vert-\vert{\hat{\beta}_j^\textit{curr}}\vert),
\end{equation}
which is the best convex majorization of the concave penalty function (Theorem 2 of \citealt{zou2008one}). Then the objective function of \eqref{eq:partial_penalized_objective} could be majorized by a weighted $\ell_1$ penalized problem:
\begin{equation}
	M_n(\bbt,\bm{b})
        +
        \sum_{j \notin \cM}p_\lambda(\vert\hat{\beta}_j^\textit{curr}\vert) + p_\lambda^{\prime}(\vert{\hat{\beta}_j^\textit{curr}}\vert)(\vert\beta_j\vert-\vert{\hat{\beta}_j^\textit{curr}}\vert).
\end{equation}
The details of the LLA algorithm are summarized  in Algorithm \ref{alg:partial_penalized_LLA}.


\paragraph{Inner loop: coordinate-majorization-descent} 
For our weighted $\ell_1$ penalized composite probit regression problem \eqref{eq:cpr_weightedL1} within each LLA iteration in Algorithm \ref{alg:partial_penalized_LLA}, we may also apply the coordinate descent algorithm \citep{friedman2010regularization} which has been successfully used in solving some high-dimensional models. In the case of probit regression, we need to pay attention to computer overflow errors that may occur during the computation of the CDF $\Phi(\cdot)$. Standard algorithms like Newton-Raphson are very sensitive to large values of the linear predictor \citep{demidenko2001computational}. 

We prefer to use a numerically stable and efficient algorithm to solve \eqref{eq:cpr_weightedL1}. Due to the good property of probit regression loss given in Lemma 2 (\textcolor{black}{refer to the first paper}), that is, the second derivative of the probit regression loss function can be bounded by 1, we can fix the computer overflow error issue by using the coordinate-majorization-descent algorithm  \citep{yang2013efficient} which only uses the gradient information of the composite probit loss function.

Let $(\hat{\bbt}^\textit{curr},\hat{\bm{b}}^\textit{curr})$ be the current estimate. Define the current margin $\hat{r}_{ki}^\textit{curr}=\check{y}_{ki}(\bx_i^{T}{\hat{\bbt}^\textit{curr}}-\hat{{b}}_k^\textit{curr})$ for $k=1,\cdots,K$, $i=1,\cdots,n$  and current $\ell_1$ weights
$\hat{\omega}_j^\textit{curr}=p^\prime_{\lambda}(\vert\hat{\beta}^\textit{curr}_j\vert)$ for $j\in \{1,\cdots,p\}\backslash\cM$. 

To update the $\zeta$-th coordinate of $\bbt$, where $\zeta\in\{1,\cdots,p\}\backslash\cM$, define the $F$ function:
\begin{equation}
    F(\beta_{\zeta}\vert \hat{\bbt}^\textit{curr},\hat{\bm{b}}^\textit{curr})
    :=
    \sum_{k=1}^{K}w_k\left\{\frac{1}{n}\sum_{i=1}^{n}L(\check{y}_{ki}x_{i\zeta}(\beta_{\zeta}-{\hat{\beta}^\textit{curr}}_{\zeta}) + \hat{r}_{ki}^\textit{curr})\right\} + \hat{\omega}^\textit{curr}_{\zeta}\vert\beta_{\zeta}\vert.
\end{equation}
By Lemma \ref{lemma:2deriv_bounded} and $\check{y}_{ki}^2=1$, this $F$ function can be majorized by a penalized quadratic function defined as
\begin{equation*}
    \mathcal{F}(\beta_{\zeta}\vert\hat{\bbt}^\textit{curr},\hat{\bm{b}}^\textit{curr})
    :=
    \sum_{k=1}^{K}w_k\cdot\frac{1}{n}\sum_{i=1}^{n}\left[
	L(\hat{r}_{ki}^\textit{curr}) + L^\prime(\hat{r}_{ki}^\textit{curr})\check{y}_{ki}x_{i\zeta}(\beta_{\zeta}-{\hat{\beta}^\textit{curr}}_{\zeta}) + \frac{1}{2}x_{is}^2(\beta_{\zeta}-{\hat{\beta}^\textit{curr}}_{\zeta})^2
    \right] 
    + 
    \hat{\omega}^\textit{curr}_{\zeta}\vert\beta_{\zeta}\vert.
\end{equation*}
We can easily solve the minimizer of the majorization function by a simple soft thresholding rule:
\begin{equation}
    {\hat{\beta}^\textit{new}}_{\zeta}=\mathcal{S}\left(
	{\hat{\beta}^\textit{curr}}_{\zeta}+\frac{-\sum_{k=1}^{K}w_k\sum_{i=1}^{n}L^\prime(\hat{r}_{ki}^\textit{curr})\check{y}_{ki}x_{i\zeta}}{\sum_{i=1}^{n}x_{i\zeta}^2},\frac{\hat{\omega}^\textit{curr}_{\zeta}}{\frac{1}{n}\sum_{i=1}^{n}x_{i\zeta}^2}
	\right),
\end{equation}
where $\mathcal{S}(z,t)  = (\vert z\vert -t)_+\cdot\text{sgn}(z)$. The updating of the margins is given by
\begin{equation*}
    \hat{r}_{ki}^\textit{new} = \hat{r}_{ki}^\textit{curr} + \cky_{ki}x_{i\zeta}\cdot({\hat{\beta}^\textit{new}}_{\zeta} - {\hat{\beta}^\textit{curr}}_{\zeta}),~
    \text{for}~
    k=1,\dots,K,~
    i=1,\dots,n.
\end{equation*}

To update $\zeta$-th coordinate of $\bbt$, where $\zeta\in\cM$, we follow the same trick and the updating rule is as follows:
\begin{equation}
    \hat{\beta}_{\zeta}^\textit{new}=	{\hat{\beta}^\textit{curr}}_{\zeta} 
    +
    \frac{-\sum_{k=1}^{K}w_k\sum_{i=1}^{n}L^\prime(\hat{r}_{ki}^\textit{curr})\check{y}_{ki}x_{i\zeta}}{\sum_{i=1}^{n}x_{i\zeta}^2}.
    \label{eq:beta_update_wo_penalty}
\end{equation}
The updating of margins is also given by equation \eqref{eq:margin_update_from_beta}.

Simiarly, for the $\zeta$-th intercept $b_{\zeta}$, we consider minimizing the quadratic majorization:
\begin{equation}
	\mathcal{F}(b_{\zeta}\vert\hat{\bbt}^\textit{curr},\hat{\bm{b}}^\textit{curr})
        :=
        \frac{1}{n}\sum_{i=1}^{n}\left\{
	   L(\hat{r}_{{\zeta}i}^\textit{curr}) + L^\prime(\hat{r}_{{\zeta}i}^\textit{curr})(-\check{y}_{{\zeta}i})(b_{\zeta}-\hat{b}_{\zeta}^\textit{curr}) + \frac{1}{2}(b_{\zeta}-\hat{b}_{\zeta}^\textit{curr})^2
	\right\},
\end{equation}
which has a minimizer 
\begin{equation}
    \hat{b}_{\zeta}^\textit{new}=\hat{b}_{\zeta}^\textit{curr}+\frac{1}{n}\sum_{i=1}^{n}L^\prime(r_{{\zeta}i})\check{y}_{{\zeta}i}.
\end{equation}
The updating of margins is given by
\begin{equation}
    \hat{r}_{ki}^\textit{new} =
    \begin{cases}
        \hat{r}_{\zeta i}^\textit{curr} - \cky_{{\zeta}i}(\hat{b}_{\zeta}^\textit{new} - \hat{b}_{\zeta}^\textit{curr}),
        &
        \text{if } k=\zeta;
        \\
        \hat{r}_{ki}^\textit{curr},
        &
        \text{if } k\neq \zeta.
    \end{cases}
    \label{eq:margin_update_from_b}
\end{equation}

To sum up, the CMD algorithm for solving the weighted $\ell_1$-penalized CPR is given in Algorithm \ref{alg:GCD}.

\begin{algorithm}[ht!]
	\caption{The LLA Algorithm for solving unconstrained estimator}\label{alg:partial_penalized_LLA}
	\begin{algorithmic}[1]
		\State Initialize $\hat{\bbt}^{(0)}=\hat{\bbt}^{\text{initial}}$ and compute the adaptive weight
		\begin{equation*}
			\hat{\bm{\omega}}^{(0)}
                =
                \left(
                    \hat{\omega}_j^{(0)},~j\in \{1,\dots,p\}\backslash\cM 
                \right)
                =
                \left(
                    p_{\lambda}^\prime(|\hat{\beta}_j^{(0)}|),~j\in \{1,\cdots,p\}\backslash\cM
			\right)
		\end{equation*}
		\State For $m=1,2,\cdots,$ repeat the LLA iteration till convergence
		\begin{itemize}
			\item[(2.a)] Obtain $(\hat{\beta}^{(m)},\hat{\bm{b}}^{(m)})$ by solving the following optimization problem
			\begin{equation}
				(\hat{\bbt}^{(m)},\hat{\bm{b}}^{(m)})=\arg\min_{\bbt,\bm{b}}M_n(\bbt,\bm{b})+\sum_{j\notin\cM}\hat{\omega}_{j}^{(m-1)}\cdot |\beta_j|,
				\label{eq:cpr_weightedL1}
			\end{equation}
			\item[(2.b)] Update the adaptive weight vector $\hat{\bm{\omega}}^{(m)}$ with $\hat{\omega}_j^{(m)}=p_{\lambda}^\prime(|\hat{\beta}_j^{(m)}|)$.
		\end{itemize}
	\end{algorithmic}
\end{algorithm}

\begin{algorithm}[ht!]
	\caption{The CMD algorithm for solving \eqref{eq:cpr_weightedL1}.}\label{alg:GCD}
	\begin{algorithmic}[1]
		\State Input the weight vector $\hat{\bm{\omega}}$.
		\State Initialize $(\hat{\bbt}, \hat{\bm{b}})$ and its corresponding margin $\hat{r}_{ki}=\check{y}_{ki}(\bx_i^{T}\hat{\bbt}-\hat{b}_{k})$, for $k=1,\dots,K$, and $i=1,\dots,n$.
		\State Iterate (3.a)-(3.b) until convergence:
		\begin{itemize}
		    \item[(3.a)] Cyclic coordinate descent for coefficients: for $j = 1, 2, \cdots, p$,
		    \begin{itemize}
		        \item[(3.a.1)] Compute
			        \begin{equation*}
				        \hat{\beta}_j^{\textit{new}} = 
                                \begin{cases}
                                    \mathcal{S}\left(
	                        \hat{\beta}_{j}+\frac{-\sum_{k=1}^{K}w_k\sum_{i=1}^{n}L^\prime(\hat{r}_{ki})\check{y}_{ki}x_{ij}}{\sum_{i=1}^{n}x_{ij}^2},\frac{\tilde{\omega}_{j}}{\frac{1}{n}\sum_{i=1}^{n}x_{ij}^2}
	                        \right),
                                    &\text{if }j\in\{1,\dots,p\}\backslash\cM;
                                    \\
                                    \hat{\beta}_{j}+\frac{-\sum_{k=1}^{K}w_k\sum_{i=1}^{n}L^\prime(\hat{r}_{ki})\check{y}_{ki}x_{ij}}{\sum_{i=1}^{n}x_{ij}^2},
                                    &\text{if }j\in\cM.
                                \end{cases}
			        \end{equation*}
                    \item[(3.a.2)] Compute the new margins: $\hat{r}_{ki}^\textit{new} = \hat{r}_{ki} + \cky_{ki}x_{ij}\cdot({\hat{\beta}^\textit{new}}_{j} - \hat{\beta}_{j})$, 
                    for $k=1,\dots,K$, $i=1,\dots,n$.
			    \item[(3.a.3)] Update $\hat{\beta}_{j} = {\hat{\beta}_{j}}^\textit{new}$, and $\hat{r}_{ki} = \hat{r}_{ki}^\textit{new}$, for $k=1,\dots,K$, $i=1,\dots,n$.
		    \end{itemize}
		    \item[(3.b)] Cyclic coordinate descent for intercepts: for $k = 1, 2, \cdots, K$,
		    \begin{itemize}
		        \item[(3.b.1)] Compute
			        \begin{equation*}
				        \hat{b}_{k}^\textit{new}=\hat{b}_{k}+\frac{1}{n}\sum_{i=1}^{n}L^\prime(r_{ki})\check{y}_{ki}.
			        \end{equation*}
                    \item[(3.b.2)] Compute the new margins: $\hat{r}_{ki}^\textit{new} = \hat{r}_{ki} - \cky_{ki}(\hat{b}_{k}^\textit{new} - \hat{b}_k),$ for $i=1,\dots,n$.
			    \item[(3.b.3)] Update $\hat{b}_{k} = \hat{b}_{k}^\textit{new}$, and $\hat{r}_{ki}= \hat{r}_{ki}^\textit{new}$, for $i=1,\dots,n$.
		    \end{itemize}
		\end{itemize} 
	\end{algorithmic}
\end{algorithm}

\clearpage

\nocite*{}  

\end{document}